\documentclass{article}

\usepackage{graphicx}
\usepackage{fullpage}
\usepackage{subcaption}

\usepackage[
backend=biber,
style=alphabetic,
sorting=nyt,
maxnames=8,
maxalphanames=7,
backref=true,
url=false
]{biblatex}
\AtBeginBibliography{\emergencystretch=1em}

\usepackage{mathrsfs}
\usepackage{tabularx}
\usepackage{wrapfig}
\usepackage{booktabs}
\usepackage{quiver}
\usepackage{multirow}
\usepackage{float}
\usepackage{tikz}
\usetikzlibrary{positioning, arrows.meta, calc, decorations.pathreplacing}

\usepackage{amsthm}
\usepackage{thmtools}
\usepackage[disable]{todonotes}
\usepackage{suffix}
\newcommand{\nomi}[1]{\textcolor{orange}{[\textbf{Nomi:} #1]}}
\WithSuffix\newcommand\nomi*[1]{\todo[inline]{Nomi: #1}}
\newcommand{\nick}[1]{\textcolor{blue}{[\textbf{Nick:} #1]}}
\WithSuffix\newcommand\nick*[1]{\todo[inline]{Nick: #1}}
\newcommand{\eshan}[1]{\textcolor{red}{[\textbf{Eshan:} #1]}}
\WithSuffix\newcommand\eshan*[1]{\todo[inline]{Eshan: #1}}

\usepackage{nomimacros}
\usepackage[hidelinks,colorlinks=true,citecolor=PineGreen,linkcolor=blue,hypertexnames=false]{hyperref}
\usepackage[plain]{tcolorrestate}
\usepackage{cleveref}

\usepackage{subfiles}
\newcommand{\dobib}{
}

\title{Two-Sided Product Expanding Codes via  Rademacher Matrices}

\author{  Eshan Chattopadhyay\thanks{Supported by NSF Award CCF-2514586.}\\ Cornell University\\ \texttt{eshan@cs.cornell.edu}  \and  Noam Ringach \thanks{Supported by NSF GRFP grant DGE – 2139899  and NSF Award CCF-2514586.}   \\ Cornell University\\ \texttt{nomir@cs.cornell.edu} \and Nicholas Spooner \\ Cornell University \\ \texttt{nspooner@cornell.edu}}
\hypersetup{pdftitle={Two-Sided Product Expanding Codes via Rademacher Matrices},
  pdfauthor={Eshan Chattopadhyay, Noam Ringach, and Nicholas Spooner}}

\begin{document}

\date{}
\maketitle
\begin{abstract}
    Beginning with the work of Dinur, Lin, and Vidick (FOCS, 2024), tensor codes with constant product expansion have been foundational to recent advances in quantum locally testable codes (qLTCs) based on cubical complexes. Informally, product expansion says that any low-weight parity check of the tensor code can be written as the sum of few parity checks of each component code in each dimension.

   We give a proof of the existence of two-sided product expanding codes which, unlike the earlier result of Kalachev and Panteleev (FOCS, 2025), does not rely on explicit constructions of asymptotically optimal locally testable codes ($c^3$-LTCs). For every fixed number of component codes and dimensions whose rates are bounded away from zero and one, we show that independent random linear codes over a sufficiently large prime field have constant two-sided product expansion with probability tending to one. The tradeoff is that our proof requires the characteristic to grow with the block length, while [KP25] takes extensions of $\F_2$.

    To replace the use of $c^3$-LTCs, we develop several new techniques that we view as interesting in their own right. Instead of working directly over finite fields, we work over the reals and use random Rademacher matrices for the generator matrices of the component codes. 
    From here, we we show that the extendability of \ec sets can be reduced to controlling the operator norm of sparse restrictions of carefully chosen Gram matrices. Applying the trace power method to bound this norm reduces to bounding the number of possible labelings of certain closed walks on bipartite graphs, which we bound using the sparsity of the operators and bounds on the number of equivalence classes of weak Wigner words from Anderson and Zeitouni (Probab. Theory Relat. Fields., 2006), which were originally applied to band matrices. Transferring our result from the reals to finite fields is what causes our result to require a large field characteristic. 

    Due to our techniques not relying on explicit $c^3$-LTCs, we believe that they form a promising starting point towards showing the existence of product expanding tensor codes where the component codes have non-trivial automorphism groups and coboundary expanding codes that could be used as the local codes of non-cubical complexes, such as simplicial complexes. Overall, we offer a different random model and novel techniques for studying the product expansion of random linear codes. 
\end{abstract}
\newpage
{\small
\tableofcontents
}
\newpage

\section{Introduction}\label{sec:introduction}

A \emph{linear error-correcting code} is a subspace $\cC\subseteq\F_q^n$ in which any pair of distinct elements $c \neq c' \in \cC$ satisfy $\frac{1}{n}\abs{c-c'} \geq \delta$, where $\delta$ is known as the \emph{relative distance} of the code. One desirable property of a linear error-correcting code is for it to be \emph{locally testable}: a randomized function $T:\F_q^n\to\{\textsf{Accept}, \textsf{Reject}\}$ such that  $T(c)=\textsf{Accept}$ for all $c\in \cC$ and $\Pr[T(w)=\textsf{Reject}]\geq \frac{s}{n} \cdot \operatorname{dist}(w,\cC)$, where $\operatorname{dist}(w,\cC)$ is the Hamming distance of $w$ to $\cC$. The \emph{locality} of $T$ is how many coordinates of $w$ it reads and $s$ is its \emph{soundness}. A family of locally testable codes $(\cC_n)_{n \in \N}$ is known as a $c^3$-LTC if its parameters are asymptotically optimal, i.e., its relative distance $\delta$, rate $\rho = \dim(\cC)/n$, soundness $s$ and locality are all constant. One can also define an analogous notion of local testability for \emph{quantum} codes; codes satisfying this property are referred to as qLTCs.
Classical LTCs and qLTCs have found important applications throughout theoretical computer science \cite{goldreich_locally_2006,barak_making_2015,eldar_local_2017,kumar_relaxed_2025,nguyen_quantum_2025}; in particular, LTCs are a key component in the original proof of the PCP theorem.

Recent constructions of classical LTCs \cite{dinur_locally_2022,panteleev_asymptotically_2022}, qLTCs \cite{dinur_expansion_2024}, and quantum low-density parity-check codes (qLDPCs) \cite{panteleev_asymptotically_2022,dinur_good_2022}, all use variants of the same strategy whereby a high-dimensional expander (HDX) is used to lift a small, constant-sized code with good product expansion, as in \Cref{def:intro-product-expansion}, to create a large code with good parameters. These constructions fall into the framework of sheaf codes, a generalization of Tanner codes, first formally defined by \cite{panteleev_maximally_2024}. Just like a Tanner code, a sheaf code only needs two pieces of information: a large, explicit (possibly high-dimensional) graph with structured local neighborhoods and a collection of constant-sized``local'' codes that can be placed at these neighborhoods on the graph.

Sheaf codes have already been implicitly used extensively for constructing linear codes \cite{dinur_high_2017,breuckmann_balanced_2021,meshulam_graph_2018,panteleev_asymptotically_2022,dinur_expansion_2024,dinur_new_2023,first_good_2024,kalachev_two-sided_2023}. The main benefit of the sheaf code framework is that it is flexible enough to encompass many different constructions of linear codes while being endowed with a formal cohomology theory that allows for the study of the cosystolic distance and cocycle expansion of the resulting codes, which are directly related to the relative distance and soundness of the resulting locally testable code.

The HDXs used in the current qLDPC \cite{panteleev_asymptotically_2022} and qLTC \cite{dinur_expansion_2024,chen_cubical_2026,gay_asymptotically_2026,bafna_good_2026} constructions is a cubical complex which requires that the local codes they utilize be tensor codes; consequently, there has been much recent work on the coboundary expansion of tensor codes. The equivalent notion of coboundary expansion used in these works, and that we will focus on here, is that of \emph{product expansion}. For two component codes of linear distance, product expansion is equivalent, up to a change in parameters, to robust and agreement testability. For more than two codes, it is a strictly stronger requirement~\cite{kalachev_two-sided_2023,kalachev_high-dimensional_2023}.

\paragraph{Product expansion}
 Let $\cC_1,\dots,\cC_D\subseteq\F_q^n$ be linear codes and view words of length $n^D$ as arrays indexed by the grid $[n]^D$. For $i\in[D]$, let
\[
    \cC^{(i)}=(\F_q^n)^{\otimes(i-1)}\otimes\cC_i\otimes(\F_q^n)^{\otimes(D-i)}.
\]
Thus, a word belongs to $\cC^{(i)}$ precisely when its restriction to each line parallel to the $i$-th axis belongs to $\cC_i$. We define the \emph{tensor code} and the \emph{check-product code}, respectively, as
\[
    \cC_1\otimes\cdots\otimes\cC_D=\bigcap_{i=1}^D\cC^{(i)},
    \qquad
    \cC_1\boxplus\cdots\boxplus\cC_D=\sum_{i=1}^D\cC^{(i)}.
\]
An easy way to visualize the tensor of two codes $\cC_1\otimes\cC_2$ is by labeling the grid $[n]^2$ by field elements of $\F_q$ such that each row is a codeword of $\cC_2$ and each column is a codeword of $\cC_1$. 
The check-product code is related to the tensor code by duality:
\[
    (\cC_1\otimes\cdots\otimes\cC_D)^\perp
    =\cC_1^\perp\boxplus\cdots\boxplus\cC_D^\perp.
\]

A \emph{line-codeword} is a word supported on a single axis-parallel line whose restriction to that line belongs to the corresponding component code. Every word in the check-product code is created by a sum of such words. Product expansion then asks whether every check-product codeword can be written using few nonzero line-codewords relative to its Hamming weight.

For $a\in\F_q^{[n]^D}$, let $\abs{a}_H$ denote its Hamming weight and set $\norm{a}_H=\abs{a}/n^D$. Let $\abs{a}_i$ be the number of lines parallel to the $i$-th axis on which $a$ is nonzero, and set $\norm{a}_i=\abs{a}_i/n^{D-1}$.
\begin{definition}[Product expansion]\label{def:intro-product-expansion}
    A collection $(\cC_1,\dots,\cC_D)$ is \emph{$\nu$-product expanding} if every $c\in\cC_1\boxplus\cdots\boxplus~\cC_D$ has a representation $c=\sum_{i=1}^D a_i$, where $a_i\in\cC^{(i)}$, such that
    \[
        \nu\sum_{i=1}^D\norm{a_i}_i\leq\norm{c}_H.
    \]
    The collection is \emph{two-sided $\nu$-product expanding} if both $(\cC_1,\dots,\cC_D)$ and $(\cC_1^\perp,\dots,\cC_D^\perp)$ are $\nu$-product expanding.
\end{definition}
Equivalently, a word of Hamming weight $w$ in the check-product code has a decomposition using at most $w/(\nu n)$ nonzero line-codewords. 
Conceptually, product expansion controls how the component codes interact and forces them to act independently of each other. In applications, it is necessary for $\nu$ to remain a positive constant as the component length $n$ grows, with $D$ fixed.

Kalachev and Panteleev~\cite{kalachev_two-sided_2023} proved that pairs of random linear codes have constant two-sided product expansion, while \cite{dinur_good_2022} showed that pairs of random linear codes are robust. However, the almost-good qLTC construction of~\cite{dinur_expansion_2024} and subsequent work on quantum codes~\cite{golowich_quantum_2025,nguyen_quantum_2025} require product expansion for collections of more than two codes. The breakthrough work of Kalachev and Panteleev~\cite{kalachev_maximally_2025} showed that random linear codes over sufficiently large fields have constant two-sided product expansion for any fixed number of component codes. In particular, they obtained the following existence theorem.
\begin{theorem}[Corollary 3 of~\cite{kalachev_maximally_2025}]\label{thm:KP25 main thm}
    Fix $D\geq1$ and intervals $I_1,\dots,I_D\subseteq(0,1)$ with nonempty interior. There exists $\nu>0$ such that, for all sufficiently large $n$, there are codes $\cC_1,\dots,\cC_D\subseteq\F_{2^{t_n}}^n$, where $t_n=(n+1)^D$, with $\dim(\cC_i)/n\in I_i$ such that the collection is two-sided $\nu$-product expanding.
\end{theorem}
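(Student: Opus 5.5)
The plan is to separate a \emph{genericity} statement from an \emph{existence} statement. For each pair of support patterns, the failure of product expansion can be expressed as the vanishing of a polynomial in the entries of the generator (or parity-check) matrices. Random codes over a huge extension field then avoid every such failure, provided a single witness collection shows that none of these polynomials is identically zero. This follows the maximally extendable framework of~\cite{kalachev_maximally_2025}.

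\textbf{Step 1: reduce to a rank condition.} Fix $D$ and target dimensions $k_i=\rho_i n$ with $\rho_i\in I_i$. View the component codes as specializations of generic codes $\cC_i(X)$, where $X$ collects the indeterminate entries of the generator matrices. For a set $S\subseteq[n]^D$ of size at most $\nu n^D$, and for the line sets allowed in a decomposition, I reduce the statement that every $c\in\cC_1\boxplus\cdots\boxplus\cC_D$ supported in $S$ admits a decomposition with few lines to the following: certain restriction maps between spaces of local codewords have maximal rank. Concretely, I would phrase this as the extendability of partial assignments on the set of lines meeting $S$. Maximal rank is equivalent to the nonvanishing of some minor, which is a polynomial in $X$ of degree at most $\mathrm{poly}(n)$. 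The number of patterns $S$ and line configurations to control is at most $2^{(n+1)^D}$ up to lower-order factors. This count is what dictates the field size $q=2^{t_n}$ with $t_n=(n+1)^D$. I handle the dual side $(\cC_i^\perp)$ by the same argument applied to the parity-check matrices.

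\textbf{Step 2: Schwartz--Zippel and a union bound.} Once every relevant minor is known to be a nonzero polynomial, a uniformly random choice of $X$ over $\F_{2^{t_n}}$ kills a given minor with probability at most $\mathrm{poly}(n)/2^{t_n}$. A union bound over at most $2^{(n+1)^D}\cdot\mathrm{poly}(n)$ patterns still leaves failure probability below $1$. Hence a single choice makes all minors nonzero at once, for both the primal and dual collections, and for all rates in the intervals $I_i$. The result is a \emph{maximally extendable} collection: for every pattern, it is at least as product expanding as any specialization over any field of characteristic $2$.

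\textbf{Step 3 (main obstacle): a witness.} The hard step is showing that each minor is not identically zero. It suffices to exhibit, for every large $n$ and for rates in the interiors of the $I_i$, one collection of codes over some extension of $\F_2$ that is two-sided $\nu$-product expanding with $\nu$ independent of $n$. A constant-size collection is not enough, because the minors depend on $n$. My first attempt would be to build the witness from explicit $c^3$-LTCs, or from the good sheaf and tensor constructions of~\cite{dinur_locally_2022,panteleev_asymptotically_2022}. Local testability with constant soundness, combined with the known two-code product expansion~\cite{kalachev_two-sided_2023}, would be bootstrapped to $D$ codes by induction on dimension. At each stage, a low-weight check-product word is peeled one direction at a time, using soundness to bound the number of lines removed. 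I expect the delicate parts to be the following:
\begin{itemize}
\item ensuring the inductive loss in $\nu$ depends only on $D$;
\item matching the witness rates to arbitrary intervals $I_i$, by puncturing or shortening while preserving expansion.
\end{itemize}
Once the witness exists, maximal extendability transfers its constant $\nu$ to the generic codes, which proves the theorem.
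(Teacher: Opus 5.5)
Your overall architecture is the one the paper attributes to \cite{kalachev_maximally_2025}: codes as specializations of a generic family, one Schwartz--Zippel/union bound over $\F_{2^{t_n}}$, and a witness built from $c^3$-LTCs. However, Step 1 is missing the reduction that makes the rest go through. Product expansion for words supported in a fixed $S$ is not a maximal-rank condition. The lines of a cheap decomposition are not determined by $S$, so ``some decomposition with at most $|S|/(\nu n)$ lines exists'' is membership in a union of subspaces, not the nonvanishing of a minor. Likewise, the lines \emph{meeting} $S$ are not the relevant object. What is needed is Lemma 19 of \cite{kalachev_maximally_2025} (\Cref{lem:KP25 lem 19}): if every $\varepsilon$-closed $M\subseteq[n]^D$ is inner-generated in $\cC_1\boxplus\cdots\boxplus\cC_D$, then the collection is $\gamma(\varepsilon,D)$-product expanding. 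Inner-generatedness is equivalent to extendability in $\cC_1^\perp\otimes\cdots\otimes\cC_D^\perp$, which genuinely is a rank condition, and it is where the polynomials come from. Maximal extendability transfers extendability of sets, not product expansion. So this criterion is what must connect your Steps 1 and 3, together with a converse that turns a witness's product expansion into extendability of $\varepsilon$-closed sets, if that is how you certify the witness. Your count also does not close as written: $2^{(n+1)^D}\mathrm{poly}(n)$ conditions, each failing with probability $\mathrm{poly}(n)/2^{(n+1)^D}$, sum to $\mathrm{poly}(n)$. The room comes from indexing the conditions by $\varepsilon$-closed sets, of which there are far fewer than $2^{(n+1)^D}/\mathrm{poly}(n)$.

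Step 3 asks for a witness that $c^3$-LTCs cannot provide. Constant query complexity forces each $\cC_i^\perp$ to contain a nonzero word of weight $O(1)$. Placed on a single line, this is a word of $\cC_1^\perp\boxplus\cdots\boxplus\cC_D^\perp$ of constant weight that needs at least one line. Hence $\nu n\leq O(1)$, and $(\cC_i^\perp)_i$ is not product expanding. (For $D=1$, two-sided expansion is exactly linear distance of both $\cC$ and $\cC^\perp$.) Nothing in your sketch addresses this, including the rate adjustment by puncturing or shortening. Fortunately, a two-sided witness is not needed. The certificates for the profile $(k_i)_i$ and for the dual profile $(n-k_i)_i$ are different polynomials on independent instances, and each only needs \emph{some} specialization where it does not vanish. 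In the route the paper describes, LTC-derived collections serve as one-sided witnesses for each profile separately. Two-sidedness then comes from a union bound over the two profiles, using that the dual of a uniformly random $k$-dimensional code is a uniformly random $(n-k)$-dimensional code. Finally, the inductive ``peeling via soundness'' that is supposed to produce the witness is only sketched, and it is the substantive part of the argument. As it stands, the proposal is a plan rather than a proof.
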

Curiously, Kalachev and Panteleev's proof of this result relies on the existence of $c^3$-LTCs, as established by the explicit, deeply structured constructions of \cite{dinur_locally_2022,panteleev_asymptotically_2022}. For the reader who has seen many probabilistic existential proofs of combinatorial objects, it should be quite surprising (and interesting!) that the proof of \cite{kalachev_maximally_2025} relies on an explicit construction to obtain an existential result. The main contribution of this work is a ``purely'' probabilistic proof of (a variant of) \Cref{thm:KP25 main thm}, thereby providing new insight into the phenomenon of product expansion for random codes, and new analytical tools that may allow us to ``go beyond'' random codes. We expand upon these motivations next.

\subsection{Motivation}
Besides pure curiosity, there are two main motivations for this work.
First, to obtain good two-sided product expanding codes where the component codes have non-trivial automorphism groups, which are relevant in the constructions of qLTCs \cite{panteleev_maximally_2024} and applicable whenever tensor codes are being used as local codes on cubical complexes. Second, to prove the existence of good two-sided coboundary expanders that can be used as the local codes of other complexes, like simplicial HDXs. We elaborate on both of these goals below, but, before doing so, we recall the strategy of \cite{kalachev_maximally_2025} to explain the barriers in adapting their argument.

\paragraph{The technique of \cite{kalachev_maximally_2025}}

We begin by briefly recalling the role of \emph{extendability}. Given a tensor code $\cC=\cC_1\otimes\cdots\otimes\cC_D$ and a set $M\subseteq[n]^D$, an assignment $x\in\F_q^M$ is locally valid if for every axis-parallel line $\ell$ entirely contained in $M$, $x|_\ell\in\cC_i$ for the corresponding component code $\cC_i$. We call $M$ \emph{extendable} in $\cC$ if \emph{every} such assignment extends (not necessarily uniquely) to a codeword of $\cC$ on the whole grid. A dual view to this is that the parity checks contained in $M$ impose all the conditions needed for a global extension. In particular, if $M$ contains no entire line, all assignments are locally valid, so extendability means that any arbitrary assignment of elements from $\F_q$ on $M$ extends.

The sets relevant to product expansion are the \emph{$\varepsilon$-closed} sets: sets for which every axis-parallel line is either contained in $M$ or meets it in at most $\varepsilon n$ points. By duality, extendability of $M$ in $\cC$ is equivalent to the following property: every word in $\cC^\perp=\cC_1^\perp\boxplus\cdots\boxplus\cC_D^\perp$ supported on $M$ can be generated by dual line-codewords on lines entirely contained in $M$. Such sets are called \emph{inner-generated}. Kalachev and Panteleev show that if every $\varepsilon$-closed set has this property, then the collection $(\cC_1^\perp,\dots,\cC_D^\perp)$ has constant product expansion, with a parameter depending only on $\varepsilon$ and $D$~\cite{kalachev_maximally_2025}. Applying the statement to both a collection and its dual gives two-sided product expansion.

Their proof obtains the necessary extendability from \emph{maximally extendable} (ME) codes. Within a specified family, an ME code has the property that every set extendable in any member of the family is also extendable in the ME code. The family used in~\cite{kalachev_maximally_2025} is parametrized by parity-check matrices $H_i\in\F_2[\bX]^{(n-k_i)\times n}$, with a separate variable for every entry. We call a setting of the variables in $\bX$ a \emph{specialization}, which in turn gives concrete codes $\cC_1,\dots,\cC_D$ as defined by these parity check matrices. Specializations of full row rank give component codes of dimensions $k_i$, and random specializations over a sufficiently large extension field give an ME tensor code with high probability.
While this setup is equivalent to simply taking a random parity-check matrix, we introduce this notion of parity-check matrices parametrized as polynomials from \cite{panteleev_maximally_2024} since they allow us to write the determinants of compositions of these matrices as polynomials in $\bX$ themselves, and are more flexible, as we elaborate below.

More generally, \cite{panteleev_maximally_2024} proves an analogous statement for polynomially parametrized families of sheaf codes, where the polynomials used to define the parity check matrices can be more elaborate than simply having a distinct variable for each entry. Importantly, an ME tensor code can inherit, with a constant loss in parameters, the product expansion of any code in its parametrized family. 

Nevertheless, one still needs a member of the family from which the ME code can inherit product expansion, which we call a \emph{witness}. Kalachev and Panteleev obtain this witness from the explicit $c^3$-LTCs of~\cite{dinur_locally_2022}, namely LTCs with constant rate, relative distance, and query complexity. After adapting lengths and rates, they show that suitable collections derived from these LTCs have constant product expansion from which ME codes can inherit their own product expansion. Lastly, the observation that the dual of a uniformly random linear code is itself uniformly random lets them union bound the two failure probabilities to obtain two-sided product expansion.

\paragraph{Towards product expanding codes with symmetries}
Even with such two-sided product expanding codes, the qLTCs of~\cite{dinur_expansion_2024} have constant rate but inverse-polylogarithmic relative distance and soundness. In~\cite{panteleev_maximally_2024}, the predecessor to~\cite{kalachev_maximally_2025}, the authors suggest using the Ramanujan cubical complexes of Rungtanapirom, Stix, and Vdovina~\cite{rungtanapirom_infinite_2019} as a possible route to asymptotically good qLTCs. The proposed approach calls for local codes that are equivariant with respect to the symmetries of the cubical complex. That is, the collection of component codes must be two-sided product expanding, and the automorphism group of each component must also contain the corresponding local permutation group of the complex. The random model of~\cite{kalachev_maximally_2025} cannot enforce this symmetry requirement since random linear codes have trivial automorphism groups with high probability \cite{li_automorphism_2026}. 

We encounter two immediate barriers to using this technique to show the existence of two-sided product expanding codes with prescribed nontrivial symmetries. First, the parametrization of the parity-check matrices allows arbitrary component codes $\cC_i$ instead of just ones with the desired automorphism groups. One can try to restrict the parametrization so that the resulting codes have the required symmetries, and, when such a polynomial parametrization is available,~\cite{panteleev_maximally_2024} guarantees an ME code within that family. However, this brings us to the second barrier: the $c^3$-LTC construction of~\cite{dinur_locally_2022} does not provide the prescribed automorphism groups. Consequently, the extension witness obtained from these LTCs is not known to lie in the restricted family, and maximal extendability in that family has no established witness from which to inherit the required properties.

In our work, we remove the dependence on $c^3$-LTCs that gives rise to this second barrier. We prove the necessary extension properties directly, using random matrices over the reals and then transferring the result to finite fields. This gives a different random model and novel techniques for studying the product expansion of random linear codes. 

\paragraph{Towards the coboundary expansion of general sheaf codes}
As we elaborate more in \Cref{sec:open questions}, our techniques are flexible enough that they provide a new starting point to analyze the coboundary expansion of more general sheaf codes, as defined by \cite{panteleev_maximally_2024}, instead of just tensor codes. In particular, recent constructions of classical \cite{dinur_new_2023} and quantum \cite{gulshen_quantum_2025} codes on simplicial high-dimensional expanders require local codes that do not have a tensor product structure and currently must use Reed-Solomon or Reed-Müller codes. We believe our techniques could lend themselves to analyzing the sheaf codes for the required local codes of these constructions. We also anticipate that our new framework of analyzing over the reals the desired properties of a code generated by random Rademacher matrices will be useful more broadly.

\subsection{Our results}
We obtain the following theorem over a prime field. In the statement, a uniformly random code of dimension $k$ means a uniformly random $k$-dimensional subspace of the ambient space.
\begin{theorem}[Simplified version of~\Cref{thm:main two sided prod expanding}]\label{thm:intro-main}
    Fix $D\geq1$ and $0<\rho<1/2$. There are constants $\nu>0$ and $n_0$, depending only on $D$ and $\rho$, such that the following holds. For every $n\geq n_0$ and integers $k_1,\dots,k_D$ satisfying $\rho n\leq k_i\leq(1-\rho)n$, there is a prime
    \[
        p=O\!\left(2^{n^D}n^D\log n\right)
    \]
    such that independent uniformly random codes $\cC_i\subseteq\F_p^n$ of dimensions $k_i$ are two-sided $\nu$-product expanding with probability at least $1-1/\log(n+1)$. The constant in the field-size bound depends only on $D$.
\end{theorem}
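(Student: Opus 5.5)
Our approach follows the Kalachev--Panteleev reduction from product expansion to extendability of $\varepsilon$-closed sets. We replace the $c^3$-LTC witness with a direct random-matrix argument over $\mathbb{R}$ and then transfer the result to $\F_p$. The reduction says the following: if every $\varepsilon$-closed set $M\subseteq[n]^D$ is extendable in $\cC_1\otimes\cdots\otimes\cC_D$, then $(\cC_1^\perp,\dots,\cC_D^\perp)$ is $\nu(\varepsilon,D)$-product expanding. Since a uniformly random $k_i$-dimensional code has a uniformly random dual of dimension $n-k_i$, and $\rho n\le n-k_i\le(1-\rho)n$ as well, it suffices to prove that the extendability property holds with probability at least $1-\tfrac{1}{2\log(n+1)}$ for independent random codes with dimensions in $[\rho n,(1-\rho)n]$. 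A union bound over the collection and its dual then finishes the proof.

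\textbf{Step 1 (reals).} Take generator matrices $G_i\in\{\pm1\}^{n\times k_i}$ with i.i.d.\ Rademacher entries. Extendability of $M$ is a rank condition. Write $M=M_{\mathrm{full}}\cup M'$, where $M_{\mathrm{full}}$ is the union of the lines contained in $M$, and $M'$ is the set of points lying on no full line. Every such point meets each line through it in at most $\varepsilon n$ points. Then $M$ is extendable if and only if the restriction map from $\cC$ to the space of locally valid assignments on $M$ is surjective. For the $M'$ part, this amounts to a certain restricted matrix having full row rank. Concretely, after projecting out the constraints coming from full lines, the restriction of $G_1\otimes\cdots\otimes G_D$ to rows in $M'$ must be injective on the dual side. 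Equivalently, a Gram-type matrix $P_{M'}(\,\cdot\,)P_{M'}$ must be close to its expectation. We would show that
\[
\bigl\|P_{M'}\bigl(\tfrac{1}{K}GG^{\top}-\mathbb{E}\bigr)P_{M'}\bigr\|_{\mathrm{op}}<1
\]
holds for all $\varepsilon$-closed $M$ simultaneously. The key point is that this sparse restriction has at most $\varepsilon n$ entries per axis line. To bound the norm we use the trace power method with $t\approx\log n$: expand $\mathbb{E}\operatorname{tr}(X^{2t})$ as a sum over closed walks on the bipartite graph between grid points and coordinates of $[k_1]\times\cdots\times[k_D]$. Only walks in which every edge appears at least twice contribute. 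We count the labelings via the Anderson--Zeitouni bounds on equivalence classes of weak Wigner words, with the $\varepsilon n$ sparsity per line playing the role of the bandwidth. The result is a bound of the form $(C\varepsilon)^{t}$ times polynomial factors, small enough to union bound over the at most $2^{n^D}$ sets $M$. Choosing $\varepsilon$ small in terms of $D$ and $\rho$ makes this work, with the slack from $k_i\le(1-\rho)n$ guaranteeing room in the dimension count.

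\textbf{Step 2 (transfer to $\F_p$).} Extendability of every $\varepsilon$-closed set is the nonvanishing of certain minors, i.e.\ integer polynomials in the entries of the $G_i$. Step 1 produces a single integer matrix at which all of these minors are nonzero. Each minor has size at most $n^D$ with $\pm1$ entries, so by Hadamard its absolute value is at most $(n^D)^{n^D/2}$. If we instead view the minors as polynomials in the matrix entries, we invoke Schwartz--Zippel. The polynomial $\prod_M\det(\cdot)$ is nonzero over $\mathbb{Z}$. We need a prime $p$ such that it stays nonzero mod $p$, and then for uniformly random matrices over $\F_p$ its vanishing probability is at most $\deg/p$. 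The degree is at most $2^{n^D}\cdot n^D$, so taking $p=\Theta(2^{n^D}n^D\log n)$ makes the failure probability at most $1/(4\log(n+1))$. A nonzero integer polynomial has some coefficient that is nonzero mod all but finitely many primes. We pick $p$ in the required range avoiding these divisors, using a prime number theorem count, since the coefficients are bounded by $\exp(\mathrm{poly}(n^D))$. Random full-rank generator matrices give uniformly random codes, and rank-deficiency is absorbed into the same failure probability.

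\textbf{Main obstacle.} The main obstacle is Step 1's combinatorial count. The walks live on a product structure with $D$ axes, and the projection removing full lines destroys entrywise independence. We would first try to avoid the projection by handling the full lines via the tensor structure. A locally valid assignment on full lines already extends, by dimension counting in the tensor code. This reduces everything to the purely sparse set $M'$, where the Gram matrix entries are sums of independent Rademacher products and the weak-Wigner-word counting applies directly.
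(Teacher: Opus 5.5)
Your outer architecture matches the paper's (the Kalachev--Panteleev Lemma 19 plus duality, real Rademacher witnesses, the trace method with Anderson--Zeitouni counts, and a Schwartz--Zippel transfer). There is, however, a genuine gap exactly where you flag the main obstacle: the full lines of $M$. Your proposed fix says that locally valid assignments on $M_{\mathrm{full}}$ ``already extend by dimension counting'', so that everything reduces to the purely sparse set $M'$. This fails for two reasons.

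First, $M_{\mathrm{full}}$ is itself $\varepsilon$-closed, since a line not contained in $M$ meets $M_{\mathrm{full}}\subseteq M$ in at most $\varepsilon n$ points. So its extendability is an instance of the theorem you are proving, not a dimension count. For example, let $D=3$ and let $M_{\mathrm{full}}$ be the union of the direction-$3$ lines over an $\varepsilon$-sparse anchor set $A\subseteq[n]^2$. Then $M_{\mathrm{full}}$ is extendable if and only if the vectors $g_1(a_1)\otimes g_2(a_2)$, $a\in A$, are linearly independent. That is the sparse tensor-independence problem one dimension down, and $|A|<k_1k_2$ does not imply it.

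Second, suppose $M_{\mathrm{full}}$ is extendable and $\{\bigotimes_i g_i(x_i)\}_{x\in M'}$ is independent. Then $M$ still need not be extendable, because the subcode vanishing on $M_{\mathrm{full}}$ must also restrict onto all of $M'$. In other words, the $M'$-columns must be independent \emph{modulo} the span contributed by the full lines. A counterexample: take $D=2$, and let $M$ be a row $r$ plus one point $(x_1,x_2)$ with $g_1(x_1)=g_1(r)$. So this coupling cannot be discarded, and projecting it out destroys the sign independence your walk count needs.

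The missing idea is the paper's way of never projecting. It dualizes to injectivity of the corestriction map. It then writes the domain of that map as the image of the direct sum of atoms $A_P=U_{\subsetneq P}^{\perp}\subseteq U_P$, taken over the intersection-closure of the maximal planes of $M$; points of $M'$ appear as $0$-dimensional maximal planes. With unit-norm columns, corestriction is an isometry on each atom, and atoms of distinct intersecting planes have orthogonal images. Hence $\Theta^{*}\Theta=I+H$, where $H$ is supported only on pairs of disjoint planes, and $\|H\|\leq\|K\|$ for the same block matrix $K$ defined on the unprojected spaces $U_P$. Because $K$ lives on the $U_P$, it has an explicit tensor form. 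That form reduces to operators $\Pi_X D_E\iota_Y$ on slices of the anchor sets, which are $2^D\varepsilon$-sparse by $\varepsilon$-closure. Only there is the weak-Wigner-word count applied, with independent signs.

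Your Step 1 also claims a union bound over all $\varepsilon$-closed sets, and this cannot work. For $D\geq2$, every $\varepsilon$-sparse $M$ containing $(1,a)$ and $(2,a)$ fails whenever $g_1(1)=\pm g_1(2)$. That event has probability $2^{1-k_1}$, and there are $2^{\Omega(\varepsilon n^D)}$ such sets. The trace bound with $t\approx\log n$ only gives polynomially small failure probability per set anyway.

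Simultaneity is also unnecessary, so Step 2 does not need a single integer matrix that is good for every $M$. As in the paper, take a separate real witness $\sigma_M$ for each $M$; Markov gives success probability at least $1/2$. Form an integer determinant certificate $f_M$ with $f_M(\sigma_M)\neq0$. Then choose $p$ dividing none of the integers $f_M(\sigma_M)$. The product of all $f_M$ with the full-rank minors is then nonzero over $\F_p$, and Schwartz--Zippel applies.

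With full lines present, the claim that ``extendability is the nonvanishing of certain minors'' also needs care, because the dimension of the space of locally valid assignments varies with the codes. The paper's certificate is $\det[\widetilde\alpha\,|\,\widetilde\beta^{\top}]$, with rows and columns selected at the witness. It lower-bounds the ranks of $\alpha_M$ and $\beta_M$ simultaneously, and together with $\beta_M\alpha_M=0$ this forces exactness.
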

The full statement, including an explicit expansion parameter and a sufficient threshold for $n$, appears in~\Cref{thm:main two sided prod expanding}. In particular, we obtain codes whose rates lie in any  intervals with nonempty interior in $(0,1)$. Thus, we recover the existence of two-sided product expanding codes without the use of $c^3$-LTCs. Unlike~\Cref{thm:KP25 main thm}, our proof requires the characteristic to grow with $n$ and does not establish the result in characteristic two. In applications, such as \cite{dinur_expansion_2024}, this is not a serious issue: it simply results in a code over some constant-sized alphabet instead of the binary alphabet.

Our main task is to prove extendability for every $\varepsilon$-closed set. We first work over the reals, using independent Rademacher entries for the generator matrices. By dualizing and showing that the given \ec set is inner-generated, we reduce the problem to showing a certain map has norm less than one.\footnote{Specifically, this is the plane-interaction matrix as defined in \Cref{eq:overview-plane interaction}.}

To bound its norm, we apply the trace power method. Leveraging the unique properties of Rademacher random variables turns the norm estimate into a problem of counting labelings of closed walks on bipartite graphs. The work of Anderson and Zeitouni \cite{anderson_clt_2006} bounds the number of equivalence classes of such walks, known as weak Wigner words. We then bound the size of each equivalence class using the fact the original set is \ec. This argument shows that each \ec set has a collection of real generator matrices for which it is inner-generated for the corresponding collection of codes. We transfer this result from the reals back to finite fields and to a single collection of codes, instead of a different collection of codes for each \ec set, by a Schwartz-Zippel argument similar to that of \cite{kalachev_maximally_2025}. We explain these arguments in~\Cref{sec:proof-overview}.

\paragraph{Concurrent work}
In the last stages of preparing our manuscript, a number of preprints have been posted that give constructions of asymptotically optimal qLTCs. The works \cite{gay_asymptotically_2026,bafna_good_2026,chen_cubical_2026} all use cubical complexes with local Reed-Solomon codes to construct asymptotically optimal qLTCs. \cite{cohen_good_2026} constructs asymptotically optimal qLTCs under a conjecture on the existence of certain lossless cubical complexes while not employing local codes at all.

\paragraph{Organization.}
After the proof overview, \Cref{sec:preliminaries} introduces necessary preliminaries and our framework of planes and anchor sets. In \Cref{sec:duality-atoms} we derive the restriction and corestriction maps of tensor and check-product codes, respectively, and define the atom decomposition. \Cref{sec:two-sided-product-expansion} proves our main result based on the injectivity of a universal corestriction map. Then, \Cref{sec:injectivity of cores via K} reduces this injectivity to an operator norm bound over the reals and gives the transfer to finite fields. Finally, \Cref{sec:expected op norm of K} reduces this to a norm bound on sparse tensor matrices, which we prove in~\Cref{sec:bounding max opnorm with wigner} using weak Wigner words. We finish by proposing some future directions in \Cref{sec:open questions}.

\dobib

\section*{Acknowledgements}
N.R. would like to thank Robert Kleinberg for conversations that inspired him to consider conducting the analysis in this paper over the reals instead of over a finite field. We thank the Simons Institute for the Theory of Computing at UC Berkeley for providing a stimulating research environment. Part of this work was done during visits by the authors to the Institute.

\section*{AI Methodology}
The ideas, proofs and exposition in this work are almost entirely the product of human effort. ChatGPT Pro 5.6 and 6 were used to help check correctness and for copy editing, and suggested the connection to the works of \cite{wigner_characteristic_1955,furedi_eigenvalues_1981,anderson_clt_2006}.

The authors take responsibility for all results, content, and the presentation of this paper. 

\dobib

\section{Proof Overview}\label{sec:proof-overview}
We now explain the main ideas behind~\Cref{thm:intro-main}. Fix the number of codes $D$ and a constant $\rho>0$, and assume that each code has dimension between $\rho n$ and $(1-\rho)n$. Our goal is to prove $\nu$-product expansion with $\nu=\Omega_{D,\rho}(1)$. 

Our proof has four main parts:
\begin{enumerate}
    \item Using Lemma 19 and the existence of maximally extendable codes from \cite{kalachev_maximally_2025}, we reduce product expansion to showing that, for a fixed \ec set $M\subseteq[n]^D$, there is a collection of codes (over the reals) for which $M$ is inner-generated.
    
    \item We switch to working over $\R$ and observe that, for a fixed \ec $M$, it is inner-generated if the \emph{plane-interaction} matrix $K_M$, a centered Gram matrix built using the structure of $M$, has norm less than 1.
    
    \item By cleverly decomposing $K_M$ into a constant number of blocks with easily describable norms and taking the generator matrices $G_i\in\R^{k_i\times n}$ of the codes $\cC_1,\dots,\cC_D$ to have Rademacher entries, we can bound the norm of $K_M$ by the expected value of Rademacher chaos polynomials.\footnote{A \emph{Rademacher chaos} is a homogeneous multivariate polynomial with domain $\{-1,1\}$.}
    
    \item Finally, we bound the expected value of these polynomials by using the fact that $M$ is \ec and tools from \cite{anderson_clt_2006} for weak Wigner words, ultimately giving us that $\opnorm{K_M}<1$ so $M$ is inner-generated.

\end{enumerate}
We now elaborate on these steps in detail.

\subsection{Reducing to showing a fixed \texorpdfstring{\ec}{eps-closed} set is inner-generated}

As mentioned in the introduction, two important results from \cite{kalachev_maximally_2025} are that the product expansion of a collection of codes is implied by every \ec set being inner-generated (Lemma 19) and that ME codes exist. Kalachev and Panteleev used maximal extendability by showing the fact that a collection of $c^3$-LTCs from \cite{dinur_locally_2022} has every \ec set inner-generated, so an ME code would have each of these sets inner-generated as well.\footnote{While the definition of maximally extendable codes is with regards to the extendability of a set, since extendable sets and inner-generated sets are dual notions to each other, we can equivalently consider their result on the existence of maximally extendable codes as one for maximally inner-generated codes.} Instead of showing that there exists a single collection of codes with all \ec sets inner-generated, we show that for each fixed \ec set $M$ there is a collection of codes for which $M$ is inner-generated. Then an ME code can inherit the inner-generatedness of each \ec set from different collections of codes instead of a single one. The difference in these strategies is summarized in \Cref{fig:witness-strategies}

\begin{figure}[ht]
    \centering
    \scalebox{0.85}{\begingroup
\providecommand{\StrategyPriorLabel}{[KP25]}
\providecommand{\cC}{\mathcal{C}}
\providecommand{\StrategyOurTarget}{\cC_*^{ME}}

\definecolor{strategyRed}{RGB}{190,89,72}
\definecolor{strategyGray}{gray}{0.945}

\begin{tikzpicture}[
    x=1cm,y=1cm,
    font=\small,
    family/.style={
        draw=black!40,fill=strategyGray,line width=0.65pt
    },
    inherit/.style={
        draw=strategyRed,line width=0.9pt,
        -{Latex[length=1.9mm,width=1.3mm]},
        shorten <=2.8pt,shorten >=4.1pt
    },
    witness/.style={
        circle,fill=strategyRed,draw=none,
        inner sep=0pt,minimum size=4.6pt,outer sep=0pt
    },
    common/.style={
        circle,fill=black,draw=none,
        inner sep=0pt,minimum size=5.4pt,outer sep=0pt
    },
    set label/.style={text=strategyRed,inner sep=1pt},
    code label/.style={text=strategyRed,inner sep=1pt},
    panel title/.style={font=\normalsize\bfseries,inner sep=0pt},
    subtitle/.style={font=\small,text=black!70,inner sep=0pt}
]

\begin{scope}
    \node[panel title] at (0,3.52) {(a)\enspace\StrategyPriorLabel};
    \node[subtitle] at (0,3.12) {One witness for all sets};
    \path[family] (0,0) ellipse [x radius=2.55,y radius=2.82];

    \coordinate (ltc) at (0,1.48);
    \coordinate (me-left) at (0,-1.68);
    \node[code label] at (0,1.83)
        {$\mathrm{LTC}_1\otimes\cdots\otimes\mathrm{LTC}_D$};

    \draw[inherit] (ltc)
        .. controls (-2.18,0.98) and (-2.18,-1.02) .. (me-left);
    \draw[inherit] (ltc)
        .. controls (-0.70,0.82) and (-0.70,-1.04) .. (me-left);
    \draw[inherit] (ltc)
        .. controls ( 0.70,0.82) and ( 0.70,-1.04) .. (me-left);
    \draw[inherit] (ltc)
        .. controls ( 2.18,0.98) and ( 2.18,-1.02) .. (me-left);

    \node[set label] at (-1.96,-0.05) {$M_1$};
    \node[set label] at (-0.86,-0.05) {$M_2$};
    \node[set label] at ( 0.86,-0.05) {$M_3$};
    \node[set label] at ( 1.96,-0.05) {$M_4$};

    \node[witness] at (ltc) {};
    \node[common] at (me-left) {};
    \node at (0,-2.10) {$\cC_*^{ME}$};
    \node at (0,-3.15) {$\exists\, \cC_*\;\;\forall\, M$};
\end{scope}

\begin{scope}[xshift=6.95cm]
    \node[panel title] at (0,3.52) {(b)\enspace Our approach};
    \node[subtitle] at (0,3.12) {A witness for each set};
    \path[family] (0,0) ellipse [x radius=2.55,y radius=2.82];

    \coordinate (w1) at (-1.48,1.16);
    \coordinate (w2) at (-0.57,1.70);
    \coordinate (w3) at ( 0.57,1.70);
    \coordinate (w4) at ( 1.48,1.16);
    \coordinate (me-right) at (0,-1.68);

    \draw[inherit] (w1)
        .. controls (-1.54,0.15) and (-1.25,-1.16) .. (me-right);
    \draw[inherit] (w2)
        .. controls (-0.58,0.61) and (-0.50,-1.04) .. (me-right);
    \draw[inherit] (w3)
        .. controls ( 0.58,0.61) and ( 0.50,-1.04) .. (me-right);
    \draw[inherit] (w4)
        .. controls ( 1.54,0.15) and ( 1.25,-1.16) .. (me-right);

    \node[set label] at (-1.94,-0.05) {$M_1$};
    \node[set label] at (-0.86,-0.05) {$M_2$};
    \node[set label] at ( 0.86,-0.05) {$M_3$};
    \node[set label] at ( 1.94,-0.05) {$M_4$};

    \foreach \i in {1,...,4}{
        \node[witness] at (w\i) {};
        \node[code label,above=5pt] at (w\i) {$\cC_*^{M_\i}$};
    }
    \node[common] at (me-right) {};
    \node at (0,-2.10) {$\StrategyOurTarget$};
    \node at (0,-3.15) {$\forall\, M\;\;\exists\, \cC_*^M$};
\end{scope}

\end{tikzpicture}
\endgroup}
    \caption{Comparison of strategies between \cite{kalachev_maximally_2025} in (a) and our work in (b). Each gray ellipse represents the family of polynomially parametrized codes with $\cC_*^{ME}$ as the maximally extendable code in the family. In (a), each \ec set $M_i$ is extendable in the tensor product of LTCs, so $\cC_*^{ME}$ can use this product as its witness of extendability for all \ec sets. In (b), each \ec $M_i$ has a different collection of codes $\cC_*^{M_i}$ in which it is extendable. Here, $\cC_*^{ME}$ still inherits the extendability of all \ec sets, but from different witnesses.}
    \label{fig:witness-strategies}
\end{figure}

In reality, our proof technique has one major difference that does not allow us to apply the aforementioned results from \cite{kalachev_maximally_2025} in a black-box manner. For a fixed \ec set $M\subseteq[n]^D$, the collection of codes that we show witness the fact that $M$ is inner-generated is defined by real generator matrices $G_i\in\R^{k_i\times n}$ instead of over $\F_{2^t}$ as in \cite{kalachev_maximally_2025}. Loosely speaking, to transfer this result from the reals to a finite field, we use the fact (shown in the next step) that $M$ is inner-generated if and only if a certain matrix depending on the generator matrices $(G_i)_{i\in[D]}$ is full rank. By taking the determinant polynomial $f_M\in\Z[\bX]$ of a non-singular minor in this matrix---which we know is not identically the zero polynomial since we have real generator matrices that are not a zero of $f_M$---we can condition $M$ being inner-generated on choosing generator matrices $G_i\in\F_p^{k_i\times n}$ such that they are not a zero of $f_M$. Finally, bounding $\deg f_M$ as $O(n^D)$ and noticing that there are at most $2^{n^D}$ sets $M\subseteq[n]^D$, we use a Schwartz-Zippel argument to show that taking $p=O(2^{n^D}n^D)$ will guarantee some setting of the generator matrices for which $\prod_{\varepsilon\text{-closed }M} f_M((G_i)_{i\in[D]})$ is non-zero, meaning that all \ec sets are inner-generated.

We elide the details of this argument since the techniques and, in particular, the application of Schwartz-Zippel, are very similar to those of \cite{kalachev_maximally_2025}. Our new observations and techniques are contained in the subsequent steps of our proof. The takeaway from this reduction is that, to prove \Cref{thm:intro-main}, we just need to show the following:
\begin{lemma}\label{lem:overview-existence of witness}
    Let $D\geq 1$, $\rho\in(0,1/2)$, $\varepsilon>0$ be small enough, and $n\in\N$ be large enough. Take integers $k_1,\dots,k_D$ with $\rho n\leq k_i\leq (1-\rho)n$ and fix an \ec set $M\subseteq[n]^D$. Then there exist full-rank generator matrices $G_1,\dots,G_D$ with $G_i\in\R^{k_i\times n}$ such that $M$ is extendable for $\cC_1\otimes\cdots\otimes\cC_D$ or, equivalently, inner-generated for $\cC_1^\perp\boxplus\cdots\boxplus\cC_D^\perp$.
\end{lemma}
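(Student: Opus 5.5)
We will prove the lemma probabilistically over $\R$. We take $G_i\in\R^{k_i\times n}$ to have independent Rademacher entries and show that, with positive probability, every $G_i$ has full rank and $M$ is extendable in $\cC_1\otimes\cdots\otimes\cC_D$, where $\cC_i$ is the row space of $G_i$. Full rank holds with probability $1-o(1)$ by standard singularity bounds for Rademacher matrices, so the work is in the extendability statement.

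\textbf{Reformulation of extendability.} Extendability of $M$ is the surjectivity of the restriction map
\[
\cC_1\otimes\cdots\otimes\cC_D\to V_M,
\]
where $V_M$ is the space of locally valid assignments on $M$. Dually, it is the injectivity of the corresponding corestriction map on check-product words supported on $M$.

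\textbf{Stratification of $M$.} Since $M$ is \ec, we stratify it by \emph{planes}: for each subset $S\subseteq[D]$ we take the maximal axis-parallel flats with free directions $S$ that are fully contained in $M$. The remaining points in the $[D]\setminus S$ directions then meet lines in at most $\varepsilon n$ points.

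\textbf{Surjectivity via a Gram matrix.} On each stratum a locally valid assignment is a tensor codeword in the $S$ directions and arbitrary in the others. So surjectivity amounts to solving a linear system whose coefficient matrix is built from the columns of $G_i$ indexed by points of $M$. Normalizing by $1/k_i$ (or $1/n$), the Gram matrix of this system has the form $I+K_M$, where the identity comes from the diagonal terms $\langle g_j,g_j\rangle=k_i$ and $K_M$ collects the off-diagonal inner products $\langle g_j,g_{j'}\rangle/k_i$ for $j\ne j'$ on lines meeting $M$ sparsely. Invertibility, and hence extendability, follows if $\opnorm{K_M}<1$.

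\textbf{Bounding $\opnorm{K_M}$.} We split $K_M$ into a constant number $2^{O(D)}$ of blocks according to the pair of strata involved. Each block is a tensor product of centered Gram-type matrices $G_i^\top G_i/k_i-I$ restricted to index sets of size at most $\varepsilon n$ per line, together with projections in the fully contained directions. We then apply the trace power method to each block $B$: we bound $\E\,\mathrm{tr}\bigl((BB^\top)^m\bigr)$ with $m\approx\log n$. Expanding the trace gives a sum over closed walks on bipartite graphs (rows of $G_i$ versus columns in the sparse index set), weighted by products of Rademacher variables. The Rademacher expectation vanishes unless every edge is traversed an even number of times, which is exactly the weak Wigner word condition.

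\textbf{Main obstacle.} We group walks into equivalence classes. We use the Anderson--Zeitouni bound on the number of classes of weak Wigner words, and bound the size of each class by the number of vertex labelings. The column labels are constrained to at most $\varepsilon n$ choices per line because $M$ is \ec, so each new column vertex contributes a factor $\varepsilon n$ rather than $n$, while the normalization contributes $1/k_i\le 1/(\rho n)$. Summing over walks should give $\E\opnorm{B}\lesssim C_{D}\sqrt{\varepsilon/\rho}$ up to $n^{O(1/m)}$ factors. Choosing $\varepsilon$ small makes the sum over blocks at most $1/2$ in expectation, and Markov's inequality together with the full-rank event yields the required $G_i$. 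The delicate part is this walk counting, specifically handling walks that repeat edges more than twice (which Anderson--Zeitouni control) and the coupling across the $D$ tensor factors; we would handle the latter by counting each coordinate's walk separately and multiplying.
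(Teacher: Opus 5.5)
Your proposal has a genuine gap in the step \textbf{Surjectivity via a Gram matrix}. You claim the Gram matrix has the form $I+K_M$, with $K_M$ made only of small inner products $\langle g_j,g_{j'}\rangle$, $j\ne j'$. This is false whenever $M$ contains two maximal planes of different directions that intersect.

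Take $D=2$ and the cross $M=(\{a_1\}\times[n])\cup([n]\times\{b_2\})$. It is $\varepsilon$-closed once $\varepsilon n\ge 1$, and it is extendable for any sign matrices. Parametrize by its two maximal planes, $(v,u)\mapsto g_1(a_1)\otimes v+u\otimes g_2(b_2)$ on $\R^{k_2}\oplus\R^{k_1}$. The Gram matrix is
\[
\begin{pmatrix} \mathrm{Id} & g_2(b_2)g_1(a_1)^\top\\ g_1(a_1)g_2(b_2)^\top & \mathrm{Id}\end{pmatrix},
\]
whose off-diagonal block has norm exactly $\|g_1(a_1)\|\,\|g_2(b_2)\|=1$. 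The matrix is singular, with kernel vector $(g_2(b_2),-g_1(a_1))$ coming from the shared point.

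Such blocks pair a message vector with a single column and use $\langle g_i(c),g_i(c)\rangle=1$ on every commonly fixed coordinate. They are deterministic and of order one, so no trace-method estimate can make them small, and $\|K_M\|<1$ is simply false for such $M$. Parametrizing by points of $M$ does not help either: on a full line the block is $G_i^\top G_i$ on $n$ points, which has rank $k_i<n$.

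The missing idea is the paper's atom decomposition. Close the maximal planes of $M$ under nonempty intersections, and replace each $U_P$ by the atom $A_P=\bigl(\sum_{P'\subsetneq P}\tau_{P'\to P}U_{P'}\bigr)^\perp$. Three facts then hold:
(i) $\bigoplus_P A_P$ still surjects onto the domain of $\tau_M$, so injectivity of $\Theta:(a_P)\mapsto\sum_P\tau_P a_P$ suffices;
(ii) $\tau_P$ is an isometry on $A_P$;
(iii) for distinct intersecting $P,Q$ with $R=P\cap Q$, we have $\langle\tau_Px,\tau_Qy\rangle=\langle\tau_{R\to P}^\top x,\tau_{R\to Q}^\top y\rangle=0$, because one of these adjoints annihilates the corresponding atom.

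Hence $\Theta^\top\Theta=\mathrm{Id}+H$, where $H$ vanishes on all intersecting pairs. Every surviving block comes from disjoint planes and therefore carries a factor $D_E$ with $E\ne\varnothing$; that is the only source of smallness. Without atoms, your block decomposition retains the term with no differing fixed coordinate, $\mathrm{Id}_{U_C}\otimes G_L\otimes G_R^\top$ on intersecting pairs, which has norm $\Theta(1)$.

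A secondary point concerns the walk counting. You cannot count each tensor coordinate's walk separately and multiply, because sparsity of the anchor sets $X,Y\subseteq[n]^E$ is a joint condition. For example, the diagonal of $[n]^2$ is $\frac1n$-sparse but has full coordinate projections. You must label the rows $x_1,y_1,x_2,\dots$ jointly; a row with $f\ge1$ new coordinates then has at most $\eta n^{f}$ completions. This gives a gain of $\eta^{W/|E|}$, where $W$ is the total number of distinct left labels, rather than a factor $\varepsilon$ per new vertex. That weaker gain still suffices for the $\sqrt{\eta}$ bound.
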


\subsection{Proving an \texorpdfstring{\ec}{eps-closed} set is inner-generated by the norm of its plane-interaction matrix}
\paragraph{Random model}
We begin by defining the random Rademacher model we will use in the proof of \Cref{lem:overview-existence of witness}. To sample $G_i\in\R^{k_i\times n}$, we take it to have columns $g_i(j)\in\R^{k_i}$ indexed by $j\in[n]$ where $g_i(j)=k_i^{-1/2}(\sigma^{(i)}_{1,j},\dots,\sigma^{(i)}_{k_i,j})^\transpose$ and each $\sigma^{(i)}_{r,j}\sim\{-1,1\}$.  By construction, each column has unit norm. While this setup allows for the generator matrices to not necessarily be full-rank, it is easy to show that they will all be full-rank with high probability, so we will just concern ourselves with showing that $M$ is inner-generated. 

Our goal in the rest of this step is to show that $M$ is inner-generated for the codes from $G_1,\dots,G_D$ if \emph{plane-interaction matrix} $K_M$, constructed later in this step, has $\opnorm{K_M}<1$. 

\paragraph{Reducing to the injectivity of the corestriction map}
Take dual codes $\cB_i=\cC_i^\perp$ and define the check-product code $\cB=\cB_1\boxplus\cdots\boxplus\cB_D$. Recall that $M$ is inner-generated when every codeword of $\cB$ supported only on $M$ can be expressed as a sum of line-codewords on lines contained within $M$. Using the generator matrices $(G_i)_i$, we can define the natural map
\begin{align*}
    \cores:\gfield^{[n]^D}\to\bigotimes_{i=1}^D \gfield^{k_i}=:U_{[D]}, \qquad e_x\mapsto\bigotimes_{i=1}^D g_i(x_i) 
\end{align*}
where $\gfield$ is any field\footnote{While we will focus on the setting of $\gfield=\R$, these corestriction maps are well-defined over any field and, in particular, our ultimate goal of $\gfield=\F_p$.} and $e_x$ is the standard basis indexed by $x\in[n]^D$. By construction, the kernel of $\tau$ is exactly $\cB$. This is close to the map we want to construct indicating whether $M$ is inner-generated, but no information about $M$ is present in $\cores$, so it should not be surprising that we have to modify it to get our desired map. Observe that if we restrict $\cores$ to words only supported on $M$ and $M$ contains a line of $[n]^D$, say in the $i$-th direction, then $\cores$ is not injective because all the codewords from $\cB_i$ on that line are supported on $M$ and in the kernel of $\cores$. However, for $M$ to be inner-generated, we exactly want to allow words supported on $M$ generated by the sum of line-codewords. Formally, this is the set $R_M^{\text{line}}=\sum_{\ell\in\lines(M)}\cB_\ell$ where $\lines(M)$ is the set of axis-parallel lines contained in $M$ and, for an $i$-directional line $\ell\in\lines(M)$, we let $\cB_\ell$ be $\cB_i$ included into $\gfield^{[n]^D}$ by placing it on $\ell$ and padding with 0's elsewhere. We can then consider the induced quotient map
\begin{align*}
    \cores_M:\cosheaf(M):=\gfield^M/R_M^{\text{line}}\to U_{[D]},
\end{align*}
where we are using $\gfield^M$ as shorthand for words in $\gfield^{[n]^D}$ supported on $M$. Recalling that $\ker(\cores)=\cB$, the kernel of $\cores_M$ is exactly
\begin{align*}
    \ker(\cores_M)\cong\frac{\gfield^M\cap\ker(\cores)}{R_M^{\text{line}}}=\frac{\gfield^M\cap \cB}{\sum_{\ell\in\lines(M)}\cB_\ell}.
\end{align*}
Therefore, when $\cores_M$ is injective we get that $(\gfield^M\cap\cB)/(\sum_{\ell\in\lines(M)}\cB_\ell)=\ker(\cores_M)=0$, meaning that $\gfield^M\cap\cB=\sum_{\ell\in\lines(M)}\cB_\ell$, which is exactly the definition of $M$ being inner-generated.

An instructive special case to consider is when $M$ is \ec but contains no full lines, a property that we call \es. Consequently, $R_M^{\text{line}}=\{0\}$ and our goal becomes showing that the columns of $G_1\otimes\cdots\otimes G_D$ indexed by $M$, that is $\otimes_{i\in[D]} g_i(x_i)$ for $x\in M$, are linearly independent. However, when $M$ does contain full lines, we must handle the induced relations between them.

\paragraph{The local contributions of planes}
Instead of focusing on every single axis-parallel line in $M$ individually, we will bundle these into planes when possible. Formally, an $I$-plane at direction $I\subseteq[D]$ anchored at $a\in[n]^{\setcomp{I}}$ is defined as $P=\plane(I,a)=\{x\in[n]^D:\forall j\in\setcomp{I},x_j=a_j\}$. We assign to it a local space $U_P=U_I=\otimes_{i\in I}\R^{k_i}$. For a plane $Q=\plane(J,b)$ containing $P$, so $J\supseteq I$, we define the corestriction map
\begin{align*}
    \cores_{P\to Q}:U_P\to U_Q\qquad u\mapsto u\otimes\bigotimes_{j\in J\setminus I}g_j(a_j),
\end{align*}
with the tensor factors placed in the order of elements in $[D]$. In particular, we can always take $Q=[n]^D$ and consider the global corestriction map $\cores_P:=\cores_{P\to[n]^D}$ which takes a local contribution from $P$ and maps it to $U_{[n]^D}$. 

The \ec set $M$ can contain many planes---every $I$-plane in $M$ is made up of $n$ planes of dimension $\abs{I}-1$ also in $M$. We can simplify the structure of $M$ by focusing on the \emph{maximal} (by inclusion) planes, denoted as $\maxplanes(M)$, and the planes resulting from the intersection of maximal planes 
$\maxplanesclosed(M)=\left\{
        \bigcap_{P\in\mathcal S}P:
        \varnothing\ne\mathcal S\subseteq\maxplanes(M),\
        \bigcap_{P\in\mathcal S}P\ne\emptyset
        \right\}$.
A useful combinatorial consequence of this viewpoint is that, for a fixed direction $I\subseteq[D]$, the set of anchors $\maxAI(M)\subseteq[n]^{\setcomp{I}}$ of the maximal $I$-planes in $\maxplanes(M)$ is \es because $M$ is \ec. Indeed, if there were more than $\varepsilon n$ anchors from $\maxAI(M)$ on a line in some direction $j\notin I$, then the $\varepsilon$-closure of $M$ would force there to be a larger $(I\cup\{j\})$-plane containing the anchored planes, contradicting their maximality. If we introduce intersections of planes to consider the $I$-planes in $\maxplanesclosed(M)$, then the resulting anchor set $\maxAIclosed(M)$ is $(2^D\varepsilon)$-sparse via a similar argument. In either case, we use the $\varepsilon$-closure of $M$ to obtain sparseness guarantees for its anchor sets.

In particular, this viewpoint is helpful in dealing with the injectivity of $\cores_M$ because we can equivalently describe its domain $\cosheaf(M)$ as the direct sum of all $U_P$ for $P\in\maxplanesclosed(M)$ where every $u\in U_P$ is identified with its corestriction $\cores_{P\to Q}(u)\in U_Q$ whenever $P\subseteq Q$. Conceptually, this says that a contribution from a smaller plane contained in a larger one is equivalent to the same contribution from the larger one. This is shown formally in \Cref{lem:cosheaf(M) as quotient of direct sum}.

\paragraph{Removing interactions between planes via atoms}

In this equivalent definition of $\cosheaf(M)$, we see that a plane $Q$ contained in another plane $P$ contributes to the local space $U_P$ by the image $\cores_{Q\to P}(U_Q)\subseteq U_P$. To tease apart how much of the local space $U_P$ is controlled solely by $P$, we define the \emph{atom} $A_P$ of $P$ as
\begin{align*}
    U_{\subsetneq P}=\sum_{\substack{Q\in\maxplanesclosed(M)\\ Q\subsetneq P}} \cores_{Q\to P}(U_Q) \qquad \text{and} \qquad A_P=U_{\subsetneq P}^\perp\subseteq U_P.
\end{align*}
Importantly, since we repeat this procedure for every plane, the direct sum of all atoms $\bigoplus_{P\in\maxplanesclosed(M)}A_P$ spans $\cosheaf(M)$, which we formally show in \Cref{lem:direct sum of atoms spans cosheaf}.

The main utility of this atom decomposition is that it makes clear the fact that non-intersecting planes are the only obstructions to $M$ being inner-generated. In particular, we observe that the atoms of two distinct intersecting planes have orthogonal images in $U_{[D]}$, formally proved in \Cref{lem:cores of elts of intersecting atoms are orthogonal}. To see why, consider distinct intersecting planes $P$ and $Q$ with $R=P\cap Q$. For elements in their atoms $a_P\in A_P$ and $a_Q\in A_Q$, we can compute their inner product by using the adjoint of the corestriction map as
\begin{align*}
    \langle \cores_{P\to[n]^D}(a_P),\cores_{Q\to [n]^D}(a_Q)\rangle=\langle\cores_{R\to P}^\adjoint(a_P),\cores_{R\to Q}^\adjoint(a_Q)\rangle.
\end{align*}

Say, without loss of generality, that $R\subsetneq P$. Then, by the definition of atoms, $\cores_{R\to P}^\adjoint(a_P)=0$ making the entire inner product zero. In the case that $P$ and $Q$ are not distinct, so $P=Q$, it is not too hard to see that we get an isometry $\langle \cores_{P\to[n]^D}(a_P),\cores_{P\to [n]^D}(a_P')\rangle=\langle a_P,a_P'\rangle$, as shown in \Cref{cor:cores is isometry on atoms}.

\paragraph{The plane-interaction matrix}

We arrive at our main analytic tool, the plane-interaction matrix, by pairwise comparing the corestriction maps of every plane in $\maxplanesclosed(M)$. In particular, we define $\Theta:\bigoplus_{P\in\maxplanesclosed(M)} A_P\to U_{[D]}$ by $\Theta((a_P)_P)=\sum_P \cores_{P\to[n]^D}(a_P)$. Using our two observations about the inner product of intersecting planes above, we can decompose the Gram matrix of $\Theta$ as $\Theta^\adjoint\Theta=I+K$ where the block matrix $K$ is defined in its $(P,Q)$-th block by

\begin{wrapfigure}[22]{r}{0.50\textwidth}
    \vspace{-0.7\baselineskip}
    \centering
    \begingroup
\providecommand{\PlaneFigureUnit}{0.72cm}
\providecommand{\PlaneMatrixSymbol}{K}
\providecommand{\DirectionBlockLabel}{K_{I,J}}
\providecommand{\PlaneBlockLabel}{K_{P,Q}}
\providecommand{\PlanePairLabelP}{P=\mathcal{P}(I,a)}
\providecommand{\PlanePairLabelQ}{Q=\mathcal{P}(J,b)}

\begin{tikzpicture}[
    x=\PlaneFigureUnit,y=\PlaneFigureUnit,
    font=\footnotesize,
    outer matrix/.style={draw=black,line width=0.85pt},
    dir boundary/.style={draw=black,dashed,line width=0.85pt},
    plane boundary/.style={draw=black,dotted,line width=0.75pt},
    dir fill/.style={fill=black!8,draw=none},
    plane fill/.style={fill=black!16,draw=none}
]

\coordinate (SW) at (0,0);
\coordinate (NE) at (8.6,8.4);
\draw[outer matrix] (SW) rectangle (NE);

\def\xL{4.75}
\def\xR{7.95}
\def\yB{3.10}
\def\yT{5.95}

\def\xpL{5.35}
\def\xpR{6.55}
\def\ypB{4.05}
\def\ypT{5.15}

\fill[dir fill] (\xL,\yB) rectangle (\xR,\yT);
\fill[plane fill] (\xpL,\ypB) rectangle (\xpR,\ypT);

\draw[dir boundary] (\xL,0) -- (\xL,8.4);
\draw[dir boundary] (\xR,0) -- (\xR,8.4);
\draw[dir boundary] (0,\yB) -- (8.6,\yB);
\draw[dir boundary] (0,\yT) -- (8.6,\yT);

\draw[plane boundary] (\xpL,\yB) -- (\xpL,\yT);
\draw[plane boundary] (\xpR,\yB) -- (\xpR,\yT);
\draw[plane boundary] (\xL,\ypB) -- (\xR,\ypB);
\draw[plane boundary] (\xL,\ypT) -- (\xR,\ypT);

\node[font=\small] at ({(\xpR+\xR)/2},{(\yB+\ypB)/2}) {$\DirectionBlockLabel$};
\node[font=\small] at ({(\xpL+\xpR)/2},{(\ypB+\ypT)/2}) {$\PlaneBlockLabel$};

\node[font=\Huge,anchor=south east] at (-0.14,8.48) {$\PlaneMatrixSymbol$};

\draw[decorate,decoration={brace,amplitude=4.5pt}] (-1.08,\yB) -- (-1.08,\yT)
    node[midway,left=7pt,font=\normalsize] {$I$};
\draw[decorate,decoration={brace,amplitude=3.8pt}] (-0.20,\ypB) -- (-0.20,\ypT)
    node[midway,left=5pt,font=\normalsize] {$a$};

\draw[decorate,decoration={brace,amplitude=4.5pt}] (\xL,9.42) -- (\xR,9.42)
    node[midway,above=7pt,font=\normalsize] {$J$};
\draw[decorate,decoration={brace,amplitude=3.8pt}] (\xpL,8.50) -- (\xpR,8.50)
    node[midway,above=5pt,font=\normalsize] {$b$};

\node[font=\small,anchor=east] at (4.62,{(\ypB+\ypT)/2}) {$\PlanePairLabelP$};
\node[font=\small,anchor=south] at ({(\xpL+\xpR)/2},6.08) {$\PlanePairLabelQ$};

\end{tikzpicture}
\endgroup
    \caption{the plane-interaction matrix. The lighter gray box represents the directional block $K_{I,J}$, while the darker gray box is a plane block $K_{P,Q}$ with planes $P=\plane(I,a)$ and $Q=\plane(J,b)$.}
    \label{fig:plane-comparison}
    \vspace{-0.8\baselineskip}
\end{wrapfigure}

\leavevmode
\begin{align}
    K_{P,Q}=
    \begin{cases}
        \cores_{P\to[n]^D}^\adjoint\cores_{Q\to[n]^D} & P\cap Q=\varnothing\\
        0 & P\cap Q\neq\varnothing.
    \end{cases}\label{eq:overview-plane interaction}
\end{align}

This matrix $K$ is exactly our plane-interaction matrix and visualized in \Cref{fig:plane-comparison}.

Technically, the domain of $K$ is $\bigoplus_P A_P$, but expanding it to $\bigoplus_P U_P$ is convenient for later calculations and can only increase its operator norm, which we are trying to upper bound. From this construction, we get
\begin{align*}
    \norm{\Theta((a_P)_P)}^2\geq(1-\opnorm{K})\sum_P \norm{a_P}^2,
\end{align*}
meaning that $\Theta^\adjoint\Theta$ is positive definite, and thus $\Theta$ injective, if $\opnorm{K}<1$.  Finally, to conclude that $\cores_M$ is injective, meaning that $M$ is inner-generated, we recall that the direct sum of all atoms spans $\cosheaf(M)$. With the observation that, after identifying $u\in U_P$ with $\cores_{P\to Q}(u)\in U_Q$ for $P\subseteq Q$, the resulting $\Theta$ on the quotient is equivalent to $\cores_M$, we have that $\cores_M$ is injective if $\Theta$ is.

Consequently, our goal in the next couple steps is to bound the expected operator norm of $K$ where our randomness comes from the Rademacher generator matrices $G_1,\dots,G_D$. We will eventually conclude that $\E[\opnorm{K}]\leq O_{D,\rho}(1)\cdot\sqrt{2^D\varepsilon}$, so taking $\varepsilon$ small enough with respect to the constants $D$ and $\rho$ can make this expectation less than $1/2$. Markov's inequality then gives us that there must exist some real setting of $G_1,\dots,G_D$ such that $M$ is inner-generated for the resulting collection of codes.

\subsection{Bounding the norm of the plane-interaction matrix by the expected value of Rademacher chaos polynomials}

From the last step, we have reduced \Cref{lem:overview-existence of witness} to bounding $\E[\opnorm{K}]$. A tempting first attempt at bounding $\opnorm{K}$ is to use the convenient fact that it is defined via the blocks $K_{P,Q}$, so one could plausibly bound $\opnorm{K_{P,Q}}$ for all pairs $P,Q\in\maxplanesclosed(M)$ and then use the triangle inequality to bound $\opnorm{K}$ by their sum. However, we would only be able to achieve a constant bound $\opnorm{K_{P,Q}}=O_{D,\rho,\varepsilon}(1)$ for each block while it is possible to have $\abs{\maxplanesclosed(M)}=O_{D,\rho,\varepsilon}(n^D)$, meaning the triangle inequality would fail to provide the necessary bound.

\paragraph{Directional block decomposition}
To overcome this barrier, for two directions $I,J\subseteq[D]$, we chunk together all blocks $K_{P,Q}$ where $P$ is an $I$-plane and $Q$ is a $J$-plane to form the directional block $K_{I,J}$. There are at most $4^D$ many of these blocks, so bounding $\opnorm{K_{I,J}}$ by a small enough constant and then using the triangle inequality now succeeds.  

Focusing on a single directional block $K_{I,J}$, in \Cref{lem:hatKIJ decomposition} we again decompose it as the sum of matrices as $K_{I,J}=\sum_{\varnothing\neq E\subseteq F} K_{I,J}^E$ where $F=\setcomp{(I\cup J)}$, the coordinates that are fixed by an anchor in both $I$-planes and $J$-planes. Notice that there are at most $2^{\abs{F}}\leq 2^D$ terms in this sum, so bounding each term by a small enough constant and applying the triangle inequality is again a viable strategy. Using the product structure of the tensor code and some elementary, but tedious, algebraic manipulations that we omit here, in \Cref{lem:UB on op norm of KIJE} we bound the operator norm of $K_{I,J}^E$ by
\begin{align*}
    \opnorm{K_{I,J}^E}\leq\opnorm{\bigotimes_{i\in I\setminus J} G_i}\opnorm{\bigotimes_{j\in J\setminus I} G_j}\max_{\lambda\in\Lambda}\opnorm{\Pi_{X_\lambda} \left(\bigotimes_{j\in E} D_j\right)\iota_{Y_\lambda}},
\end{align*}
where $D_j=G_j^\transpose G_j-\Id_{\R^n}$, the indexing set $\Lambda$ has cardinality at most $n^\nu$ where $\nu+\abs{E}\leq D$, the sets $X_\lambda,Y_\lambda$ are subsets of $[n]^E$, and $\Pi_{X_\lambda}$ and $\iota_{Y_\lambda}$ are projection and inclusion (by 0's). 

The first two operator norms are easy to bound by $D$ and $\rho$ using standard techniques from Gaussian matrices. The third factor is where we focus our attention on for the rest of this step and the next. The two sets $X_\lambda,Y_\lambda$ are derived by taking an $E$-plane, indexed by $\lambda$, from $\maxAIclosed(M)$ and $\maxAclosed{J}(M)$, respectively. Consequently, they are both $(2^D\varepsilon)$-sparse in $[n]^E$. Taking $T_\lambda=\Pi_{X_\lambda} \left(\bigotimes_{j\in E} D_j\right)\iota_{Y_\lambda}$, we will show, as stated in \Cref{prop:UB on op norm of max T_lambda}, that
\begin{align}
    \E\left[\max_{\lambda\in\Lambda}\opnorm{T_\lambda}\right]\leq 2\left(\frac{4}{\rho}\right)^{\abs{E}}\sqrt{2^D\varepsilon},\label{eq:overview-bound on exp max opnorm T}
\end{align}
which we note can be made a small constant by taking $\varepsilon$ small enough. 

\paragraph{Deriving the Rademacher chaos polynomials} 
To focus on a single $\lambda\in\Lambda$, we fix a pair of $\eta$-sparse subsets $X,Y\subseteq[n]^q$, where $\eta=2^D\varepsilon$, and let $T=\Pi_X\left(\bigotimes_{j=1}^q D_j\right)\iota_Y$. We apply the trace-power method for an integer $m\in\N$, that we will set later, to get $\opnorm{T}^{2m}\leq\trace((TT^\transpose)^m)$. Expanding out this trace yields the expression
\begin{align*}
    \trace((TT^\transpose)^m)
    =\sum_{\substack{x_1,\ldots,x_m\in X\\y_1,\ldots,y_m\in Y}}
       \prod_{j=1}^q\prod_{t=1}^m
           D_j(x_{t,j},y_{t,j})D_j(y_{t,j},x_{t+1,j}),
    \qquad x_{m+1}=x_1.
\end{align*}
Recalling the definition of $D_j$, we can write a single entry of $D_j$ as $D_j(a,b)=\Ind\{a\neq b\}k_j^{-1}\sum_{r=1}^{k_j}\sigma_{r,a}^{(j)}\sigma_{r,b}^{(j)}$. Plugging this in, as we do in \Cref{cor:tr(TT^transpose) as sum of prod of monomial signs} gives us
\begin{align}
    \trace\left((TT^\transpose)^p\right)=\sum_{\substack{x_1,\dots,x_m\in X\\y_1,\dots,y_m\in Y}}\sum_{\substack{r_j\in[k_j]^{2m}\\j\in[q]}}\left(\prod_{j=1}^q k_j^{-2m}\right)\left(\prod_{j=1}^q\Phi_j(x,y,r_j)\right),\label{eq:overview-trace power monomial expansion}
\end{align}
where $x=(x_1,\dots,x_m)$, $y=(y_1,\dots,y_m)$, and we define the monomials of signs
\begin{align}
    \Phi_j(x,y,r_j)=\prod_{t=1}^m\left(\sigma^{(j)}_{r_{j,2t-1},x_{t,j}}\cdot\sigma^{(j)}_{r_{j,2t-1},y_{t,j}}\cdot\sigma^{(j)}_{r_{j,2t},y_{t,j}}\cdot\sigma^{(j)}_{r_{j,2t},x_{t+1,j}}\cdot\Ind\{x_{t,j}\neq y_{t,j}\text{ and } y_{t,j}\neq x_{t+1,j}\}\right).\label{eq:overview-sign monomial}
\end{align}

This expression for the trace is a sum of homogeneous multivariate polynomials with the domain of each variable being $\{-1,1\}$, alternatively known as a Rademacher chaos.\footnote{The only slight deviation from the formal definition of a Rademacher chaos that we have here is the indicator functions in each monomial, but we do not use any off-the-shelf results on Rademacher chaoses in which this would be an issue; in fact, we significantly leverage them later on.} Now, we finally leverage the unique nature of Rademacher random variables to easily compute the expectation $\E_\sigma[\Phi_j(x,y,r_j)]$ as 1 whenever each sign appears an even number of times\footnote{While satisfying each indicator function $\Ind\{x_{t,j}\neq y_{t,j}\text{ and } y_{t,j}\neq x_{t+1,j}\}$, which we will see is equivalent to the corresponding bipartite walk being non-left-backtracking.} or 0 otherwise, since just one random sign is enough to randomize the entire monomial to zero mean. 

Recall that our goal is to upper bound the expectation of \Cref{eq:overview-trace power monomial expansion}, so we would like to bound the number of $\E_\sigma[\Phi_j(x,y,r_j)]$ that are deterministically 1, which we do in the next step.

\subsection{Bounding the expected value of Rademacher chaos polynomials using weak Wigner words}

\begin{wrapfigure}[22]{r}{0pt}
    \centering
    
    \scalebox{.68}{\begingroup
\providecommand{\MonomialWalkSigma}{\sigma}
\definecolor{walkBlue}{RGB}{32,102,195}
\newcommand{\MonomialEdgeSign}[2]{%
    \MonomialWalkSigma_{r_{j,#1},\,#2}%
}
\newcommand{\MonomialDirArrow}[5]{%
    \coordinate (monomialArrowPos) at ($(#1)!#3!(#2)$);
    \pgfmathanglebetweenpoints{\pgfpointanchor{#1}{center}}%
        {\pgfpointanchor{#2}{center}}%
    \let\monomialAngle\pgfmathresult
    \begin{scope}[shift={(monomialArrowPos)},rotate=\monomialAngle]
        \draw[monomial traversal]
            ({-0.26*(#5)},{0.15*(#4)}) -- ({0.26*(#5)},{0.15*(#4)});
    \end{scope}
}

\begin{tikzpicture}[
    x=1cm,y=0.85cm,font=\normalsize,
    monomial edge/.style={draw=black,line width=0.85pt,line cap=round},
    monomial vertex/.style={circle,fill=black,draw=none,
        inner sep=0pt,minimum size=5.4pt,outer sep=0pt},
    monomial unused/.style={circle,fill=black!38,draw=none,
        inner sep=0pt,minimum size=4.5pt,outer sep=0pt},
    monomial left label/.style={font=\large,anchor=east,inner sep=1pt},
    monomial right label/.style={font=\large,anchor=west,inner sep=1pt},
    monomial sign label/.style={font=\normalsize,inner sep=1pt},
    monomial traversal/.style={draw=walkBlue,line width=0.8pt,
        -{Stealth[length=1.8mm,width=1.3mm]}},
    monomial header/.style={font=\LARGE,inner sep=1pt},
    monomial omitted/.style={font=\large,text=black!55,inner sep=0pt}
]

\node[monomial header] at (0,8.65) {$[n]$};
\node[monomial header] at (6.20,8.65) {$[k_j]$};

\coordinate (msw-x1) at (0, 6.80);
\coordinate (msw-y1) at (0, 4.90);
\coordinate (msw-x2) at (0, 3.00);
\coordinate (msw-y2) at (0, 1.10);
\coordinate (msw-ym) at (0,-1.50);

\coordinate (msw-r1) at (6.20, 5.85);
\coordinate (msw-r2) at (6.20, 3.95);
\coordinate (msw-r3) at (6.20, 2.05);
\coordinate (msw-rm) at (6.20,-0.75);

\coordinate (msw-close-top) at (2.10,7.85);

\draw[monomial edge]
    (msw-x1)
    .. controls (0.45,7.65) and (1.10,7.85) .. (2.10,7.85)
    -- (7.15,7.85)
    .. controls (7.90,7.85) and (8.05,7.45) .. (8.05,6.85)
    -- (8.05,-0.95)
    .. controls (8.05,-2.20) and (6.20,-2.20) .. (msw-rm);
\node[monomial sign label,anchor=south] at (3.65,7.99)
    {$\MonomialEdgeSign{2m}{x_{1,j}}$};
\draw[monomial traversal] (3.91,7.70) -- (3.39,7.70);

\draw[monomial edge] (msw-x1) -- (msw-r1)
    node[monomial sign label,pos=0.50,sloped,above=3pt]
    {$\MonomialEdgeSign{1}{x_{1,j}}$};
\MonomialDirArrow{msw-x1}{msw-r1}{0.50}{-1}{1}
\draw[monomial edge] (msw-y1) -- (msw-r1)
    node[monomial sign label,pos=0.50,sloped,above=3pt]
    {$\MonomialEdgeSign{1}{y_{1,j}}$};
\MonomialDirArrow{msw-y1}{msw-r1}{0.50}{-1}{-1}
\draw[monomial edge] (msw-y1) -- (msw-r2)
    node[monomial sign label,pos=0.50,sloped,above=3pt]
    {$\MonomialEdgeSign{2}{y_{1,j}}$};
\MonomialDirArrow{msw-y1}{msw-r2}{0.52}{-1}{1}
\draw[monomial edge] (msw-x2) -- (msw-r2)
    node[monomial sign label,pos=0.50,sloped,above=3pt]
    {$\MonomialEdgeSign{2}{x_{2,j}}$};
\MonomialDirArrow{msw-x2}{msw-r2}{0.48}{-1}{-1}
\draw[monomial edge] (msw-x2) -- (msw-r3)
    node[monomial sign label,pos=0.50,sloped,above=3pt]
    {$\MonomialEdgeSign{3}{x_{2,j}}$};
\MonomialDirArrow{msw-x2}{msw-r3}{0.52}{-1}{1}
\draw[monomial edge] (msw-y2) -- (msw-r3)
    node[monomial sign label,pos=0.50,sloped,above=3pt]
    {$\MonomialEdgeSign{3}{y_{2,j}}$};
\MonomialDirArrow{msw-y2}{msw-r3}{0.48}{-1}{-1}
\draw[monomial edge] (msw-ym) -- (msw-rm)
    node[monomial sign label,pos=0.50,sloped,above=3pt]
    {$\MonomialEdgeSign{2m}{y_{m,j}}$};
\MonomialDirArrow{msw-ym}{msw-rm}{0.50}{-1}{1}

\foreach \yy in {5.85,3.95,2.05,-2.45}{%
    \node[monomial unused] at (0,\yy) {};
}
\foreach \yy in {4.90,3.00}{%
    \node[monomial unused] at (6.20,\yy) {};
}

\node[monomial omitted] at (0,-0.20) {$\vdots$};
\node[monomial omitted] at (6.20,0.58) {$\vdots$};

\foreach \v in {x1,y1,x2,y2,ym,r1,r2,r3,rm}{%
    \node[monomial vertex] at (msw-\v) {};
}

\node[monomial left label,left=7pt] at (msw-x1) {$x_{1,j}$};
\node[monomial left label,left=7pt] at (msw-y1) {$y_{1,j}$};
\node[monomial left label,left=7pt] at (msw-x2) {$x_{2,j}$};
\node[monomial left label,left=7pt] at (msw-y2) {$y_{2,j}$};
\node[monomial left label,left=7pt] at (msw-ym) {$y_{m,j}$};
\node[monomial right label,right=7pt] at (msw-r1) {$r_{j,1}$};
\node[monomial right label,right=7pt] at (msw-r2) {$r_{j,2}$};
\node[monomial right label,right=7pt] at (msw-r3) {$r_{j,3}$};
\node[monomial right label,right=7pt] at (msw-rm) {$r_{j,2m}$};

\end{tikzpicture}
\endgroup}
    \caption{The closed walk on the bipartite graph for the sign monomial $\Phi_j(x,y,r_j)$. We omit the superscript $(j)$ for legibility.}
    \label{fig:monomial-walk}
\end{wrapfigure}
\paragraph{Sign monomials as bipartite walks}
As we've already observed, our sign monomials $\Phi_j(x,y,r_j)$ have mean 1 exactly when each sign appears an even number of times in the product of \Cref{eq:overview-sign monomial}. Thus, all we must do is count how many choices of $x\in X^m$, $y\in Y^m$, and $r_j\in[k_j]^{2m}$ for $j\in[q]$ there are that yield a sign monomial with each sign appearing an even number of times. A useful way to visualize this is by writing out the relevant indices in the order they are read for a fixed monomial $\Phi_j(x,y,r_j)$ as
\begin{align}
    x_{1,j},\ r_{j,1},\ y_{1,j},\ r_{j,2},\ x_{2,j},\ldots,
       y_{m,j},\ r_{j,2m},\ x_{1,j}.\label{eq:monomial sign walk}
\end{align}
Each sequential pair in this list defines one of the $4m$ signs in \Cref{eq:overview-sign monomial}. Equivalently, we could view each $x_{a,j}$, $y_{a,j}$, and $r_{j,a}$ index as a vertex and this list as giving us a walk on the resulting graph, as seen in \Cref{fig:monomial-walk}. Since the list from \Cref{eq:monomial sign walk} always alternates between $x_{a,j}$ or $y_{a,j}$ and a $r_{j,a}$, it is in fact a walk on a bipartite graph: the bipartite graph with $[n]$ left vertices and $[k_j]$ right vertices where each edge represents a single sign. 

We now make three observations regarding this view of the signs in $\Phi_j(x,y,r_j)$ as a walk on a bipartite graph. First, it is a closed walk by definition since $x_{1,j}=x_{m+1,j}$. Second, it is non-left backtracking, meaning that each step along the walk must have different left neighbors when entering and exiting a right vertex. This is due to the indicator variables in \Cref{eq:overview-sign monomial}. Third, each sign appears an even number of times if each edge used in the walk is traversed an even number of times. Counting the number of such closed walks on this bipartite graph thus gives us an upper bound on the number of sign monomials with expectation 1.

There are two pieces of information that define such a bipartite walk: its shape and the labels on its vertices. For example, the shape of the bipartite walk, if it had length 8, could be tracing the bipartite clique $K_{2,2}$ twice. The labeling of the walk determines where exactly we place this clique on the left $[n]$ and right $[k_j]$ vertices. 

Our strategy for bounding the number of non-left-backtracking closed walks that traverse each edge an even number of times then is to use the series of works \cite{wigner_characteristic_1955,furedi_eigenvalues_1981,anderson_clt_2006}  to bound the number of shapes such a walk could take and then to use the $\eta$-sparsity of $X$ and $Y$ to bound the number of labelings.

\paragraph{Bounding the number of shapes with weak Wigner words} 
The walks on bipartite graphs that we have described here, which can alternatively be described by a string as in \Cref{eq:monomial sign walk}, are a special case of weak Wigner words, which we treat formally in \Cref{sec:bounding max opnorm with wigner}. Anderson and Zeitouni in \cite{anderson_clt_2006} give a combinatorial bound on the number of shapes such a walk can take. More concretely, they bound the number of equivalence classes of these bipartite walks, where two walks are equivalent if there is a bijection between the strings representing the walks.

To apply their bound, we define $\wt_\sL(j)$ and $\wt_\sR(j)$ to be the number of distinct left and right vertices in the walk of $\Phi_j(x,y,r_j)$, and $\wt(j)=\wt_\sL(j)+\wt_\sR(j)$. Since there are $4m$ edge traversals, and each edge is used at least twice, we have that $\wt(j)\leq 2m+1$. We define the \emph{deficiency} of a walk to be $\Delta_j=2m+1-\wt(j)$, which represents how far the walk is from using as many vertices as it possibly could. If we let $\fW_m(\Delta_j)$ be the set of equivalence classes of even, non-left-backtracking, closed bipartite walks that traverse $4m$ edges with deficiency $\Delta_j$, then the bound from \cite{anderson_clt_2006}, that we state in a form useful to us in \Cref{cor: prop 7.9 for bipartite equiv}, says that $\abs{\fW_m(\Delta_j)}\leq16^m(4m)^{6\Delta_j}$.  For example, in \Cref{fig:fW_2(1) example} we explicitly write out all equivalence classes in $\fW_2(1)$.

\begin{figure}[ht]
    \centering
    \scalebox{.9}{\begingroup
\providecommand{\fW}{\mathfrak{W}}
\providecommand{\WalkFamilyLabel}{\fW_2(1)}
\providecommand{\WalkFamilyAccent}{blue!73!black}

\newcommand{\WalkFamilyStep}[6]{%
    \pgfmathanglebetweenpoints{\pgfpointanchor{#1}{center}}%
                              {\pgfpointanchor{#2}{center}}%
    \let\WalkFamilyAngle\pgfmathresult
    \pgfmathsetmacro{\WalkFamilyLabelOffset}{#4 + (#4 > 0 ? 0.17 : -0.17)}%
    \begin{scope}[shift={($(#1)!#3!(#2)$)},rotate=\WalkFamilyAngle]
        \draw[wf traversal]
            ({-0.14*(#5)},#4) -- ({0.14*(#5)},#4);
        \node[wf step label] at (0,\WalkFamilyLabelOffset) {$#6$};
    \end{scope}%
}

\newcommand{\WalkFamilyEdges}{%
    \coordinate (wfUL) at (0, 0.94);
    \coordinate (wfUR) at (2.10, 0.94);
    \coordinate (wfLL) at (0,-0.94);
    \coordinate (wfLR) at (2.10,-0.94);
    \draw[wf edge] (wfUL)--(wfUR);
    \draw[wf edge] (wfUL)--(wfLR);
    \draw[wf edge] (wfLL)--(wfUR);
    \draw[wf edge] (wfLL)--(wfLR);
}

\newcommand{\WalkFamilyBrackets}{%
    \draw[wf bracket] (-0.30,1.44)--(-0.43,1.44)
        --(-0.43,-1.44)--(-0.30,-1.44);
    \draw[wf bracket] (2.40,1.44)--(2.53,1.44)
        --(2.53,-1.44)--(2.40,-1.44);
}

\newcommand{\WalkFamilyVerticesAndBrackets}{%
    \foreach \v in {wfUL,wfUR,wfLL,wfLR}
        \node[wf vertex] at (\v) {};
    \node[wf vertex label,anchor=east] at (-0.10, 0.94) {$1$};
    \node[wf vertex label,anchor=west] at ( 2.20, 0.94) {$2$};
    \node[wf vertex label,anchor=east] at (-0.10,-0.94) {$3$};
    \node[wf vertex label,anchor=west] at ( 2.20,-0.94) {$4$};
    \WalkFamilyBrackets
}

\newcommand{\WalkFamilyStarEdges}{%
    \coordinate (wfLone)   at (0, 1.18);
    \coordinate (wfLthree) at (0, 0);
    \coordinate (wfLfour)  at (0,-1.18);
    \coordinate (wfRtwo)   at (2.10,0);
    \draw[wf edge] (wfLone)--(wfRtwo);
    \draw[wf edge] (wfLthree)--(wfRtwo);
    \draw[wf edge] (wfLfour)--(wfRtwo);
}

\newcommand{\WalkFamilyStarVerticesAndBrackets}{%
    \foreach \v in {wfLone,wfLthree,wfLfour,wfRtwo}
        \node[wf vertex] at (\v) {};
    \node[wf vertex label,anchor=east] at (-0.10, 1.18) {$1$};
    \node[wf vertex label,anchor=east] at (-0.10, 0) {$3$};
    \node[wf vertex label,anchor=east] at (-0.10,-1.18) {$4$};
    \node[wf vertex label,anchor=west] at ( 2.20, 0) {$2$};
    \WalkFamilyBrackets
}

\begin{tikzpicture}[
    x=1cm,y=1cm,font=\normalsize,
    wf edge/.style={draw=black,line width=0.6pt,line cap=round},
    wf vertex/.style={circle,draw=none,fill=black,
        inner sep=0pt,outer sep=0pt,minimum size=4.2pt},
    wf vertex label/.style={font=\fontsize{9}{11}\selectfont,
        inner sep=0.5pt,outer sep=0pt},
    wf traversal/.style={draw=\WalkFamilyAccent,line width=0.6pt,
        -{Stealth[length=1.10mm,width=0.90mm]},line cap=round},
    wf step label/.style={text=\WalkFamilyAccent,font=\fontsize{8}{10}\selectfont,
        inner sep=0pt,outer sep=0pt},
    wf bracket/.style={draw=black,line width=0.7pt,line cap=rect},
    wf set brace/.style={draw=black,line width=0.75pt,
        decorate,decoration={brace,amplitude=5pt}},
    wf punctuation/.style={font=\normalsize,inner sep=0pt,outer sep=0pt},
    wf word label/.style={font=\fontsize{14}{16}\selectfont\ttfamily,inner sep=0pt,outer sep=0pt}
]

\path[use as bounding box] (-0.72,-2.55) rectangle (15.82,1.67);
\draw[wf set brace] (-0.72,-1.67)--(-0.72,1.67);
\draw[wf set brace] (15.82,1.67)--(15.82,-1.67);

\begin{scope}[xshift=0.00cm]
    \WalkFamilyEdges
    \WalkFamilyStep{wfUL}{wfUR}{0.50}{ 0.12}{ 1}{1}
    \WalkFamilyStep{wfUL}{wfUR}{0.50}{-0.12}{-1}{8}
    \WalkFamilyStep{wfUL}{wfLR}{0.20}{-0.12}{ 1}{5}
    \WalkFamilyStep{wfUL}{wfLR}{0.80}{ 0.12}{-1}{4}
    \WalkFamilyStep{wfLL}{wfUR}{0.20}{ 0.12}{ 1}{7}
    \WalkFamilyStep{wfLL}{wfUR}{0.80}{-0.12}{-1}{2}
    \WalkFamilyStep{wfLL}{wfLR}{0.50}{-0.12}{ 1}{3}
    \WalkFamilyStep{wfLL}{wfLR}{0.50}{ 0.12}{-1}{6}
    \WalkFamilyVerticesAndBrackets
\end{scope}
\node[wf word label] at (1.05,-2.08) {$123414321$};

\node[wf punctuation] at (2.675,0) {$,$};

\begin{scope}[xshift=3.25cm]
    \WalkFamilyEdges
    \WalkFamilyStep{wfUL}{wfUR}{0.50}{ 0.12}{ 1}{1}
    \WalkFamilyStep{wfUL}{wfUR}{0.50}{-0.12}{ 1}{5}
    \WalkFamilyStep{wfUL}{wfLR}{0.20}{-0.12}{-1}{4}
    \WalkFamilyStep{wfUL}{wfLR}{0.80}{ 0.12}{-1}{8}
    \WalkFamilyStep{wfLL}{wfUR}{0.20}{ 0.12}{-1}{6}
    \WalkFamilyStep{wfLL}{wfUR}{0.80}{-0.12}{-1}{2}
    \WalkFamilyStep{wfLL}{wfLR}{0.50}{-0.12}{ 1}{3}
    \WalkFamilyStep{wfLL}{wfLR}{0.50}{ 0.12}{ 1}{7}
    \WalkFamilyVerticesAndBrackets
\end{scope}
\node[wf word label] at (4.30,-2.08) {$123412341$};

\node[wf punctuation] at (5.925,0) {$,$};

\begin{scope}[xshift=6.50cm]
    \WalkFamilyEdges
    \WalkFamilyStep{wfUL}{wfUR}{0.50}{ 0.12}{ 1}{1}
    \WalkFamilyStep{wfUL}{wfUR}{0.50}{-0.12}{-1}{4}
    \WalkFamilyStep{wfUL}{wfLR}{0.20}{-0.12}{ 1}{5}
    \WalkFamilyStep{wfUL}{wfLR}{0.80}{ 0.12}{-1}{8}
    \WalkFamilyStep{wfLL}{wfUR}{0.20}{ 0.12}{ 1}{3}
    \WalkFamilyStep{wfLL}{wfUR}{0.80}{-0.12}{-1}{2}
    \WalkFamilyStep{wfLL}{wfLR}{0.50}{-0.12}{ 1}{7}
    \WalkFamilyStep{wfLL}{wfLR}{0.50}{ 0.12}{-1}{6}
    \WalkFamilyVerticesAndBrackets
\end{scope}
\node[wf word label] at (7.55,-2.08) {$123214341$};

\node[wf punctuation] at (9.175,0) {$,$};

\begin{scope}[xshift=9.75cm]
    \WalkFamilyStarEdges
    \WalkFamilyStep{wfLone}{wfRtwo}{0.52}{ 0.12}{ 1}{1}
    \WalkFamilyStep{wfLone}{wfRtwo}{0.22}{-0.12}{-1}{8}
    \WalkFamilyStep{wfLthree}{wfRtwo}{0.31}{-0.12}{-1}{2}
    \WalkFamilyStep{wfLthree}{wfRtwo}{0.31}{ 0.12}{ 1}{3}
    \WalkFamilyStep{wfLthree}{wfRtwo}{0.54}{-0.12}{-1}{6}
    \WalkFamilyStep{wfLthree}{wfRtwo}{0.54}{ 0.12}{ 1}{7}
    \WalkFamilyStep{wfLfour}{wfRtwo}{0.22}{ 0.12}{-1}{4}
    \WalkFamilyStep{wfLfour}{wfRtwo}{0.52}{-0.12}{ 1}{5}
    \WalkFamilyStarVerticesAndBrackets
\end{scope}
\node[wf word label] at (10.80,-2.08) {$123242321$};

\node[wf punctuation] at (12.425,0) {$,$};

\begin{scope}[xshift=13.00cm]
    \WalkFamilyStarEdges
    \WalkFamilyStep{wfLone}{wfRtwo}{0.24}{ 0.12}{ 1}{1}
    \WalkFamilyStep{wfLone}{wfRtwo}{0.24}{-0.12}{-1}{4}
    \WalkFamilyStep{wfLone}{wfRtwo}{0.57}{ 0.12}{ 1}{5}
    \WalkFamilyStep{wfLone}{wfRtwo}{0.57}{-0.12}{-1}{8}
    \WalkFamilyStep{wfLthree}{wfRtwo}{0.24}{-0.12}{-1}{2}
    \WalkFamilyStep{wfLthree}{wfRtwo}{0.24}{ 0.12}{ 1}{3}
    \WalkFamilyStep{wfLfour}{wfRtwo}{0.24}{ 0.12}{-1}{6}
    \WalkFamilyStep{wfLfour}{wfRtwo}{0.52}{-0.12}{ 1}{7}
    \WalkFamilyStarVerticesAndBrackets
\end{scope}
\node[wf word label] at (14.05,-2.08) {$123212421$};

\end{tikzpicture}%
\endgroup}
    \caption{An example of the set of equivalence classes $\fW_{m=2}(\Delta=1)$ with 8 edge traversals (i.e., word length 9) and 4 distinct vertices, giving deficiency $\Delta=2m+1-\wt(w)=4+1-4=1$. All walks start from node 1 and the numbered blue arrows dictate the direction that the walk takes. The label beneath each walk is the string representing the walk on the vertices, formally the canonical representative of the equivalence class that we define in \Cref{def:canonical representative}. These are different equivalence classes because there is no bijective relabeling map from $[4]\to[4]$ that can take one of these strings to another.}
    \label{fig:fW_2(1) example}
\end{figure}

All that remains now is to bound the number of ways we can label each shape using $X$ and $Y$. That is, the size of an individual equivalence class $\cW\in\fW_m(\Delta_j)$.

\begin{wrapfigure}[25]{r}{0pt}
    \centering
    \vspace{-1\baselineskip}
    \scalebox{0.75}{\begingroup
\providecommand{\cW}{\mathcal{W}}
\providecommand{\WalkEqAccent}{blue!73!black}

\newcommand{\WalkEqStep}[6]{%
    \pgfmathanglebetweenpoints{\pgfpointanchor{#1}{center}}%
                              {\pgfpointanchor{#2}{center}}%
    \let\WalkEqAngle\pgfmathresult
    \pgfmathsetmacro{\WalkEqLabelOffset}{#4 + (#4 > 0 ? 0.17 : -0.17)}%
    \begin{scope}[shift={($(#1)!#3!(#2)$)},rotate=\WalkEqAngle]
        \draw[walk traversal]
            ({-0.23*(#5)},#4) -- ({0.23*(#5)},#4);
        \node[walk step label] at (0,\WalkEqLabelOffset) {$#6$};
    \end{scope}%
}

\newcommand{\WalkEqSupport}[8]{%
    \draw[walk edge] (walkUL) -- (walkUR);
    \draw[walk edge] (walkUL) -- (walkLR);
    \draw[walk edge] (walkLL) -- (walkUR);
    \draw[walk edge] (walkLL) -- (walkLR);
    \WalkEqStep{walkUL}{walkUR}{0.50}{ 0.16}{ 1}{#1}
    \WalkEqStep{walkUL}{walkUR}{0.50}{-0.16}{-1}{#2}
    \WalkEqStep{walkUL}{walkLR}{0.20}{-0.15}{ 1}{#3}
    \WalkEqStep{walkUL}{walkLR}{0.80}{ 0.15}{-1}{#4}
    \WalkEqStep{walkLL}{walkUR}{0.20}{ 0.15}{ 1}{#5}
    \WalkEqStep{walkLL}{walkUR}{0.80}{-0.15}{-1}{#6}
    \WalkEqStep{walkLL}{walkLR}{0.50}{-0.16}{ 1}{#7}
    \WalkEqStep{walkLL}{walkLR}{0.50}{ 0.16}{-1}{#8}
}

\newcommand{\WalkEqStepShiftedLabel}[8]{%
    \pgfmathanglebetweenpoints{\pgfpointanchor{#1}{center}}%
                              {\pgfpointanchor{#2}{center}}%
    \let\WalkEqAngle\pgfmathresult
    \pgfmathsetmacro{\WalkEqLabelOffset}{#4 + (#4 > 0 ? 0.17 : -0.17)}%
    \begin{scope}[shift={($(#1)!#3!(#2)$)},rotate=\WalkEqAngle]
        \draw[walk traversal]
            ({-0.23*(#5)},#4) -- ({0.23*(#5)},#4);
        \coordinate (walkEqUnshiftedLabel) at (0,\WalkEqLabelOffset);
    \end{scope}%
    \node[walk step label] at ($(walkEqUnshiftedLabel)+(#7,#8)$) {$#6$};
}

\newcommand{\WalkEqSupportSecondTuned}[8]{%
    \draw[walk edge] (walkUL) -- (walkUR);
    \draw[walk edge] (walkUL) -- (walkLR);
    \draw[walk edge] (walkLL) -- (walkUR);
    \draw[walk edge] (walkLL) -- (walkLR);
    \WalkEqStep{walkUL}{walkUR}{0.50}{ 0.16}{ 1}{#1}
    \WalkEqStep{walkUL}{walkUR}{0.50}{-0.16}{-1}{#2}
    \WalkEqStep{walkUL}{walkLR}{0.20}{-0.15}{ 1}{#3}
    \WalkEqStepShiftedLabel{walkUL}{walkLR}{0.80}{ 0.15}{-1}{#4}{0.25}{-0.13}
    \WalkEqStep{walkLL}{walkUR}{0.20}{ 0.15}{ 1}{#5}
    \WalkEqStepShiftedLabel{walkLL}{walkUR}{0.80}{-0.15}{-1}{#6}{0.25}{ 0.13}
    \WalkEqStep{walkLL}{walkLR}{0.50}{-0.16}{ 1}{#7}
    \WalkEqStep{walkLL}{walkLR}{0.50}{ 0.16}{-1}{#8}
}

\begin{tikzpicture}[
    x=1cm,y=1cm,
    font=\normalsize,
    walk edge/.style={draw=black,line width=0.75pt,line cap=round},
    walk vertex/.style={circle,draw=none,fill=black,
        inner sep=0pt,outer sep=0pt,minimum size=5.7pt},
    walk unused vertex/.style={walk vertex,fill=black!40,minimum size=4.0pt},
    walk vertex label/.style={font=\large,inner sep=1pt,outer sep=0pt},
    walk unused label/.style={walk vertex label,text=black!55},
    walk set label/.style={font=\Large,inner sep=0pt},
    walk traversal/.style={draw=\WalkEqAccent,line width=0.8pt,
        -{Stealth[length=1.5mm,width=1.25mm]},line cap=round},
    walk step label/.style={text=\WalkEqAccent,font=\small,
        inner sep=0pt,outer sep=0pt},
    walk class bracket/.style={draw=black,line width=0.9pt,line cap=rect},
    walk branch/.style={draw=black!55,line width=1.0pt,
        -{Stealth[length=1.8mm,width=1.45mm]}}
]

\begin{scope}
    \coordinate (walkUL) at (-1.90,6.00);
    \coordinate (walkUR) at ( 2.80,6.00);
    \coordinate (walkLL) at (-1.90,4.12);
    \coordinate (walkLR) at ( 2.80,4.12);

    \WalkEqSupport{1}{8}{5}{4}{7}{2}{3}{6}
    \foreach \v in {walkUL,walkUR,walkLL,walkLR}
        \node[walk vertex] at (\v) {};
    \node[walk vertex label,anchor=east] at (-2.09,6.00) {$1$};
    \node[walk vertex label,anchor=west] at ( 2.99,6.00) {$2$};
    \node[walk vertex label,anchor=east] at (-2.09,4.12) {$3$};
    \node[walk vertex label,anchor=west] at ( 2.99,4.12) {$4$};

    \draw[walk class bracket] (-2.51,6.55) -- (-2.72,6.55)
        -- (-2.72,3.57) -- (-2.51,3.57);
    \draw[walk class bracket] ( 3.41,6.55) -- ( 3.62,6.55)
        -- ( 3.62,3.57) -- ( 3.41,3.57);
    \node[font=\LARGE,anchor=east,inner sep=0pt]
        at (-3.03,5.06) {$\cW = {}$};
\end{scope}

\draw[walk branch] (-1.58,3.38)
    .. controls (-2.22,3.09) and (-2.98,2.84) .. (-3.55,2.56);
\draw[walk branch] ( 2.48,3.38)
    .. controls ( 2.90,3.09) and ( 3.27,2.84) .. ( 3.75,2.56);

\begin{scope}[xshift=-3.65cm]
    \def\walkLeft{-2.13}
    \def\walkRight{2.13}
    \node[walk set label] at (\walkLeft,2.31) {$[n]$};
    \node[walk set label] at (\walkRight,2.31) {$[k_j]$};

    \coordinate (walkUL) at (\walkLeft, 0.98); %
    \coordinate (walkUR) at (\walkRight,0.98); %
    \coordinate (walkLL) at (\walkLeft,-0.66); %
    \coordinate (walkLR) at (\walkRight,-0.66);%
    \WalkEqSupport{1}{8}{5}{4}{7}{2}{3}{6}

    \foreach \lab/\yy in {1/1.80,3/0.16,5/-1.48,n/-3.02}{
        \node[walk unused vertex] at (\walkLeft,\yy) {};
        \node[walk unused label,anchor=east]
            at ({\walkLeft-0.20},\yy) {$\lab$};
    }
    \foreach \lab/\yy in {2/0.16,4/-1.48,k_j/-3.02}{
        \node[walk unused vertex] at (\walkRight,\yy) {};
        \node[walk unused label,anchor=west]
            at ({\walkRight+0.20},\yy) {$\lab$};
    }
    \foreach \lab/\yy in {2/0.98,4/-0.66}{
        \node[walk vertex] at (\walkLeft,\yy) {};
        \node[walk vertex label,anchor=east]
            at ({\walkLeft-0.20},\yy) {$\lab$};
    }
    \foreach \lab/\yy in {1/0.98,3/-0.66}{
        \node[walk vertex] at (\walkRight,\yy) {};
        \node[walk vertex label,anchor=west]
            at ({\walkRight+0.20},\yy) {$\lab$};
    }
    \node[text=black!55,inner sep=0pt] at (\walkLeft,-2.20) {$\vdots$};
    \node[text=black!55,inner sep=0pt] at (\walkRight,-2.20) {$\vdots$};
    \node[font=\large,inner sep=0pt] at (0,-3.58) {$w^{(1)}\in\cW$};
\end{scope}

\begin{scope}[xshift=3.65cm]
    \def\walkLeft{-2.13}
    \def\walkRight{2.13}
    \node[walk set label] at (\walkLeft,2.31) {$[n]$};
    \node[walk set label] at (\walkRight,2.31) {$[k_j]$};

    \coordinate (walkUL) at (\walkLeft, 0.98); %
    \coordinate (walkUR) at (\walkRight,0.16); %
    \coordinate (walkLL) at (\walkLeft,-1.48); %
    \coordinate (walkLR) at (\walkRight,-1.48);%
    \WalkEqSupportSecondTuned{7}{2}{3}{6}{1}{8}{5}{4}

    \foreach \lab/\yy in {1/1.80,3/0.16,4/-0.66,n/-3.02}{
        \node[walk unused vertex] at (\walkLeft,\yy) {};
        \node[walk unused label,anchor=east]
            at ({\walkLeft-0.20},\yy) {$\lab$};
    }
    \foreach \lab/\yy in {1/0.98,3/-0.66,k_j/-3.02}{
        \node[walk unused vertex] at (\walkRight,\yy) {};
        \node[walk unused label,anchor=west]
            at ({\walkRight+0.20},\yy) {$\lab$};
    }
    \foreach \lab/\yy in {2/0.98,5/-1.48}{
        \node[walk vertex] at (\walkLeft,\yy) {};
        \node[walk vertex label,anchor=east]
            at ({\walkLeft-0.20},\yy) {$\lab$};
    }
    \foreach \lab/\yy in {2/0.16,4/-1.48}{
        \node[walk vertex] at (\walkRight,\yy) {};
        \node[walk vertex label,anchor=west]
            at ({\walkRight+0.20},\yy) {$\lab$};
    }
    \node[text=black!55,inner sep=0pt] at (\walkLeft,-2.20) {$\vdots$};
    \node[text=black!55,inner sep=0pt] at (\walkRight,-2.20) {$\vdots$};
    \node[font=\large,inner sep=0pt] at (0,-3.58) {$w^{(2)}\in\cW$};
\end{scope}

\end{tikzpicture}
\endgroup}
    \caption{An example of an equivalence class $\cW$ of a walk on a bipartite graph, which gives the shape of the walk, and elements $w^{(1)}$ and $w^{(2)}$ of the equivalence class, which are possible realizations of labelings. }
    \label{fig:labelings-of-equiv-class}
\end{wrapfigure}
\paragraph{Using sparsity to bound labelings}
Fix deficiencies $\Delta_1,\dots,\Delta_q$ and let $\cW_1,\dots,\cW_q$ be equivalence classes of walks where $\cW_j\in\fW_m(\Delta_j)$. We must bound how many choices of $x\in X^m$, $y\in Y^m$, and $r_j\in[k_j]^{2m}$ for $j\in[q]$ there are that yield sign monomials $\Phi_j(x,y,r_j)$ with their corresponding walks in each $\cW_j$. See \Cref{fig:labelings-of-equiv-class} for explicit examples of elements in an equivalence class of $\fW_2(1)$.

There are no restrictions on how we can choose the right vertices $r_1,\dots,r_q$, so we take the naive bound of $\prod_{j=1}^qk_j^{\wt_R(j)}$. For the left vertices in each walk, they must adhere to both the shapes we have chosen and the $\eta$-sparsity of $X$ and $Y$. 

To see how the shape $\cW_j$ comes into play, consider the example where the path returns to its first vertex after four steps, as in \Cref{fig:labelings-of-equiv-class}. Consequently, when we give a label to the first vertex, we are also labeling the fifth vertex in this walk. 

For the general setting, we will consider choosing the vectors $x_1,y_1,x_2,y_2,\dots,x_m,y_m$ in order. Due to the shapes $\cW_1,\dots,\cW_q$ being fixed, whenever we choose a new vector, it's possible that some of its elements have already been fixed by the choices of previous vectors. Say that there are $f\geq 1$ free (unfixed) coordinates when we're choosing a vector. Since all of these vectors must lie in either of the $\eta$-sparse sets $X$ or $Y$, a consequence of sparseness, shown in \Cref{fact:number of completions in sparse set}, is that there are at most $\eta n^f$ ways to choose a valid vector.

Now, take $W_\sL=\sum_{j=1}^q \wt_\sL(j)$. Each vector can have at most $q$ unfixed coordinates, while at least $W_\sL/q$ of the vectors have an unfixed coordinate. Therefore, there are at most $n^{\wt_\sL}\eta^{\wt_\sL/q}$ possible ways to choose $x_1,y_1,x_2,y_2,\dots,x_m,y_m$ with $x_t\in X$ and $y_t\in Y$ that respect the fixed shapes $\cW_1,\dots,\cW_q$ of each closed walk. Including our bound on the right vertices $r_1,\dots,r_q$, we get that each choice of shapes $(\cW_j)_{j\in[q]}$ has at most
\begin{align}
    n^{W_\sL}\eta^{W_\sL/q}\prod_{j=1}^qk_j^{\wt_\sR(j)}\label{eq:overview-bound on prod of equivalence classes}
\end{align}
possible labelings. This is the only part of our bound that uses the sparsity of $X$ and $Y$.

Finally, to achieve the desired bound in \Cref{eq:overview-bound on exp max opnorm T}, all that is left to do is to combine \Cref{eq:overview-bound on prod of equivalence classes} with the bound from \cite{anderson_clt_2006}, sum over all possible deficiencies, and normalize for the expectation. After much simplification, this yields
\begin{align*}
    \E_\sigma[\trace((T_\lambda T_\lambda^\transpose)^m)]\leq 2^q n^q \eta^{m+1} (16^q \rho^{-2q})^m,
\end{align*}
as shown in \Cref{lem:UB on expectation of trace of power}. Then we can achieve \Cref{eq:overview-bound on exp max opnorm T} by summing this bound over all $n^\nu$ choices of $\lambda$, taking the $(2m)$-th root, and setting the power in our trace-power argument to $m=D\log n$. 

\dobib

\section{Preliminaries}\label{sec:preliminaries}
We introduce some basic definitions and facts that we will use throughout this work. In \Cref{subsec:prelim-lines planes and sparseness} we present the central objects of planes,  \ec and \es sets, and consequences of sparseness. \Cref{subsec:prelim-maximal planes} focuses on the notion of maximal planes that, as shown in \Cref{subsec:prelim-anchor sets}, are sparsely distributed if they come from an \ec set. We then shift to tensor codes and their related notions in \Cref{subsec:prelim-tensor codes and related} before finishing with some necessary facts about operator norms in \Cref{subsec:prelim-linear alg}.
\subsection{Lines, planes, and sparseness}\label{subsec:prelim-lines planes and sparseness}
We begin by defining the central combinatorial objects that we will work on throughout this paper. Throughout, we take integers $D\geq1$ and $n\geq2$, so that the direction and anchor of a plane are uniquely determined by it as a subset of $[n]^D$.

\begin{definition}[Axis-parallel planes]
    For a set of free coordinates $I\subseteq[D]$ and $a\in[n]^{\setcomp{I}}$, we define the \emph{axis-parallel plane} at the \emph{anchor} $a$ as
    \begin{align*}
        \plane(I,a) &=\{x\in[n]^D:\forall j\in \setcomp{I},\  x_j=a_j\}.
    \end{align*}
    So points have $I=\varnothing$, lines have $\abs{I}=1$, and the whole cube is given when $I=[D]$. 
\end{definition}
It will be useful to focus on the specific setting of 1-dimensional planes, which we call axis-parallel lines.
\begin{definition}
    For $i\in[D]$ let $I=\{i\}$ and $\lines_i(n,D)=\{\plane(I,a):a\in[n]^{\setcomp{I}}\}$. All axis-parallel lines are denoted as $\lines(n,D)=\bigcup_{i\in[D]}\lines_i(n,D)$.
\end{definition}

\begin{definition}
    Let $\lines(M)=\{\ell\in\lines(n,D):\ell\subseteq M\}$ denote the set of axis-parallel lines contained in a set $M\subseteq[n]^D$.
\end{definition}
There are two notions of sparseness that we will focus on. The first is similar to a gap property and requires that a subset of $[n]^D$ be either sparse along an axis-parallel line or completely contain it.
\begin{definition}[\ec]\label{def:eps closed}
    A set $M\subseteq[n]^D$ is \emph{\ec} if for all $\ell\in\lines(n,D)$, we have that either $\ell\subseteq M$ or $\abs{M\cap \ell}\leq\varepsilon n$.
\end{definition}
Our second definition of sparseness requires that no axis-parallel line is fully contained in our set.
\begin{definition}[\es]
    We say that an \ec set $M$ is \emph{$\varepsilon$-sparse} if no axis-parallel line is contained in $M$.
\end{definition}

One immediate consequence of sparseness is a bound on the cardinality of $M$.

\begin{lemma}\label{lem:eps closed and line free gives cardinality bound}
    For any \es set $M\subsetneq[n]^D$,  we can bound its cardinality as
    \begin{align*}
        \abs{M}\leq\varepsilon n^D.
    \end{align*}
\end{lemma}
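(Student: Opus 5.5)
The plan is a double count: write $\abs{M}$ as a sum over the lines in one fixed direction and bound each term by the $\varepsilon$-closed condition. If $M$ contains no axis-parallel line, then no line can fall into the first alternative of \Cref{def:eps closed}. So every $\ell\in\lines(n,D)$ satisfies $\abs{M\cap\ell}\leq\varepsilon n$.

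Concretely, take direction $1$. The lines in $\lines_1(n,D)$, namely $\plane(\{1\},a)$ for $a\in[n]^{\setcomp{\{1\}}}$, partition $[n]^D$. There are $n^{D-1}$ of them. Hence
\begin{align*}
    \abs{M}=\sum_{\ell\in\lines_1(n,D)}\abs{M\cap\ell}\leq n^{D-1}\cdot\varepsilon n=\varepsilon n^D .
\end{align*}

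The only point needing care is the edge case $D=0$, or a degenerate reading of "sparse". Since $D\geq1$ throughout, lines exist and the partition argument applies directly. The hypothesis $M\subsetneq[n]^D$ is automatic once $M$ contains no line, and it is not needed beyond that. I expect no real obstacle here: the lemma is a one-line averaging consequence of the definitions.
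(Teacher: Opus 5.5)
Your proposal is correct and is the same argument as the paper's proof. Both fix a single direction, note that each of the $n^{D-1}$ parallel lines meets $M$ in at most $\varepsilon n$ points because $M$ contains no full line, and sum over those lines. The only cosmetic difference is that you write an equality, using the fact that the lines partition $[n]^D$, where the paper writes $\leq$.
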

\begin{proof}
    Fix any direction $i\in[D]$. Then any axis-parallel line $\ell\in\cL_i(n,D)$ has $\abs{M\cap \ell}\leq\varepsilon n$. Since there are $n^{D-1}$ axis-parallel lines in the $i$-th direction, we get
    \begin{align*}
        \abs{M}&\leq \sum_{\ell\in \cL_i(n,D)}\abs{M\cap \ell}\\
        &\leq n^{D-1}\cdot \varepsilon n\\
        &=\varepsilon n^D,
    \end{align*}
    as claimed.
\end{proof}

We will also make use of the following fact about sparse sets.
\begin{fact}\label{fact:number of completions in sparse set}
    For a constant $\varepsilon\in(0,1)$ and an \es set $A\subseteq[n]^D$, if we fix $I\subsetneq[D]$ and $a_I\in[n]^I$, then
    \begin{align*}
        \abs{\{a_{\setcomp{I}}\in[n]^{\setcomp{I}}:a=(a_I,a_{\setcomp{I}})\in A\}}\leq \varepsilon n^{\abs{\setcomp{I}}}.
    \end{align*}
\end{fact}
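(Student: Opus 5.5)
Fix $I\subsetneq[D]$ and $a_I\in[n]^I$; the plan is to apply the line bound from the definition of $\varepsilon$-sparse sets to lines through the fiber. Since $I\neq[D]$, pick a coordinate $j\in\setcomp{I}$ and set $F=\setcomp{I}\setminus\{j\}$. I partition the fiber
\[
    S=\{a_{\setcomp{I}}\in[n]^{\setcomp{I}}:(a_I,a_{\setcomp{I}})\in A\}
\]
according to the values of the coordinates in $F$. For each fixed $b\in[n]^F$, the points $(a_I,b,t)$ with $t\in[n]$ in the $j$-th slot form exactly one axis-parallel line $\ell_b\in\lines_j(n,D)$. The key point is that $\ell_b$ is a line in the ambient cube $[n]^D$, so the hypothesis on $A$ applies to it directly.

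Because $A$ is $\varepsilon$-sparse, it contains no full axis-parallel line. In particular $\ell_b\not\subseteq A$, and the $\varepsilon$-closed condition then gives $\abs{A\cap\ell_b}\leq\varepsilon n$. Summing over the $n^{\abs{F}}=n^{\abs{\setcomp{I}}-1}$ choices of $b$ gives
\[
    \abs{S}=\sum_{b\in[n]^F}\abs{A\cap\ell_b}\leq n^{\abs{\setcomp{I}}-1}\cdot\varepsilon n=\varepsilon n^{\abs{\setcomp{I}}},
\]
which is the claim.

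This is the same counting as in \Cref{lem:eps closed and line free gives cardinality bound}, just carried out inside the fiber over $a_I$. There is no real obstacle. The only care needed is that the fibered lines are genuine lines of $[n]^D$, which holds because they come from $j\in\setcomp{I}$, and that $I\subsetneq[D]$ guarantees such a free coordinate $j$ exists.
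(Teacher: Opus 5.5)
Your proof is correct and follows the same argument as the paper: fix all free coordinates except one $j\in\setcomp{I}$ ($n^{\abs{\setcomp{I}}-1}$ ways), then use $\varepsilon$-sparseness along the resulting $j$-parallel line of $[n]^D$ to get at most $\varepsilon n$ choices for the last coordinate. Your explicit partition of the fiber into the lines $\ell_b$ is just a more careful write-up of that counting.
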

\begin{proof}
    For $\abs{\setcomp{I}}-1$ of the free coordinates to assign in $a_{\setcomp{I}}$, give arbitrary assignments, of which there are $n^{\abs{\setcomp{I}}-1}$ many. The last coordinate not assigned in $\setcomp{I}$ then only has $\varepsilon n$ many options by the $\varepsilon$-sparseness of $A$. Thus, there are at most $n^{\abs{\setcomp{I}}-1}\cdot\varepsilon n=\varepsilon n^{\abs{\setcomp{I}}}$ many assignments of $a_{\setcomp{I}}$.
\end{proof}

\subsection{Maximal planes}\label{subsec:prelim-maximal planes}
We first define the set of all planes contained in $M$.
\begin{definition}
    For a set $M\subseteq[n]^D$, we define the set of all planes in $M$ to be
    \begin{align*}
        \allplanes(M)&=\{\plane(I,a)\subseteq M : I\subseteq[D], a\in[n]^{\setcomp{I}}\}.
    \end{align*}
    In particular, we do allow $I=\varnothing$. 
\end{definition}

\begin{figure}[ht]
    \centering
    \includegraphics[width=0.4\linewidth]
    {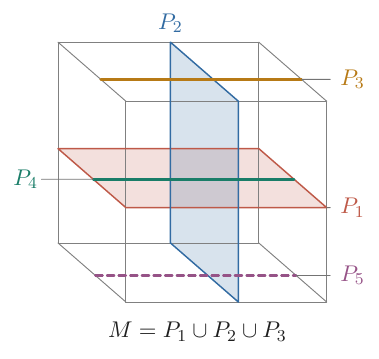}
    \caption{In this example, $M$ is composed of three planes $M=P_1\cup P_2\cup P_3$, each of which is a maximal plane. On the other hand, $P_4$ is a plane of $M$ but not maximal because it's contained in $P_1$, and $P_5$ isn't a plane of $M$ because it's not fully contained in $M$.}
    \label{fig:max planes example}
\end{figure}

Throughout our work, we will be particularly focused on the maximal planes of $M$ because they give a natural way to decompose $M$ into structured components.
\begin{definition}[Maximal planes in $M$]
    For a set $M\subseteq[n]^D$, we define the set $\maxplanes(M)$ of \emph{maximal planes} of $M$ to contain $\plane(I,a)$ for $I\subseteq[D]$ and $a\in[n]^{\setcomp{I}}$ if 
    \begin{enumerate}
        \item it is contained in $M$: $\plane(I,a)\subseteq M$
        \item and there is no other $\plane(J,b)$ such that $\plane(I,a)\subsetneq P(J,b)\subseteq M$.
    \end{enumerate}
    See \Cref{fig:max planes example} for an example illustration.
\end{definition}
As we will see later, these maximal planes interact at their intersections, so we will need to consider the set of planes generated as intersections of maximal planes.
\begin{definition}
    For a set $M\subseteq[n]^D$, we define $\maxplanesclosed(M)\subseteq\allplanes(M)$ to be the closure by intersection of $\maxplanes(M)$, not including empty intersections. Formally,
    \begin{align*}
        \maxplanesclosed(M)=\left\{
        \bigcap_{P\in\mathcal S}P:
        \varnothing\ne\mathcal S\subseteq\maxplanes(M),\
        \bigcap_{P\in\mathcal S}P\ne\emptyset
        \right\}.
    \end{align*}
\end{definition}

Using these definitions, we can show the simple fact that any $M\subseteq[n]^D$ can be covered by maximal planes.
\begin{lemma}\label{lem:M covered by maximal planes}
    For any $M\subseteq[n]^D$, we have
    \begin{align*}
        M=\bigcup_{I\subseteq[D]}\bigcup_{a\in \maxAI(M)}\plane(I,a).
    \end{align*}
\end{lemma}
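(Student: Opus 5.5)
We prove the two inclusions separately. Here $\maxAI(M)\subseteq[n]^{\setcomp{I}}$ denotes the set of anchors $a$ with $\plane(I,a)\in\maxplanes(M)$.

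The inclusion ``$\supseteq$'' is immediate. Every plane $\plane(I,a)$ with $a\in\maxAI(M)$ is a maximal plane of $M$, so by condition~1 of the definition of maximal planes it is contained in $M$. Hence the union of all such planes is contained in $M$.

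For ``$\subseteq$'', fix $x\in M$. The point $\{x\}=\plane(\varnothing,x)$ is a plane contained in $M$, since planes with $I=\varnothing$ are allowed in $\allplanes(M)$. So the collection of planes $P\in\allplanes(M)$ with $x\in P$ is nonempty. This collection is finite, because there are at most $2^D$ directions and finitely many anchors. We can therefore pick an element $P=\plane(I,a)$ that is maximal with respect to inclusion among planes in $\allplanes(M)$ containing $x$.

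It remains to check that $P$ is maximal in the sense of $\maxplanes(M)$, i.e.\ that no plane $Q$ satisfies $P\subsetneq Q\subseteq M$. Suppose such a $Q$ existed. Then $x\in P\subseteq Q$, so $Q$ would be a plane in $M$ containing $x$ and strictly larger than $P$. This contradicts the choice of $P$. Hence $P\in\maxplanes(M)$, so $a\in\maxAI(M)$, and $x$ lies in the union.

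There is no real obstacle here. The only subtle point is that we maximize over planes inside $M$ that contain $x$, rather than over all planes, and this is harmless because every plane strictly containing $P$ also contains $x$. The standing assumption $n\geq 2$ guarantees that each plane determines its pair $(I,a)$ uniquely, so the indexing of the union by $I$ and $a\in\maxAI(M)$ is well defined.
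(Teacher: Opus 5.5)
Your proof is correct and follows the paper's own argument: the inclusion $\supseteq$ holds because maximal planes lie in $M$, and for $\subseteq$ you place the point plane $\plane(\varnothing,x)$ below a maximal element of the finite poset $\allplanes(M)$. Your restriction to planes containing $x$ is an equivalent, slightly more explicit version of the paper's step that $\{x\}$ lies below some maximal element.
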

\begin{proof}
    Every maximal plane is contained in $M$, which gives the inclusion $\supseteq$. Conversely, any $x\in M$ is itself a zero-dimensional plane $\{x\}=\plane(\varnothing,x)\subseteq M$, so it belongs to some element of $\allplanes(M)$. Since $\allplanes(M)$ is a finite poset under inclusion, $\{x\}$ is contained in a maximal one, which is by definition an element of $\maxplanes(M)$ containing $x$.
\end{proof}

\subsection{Anchor sets}\label{subsec:prelim-anchor sets}

Next, given a set of free coordinates $I\subseteq[D]$, we can define the set of fixed coordinates that index all maximal planes.
\begin{definition}[Indexing set of maximal planes in direction $I$]
    For a set $M\subseteq[n]^D$ and free coordinates $I\subseteq[D]$, we define the set of \emph{anchors} indexing the maximal planes as
    \begin{align*}
        \maxAI(M)&=\{a\in[n]^{\setcomp{I}} : \plane(I,a)\in\maxplanes(M)\}.
    \end{align*}
\end{definition}

Now, we will show that $M$ being \ec makes $\maxAI(M)$ \es.
\begin{lemma}\label{lem:M eps closed implies maxAI is eps closed and line free}
    Let $0<\varepsilon<1$ and let $M\subsetneq [n]^D$ be \ec. Then for any $I\subsetneq[D]$, the set $\maxAI(M)\subseteq[n]^{\setcomp{I}}$ is \es. Consequently,
    \begin{align*}
        \abs{\maxAI(M)}\leq\varepsilon n^{D-\abs{I}}.
    \end{align*}
\end{lemma}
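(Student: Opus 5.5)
We need to show that $\maxAI(M)$ contains no full line of $[n]^{\setcomp{I}}$ and meets every line in at most $\varepsilon n$ points. The cardinality bound then follows from \Cref{lem:eps closed and line free gives cardinality bound}, applied in dimension $D-\abs{I}\geq1$ (we have $I\subsetneq[D]$, and $\maxAI(M)\neq[n]^{\setcomp{I}}$ since $M\neq[n]^D$). Both properties follow from one observation, so the plan is to fix a direction $j\in\setcomp{I}$ and an axis-parallel line $\ell$ in $[n]^{\setcomp{I}}$ in direction $j$. Write its points as anchors $a^{(t)}$ that agree off coordinate $j$ and have $a^{(t)}_j=t$ for $t\in[n]$.

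The key step is the following claim: if $\abs{\maxAI(M)\cap\ell}>\varepsilon n$, then $\plane(I\cup\{j\},b)\subseteq M$, where $b$ is the common restriction of the $a^{(t)}$ to $\setcomp{(I\cup\{j\})}$. To prove it, take any $x\in\plane(I\cup\{j\},b)$ and consider the $j$-directional line $L\subseteq[n]^D$ through $x$. It has the coordinates of $x$ in $I$ and the coordinates of $b$ elsewhere, with the $j$-th coordinate free. For each $t$ with $a^{(t)}\in\maxAI(M)$, the point of $L$ with $j$-th coordinate $t$ lies in $\plane(I,a^{(t)})\subseteq M$. Hence $\abs{M\cap L}>\varepsilon n$, and because $M$ is \ec this forces $L\subseteq M$, so $x\in M$. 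Ranging over $x$ gives the claim.

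With the claim in hand, each $\plane(I,a^{(t)})$ with $a^{(t)}\in\maxAI(M)$ is strictly contained in $\plane(I\cup\{j\},b)\subseteq M$. The containment is strict because $n\geq2$. This contradicts maximality, so $\abs{\maxAI(M)\cap\ell}\leq\varepsilon n<n$. This single bound gives both that $\maxAI(M)$ is \ec with no full line inside it and that it is \es.

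There is no real obstacle here. The only care needed is bookkeeping for the coordinates when embedding the anchor line into $[n]^D$, and checking that the cardinality lemma applies, which requires $\maxAI(M)$ to be a proper subset and $\varepsilon<1$.
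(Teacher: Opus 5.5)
Your proof is correct and follows essentially the same route as the paper. You assume some $j$-directional line of anchors in $[n]^{[D]\setminus I}$ meets the anchor set in more than $\varepsilon n$ points, apply $\varepsilon$-closedness to every parallel $j$-line of $[n]^D$ to place the whole $(I\cup\{j\})$-plane inside $M$, contradict maximality, and finish with the cardinality lemma. The only difference is cosmetic: you argue pointwise over $x$ in the larger plane, whereas the paper takes the union of the lines $\ell_u$ over $u\in[n]^I$.
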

\begin{proof}
    \begin{figure}[ht]
        \centering
        \includegraphics[width=0.6\linewidth]
    {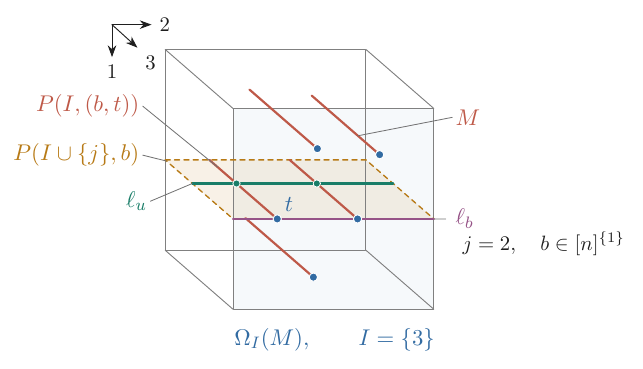}
        \caption{Visualization of the proof of \Cref{lem:M eps closed implies maxAI is eps closed and line free} for $D=3$ with $I=\{3\}$, $j=2$, and an $M$ with lines parallel to the third axis only.}
        \label{fig:proof of maxAI eps closed and line free}
    \end{figure}
    This proof is visually simple, and we recommend that the reader follow along in \Cref{fig:proof of maxAI eps closed and line free}. Suppose, for the sake of contradiction, that there exist $j\in\setcomp{I}$ and $b\in[n]^{\setcomp{I}\setminus\{j\}}$ such that the line $$\ell_b=\{s\in[n]^{\setcomp{I}}:s_j\in[n],\ s_{\setcomp{I}\setminus\{j\}}=b\}$$ in the $j$-th direction has $\abs{\maxAI(M)\cap \ell_b}>\varepsilon n$. For each $t\in[n]$ with $(b,t)\in\maxAI(M)\cap\ell_b$, it must be that $P(I,(b,t))\in\maxplanes(M)$ where $(b,t)\in[n]^{\setcomp{I}}$ is the element with the $j$-th coordinate as $t$ and the other coordinates set as $b$.

    Now, consider any $u\in[n]^I$ (the empty tuple when $I=\varnothing$) and the $j$-parallel line $$\ell_u=\{x\in[n]^D:x_I=u,\ x_{\setcomp{I}\setminus\{j\}}=b,\ x_j\in[n]\}.$$ By construction, for any $t\in[n]$ with $(b,t)\in\maxAI(M)\cap\ell_b$, we have that the point $y=(y_I=u, y_{\setcomp{I}\setminus\{j\}}=b,y_j=t)\in\ell_u$ is contained in the maximal plane $P(I,(b,t))$. Because $y\in \ell_u\cap P(I,(b,t))$ and $P(I,(b,t))\subseteq M$ and $\abs{\maxAI(M)\cap\ell_b}>\varepsilon n$, we have that $\abs{\ell_u\cap M}>\varepsilon n$. Using the fact that $M$ is \ec, we get that $\ell_u\subseteq M$.

    Our choice of $u\in[n]^I$ was arbitrary, so we can take their union to conclude that the plane
    \begin{align*}
        P(I\cup\{j\},b)=\bigcup_{u\in[n]^I}\ell_u
    \end{align*}
    is fully contained in $M$. However, we then get that $P(I\cup\{j\},b)\supsetneq P(I,(b,t))$ for any $t\in[n]$ with $(b,t)\in\maxAI(M)\cap\ell_b$, contradicting the fact that $P(I,(b,t))$ is maximal. Therefore, for all $j\in\setcomp{I}$ and $b\in[n]^{\setcomp{I}\setminus\{j\}}$, the line $\ell_b$ has $\abs{\maxAI(M)\cap\ell_b}\leq\varepsilon n<n$, meaning that $\maxAI(M)$ is \es.

    Finally, now that we have that $\maxAI(M)$ is \es in $[n]^{\setcomp{I}}$, by \Cref{lem:eps closed and line free gives cardinality bound} we can bound its cardinality as $\abs{\maxAI(M)}\leq\varepsilon n^{D-\abs{I}}$, as claimed.
\end{proof}
We can get a similar result for the indexing set $\maxAIclosed(M)=\{a\in[n]^{\setcomp{I}}:\plane(I,a)\in\maxplanesclosed(M)\}$ of $\maxplanesclosed(M)$. This is a loose bound and introduces a factor of $2^D$, but since $D$ is a constant, this is not an issue.
\begin{corollary}\label{cor:maxAIclosed is eps sparse}
    Let $M\subsetneq[n]^D$ be \ec with $0<2^D\varepsilon<1$. Then for every $I\subsetneq[D]$, the set $\maxAIclosed(M)\subseteq[n]^{\setcomp{I}}$ is $(2^D\varepsilon)$-sparse.
\end{corollary}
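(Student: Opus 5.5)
The plan is to reduce the statement to \Cref{lem:M eps closed implies maxAI is eps closed and line free} applied to a suitable direction $J\supseteq I$, using pigeonhole over the at most $2^D$ possible directions. Fix $I\subsetneq[D]$, a direction $j\in\setcomp{I}$, and $b\in[n]^{\setcomp{I}\setminus\{j\}}$. Let $\ell_b$ be the corresponding $j$-line in $[n]^{\setcomp{I}}$, and let $T=\{t\in[n]:(b,t)\in\maxAIclosed(M)\}$. It suffices to show $\abs{T}\leq 2^D\varepsilon n$. Indeed, since $2^D\varepsilon<1$, this bound also gives $\abs{T}<n$, so $\ell_b$ is not contained in $\maxAIclosed(M)$; together these are exactly $(2^D\varepsilon)$-sparseness.

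\textbf{Step 1: a maximal plane through each $Q_t$ that is not free in $j$.} For each $t\in T$, the plane $Q_t=\plane(I,(b,t))$ lies in $\maxplanesclosed(M)$. So $Q_t=\bigcap_{P\in\mathcal S_t}P$ for some nonempty $\mathcal S_t\subseteq\maxplanes(M)$. A nonempty intersection of axis-parallel planes $\plane(J_s,c_s)$ is again a plane, and its direction is $\bigcap_s J_s$. Since $j\notin I$, at least one $P_t=\plane(J_t,c_t)\in\mathcal S_t$ must have $j\notin J_t$; otherwise every plane in $\mathcal S_t$ is free in coordinate $j$, and so is their intersection. Because $Q_t\subseteq P_t$, we have $I\subseteq J_t$, and the anchor $c_t$ is the restriction of $(b,t)$ to $\setcomp{J_t}$. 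In particular $j\in\setcomp{J_t}$, so $(c_t)_j=t$.

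\textbf{Step 2: pigeonhole over directions.} The direction $J_t$ ranges over subsets of $[D]$ containing $I$ and avoiding $j$, so there are fewer than $2^D$ possibilities. Suppose for contradiction that $\abs{T}>2^D\varepsilon n$. Then some single $J$ (with $I\subseteq J$, $j\notin J$, hence $J\subsetneq[D]$) has $J_t=J$ for more than $\varepsilon n$ values of $t$. For these $t$, the anchors $c_t=(b|_{\setcomp{J}\setminus\{j\}},t)\in\maxA{J}(M)$ are pairwise distinct, since they differ in coordinate $j$. They all lie on the single $j$-line in $[n]^{\setcomp{J}}$ with base point $b|_{\setcomp{J}\setminus\{j\}}$. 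Hence that line meets $\maxA{J}(M)$ in more than $\varepsilon n$ points. This contradicts \Cref{lem:M eps closed implies maxAI is eps closed and line free}, which says $\maxA{J}(M)$ is \es. Therefore $\abs{T}\leq 2^D\varepsilon n<n$, which proves the corollary.

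\textbf{Main obstacle.} The only delicate point is Step 1: checking that a nonempty intersection of axis-parallel planes is a plane whose direction is the intersection of the directions, and that containment $Q\subseteq P$ forces the anchor of $P$ to be the restriction of the anchor of $Q$. Both follow by comparing fixed coordinates, using the standing assumption $n\geq2$ so that direction and anchor are determined by the plane as a subset of $[n]^D$. The hypothesis $M\subsetneq[n]^D$ is needed only to invoke the lemma.
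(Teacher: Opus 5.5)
Your proof is correct and follows essentially the same route as the paper. Both arguments pick, for each anchor on a $j$-line, a maximal plane in the intersection that fixes coordinate $j$, group the anchors by that plane's direction $J$ (you use pigeonhole, the paper uses a union bound, which is the same thing), and invoke the \es property of $\maxA{J}(M)$ from \Cref{lem:M eps closed implies maxAI is eps closed and line free}. You are somewhat more explicit than the paper about two details: the anchor $c_t$ restricting $(b,t)$, and the fact that $2^D\varepsilon<1$ rules out the line being fully contained.
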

\begin{proof}
    Consider a direction $j\in\setcomp{I}$ and an axis-parallel line $\ell_j$ in $[n]^{\setcomp{I}}$ in the $j$-th direction. For every anchor $a\in\ell_j\cap\maxAIclosed(M)$, write $\plane(I,a)$ as an intersection of maximal planes. At least one of these planes, say $\plane(J,b)$, must fix coordinate $j$, so $I\subseteq J$, $j\notin J$, and $b=a_{\setcomp{J}}$. For a fixed such $J$, all coordinates of $b$ except $j$ are fixed by $\ell_j$. By \Cref{lem:M eps closed implies maxAI is eps closed and line free}, $\maxA{J}(M)$ is \es, meaning that there are at most $\varepsilon n$ possible values of $b_j=a_j$. There are at most $2^D$ choices of $J$, so a union bound gives at most $2^D\varepsilon n$ anchors in $\ell_j\cap\maxAIclosed(M)$.
\end{proof}

\subsection{Tensor codes and related notions}\label{subsec:prelim-tensor codes and related}

We now introduce the notions of inner-generated and extendable sets. In the next section, we will go over some of these definitions again and see how they are naturally derived in a slightly different context, but we introduce their standard definitions here as they show up in the literature. Throughout this section, we will let $\gfield$ represent a generic field, possibly finite or infinite.
Throughout, we identify a line $\ell=\plane(\{i\},a)\in\cL_i(n,D)$ with $[n]$ via its $i$-th coordinate, so that the restriction $x|_\ell$ of a word is an element of $\gfield^n$ and can be tested for membership in the component code $\cC_i$.
\begin{definition}[Codes along a line $\ell$]
    For $\ell\in\cL_i(n,D)$ and a code $\cC_i\subseteq\gfield^n$, define
\[
\cC_\ell=\{c\in\gfield^{[n]^D}:\supp(c)\subseteq \ell,\ c|_\ell\in\cC_i\}.
\]
\end{definition}

\begin{definition}[Check-product code]\label{ex:Dual cosheaf code of a tensor product code}
    Given a collection of codes $\cC_1,\dots,\cC_D$, the dual of their tensor code is the check-product code $\cC_1^\perp\boxplus\cdots\boxplus\cC_D^\perp=\left(\cC_1\otimes\cdots\otimes\cC_D\right)^\perp$. 

    An easy way of seeing the relation between the check-product code and the tensor product code is to define $\cC^{(i)}=\gfield^n\otimes\cdots\otimes\cC_i\otimes\cdots\otimes\gfield^n\subseteq\gfield^{[n]^D}$ which allows us to write
    \begin{align*}
        \bigotimes_{i=1}^D\cC_i &=\bigcap_{i=1}^D\cC^{(i)}\\
        \bigboxplus_{i=1}^D\cC_i&=\sum_{i=1}^D\cC^{(i)}.
    \end{align*}
\end{definition}

Instead of considering the entire tensor code at once, we can also consider words that satisfy local checks contained in a set of coordinates.
\begin{definition}[Words accepted by checks in $M$]\label{def:sheaf(M)}
    For a subset of coordinates $M\subseteq[n]^D$ of a tensor code $\cC=\cC_1\otimes\cdots\otimes\cC_D$, let 
    \begin{align*}
        \sheaf_{\cC}(M)=\{x\in\gfield^M:\forall i\in[D],\ \forall\ell\in\cL_i(n,D),\
\ell\subseteq M\Rightarrow x|_\ell\in\cC_i\}
    \end{align*}
    be the set of words that satisfy the local checks of the corresponding component code for each axis-parallel line contained in $M$.
\end{definition}
\begin{definition}[Extendability]
We say that $M$ is \emph{extendable} in $\cC$ if the restriction map
\[
\pi_M:\cC\to\sheaf(M)
\]
is surjective. That is, each local codeword in $\sheaf(M)$ can be extended to a global codeword of $\cC$.
\end{definition}

\begin{definition}[Inner-generated sets for the check-product code]\label{def:inner-generated}
    A set $M\subseteq[n]^D$ is \emph{inner-generated} for a code $\cC_1\boxplus\cdots\boxplus\cC_D$ if for each codeword $c\in\cC_1\boxplus\cdots\boxplus\cC_D$ with $\supp(c)\subseteq M$ we have $c\in\sum_{\ell\in\lines(M)}\cC_\ell$, where $\lines(M)$ is the set of axis-parallel lines contained in $M$.
\end{definition}

We can now define the central notion of expansion used in this paper. For a word $a\in\gfield^{[n]^D}$ we write $\abs{a}$ for its Hamming weight, and for a direction $i\in[D]$ we write
\begin{align*}
    \abs{a}_i=\abs{\{\ell\in\cL_i(n,D):a|_\ell\neq0\}}
\end{align*}
for the number of $i$-parallel lines on which $a$ does not vanish.
\begin{definition}[Product expansion]\label{def:product expansion}
    A tuple of codes $(\cC_i)_{i\in[D]}$ is \emph{$\nu$-product expanding} if every $x\in\bigboxplus_{i=1}^D\cC_i$ admits a decomposition $x=\sum_{i=1}^D a_i$ with $a_i\in\cC^{(i)}$ such that
    \begin{align*}
        \nu n\sum_{i=1}^D\abs{a_i}_i\leq\abs{x}.
    \end{align*}
    It is \emph{two-sided $\nu$-product expanding} if both $(\cC_i)_{i\in[D]}$ and $(\cC_i^\perp)_{i\in[D]}$ are $\nu$-product expanding, where duals are taken with respect to the standard coordinate pairing on $\gfield^n$.
\end{definition}
Finally, for a prime $p$ we write $\grass_{p^t}(n,k)$ for the Grassmannian of $k$-dimensional subspaces of $\F_{p^t}^n$. Sampling a random code of dimension $k$ always means sampling uniformly from $\grass_p(n,k)$.

\subsection{Miscellanea}
\subsubsection{Linear algebra facts}\label{subsec:prelim-linear alg}
We record some useful linear algebra facts that we will use at various points later on in this work. Here all spaces are finite-dimensional real or complex Hilbert spaces, and we use the Euclidean operator norm.
\begin{fact}[Equation (2.79) and Exercise 2.3.16 of \cite{tao_topics_2012}]\label{fact:opnorm^2p is bounded by trace^p}
    For a linear map $T$ and integer $p\geq 1$,
    \begin{align*}
        \opnorm{T}^{2p}\leq\trace((TT^\adjoint)^p).
    \end{align*}
\end{fact}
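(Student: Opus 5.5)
The statement is the trace-power inequality $\opnorm{T}^{2p}\leq\trace((TT^\adjoint)^p)$ for a linear map $T$ between finite-dimensional Hilbert spaces. I will reduce it to a comparison between the largest eigenvalue of a positive semidefinite operator and its trace.

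First, the operator $S=TT^\adjoint$ is self-adjoint and positive semidefinite, since $\langle TT^\adjoint v,v\rangle=\norm{T^\adjoint v}^2\geq0$. Its operator norm satisfies $\opnorm{S}=\opnorm{T}^2$. To see this, note that $\opnorm{T^\adjoint v}^2=\langle Sv,v\rangle\leq\opnorm{S}\norm{v}^2$, so $\opnorm{T}^2=\opnorm{T^\adjoint}^2\leq\opnorm{S}$. Conversely, submultiplicativity gives $\opnorm{S}\leq\opnorm{T}\opnorm{T^\adjoint}=\opnorm{T}^2$. Here I use the standard fact that $\opnorm{T^\adjoint}=\opnorm{T}$.

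Second, I apply the spectral theorem: $S=\sum_i\lambda_i u_iu_i^\adjoint$ with an orthonormal eigenbasis $(u_i)$ and eigenvalues $\lambda_i\geq0$. Then $\opnorm{S}=\max_i\lambda_i$, and $S^p$ has eigenvalues $\lambda_i^p$, so $\trace(S^p)=\sum_i\lambda_i^p$.

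Combining the two steps, every summand in $\sum_i\lambda_i^p$ is nonnegative. Hence
\begin{align*}
    \opnorm{T}^{2p}=\opnorm{S}^p=\left(\max_i\lambda_i\right)^p\leq\sum_i\lambda_i^p=\trace\left((TT^\adjoint)^p\right).
\end{align*}

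There is no real obstacle in this argument. The only point requiring care is nonnegativity of the eigenvalues: it is what allows the single largest term to be bounded by the whole trace, and it holds for every integer $p\geq1$ because $S$ is positive semidefinite.
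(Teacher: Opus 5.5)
Your proof is correct and follows essentially the standard argument behind the cited reference; the paper gives no proof of its own and only points to Tao. The argument applies the spectral theorem to the positive semidefinite operator $TT^\adjoint$, whose largest eigenvalue is $\opnorm{T}^2$, and bounds the $p$-th power of that eigenvalue by the sum of the $p$-th powers of all the nonnegative eigenvalues, i.e.\ $\sigma_1(T)^{2p}\leq\sum_i\sigma_i(T)^{2p}$. One cosmetic slip: in $\opnorm{T^\adjoint v}^2=\langle Sv,v\rangle$ you mean the vector norm $\norm{T^\adjoint v}^2$.
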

\begin{fact}[Equation (1.175) of \cite{watrous_theory_2018}]
    For two compatible linear maps $A,B$, we have that the operator norm is submultiplicative
    \begin{align}
        \opnorm{AB}&\leq\opnorm{A}\opnorm{B}.
    \end{align}
\end{fact}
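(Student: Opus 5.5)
The plan is to argue directly from the definition of the Euclidean operator norm, $\opnorm{T}=\sup_{x\neq 0}\norm{Tx}/\norm{x}$. The only property of $\opnorm{\cdot}$ needed is the pointwise inequality $\norm{Tv}\leq\opnorm{T}\norm{v}$, valid for every vector $v$ in the domain of $T$. This inequality is immediate from the definition when $v\neq 0$, and trivial when $v=0$ since both sides then vanish.

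The key steps are as follows.
\begin{enumerate}
    \item Fix an arbitrary vector $x$ in the domain of $B$.
    \item Apply the pointwise inequality to $A$ at the vector $v=Bx$, which gives $\norm{ABx}\leq\opnorm{A}\norm{Bx}$.
    \item Apply it again to $B$ at $x$, which gives $\norm{Bx}\leq\opnorm{B}\norm{x}$.
    \item Since $\opnorm{A}\geq 0$, we may multiply the inequality from step 3 by $\opnorm{A}$ and chain it with step 2, obtaining
    \begin{align*}
        \norm{ABx}\leq\opnorm{A}\opnorm{B}\norm{x}.
    \end{align*}
    \item Divide by $\norm{x}$ for $x\neq0$ and take the supremum over all such $x$. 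This yields $\opnorm{AB}\leq\opnorm{A}\opnorm{B}$.
\end{enumerate}

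Compatibility of $A$ and $B$ means only that the codomain of $B$ equals the domain of $A$, so that $AB$ is defined. The spaces involved are finite-dimensional, so every supremum is finite. No step is a genuine obstacle. The only points requiring care are the zero vector and the nonnegativity of $\opnorm{A}$ used when chaining the two inequalities, and both have been handled above.
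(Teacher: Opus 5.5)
Your proof is correct: it is the standard argument from the definition of the operator norm, applying the pointwise bound $\|Tv\|\le \|T\|\,\|v\|$ twice and then taking a supremum. The paper gives no proof of this fact and only cites it from \cite{watrous_theory_2018}, so there is no alternative route in the paper to compare against.
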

\begin{fact}[Equation (2.2.10) of \cite{tropp_introduction_2015} or Problem 6(b) of \cite{johnson_singular_1991}]\label{fact:opnorm of direct sum is max of opnorms}
    For a family of linear maps $(A_s)_{s\in S}$,
    \begin{align*}
        \opnorm{\bigoplus_{s\in S} A_s}&=\max_{s\in S}\opnorm{A_s}.
    \end{align*}
\end{fact}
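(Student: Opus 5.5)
The statement is the direct-sum fact: for a family $(A_s)_{s\in S}$ of linear maps $A_s:V_s\to W_s$, the operator $\bigoplus_s A_s:\bigoplus_s V_s\to\bigoplus_s W_s$ has operator norm $\max_s\opnorm{A_s}$. I will prove the two inequalities separately, working with the Euclidean (Hilbert) norm on the direct sum, $\norm{(v_s)_s}^2=\sum_s\norm{v_s}^2$. Since $S$ indexes blocks in all our applications, I assume $S$ is finite, so the maximum exists.

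For the lower bound, fix $s_0\in S$ and a unit vector $v\in V_{s_0}$. Embed it as the vector that is $v$ in the $s_0$ slot and $0$ elsewhere; this vector has norm $1$. Its image under $\bigoplus_s A_s$ is $A_{s_0}v$ in the $s_0$ slot and $0$ elsewhere, with norm $\norm{A_{s_0}v}$. Taking the supremum over unit $v$ gives $\opnorm{\bigoplus_s A_s}\geq\opnorm{A_{s_0}}$. Since $s_0$ was arbitrary, the operator norm is at least the maximum.

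For the upper bound, take any $v=(v_s)_s$ and compute
\begin{align*}
    \norm{\Big(\bigoplus_s A_s\Big)v}^2=\sum_s\norm{A_sv_s}^2\leq\sum_s\opnorm{A_s}^2\norm{v_s}^2\leq\Big(\max_s\opnorm{A_s}\Big)^2\norm{v}^2.
\end{align*}
Taking square roots and the supremum over unit $v$ gives the reverse inequality.

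None of these steps is a real obstacle. The one point needing care is that the direct sum must carry the $\ell_2$-sum norm; with any other norm the identity can fail. This is the norm implicit in the paper's block-matrix use, since the domain $\bigoplus_P U_P$ of $K$ is given the orthogonal direct-sum inner product.
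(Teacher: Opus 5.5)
Your proof is correct. The paper gives no argument of its own here and only cites the fact as standard, so your slot-embedding lower bound together with the estimate $\sum_s\norm{A_sv_s}^2\leq(\max_s\opnorm{A_s})^2\norm{v}^2$ is exactly the routine argument behind those citations. Note that the paper only uses the upper bound, in the proof of the bound on $\opnorm{K_{I,J}^E}$. There the index set is finite and the direct sum carries the orthogonal inner product, as you assume; empty summands such as $\R^{Y_{\ell,z}}=0$ are harmless, since their maps have norm $0$.
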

\begin{fact}[Theorem 4.2.15 and Problem 27 of \cite{johnson_matrix_1991}]\label{fact:opnorm of tensor is prod of opnorms}
    For a family of linear maps $(A_s)_{s\in S}$,
    \begin{align*}
        \opnorm{\bigotimes_{s\in S}A_s}&=\prod_{s\in S}\opnorm{A_s}.
    \end{align*}
\end{fact}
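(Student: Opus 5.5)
We prove the statement by induction on $\abs{S}$, so it suffices to treat two maps $A:V_1\to W_1$ and $B:V_2\to W_2$ and show $\opnorm{A\otimes B}=\opnorm{A}\opnorm{B}$. The case $\abs{S}=1$ is trivial. The inductive step writes $\bigotimes_{s\in S}A_s=A_{s_0}\otimes\bigotimes_{s\neq s_0}A_s$ and applies the two-factor statement.

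For the upper bound, factor $A\otimes B=(A\otimes \Id_{W_2})(\Id_{V_1}\otimes B)$ and use submultiplicativity of the operator norm. It then remains to check $\opnorm{A\otimes\Id}=\opnorm{A}$ and $\opnorm{\Id\otimes B}=\opnorm{B}$.
\begin{itemize}
\item Choose an orthonormal basis $(f_t)$ of $V_1$. Then $V_1\otimes V_2=\bigoplus_t f_t\otimes V_2$ is an orthogonal decomposition.
\item With respect to this decomposition, $\Id_{V_1}\otimes B$ acts block-diagonally as $\bigoplus_t B$. \Cref{fact:opnorm of direct sum is max of opnorms} then gives $\opnorm{\Id\otimes B}=\opnorm{B}$.
\item The map $A\otimes\Id$ is conjugate to $\Id\otimes A$ by the tensor-swap isomorphisms $V_1\otimes W_2\cong W_2\otimes V_1$ and $W_1\otimes W_2\cong W_2\otimes W_1$. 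These swaps are unitary, because they send orthonormal product bases to orthonormal product bases. Hence the operator norm is unchanged, and $\opnorm{A\otimes\Id}=\opnorm{A}$.
\end{itemize}

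For the lower bound, use finite-dimensionality to pick unit vectors $u\in V_1$ and $v\in V_2$ with $\norm{Au}=\opnorm{A}$ and $\norm{Bv}=\opnorm{B}$, for instance top right singular vectors. Then $u\otimes v$ is a unit vector, since $\langle u\otimes v,u\otimes v\rangle=\langle u,u\rangle\langle v,v\rangle$. Moreover $(A\otimes B)(u\otimes v)=Au\otimes Bv$, whose norm is $\norm{Au}\,\norm{Bv}=\opnorm{A}\opnorm{B}$. This gives $\opnorm{A\otimes B}\geq\opnorm{A}\opnorm{B}$.

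The argument is elementary. The only point needing care is the unitary reordering of tensor factors, which is also what lets the paper place factors "in the order of elements of $[D]$" without affecting norms.
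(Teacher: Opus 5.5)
Your proof is correct, but it takes a different route from the source the paper relies on. The paper gives no argument of its own; it cites Horn--Johnson.

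\textbf{The cited argument.} There the result comes from the singular value decomposition of a Kronecker product. If $A=U_1\Sigma_1V_1^\adjoint$ and $B=U_2\Sigma_2V_2^\adjoint$, the mixed-product rule gives $A\otimes B=(U_1\otimes U_2)(\Sigma_1\otimes\Sigma_2)(V_1\otimes V_2)^\adjoint$. Kronecker products of unitaries are unitary, so this is already an SVD. Hence the singular values of $A\otimes B$ are exactly the products $\sigma_i(A)\sigma_j(B)$, and the largest is $\sigma_1(A)\sigma_1(B)$.

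\textbf{Your argument.} You prove the two inequalities separately.
\begin{itemize}
\item For the upper bound, you factor $A\otimes B=(A\otimes\Id)(\Id\otimes B)$ and read off each factor's norm from its block-diagonal form. This reuses the paper's direct-sum fact together with a unitary swap of tensor factors.
\item For the lower bound, you test on a product of top singular vectors.
\end{itemize}

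\textbf{What each buys.} The SVD route is shorter once the mixed-product property is available. It also yields the whole singular spectrum at once, so it gives multiplicativity of every Schatten norm, not just the operator norm. Your route needs only submultiplicativity, orthogonal decompositions, and compactness of the unit sphere. It also makes explicit that reordering tensor factors is unitary, which the paper uses without comment when it orders factors by $[D]$.

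\textbf{One small gap.} If some domain is the zero space, no unit vector exists for the lower bound, and $\opnorm{\Id\otimes B}=\opnorm{B}$ can fail. In that case both sides of the identity are $0$, so the statement still holds trivially.
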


\subsubsection{A useful estimate on the distribution of primes}
Here we record a consequence of Chebyshev's estimates that bounds the value of a prime not dividing a given integer. We begin by recalling Chebyshev's estimate.

\begin{theorem}[Corollary 2.6 of \cite{montgomery_elementary_2006}]\label{thm:PNT}
    For any $x\geq 2$, the first Chebyshev function $\vartheta(x):=\sum_{\substack{p\leq x\\p\text{ prime}}}\log p$ is bounded as $\vartheta(x)=\Theta(x)$.
\end{theorem}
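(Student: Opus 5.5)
This is Chebyshev's classical estimate, which the paper simply cites. If I were to prove it from scratch, I would use the central binomial coefficient $N=\binom{2m}{m}$ for both directions. The two facts I need about $N$ are the size bounds
\[
\frac{4^m}{2m+1}\le N\le 4^m,
\]
since $N$ is the largest of the $2m+1$ terms of $(1+1)^{2m}$. I also need Legendre's formula for the $p$-adic valuation,
\[
v_p(N)=\sum_{s\ge1}\left(\left\lfloor \tfrac{2m}{p^s}\right\rfloor-2\left\lfloor \tfrac{m}{p^s}\right\rfloor\right).
\]
Each summand lies in $\{0,1\}$ and vanishes once $p^s>2m$, so $p^{v_p(N)}\le 2m$.

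\emph{Upper bound.} Every prime $p\in(m,2m]$ divides $(2m)!$ but not $(m!)^2$. Hence $\prod_{m<p\le 2m}p$ divides $N$, and so $\vartheta(2m)-\vartheta(m)\le m\log 4$. For real $x\ge2$, I would apply the same argument with $m=\lceil x/2\rceil$, which gives $\vartheta(x)-\vartheta(x/2)\le (x/2+1)\log 4$. Iterating over $x,x/2,x/4,\dots$ down to scale $2$ and summing the resulting geometric series gives $\vartheta(x)\le (2\log 4)\,x+O(\log x)=O(x)$.

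\emph{Lower bound.} From $p^{v_p(N)}\le 2m$ we get $N\le (2m)^{\pi(2m)}$. Combined with $N\ge 4^m/(2m+1)$, this yields
\[
\pi(2m)\ge \frac{m\log 4-\log(2m+1)}{\log(2m)},
\]
so $\pi(x)\gg x/\log x$, with small $x$ handled by adjusting constants. To pass from $\pi$ to $\vartheta$, I would write
\[
\vartheta(x)\ge \sum_{\sqrt{x}<p\le x}\log p\ge \tfrac12\log x\,\bigl(\pi(x)-\pi(\sqrt{x})\bigr).
\]
Since $\pi(\sqrt x)\le\sqrt x$ and $\pi(x)\gg x/\log x$, this gives $\vartheta(x)\gg x$ for all sufficiently large $x$. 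Because $\vartheta(x)\ge\log 2>0$ for every $x\ge2$, the remaining finite range of $x$ is covered by shrinking the implied constant.

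There is no real obstacle here. The only mildly delicate point is the lower bound: one must convert the prime-power information from Legendre's formula into a bound on $\pi$, and then into a bound on $\vartheta$. An alternative route is to show $\psi(x)\ge \log N\gg x$ and use $\psi(x)-\vartheta(x)=O(\sqrt{x}\log x)$.
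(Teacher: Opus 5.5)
The paper does not prove this statement; it cites Montgomery--Vaughan and uses the result only as a black box. Concretely, it needs constants $0<A\le B$ with $Ax\le\vartheta(x)\le Bx$ for \emph{all} $x\ge2$, in the proof of \Cref{cor:PNT bounding non-dividing prime}. Your proposal is the classical self-contained Chebyshev argument via $\binom{2m}{m}$, and it is correct. Your treatment of the bounded range of $x$, using $\vartheta(x)\ge\log 2$, is exactly what turns the asymptotic statement into the uniform two-sided bound on $[2,\infty)$ that the corollary requires. Your route buys self-containment and effective constants, which would make the universal constant $C$ in \Cref{cor:PNT bounding non-dividing prime} explicit. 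The citation buys brevity, and nothing downstream depends on the actual values of $A$ and $B$.

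One bookkeeping step in your upper bound needs a small repair. With $m=\lceil x/2\rceil$, the primes in $(x/2,x]$ lie in $[m,2m]$, not in $(m,2m]$. A prime $p=m>x/2$ need not divide $\binom{2m}{m}$: for $x=5$ and $m=3$, $\binom{6}{3}=20$, and $3$ is missed. What your argument actually yields is $\vartheta(x)-\vartheta(x/2)\le (x/2+1)\log 4+\log x$.

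This does not affect your conclusion. Summing the extra $\log x$ over the $O(\log x)$ dyadic scales costs only $O(\log^2 x)$, so your iteration still gives $\vartheta(x)\le x\log 4+O(\log^2 x)=O(x)$. Note that the geometric sum gives leading constant $\log 4$, not $2\log 4$. Alternatively, Erd\H{o}s's induction avoids the issue and gives $\vartheta(x)\le x\log 4$ directly. It uses $\binom{2m+1}{m}\le 4^m$, which is divisible by every prime in $(m+1,2m+1]$.
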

We now derive our required corollary.
\begin{corollary}\label{cor:PNT bounding non-dividing prime}
    There exists a universal constant $C$ such that for all $q\geq 2$ and every $m\in\N$, there exists a prime $p$ \emph{not} dividing $m$ such that
    \begin{align*}
        q< p\leq C(q+\log(2 m)).
    \end{align*}
\end{corollary}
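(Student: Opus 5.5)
The plan is to use Chebyshev's estimate (\Cref{thm:PNT}) to show that the primes in $(q, Cq']$ have too large a product to all divide $m$, where $q' = q + \log(2m)$. The case $m = 0$ should be excluded, or read as $m \geq 1$ since $m \in \N$; for $m = 0$ every prime divides $m$, so I take $m \geq 1$.

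\emph{Step 1: choose an interval.} Let $x = C(q + \log(2m))$ for a constant $C$ to be fixed. Consider the product $\prod_{q < p \leq x} p$ over primes $p$ in this range. Taking logarithms, it equals $\exp(\vartheta(x) - \vartheta(q))$.

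\emph{Step 2: lower-bound the log-product.} By \Cref{thm:PNT} there are absolute constants $0 < c_1 \leq c_2$ with $c_1 y \leq \vartheta(y) \leq c_2 y$ for all $y \geq 2$. Then
\[
\vartheta(x) - \vartheta(q) \geq c_1 C (q + \log(2m)) - c_2 q.
\]
Choose $C \geq \max\{2c_2/c_1,\ 2/c_1\}$. This makes the right side at least $c_2 q + 2\log(2m) > \log m$. It uses $q \geq 2$ and $x \geq 2$, which holds once $C \geq 1$, and it uses $\log(2m) > 0$.

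\emph{Step 3: conclude by contradiction.} Suppose every prime $p \in (q, x]$ divides $m$. These primes are distinct, so their product divides $m$ and is therefore at most $m$. That gives $\vartheta(x) - \vartheta(q) \leq \log m$, contradicting Step 2. So some prime $p$ with $q < p \leq x$ does not divide $m$.

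The only delicate point is the constant bookkeeping. One needs the lower bound $\vartheta(y) \geq c_1 y$ valid for all $y \geq 2$ rather than only asymptotically, which is how the $\Theta$ statement for $x \geq 2$ should be read. One also needs the interval to be nonempty, and that follows automatically from the positive lower bound in Step 2. No other obstacles are expected.
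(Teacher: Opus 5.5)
Your proof is correct and follows essentially the same route as the paper. Both use Chebyshev's bounds $c_1 y\le\vartheta(y)\le c_2 y$ to show $\vartheta(x)-\vartheta(q)>\log m$ for $x=C(q+\log(2m))$, which contradicts the product of all primes in $(q,x]$ dividing $m$; the paper simply fixes $C=(B+1)/A$ where you take $C\ge\max\{2c_2/c_1,\,2/c_1\}$, and your remark that $m\ge 1$ is needed is a valid clarification that the paper leaves implicit.
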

\begin{proof}
    By \Cref{thm:PNT}, there exist universal constants $0<A\leq B$ such that for all $x\geq 2$ we have \begin{align}
        Ax\leq\vartheta(x)\leq Bx.\label{eq:Ax<=cheby(x)<=Bx}
    \end{align}
    Let $\ell=\log(2m)$ and $s=\frac{B+1}{A}(q+\ell)$. Now, for the sake of contradiction, assume that all primes $p\in(q,s]$ divide $m$. These primes are of course distinct, so $\prod_{\substack{p\in(q,s]\\p\text{ prime}}} p$ must divide $m$. Consequently,
    \begin{align*}
        \vartheta(s)-\vartheta(q)&=\sum_{\substack{p\leq s\\p\text{ prime}}}\log p - \sum_{\substack{p\leq q\\p\text{ prime}}}\log p\\
        &=\log\prod_{\substack{q<p\leq s\\p\text{ prime}}}p\\
        &\leq\log m.\numberthis\label{eq:cheby(s)-cheby(q) <= log m}
    \end{align*}
    However, from \Cref{eq:Ax<=cheby(x)<=Bx} we obtain
    \begin{align*}
        \vartheta(s)-\vartheta(q)&\geq As-Bq\\
        &=(B+1)(q+\ell)-Bq\\
        &=q+(B+1)\ell\\
        &>\log m,
    \end{align*}
    which contradicts \Cref{eq:cheby(s)-cheby(q) <= log m}. Therefore, there must exist some prime $p\in(q,s]$ not dividing $m$ such that
    \begin{align*}
        q< p\leq s=\frac{B+1}{A}(q+\log(2m)),
    \end{align*}
    as claimed.
\end{proof}
\dobib

\section{(Co)Restriction Maps and Atoms}\label{sec:duality-atoms}
In this section, we will give intuition for the duality between the properties of extendability and inner-generatedness. We will detail how exactly the tensor product code and check-product codes are dual to each other along with how surjectivity of the restriction map of the tensor code, which is the condition for extendability, is equivalent to injectivity of the corestriction map for the check-product code, which is the condition for inner-generatedness. Since many of these definitions can be defined over any arbitrary field, and we will use them over both finite fields and the reals, we will use the generic field $\gfield$, which may be finite or infinite, unless we specify otherwise.

In \Cref{subsec:atoms-tensor restrictions and extendability} we introduce the restriction map on tensor codes to subsets of coordinates and its relation to extendability, and in \Cref{subsec:atoms-check-product corestrictions and inner-generated} we define the dual notion of the corestriction map on the check-product code and its relation to inner-generatedness.\footnote{The reader familiar with tensor codes may skip these two sections.} We combine these notions in \Cref{subsec:atoms-all together} to show precisely how they are related and give an explicit formula for the corestriction maps. Finally, in \Cref{subsec:atoms-atom decomposition} we present the  notion of the atom decomposition of the check-product code.

\subsection{The tensor code, restrictions, and extendability}\label{subsec:atoms-tensor restrictions and extendability}

We begin by choosing generator matrices 
\begin{align*}
    G_i=\left[\begin{matrix}
        \vert & \vert &  & \vert\\
        g_i(1) & g_i(2) & \cdots & g_i(n)\\
        \vert & \vert &  & \vert\\
    \end{matrix}\right]\in\gfield^{k_i\times n}.
\end{align*}
for codes $$\cC_1,\dots,\cC_D\subseteq\gfield^n$$ such that $\cC_i=\Im(G_i^\transpose)$ and $\dim\cC_i=k_i$. Their tensor code is denoted as $\cC=\cC_1\otimes\cdots\otimes\cC_D\subseteq\gfield^{[n]^D}$.

For $P=\plane(I,a)\in\allplanes(M)$, define
\begin{align*}
    \cC_P=\cC_I=\bigotimes_{i\in I}\cC_i\subseteq\gfield^{[n]^I}.
\end{align*}
If $I=\varnothing$, then we define $\cC_I=\gfield$.
Between these vector spaces, we can define restriction maps.
\begin{definition}[Restriction maps]
    For any $P,Q\in\allplanes(M)$ such that $P=\plane(I,a)\subseteq Q=\plane(J,b)$, so $I\subseteq J\subseteq[D]$, define the restriction map
\[
\pi_{Q\to P}
=
\left(\bigotimes_{j\in J\setminus I}\ev_{a_j}\right)
\otimes
\left(\bigotimes_{i\in I}\mathrm{id}_{\cC_i}\right)
,
\]
where tensor factors are placed in increasing coordinate order and $\ev_{a_j}$ is the evaluation map at coordinate $a_j$. Equivalently, we can define the restriction map on pure tensors $c=\bigotimes_{j\in J}c_j$ as 
\begin{align*}
    \res_{Q\to P}:\sheaf(Q)&\to\sheaf(P)\\
    c &\mapsto \left(\prod_{j\in J\setminus I} c_j(a_j)\right)\left(\bigotimes_{i\in I} c_i\right),
\end{align*}
and extend linearly to the whole space.
\end{definition}

In particular, one can check that for planes $P\subseteq Q\subseteq R$, we have
\begin{align*}
    \res_{R\to P}=\res_{Q\to P}\circ\res_{R\to Q}.
\end{align*}
\begin{figure}[t]
    \centering
    \includegraphics[width=0.75\linewidth]
    {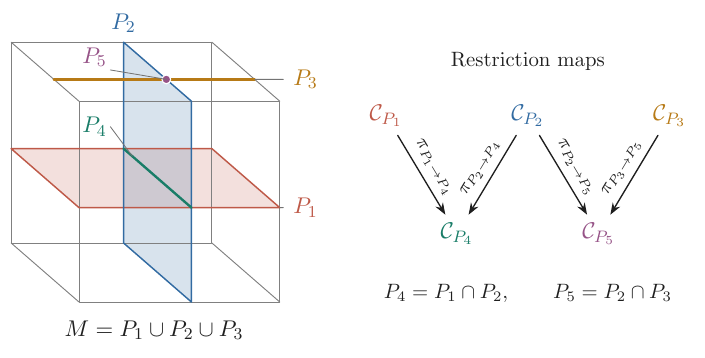}
    \caption{Example of planes in $M=P_1\cup P_2\cup P_3$ and part of the diagram that is ultimately being used for the limit over the $\cC_P$'s.}
    \label{fig:example subdiagram for limit of planes}
\end{figure}
See \Cref{fig:example subdiagram for limit of planes} for an example.
Therefore, because \Cref{lem:M covered by maximal planes} shows that $M$ is covered by its maximal planes, we have that for any  $M\subseteq[n]^D$ we can order $\allplanes(M)$ by inclusion and define the local sections on $M$ as limits:\footnote{We will not use (co)limits outside of this section and they do not play a central role in the rest of this paper, so we do not define them formally and instead simply expand out their definitions in this context.}
\begin{align*}
    \sheaf(M)&=\lim_{P\in\maxplanesclosed(M)} \cC_P\\
    &=\{(c_P)_{P\in\maxplanesclosed(M)}:c_P\in\cC_P, \text{ s.t. for all }P\subseteq Q,\ \pi_{Q\to P}(c_Q)=c_P\}\\
    &\cong\{x\in\gfield^M:\forall i\in[D],\ \forall\ell\in\cL_i(n,D),\ \ell\subseteq M\Rightarrow x|_\ell\in\cC_i\}.
\end{align*}
The last line is an isomorphism rather than an equality because a compatible family is indexed by planes and not by the points of $M$. It is a bijection because $M$ is covered by its maximal planes by \Cref{lem:M covered by maximal planes} (some of which may be zero-dimensional), and any two planes of $\maxplanesclosed(M)$ agree on their intersection whenever it is nonempty, since that intersection again lies in $\maxplanesclosed(M)$. So a compatible family glues to a single word $x\in\gfield^M$, and conversely every such $x$ restricts to a compatible family. Since every line contained in $M$ lies in a maximal plane, the constraints on $x$ are exactly the local checks of \Cref{def:sheaf(M)}, and taking $M=[n]^D$ gives $\cC=\sheaf([n]^D)$.
Finally, to interpret extendability in this context, we consider the global restriction map
\begin{align*}
    \res_M:\cC\to\sheaf(M).
\end{align*}
\begin{fact}\label{claim:extendable iff res surjective}
    A set $M\subseteq[n]^D$ is extendable if and only if $\res_M$ is surjective.
\end{fact}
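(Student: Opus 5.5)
The fact should follow by unwinding the definitions. Extendability was defined as surjectivity of the restriction map $\pi_M:\cC\to\sheaf(M)$. So the task is to check that $\pi_M$ and $\res_M:\cC\to\lim_{P\in\maxplanesclosed(M)}\cC_P$ agree under the isomorphism between the limit and the space of locally valid words in \Cref{def:sheaf(M)}. Because an isomorphism preserves surjectivity, the equivalence then follows at once.

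First I will fix $\res_M$ concretely. Taking $M=[n]^D$ gives $\cC=\sheaf([n]^D)$, and $[n]^D$ is a plane containing every $P\in\maxplanesclosed(M)$. So I set
\begin{align*}
\res_M(c)=(\res_{[n]^D\to P}(c))_{P\in\maxplanesclosed(M)}.
\end{align*}
This family is compatible: for $P\subseteq Q$, the composition rule $\res_{[n]^D\to P}=\res_{Q\to P}\circ\res_{[n]^D\to Q}$ gives $\pi_{Q\to P}(c_Q)=c_P$. Hence $\res_M$ indeed lands in the limit.

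Second, I will check that on coordinates $\res_{[n]^D\to P}$ is literal restriction of the word $c$ to the points of $P$, read through the identification of $P=\plane(I,a)$ with $[n]^I$. It is enough to verify this on pure tensors $c=\bigotimes_j c_j$. There the map multiplies the free factors by $\prod_{j\notin I}c_j(a_j)$, and that equals the value of $c$ at each point $(u,a)$ divided by $\prod_{i\in I}c_i(u_i)$. The identity then extends by linearity.

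Third, I will compose with the gluing isomorphism from the limit to $\{x\in\gfield^M:\ldots\}$ described just above. That isomorphism sends a compatible family to the word whose value at $x\in M$ is $c_P(x)$ for any $P\ni x$; this is well defined because such planes cover $M$ by \Cref{lem:M covered by maximal planes} and agree on their intersections. Under it, $\res_M(c)$ maps to $c|_M$, which is exactly $\pi_M(c)$. Therefore $\pi_M=\varphi\circ\res_M$ with $\varphi$ a bijection, and $\pi_M$ is surjective iff $\res_M$ is.

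The only real care needed is in the second step, namely matching the coordinate conventions in the tensor-factor ordering. The rest is formal.
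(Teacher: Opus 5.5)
Your proof is correct and follows the same route as the paper. The paper simply observes that extendability is, by definition, surjectivity of the restriction $\cC\to\sheaf(M)$; you additionally make explicit that $\res_M$ (into the limit) and $\pi_M$ (into the space of locally valid words) agree under the gluing isomorphism, which the paper leaves implicit. One cosmetic fix: in your second step, state the pure-tensor identity multiplicatively as $\prod_{j\notin I}c_j(a_j)\prod_{i\in I}c_i(u_i)=c(u,a)$, rather than dividing by $\prod_{i\in I}c_i(u_i)$, which may vanish.
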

\begin{proof}
    The condition that a codeword $c_M\in\sheaf(M)$ can extend to a global word $c\in\cC$ such that $c|_M=c_M$ is exactly the condition that $\pi_M$ is surjective, which allows us to ``lift'' $c_M$ to $c$ (possibly non-uniquely). 
\end{proof}

\subsection{The check-product code, corestrictions, and inner-generatedness}\label{subsec:atoms-check-product corestrictions and inner-generated}
In this subsection, we will essentially take the dual of the previous subsection and derive the corresponding notions. Let $\cB_i=\cC_i^\perp$ for $i\in[D]$, and define
\begin{align*}
    \cB=\cB_1\boxplus\cdots\boxplus\cB_D\subseteq\gfield^{[n]^D}.
\end{align*}
Similarly, for $P=\plane(I,a)\in\allplanes(M)$, let
\begin{align*}
    \cB_P=\cB_I=\bigboxplus_{i\in I}\cB_i\subseteq\gfield^{[n]^I}.
\end{align*}
For a point $P$ we have $I=\varnothing$, and we take the empty check-product to be $\cB_\varnothing=0$, matching the empty tensor product $\cC_\varnothing=\gfield$.
As we saw in \Cref{ex:Dual cosheaf code of a tensor product code}, $\cC^\perp=\cB$, and similarly for any plane $P=\plane(I,a)\in\allplanes(M)$, it is easy to verify that
\begin{align*}
    \cC_P^\perp&=\left(\bigotimes_{i\in I}\cC_i\right)^\perp\\
    &=\bigboxplus_{i\in I}\cC_i^\perp\\
    &=\bigboxplus_{i\in I}\cB_i\\
    &=\cB_P.
\end{align*}
Therefore, to dualize the construction of tensor codes above, we assign to each $P\in\allplanes(M)$ the space
\begin{align*}
    \cosheaf(P)&=\sheaf(P)^*\\
    &=\cC_P^*\\
    &=\Hom_\gfield(\cC_P,\gfield)\\
    &\cong \gfield^P/\cC_P^\perp\\
    &=\gfield^P/\cB_P.
\end{align*}
Between these local spaces, we can define the dual of the restriction map, the corestriction map.
\begin{definition}[Corestriction map]\label{def:cores}
    For $P,Q\in\allplanes(M)$ with $P\subseteq Q$, restriction goes from $Q$ to $P$, so its dual goes from $P$ to $Q$. We define the corestriction map
    \begin{align*}
        \cores_{P\to Q}=\res^*_{Q\to P}:\cC_P^*&\to\cC_Q^*\\
        \varphi &\mapsto\varphi\circ\res_{Q\to P}.
    \end{align*}
\end{definition}

Now, it is clear that for planes $P\subseteq Q\subseteq R$, their corestriction maps are compatible via
\begin{align*}
    \cores_{P\to R}=\cores_{Q\to R}\circ\cores_{P\to Q}
\end{align*}
exactly because $\res_{R\to P}=\res_{Q\to P}\circ\res_{R\to Q}$. 
See \Cref{fig:example subdiagram for colimit of planes} for an example.
\begin{figure}[t]
    \centering
    \includegraphics[width=0.75\linewidth]
    {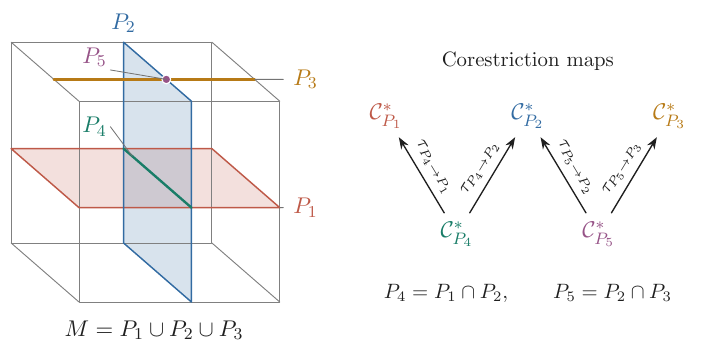}
    \caption{Example of planes in $M=P_1\cup P_2\cup P_3$ and part of the diagram that is ultimately being used for the colimit over the $\cC_P^*$'s.}
    \label{fig:example subdiagram for colimit of planes}
\end{figure}
Therefore, for any $M\subseteq[n]^D$ we can define
\begin{align*}
    \cosheaf(M)&=\sheaf(M)^*\\
    &=\left(\lim_{P\in\maxplanesclosed(M)} \cC_P\right)^*\\
    &\cong\colim_{P\in\maxplanesclosed(M)} \cC_P^*\numberthis\label{eq:colim of duals def of G(M)}\\
    &\cong\colim_{P\in\maxplanesclosed(M)}\gfield^P/\cB_P\\
    &\cong\frac{\gfield^M}{\sum_{\ell\in \cL(M)}\cB_\ell},
\end{align*}
where we are including each $\cB_\ell$ into $\gfield^M$ by extending by 0's outside of the line $\ell$. For the whole space, this gives $\cosheaf([n]^D)=\frac{\gfield^{[n]^D}}{\cB}$. We also get the global corestriction map
\begin{align*}
    \cores_M:\cosheaf(M)\cong\frac{\gfield^M}{\sum_{\ell\in \cL(M)}\cB_\ell}\to\frac{\gfield^{[n]^D}}{\cB}\cong\cosheaf([n]^D).
\end{align*}
We now show that the kernel of $\cores_M$ is exactly the obstruction to $M$ being inner-generated.
\begin{fact}\label{claim:inner generated iff cores injective}
    A set $M\subseteq[n]^D$ is inner-generated in $\cB$ iff $\cores_M$ is injective iff $\ker(\cores_M)=0$.
\end{fact}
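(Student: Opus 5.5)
We identify the kernel of $\cores_M$ explicitly and compare it with the definition of inner-generatedness (\Cref{def:inner-generated}). The second equivalence, injectivity iff $\ker(\cores_M)=0$, holds for any linear map, so all the content is in the first.

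First we make $\cores_M$ concrete. Under the isomorphisms $\cosheaf(M)\cong\gfield^M/\sum_{\ell\in\cL(M)}\cB_\ell$ and $\cosheaf([n]^D)\cong\gfield^{[n]^D}/\cB$, we will check that $\cores_M$ is the map induced by the inclusion $\gfield^M\hookrightarrow\gfield^{[n]^D}$ (extension by zero). It suffices to verify this plane by plane, since $\cosheaf(M)$ is the colimit of the $\cC_P^*$. For a plane $P$, the dual of $\res_{[n]^D\to P}$, which restricts a codeword to $P$, sends the functional $c_P\mapsto\langle v,c_P\rangle$ (for $v\in\gfield^P$) to $c\mapsto\langle v,c|_P\rangle=\langle\iota(v),c\rangle$, where $\iota$ denotes extension by zero. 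So each $\cores_{P\to[n]^D}$ is induced by extension by zero, and these maps glue. The map is well defined because each $\cB_\ell$ with $\ell\subseteq M$, extended by zero, lies in $\cC^{(i)\perp}$-type checks, hence in $\cB$. Concretely, a dual line-codeword on $\ell$ is orthogonal to every tensor codeword, since the restriction of a tensor codeword to $\ell$ lies in $\cC_i$.

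Next we compute the kernel. A class $[x]$ with $x\in\gfield^M$ lies in $\ker(\cores_M)$ exactly when $x\in\cB$. Hence
\[
\ker(\cores_M)=\frac{\gfield^M\cap\cB}{\sum_{\ell\in\cL(M)}\cB_\ell},
\]
using that the denominator is contained in $\gfield^M\cap\cB$ by the previous step. This kernel vanishes iff every $c\in\cB$ with $\supp(c)\subseteq M$ lies in $\sum_{\ell\in\cL(M)}\cB_\ell$, which is the definition of $M$ being inner-generated. Since $\cB=\cC^\perp$ by \Cref{ex:Dual cosheaf code of a tensor product code}, the notions match.

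The main obstacle is the identification step: making sure the abstractly defined dual map (\Cref{def:cores}) really coincides with the extension-by-zero quotient map under the chain of isomorphisms in \eqref{eq:colim of duals def of G(M)}. In particular, the identification $\cC_P^*\cong\gfield^P/\cB_P$ via the standard pairing must be compatible across inclusions of planes. We would handle this by fixing the pairing-induced isomorphism for every plane and checking the compatibility on the spanning set of indicator vectors $e_x$, $x\in P$, where it reduces to the evaluation-map description of $\res_{Q\to P}$. Everything after that is a routine third-isomorphism-theorem computation.
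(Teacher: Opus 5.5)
Your proposal is correct and follows the paper's proof. Both arguments identify $\ker(\cores_M)$ with $(\gfield^M\cap\cB)/\sum_{\ell\in\cL(M)}\cB_\ell$ via extension by zero and compare this with the definition of inner-generatedness; the only difference is that the paper takes for granted that $\cores_M$ is induced by $\iota_M$ and that $\sum_{\ell\in\cL(M)}\cB_\ell\subseteq\cB$, whereas you verify both.
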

\begin{proof}
    To be explicit, we define the inclusion $\iota_M:\gfield^M\to\gfield^{[n]^D}$ that extends an element of $\gfield^M$ by 0's to a full element of $\gfield^{[n]^D}$. Thus, an element $[z]\in\frac{\gfield^M}{\sum_{\ell\in \cL(M)}\cB_\ell}$ is in $\ker(\cores_M)$ iff $[\iota_M(z)]=0\in \frac{\gfield^{[n]^D}}{\cB}$, meaning that $\iota_M(z)\in \cB$. Equivalently, this means that we need $z\in\gfield^M\cap\cB$, so
    \begin{align*}
        \ker(\cores_M)\cong\frac{\gfield^M\cap \cB}{\sum_{\ell\in \cL(M)}\cB_\ell},
    \end{align*}
    where we are implicitly applying $\iota_M$ to $\gfield^M$.
    The definition of $M$ being inner-generated from \Cref{def:inner-generated} is exactly when $\gfield^M\cap \cB=\sum_{\ell\in \cL(M)}\cB_\ell$, which is exactly $\ker(\cores_M)=0$, as claimed.
\end{proof}

\subsection{Putting it all together}\label{subsec:atoms-all together}
Using the fact that a linear map is injective if and only if its dual is surjective, we get the following lemma.
\begin{lemma}\label{lem:equivalence of extendability inner-generated cores and res}
    For any set $M\subseteq[n]^D$, the following are equivalent:
    \begin{enumerate}
        \item $M$ is extendable in $\cC=\cC_1\otimes\cdots\otimes\cC_D$,
        \item $M$ is inner-generated in $\cB=\cB_1\boxplus\cdots\boxplus\cB_D$,
        \item $\res_M$ is surjective, and
        \item $\cores_M$ is injective.
    \end{enumerate}
\end{lemma}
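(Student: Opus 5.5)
The plan is to chain together the three equivalences already established and close the remaining gap with linear-algebra duality. \Cref{claim:extendable iff res surjective} gives (1)$\Leftrightarrow$(3), and \Cref{claim:inner generated iff cores injective} gives (2)$\Leftrightarrow$(4). So it suffices to prove (3)$\Leftrightarrow$(4), that is, $\res_M$ is surjective if and only if $\cores_M$ is injective.

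For this I will identify $\cores_M$ with the dual map $\res_M^*:\sheaf(M)^*\to\cC^*$. By definition $\cosheaf(M)=\sheaf(M)^*$ and $\cosheaf([n]^D)=\cC^*$. The global corestriction is, plane by plane, the dual of restriction: $\cores_{P\to[n]^D}=\res^*_{[n]^D\to P}$ by \Cref{def:cores}. Now $\sheaf(M)$ is the limit of the diagram of restriction maps, and $\res_M$ is the map into this limit induced by the compatible family $(\res_{[n]^D\to P})_P$. Dualizing, the map out of the colimit in \eqref{eq:colim of duals def of G(M)} induced by the compatible family $(\cores_{P\to[n]^D})_P$ is exactly $\res_M^*$.

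It then remains to check that this abstract map agrees with the concrete description of $\cores_M$. Under the identification $\cC_P^*\cong\gfield^P/\cB_P$, a functional in $\gfield^M/\sum_{\ell}\cB_\ell$ represented by $z$ pairs with $x\in\sheaf(M)$ via $\langle z,x\rangle$. Composing with $\res_M$ gives the functional $c\mapsto\langle\iota_M z,c\rangle$ on $\cC$, which corresponds to $[\iota_M z]\in\gfield^{[n]^D}/\cB$. This is precisely the formula for $\cores_M$ used in the proof of \Cref{claim:inner generated iff cores injective}. I expect this bookkeeping of identifications to be the only real obstacle. The main point to verify is that the pairing is well defined, namely that $\sum_\ell\cB_\ell$ annihilates $\sheaf(M)$; this holds because each $\cB_\ell=\cC_i^\perp$ placed on a line $\ell\subseteq M$, and every $x\in\sheaf(M)$ restricts to a codeword of $\cC_i$ on $\ell$.

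Finally, all spaces involved are finite-dimensional, so a linear map $f$ is surjective if and only if $f^*$ is injective, since $\ker f^*=(\Im f)^\circ$. Applying this to $f=\res_M$ gives (3)$\Leftrightarrow$(4), which completes the chain (1)$\Leftrightarrow$(3)$\Leftrightarrow$(4)$\Leftrightarrow$(2).
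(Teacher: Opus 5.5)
Your proof is correct and is the paper's argument: chain \Cref{claim:extendable iff res surjective,claim:inner generated iff cores injective} with the finite-dimensional fact that a linear map is surjective if and only if its dual is injective. The one addition is your explicit check that $\cores_M=\res_M^*$ under the identifications $\cosheaf(M)\cong\gfield^M/\sum_{\ell}\cB_\ell$ and $\cC^*\cong\gfield^{[n]^D}/\cB$, which the paper leaves implicit in the definition $\cosheaf(M)=\sheaf(M)^*$. Strictly, that identification needs the pairing to be perfect (i.e.\ $\sheaf(M)^\perp=\sum_{\ell}\cB_\ell$ inside $\gfield^M$), not only well defined; this is already part of the paper's chain of isomorphisms.
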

\begin{proof}
    We just use \Cref{claim:extendable iff res surjective,claim:inner generated iff cores injective} together with the fact that a linear map is injective if and only if its dual is surjective.
\end{proof}
In the rest of this paper, our goal will be to show that $\cores_M$ is injective with high probability for a selection of random linear codes. In this vein, we show how the injectivity of $\cores_M$ depends on the generator matrices $(G_i)_{i\in[D]}$.

\begin{definition}[Dual message space]
    For $i\in[D]$, let $U_i$ be a $k_i$-dimensional vector space over $\gfield$.
    This is the \emph{dual message space} of $\cC_i$ since we can consider $\cC_i\cong U_i^*$ by taking the columns $g_i(1),\dots,g_i(n)$ of the generator matrix $G_i$ and writing down a message $\mu_i\in U_i^*$ as the codeword
\begin{align*}
    c_{\mu_i}=(\mu_i(g_i(1)),\mu_i(g_i(2)),\dots,\mu_i(g_i(n)))\in\cC_i\subseteq\gfield^n.
\end{align*}

For any plane $P=\plane(I,a)$, define $$U_P=U_I=\bigotimes_{i\in I}U_i. $$ We take $U_\varnothing=\gfield$ for a point.
\end{definition}

By definition of the $U_i$'s, we have that for a plane  $P=\plane(I,a)$,
\begin{align*}
    \cC_P^*&=\left(\bigotimes_{i\in I} \cC_i\right)^*\\
    &\cong\bigotimes_{i\in I}\cC_i^*\\
    &\cong\bigotimes_{i\in I}U_i\\
    &=U_I=U_P.
\end{align*}
Thus, we can make the computation of $\cores_{P\to Q}$ explicit.
\begin{fact}
     For planes $P=\plane(I,a)\subseteq Q=\plane(J,b)$ with $I\subseteq J$
    \begin{align*}
    \cores_{P\to Q}=\res^*_{Q\to P}:U_I&\to U_J\\
    u &\mapsto u\otimes\left(\bigotimes_{j\in J\setminus I} g_j(a_j)\right),\numberthis\label{eq:cores explicit def}
\end{align*}
where we order the tensor products by the ordering of $[D]$. 
\end{fact}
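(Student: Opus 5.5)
We need to prove the explicit formula for $\cores_{P\to Q}$ under the identifications $\cC_P^*\cong U_P$ and $\cC_Q^*\cong U_Q$. The plan is to unwind the definitions on pure tensors and conclude by linearity.

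First, make the identification explicit. Since $G_i$ has full row rank $k_i$, the map $U_i^*\to\cC_i$, $\mu_i\mapsto c_{\mu_i}=(\mu_i(g_i(1)),\dots,\mu_i(g_i(n)))$, is an isomorphism. Dualizing gives $\cC_i^*\cong U_i^{**}\cong U_i$. Concretely, $u\in U_i$ corresponds to the functional $c_{\mu}\mapsto\mu(u)$.

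Tensoring, $\cC_I=\bigotimes_{i\in I}\cC_i\cong\bigotimes_{i\in I}U_i^*$. A pure tensor $\bigotimes_{i\in I}c_{\mu_i}$ corresponds to $\bigotimes_i\mu_i$. Under this correspondence, $u=\bigotimes_{i\in I}u_i\in U_I$ acts on $\cC_I$ by
\[
\bigotimes_{i\in I}c_{\mu_i}\mapsto\prod_{i\in I}\mu_i(u_i).
\]
This fixes $\cC_P^*\cong U_I$ and $\cC_Q^*\cong U_J$ compatibly. The case $I=\varnothing$ uses $U_\varnothing=\gfield=\cC_\varnothing$.

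Second, compute $\cores_{P\to Q}(u)=u\circ\res_{Q\to P}$ on pure tensors $c=\bigotimes_{j\in J}c_{\mu_j}\in\cC_Q$. These span $\cC_Q$, so it suffices to check the formula on them.

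By the definition of the restriction map,
\[
\res_{Q\to P}(c)=\Bigl(\prod_{j\in J\setminus I}c_{\mu_j}(a_j)\Bigr)\bigotimes_{i\in I}c_{\mu_i}.
\]
Here $c_{\mu_j}(a_j)=\mu_j(g_j(a_j))$. Applying $u=\bigotimes_{i\in I}u_i$ then gives
\[
\prod_{j\in J\setminus I}\mu_j(g_j(a_j))\prod_{i\in I}\mu_i(u_i).
\]
By the same pairing, this is exactly the value of the pure tensor $u\otimes\bigotimes_{j\in J\setminus I}g_j(a_j)\in U_J$, with factors ordered by $[D]$, evaluated on $c$. Both sides are linear in $u$ and in $c$, so the identity extends from pure tensors to all of $U_I$ and $\cC_Q$. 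This proves \eqref{eq:cores explicit def}.

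The only real obstacle is bookkeeping. One must make sure the identifications $\cC_P^*\cong U_P$ are the canonical ones induced by the $G_i$, so that they are compatible across different planes, and keep the tensor-factor ordering consistent. No estimates are involved.
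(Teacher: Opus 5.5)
Your proposal is correct and follows the same route as the paper. The paper states this fact as an immediate consequence of the identifications $\cC_P^*\cong\bigotimes_{i\in I}\cC_i^*\cong U_I$ induced by the generator matrices, and your pure-tensor computation of $u$ composed with the restriction map is exactly the unwinding of definitions it leaves implicit, including your correct use of the standing assumption that each $G_i$ has full row rank.
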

From this, we can derive a nice structural result on $\cosheaf(M)$.

\begin{lemma}\label{lem:cosheaf(M) as quotient of direct sum}
    We have that $\cosheaf(M)\cong U_M/ R_M$ where $U_M=\bigoplus_{P\in\maxplanesclosed(M)} U_P$ and
    \begin{align*}
    R_M=\Span\{e_P(u)-e_Q(\cores_{P\to Q}(u)) : P,Q\in \maxplanesclosed(M),\ P\subseteq Q,\ u\in U_P\}
    \end{align*}
     where $e_P:U_P\to U_M$ is the natural inclusion into the direct sum of vector spaces and $e_Q$ is analogous.
\end{lemma}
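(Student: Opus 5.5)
We want to show $\cosheaf(M)\cong U_M/R_M$. The plan is to start from the colimit description \eqref{eq:colim of duals def of G(M)}, namely $\cosheaf(M)\cong\colim_{P\in\maxplanesclosed(M)}\cC_P^*$, and unwind it concretely. Over vector spaces, the colimit of a diagram indexed by a poset (with maps $\cores_{P\to Q}$ for $P\subseteq Q$) is the quotient of the direct sum of the objects by the span of the relations $e_P(u)-e_Q(\cores_{P\to Q}(u))$. Combining this with the identification $\cC_P^*\cong U_P$, under which $\cores_{P\to Q}$ becomes the explicit map \eqref{eq:cores explicit def}, gives exactly $U_M/R_M$.

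I would carry this out in three steps.

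\textbf{Step 1: the colimit–limit duality.} I would check that the identification $(\lim \cC_P)^*\cong\colim \cC_P^*$ used in \eqref{eq:colim of duals def of G(M)} is valid in this finite-dimensional setting. Write $\sheaf(M)=\ker(\phi)$, where
\[
\phi:\bigoplus_P \cC_P\to\bigoplus_{P\subseteq Q}\cC_P,\qquad (c_P)_P\mapsto\bigl(\pi_{Q\to P}(c_Q)-c_P\bigr)_{P\subseteq Q}.
\]
Dualizing the exact sequence $0\to\ker\phi\to\bigoplus_P\cC_P\xrightarrow{\phi}\bigoplus\cC_P$ gives
\[
\sheaf(M)^*\cong\Bigl(\bigoplus_P\cC_P^*\Bigr)\big/\operatorname{Im}(\phi^*),
\]
since finite-dimensional duality is exact. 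A short computation shows that $\phi^*$ sends the component $\psi\in\cC_P^*$ indexed by the pair $(P\subseteq Q)$ to $e_Q(\psi\circ\pi_{Q\to P})-e_P(\psi)$. This equals $e_Q(\cores_{P\to Q}\psi)-e_P(\psi)$, so $\operatorname{Im}(\phi^*)$ is precisely the span defining $R_M$ (up to sign).

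\textbf{Step 2: transport to the $U_P$.} I would transport this along the isomorphisms $\cC_P^*\cong U_P$, that is, $\bigotimes_{i\in I}\cC_i^*\cong\bigotimes_{i\in I}U_i$. This requires verifying that the square with $\cores_{P\to Q}$ on top and $u\mapsto u\otimes\bigotimes_{j\in J\setminus I}g_j(a_j)$ on the bottom commutes, which is exactly the stated Fact \eqref{eq:cores explicit def}. It can be checked on pure tensors by pairing against $c=\bigotimes_{j\in J} c_{\mu_j}$ and noting that $c_{\mu_j}(a_j)=\mu_j(g_j(a_j))$. Because these isomorphisms are natural, they carry the relation subspace onto $R_M$.

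\textbf{Main obstacle.} The only delicate point is the correctness of the limit description of $\sheaf(M)$ over $\maxplanesclosed(M)$ rather than over all of $\allplanes(M)$. Step 1 is honest only if $\sheaf(M)$ really equals the set of compatible families indexed by $\maxplanesclosed(M)$. The paper argues this just before the lemma: every point and every line of $M$ lies in some maximal plane, and compatibility on nonempty pairwise intersections holds because those intersections lie in $\maxplanesclosed(M)$. I would take that argument as given. I would also note that including relations only for pairs $P\subseteq Q$ in $\maxplanesclosed(M)$ is enough, since composites $P\subseteq Q\subseteq R$ are generated via $\cores_{P\to R}=\cores_{Q\to R}\circ\cores_{P\to Q}$.
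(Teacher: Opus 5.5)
Your proposal is correct and follows the same route as the paper, which simply substitutes $\cC_P^*\cong U_P$ into the colimit description \eqref{eq:colim of duals def of G(M)} and reads off $U_M/R_M$ as the standard presentation of a colimit of vector spaces. Your Steps 1 and 2 only make explicit what the paper takes for granted: the limit--colimit duality via dualizing the equalizer, and the naturality of $\cC_P^*\cong U_P$ with respect to the explicit insertion formula. As in the paper, that identification rests on each $G_i$ having full row rank.
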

\begin{proof}
    We start from \Cref{eq:colim of duals def of G(M)} and use the definition of a colimit to compute
    \begin{align*}
        \cosheaf(M)&\cong\colim_{P\in\maxplanesclosed(M)} \cC_P^*\\
        &=\colim_{P\in\maxplanesclosed(M)} U_P\\%
        &\cong U_M/R_M,
    \end{align*}
    where the last step is just the definition of the colimit on vector spaces.
\end{proof}

Later on we will need to treat the entries of the generator matrices as free parameters and allow arbitrary matrices $G_i$, including ones of deficient row rank that do not generate a $k_i$-dimensional code. Since \Cref{eq:cores explicit def} only inserts columns of the $G_i$, it continues to make sense in this generality.
\begin{definition}[Polynomially parametrized insertion maps]\label{def:polynomial insertion maps}
    Let $\bX=(X^{(i)}_{r,j})_{i\in[D],r\in[k_i],j\in[n]}$ be independent indeterminates, let $X_i$ be the $k_i\times n$ matrix over $\Z[\bX]$ with entries $X^{(i)}_{r,j}$, and write $x_i(j)$ for its $j$-th column. We take $U_i=\Z[\bX]^{k_i}$ and $U_P=U_I=\bigotimes_{i\in I}U_i$ for a plane $P=\plane(I,a)$, with $U_\varnothing=\Z[\bX]$ for a point. 
    For planes $P=\plane(I,a)\subseteq Q=\plane(J,b)$, the corestriction map $\cores_{P\to Q}^{\bX}$ is a $\Z[\bX]$-linear map $U_P\to U_Q$ defined by the same formula as in \Cref{eq:cores explicit def} but using $X_i$ in place of $G_i$.
    
    A \emph{specialization} of these maps substitutes matrices $G_i\in\gfield^{k_i\times n}$ over a field $\gfield$ for the $X_i$, and recovers the corestriction maps of \Cref{eq:cores explicit def} for those $G_i$.
\end{definition}
No rank requirement is involved here, and compatibility is immediate from the formula: for $P\subseteq Q\subseteq R$, both $\cores_{Q\to R}^{\bX}\circ\cores_{P\to Q}^{\bX}$ and $\cores_{P\to R}^{\bX}$ insert the columns $x_m(a_m)$ for the coordinates $m$ freed between $P$ and $R$, since the anchors of $P\subseteq Q$ agree on the coordinates that $Q$ fixes. This is a polynomial identity, so it survives every specialization.

Taking these maps with the indeterminates $\bX$ over all planes of $M$ gives the diagram we will actually compute with later on. 
\begin{definition}\label{def:formal insertion diagram}
    For $M\subseteq[n]^D$, keep $U_M=\bigoplus_{P\in\maxplanesclosed(M)}U_P$ from \Cref{lem:cosheaf(M) as quotient of direct sum} and set
    \begin{align*}
        \cE_M=\bigoplus_{\substack{P,Q\in\maxplanesclosed(M)\\P\subsetneq Q}}U_P,
    \end{align*}
    and let $e_{P,Q}:U_P\to\cE_M$ and $e_P:U_P\to U_M$ denote the inclusions of the corresponding summands. The maps
    \begin{align}
        \alpha_M^{\bX}:\cE_M&\to U_M &  \beta_M^{\bX}:U_M&\to U_{[D]}\notag\\
        e_{P,Q}(u)&\mapsto e_P(u)-e_Q(\cores_{P\to Q}^{\bX}(u)) & (u_P)_P&\mapsto\sum_{P\in\maxplanesclosed(M)}\cores_{P\to[n]^D}^{\bX}(u_P),\label{eq:formal alpha and beta}
    \end{align}
    are defined on generators and extend linearly. Compatibility gives $\beta_M^{\bX}\circ\alpha_M^{\bX}=0$, so $\beta_M^{\bX}$ descends to the quotient map
    \begin{align*}
        \overline\beta_M^{\bX}:\Coker\alpha_M^{\bX}\to U_{[D]}.
    \end{align*}
\end{definition}
The $P=Q$ terms in the spanning set of $R_M$ contribute $e_P(u)-e_P(u)=0$, so $\Im\alpha_M=R_M$. Hence whenever every $G_i$ has full row rank, \Cref{lem:cosheaf(M) as quotient of direct sum} gives $\Coker\alpha_M\cong\cosheaf(M)$ and $\overline\beta_M=\cores_M$. For arbitrary $G_i$ there is no such interpretation in terms of codes, but the diagram and its cokernel are still well defined.

The specializations of this diagram that we will need are the ones witnessing injectivity over $\R$.
\begin{definition}[Real sign witness]\label{def:real sign witness}
    Fix $M\subseteq[n]^D$ and dimensions $(k_i)_{i\in[D]}$. A \emph{real sign witness} is a tuple of sign matrices $\sigma=(\sigma^{(i)})_{i\in[D]}$ with $\sigma^{(i)}\in\{-1,1\}^{k_i\times n}$ at whose specialization over $\gfield=\R$ the map $\overline\beta_M^{\bX}$ of \Cref{def:formal insertion diagram} is injective. We emphasize that the $\sigma^{(i)}$ are \emph{not} required to have full row rank.
\end{definition}

\subsection{Atom decomposition}\label{subsec:atoms-atom decomposition}

Using these definitions so far, we give an even more precise decomposition of $\cosheaf(M)$ into what we call ``atoms''. For the remainder of this section, we will take our base field $\gfield=\R$ and assign the standard inner product to all $U_i$. 

\begin{definition}\label{def:atom}
    For $P\in\maxplanesclosed(M)$, define the vector space
    \begin{align*}
        U_{\subsetneq P}=\sum_{\substack{P'\in\maxplanesclosed(M)\\P'\subsetneq P}}\cores_{P'\to P}(U_{P'})\subseteq U_P.
    \end{align*}
    The \emph{atom} $A_P$ of $P$ is then defined as the orthogonal complement
    \begin{align*}
        A_P=U_{\subsetneq P}^\perp\subseteq U_P,
    \end{align*}
    which is naturally isomorphic to the quotient
    \begin{align*}
        A_P\cong U_P/U_{\subsetneq P}.
    \end{align*}
\end{definition}

Unlike in \Cref{lem:cosheaf(M) as quotient of direct sum} where we describe $\cosheaf(M)$ exactly as the quotient of a direct sum, the direct sum of these atoms covers $\cosheaf(M)$; that is, there is a natural surjection from this direct sum onto $\cosheaf(M)$. 
\begin{definition}
    For a plane $P\in\maxplanesclosed(M)$, let $\overline{e}_P:U_{P}\to\left(\bigoplus_{P'\in\maxplanesclosed(M)}U_{P'}\right)/R_M$ be the natural inclusion $e_{P}$ composed with quotienting by $R_M$.
\end{definition}

\begin{lemma}
\label{lem:direct sum of atoms spans cosheaf}
Let $M\subseteq[n]^D$.
Then the natural map
\begin{align*}
    \Psi:\bigoplus_{P\in\maxplanesclosed(M)}A_P
    &\to \cosheaf(M)\\
    (a_P)_{P\in\maxplanesclosed(M)} &\mapsto\sum_{P\in\maxplanesclosed(M)}\overline{e}_P(a_P)
\end{align*}
is always surjective.
\end{lemma}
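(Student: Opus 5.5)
We prove surjectivity by strong induction on planes ordered by inclusion (equivalently, by dimension). By \Cref{lem:cosheaf(M) as quotient of direct sum}, $\cosheaf(M)\cong U_M/R_M$ is spanned by the images $\overline{e}_P(U_P)$ for $P\in\maxplanesclosed(M)$. It therefore suffices to show that $\overline{e}_P(U_P)\subseteq\Im\Psi$ for every $P\in\maxplanesclosed(M)$. Since $\maxplanesclosed(M)$ is a finite poset under inclusion, we induct on $P$, assuming the claim for all $P'\in\maxplanesclosed(M)$ with $P'\subsetneq P$. The base case of a minimal $P$ is included, because then $U_{\subsetneq P}=0$ and $A_P=U_P$.

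For the inductive step, fix $u\in U_P$. Because $A_P=U_{\subsetneq P}^\perp$ inside the finite-dimensional real inner product space $U_P$, we have the orthogonal decomposition $U_P=A_P\oplus U_{\subsetneq P}$. Write $u=a+v$ with $a\in A_P$ and
\begin{align*}
    v=\sum_{\substack{P'\in\maxplanesclosed(M)\\P'\subsetneq P}}\cores_{P'\to P}(u_{P'}),\qquad u_{P'}\in U_{P'}.
\end{align*}
Then $\overline{e}_P(u)=\overline{e}_P(a)+\sum_{P'}\overline{e}_P(\cores_{P'\to P}(u_{P'}))$.

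The key relation comes from $R_M$. For $P'\subseteq P$ the generator $e_{P'}(u_{P'})-e_P(\cores_{P'\to P}(u_{P'}))$ lies in $R_M$, so $\overline{e}_P(\cores_{P'\to P}(u_{P'}))=\overline{e}_{P'}(u_{P'})$. This gives
\begin{align*}
    \overline{e}_P(u)=\Psi(e_P(a))+\sum_{P'\subsetneq P}\overline{e}_{P'}(u_{P'}),
\end{align*}
and by the inductive hypothesis each $\overline{e}_{P'}(u_{P'})$ lies in $\Im\Psi$. Since $\Im\Psi$ is a subspace, $\overline{e}_P(u)\in\Im\Psi$, which completes the induction and shows that $\Psi$ is surjective.

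This argument has no real obstacle. The two points that need care are that the induction is well-founded (the poset is finite and $\subsetneq$ is strict) and that $A_P\oplus U_{\subsetneq P}=U_P$ holds. The latter is exactly where we use that we work over $\R$ with an inner product: over a general field the orthogonal complement need not be complementary, and one would instead choose any complement, consistent with $A_P\cong U_P/U_{\subsetneq P}$.
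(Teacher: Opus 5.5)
Your proof is correct and follows essentially the same route as the paper: induct along the inclusion order on $\maxplanesclosed(M)$, split $U_P=U_{\subsetneq P}\oplus A_P$ over $\R$, and use the relations in $R_M$ to rewrite $\overline{e}_P(\cores_{P'\to P}(u_{P'}))$ as $\overline{e}_{P'}(u_{P'})$. The only difference is bookkeeping. The paper fixes a linear extension $P_1,\dots,P_s$ and proves surjectivity onto the partial colimits $\cosheaf_i(M)$ one step at a time. You work directly in the full quotient $U_M/R_M$ and show that each $\overline{e}_P(U_P)$ lies in $\Im\Psi$, which is slightly cleaner and relies on the same relations.
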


\begin{proof}
If $M=\varnothing$, then $\Psi:0\to0$ is surjective, so assume $M\ne\varnothing$. Point planes are also included: for $P=\{x\}$, we have $U_P=\R$ and $U_{\subsetneq P}=0$, so $A_P=\R$.
Ordering $\maxplanesclosed(M)$ by inclusion gives a finite poset, so we can find a labeling $\maxplanesclosed(M)=\{P_1,\dots,P_s\}$ such that $P_a\subsetneq P_b$ implies $a<b$. Define the subset $\planes_i(M)=\{P_1,\dots,P_i\}$ and the colimit of the subdiagram using these planes
    \begin{align*}
        \cosheaf_i(M)=\colim_{P\in\planes_i(M)} U_P,
    \end{align*}
    so $\cosheaf_s(M)=\cosheaf(M)$. 
We will now induct on $i\in[s]$ by showing that, at each step, 
\begin{align*}
    \Psi_i:\bigoplus_{j=1}^i A_{P_j}&\to \cosheaf_i(M)\\
    (a_j)_{j\in[i]}&\mapsto \sum_{j=1}^i \overline e_{P_j}^{(i)}(a_j)
\end{align*}
is surjective, where $\overline e_{P_j}^{(i)}:U_{P_j}\to\cosheaf_i(M)$ denotes the canonical map from $U_{P_j}$ into the partial colimit.
By definition, $A_{P_i}=U_{\subsetneq P_i}^\perp$, 
so over $\R$ we have
\begin{align}
    U_{P_i}=U_{\subsetneq P_i}\oplus A_{P_i}.\label{eq:VPi decomposed}
\end{align}

For the base case of $i=1$, there are no proper subplanes of $P_1$ by our labeling, so $U_{\subsetneq P_1}=0\subseteq U_{P_1}$ and 
    \begin{align*}
        \cosheaf_1(M)&=U_{P_1}\\
        &=U_{\subsetneq P_1}\oplus A_{P_1}\\
        &=0\oplus A_{P_1}\\
        &=A_{P_1},
    \end{align*}
    as required.

For $i>1$, our inductive hypothesis is that $\Psi_{i-1}$ is surjective.
Let $\eta_i:\cosheaf_{i-1}(M)\to \cosheaf_i(M)$ be the canonical map induced by the inclusion of diagrams $\planes_{i-1}(M)\subseteq\planes_i(M)$.
For every $j<i$, the canonical maps satisfy
\begin{align*}
    \overline e_{P_j}^{(i)}=\eta_i\circ \overline e_{P_j}^{(i-1)}.
\end{align*}
Therefore, by our inductive hypothesis,
\begin{align*}
    \sum_{j=1}^{i-1}\overline e_{P_j}^{(i)}(U_{P_j})&=\eta_i\left(\sum_{j=1}^{i-1}\overline e_{P_j}^{(i-1)}(U_{P_j})\right)\\
    &\subseteq\eta_i(\Im(\Psi_{i-1}))\\
    &\subseteq\Im(\Psi_i).\numberthis\label{eq:inductive hypothesis on atoms expanded}
\end{align*}
Now it remains to analyze the image of $U_{P_i}$ in $\cosheaf_i(M)$.
We first show that the image of $U_{\subsetneq P_i}$ in $\cosheaf_i(M)$ is
contained in $\Im(\Psi_i)$. 
Take any $u\in U_{\subsetneq P_i}$. By the definition of
$U_{\subsetneq P_i}$, we can write
\begin{align*}
    u=\sum_{\substack{P'\in\maxplanesclosed(M)\\P'\subsetneq P_i}}
    \cores_{P'\to P_i}(u_{P'})
\end{align*}
for some vectors $u_{P'}\in U_{P'}$. In the colimit
$\cosheaf_i(M)$, the colimit relations (from $R_M$) give
\begin{align*}
    \overline e_{P_i}^{(i)}\bigl(\cores_{P'\to P_i}(u_{P'})\bigr)=\overline e_{P'}^{(i)}(u_{P'}).
\end{align*}
Therefore, using \Cref{eq:inductive hypothesis on atoms expanded} we get
\begin{align*}
    \overline e_{P_i}^{(i)}(u)&=\sum_{\substack{P'\in\maxplanesclosed(M)\\P'\subsetneq P_i}}\overline e_{P'}^{(i)}(u_{P'})\\
    &\in\Im(\Psi_i),
\end{align*}
showing that $\overline{e}_{P_i}^{(i)}(U_{\subsetneq P_i})\subseteq\operatorname{im}(\Psi_i)$. The surjectivity of $\Psi_i$ now follows from \Cref{eq:VPi decomposed} and the fact that, by definition, $\overline{e}_{P_i}^{(i)}(A_{P_i})\subseteq\Im(\Psi_i)$. Finally, taking $i=s$ gives us that $\Psi$ is surjective. This proof uses only the insertion maps and colimit relations, so it also applies to the formal cokernel of \Cref{def:formal insertion diagram} after any real specialization.

\end{proof}

\dobib

\section{Two-Sided Product Expansion Over Random Codes}\label{sec:two-sided-product-expansion}
In this section, we prove our main result in \Cref{thm:main two sided prod expanding} by reducing it to a single statement: that over one prime field, the corestriction maps of a random tuple of codes and of its dual are simultaneously injective for every \ec set. That statement is \Cref{thm:cores_M injective whp}, which we prove in the next section. 

Before stating the theorem, we fix the parameters at which the real input will be applied. Let $0<\rho<1/2$, and define $C_\rho=3\left(1+\rho^{-1/2}\right)$ and $A_{D,\rho}=2(4C_\rho)^D\left(1+4/\rho\right)^D$ so that we can set
\begin{align}
    \varepsilon_0(D,\rho)&=\frac{1}{2^{D+2}A_{D,\rho}^2},&
    N^*_{D,\varepsilon}&=\left\lceil\max\left\{16\left(\frac{192D}{e}\right)^{24},(2^D\varepsilon)^{-2}\right\}\right\rceil.\label{eq:real witness parameters}
\end{align}
We will see later in \Cref{sec:injectivity of cores via K} that these parameters are sufficient to guarantee the existence of a real witness for every \ec set, which is the key to proving injectivity of the corestriction maps.

With the parameters fixed, our main result follows.
\begin{restatable}{theorem}{mainTwoSidedProd}\label{thm:main two sided prod expanding}
    Let $D\geq1$, let $0<\rho<1/2$, let $0<\varepsilon\leq\varepsilon_0(D,\rho)$, let $n\geq N^*_{D,\varepsilon}$, and let $k_1,\dots,k_D$ be integers with $\rho n\leq k_i\leq(1-\rho)n$. Then there is a prime
    \begin{align*}
        p=O_D\left(2^{n^D}n^D\log n\right)
    \end{align*}
    such that independent uniformly random codes $\cC_i\in\grass_p(n,k_i)$ are two-sided $\gamma(\varepsilon,D)$-product expanding with probability at least
    \begin{align*}
        1-\frac{1}{\log(n+1)},
        \qquad\text{where}\qquad
        \gamma(\varepsilon,D)=\frac{\varepsilon^D}{D\left(2^D+1\right)^D}.
    \end{align*}
\end{restatable}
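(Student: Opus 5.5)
The plan is to derive \Cref{thm:main two sided prod expanding} from three ingredients: the universal injectivity statement \Cref{thm:cores_M injective whp}, which the section introduction says will be proved next and which I take as given; the duality of \Cref{lem:equivalence of extendability inner-generated cores and res}; and Lemma 19 of \cite{kalachev_maximally_2025}. Concretely, I would read \Cref{thm:cores_M injective whp} as follows. For the fixed $D,\rho,\varepsilon,n$ and dimensions $k_i$, there is a prime $p=O_D(2^{n^D}n^D\log n)$ such that, for independent uniform $\cC_i\in\grass_p(n,k_i)$, the following holds with probability at least $1-1/\log(n+1)$: for every \ec $M\subseteq[n]^D$, both $\cores_M$ for $(\cC_i)_i$ and $\cores_M$ for the dual tuple $(\cC_i^\perp)_i$ are injective. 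The dual tuple has dimensions $n-k_i$, which still lie in $[\rho n,(1-\rho)n]$. Moreover, the dual of a uniform $k$-dimensional subspace is uniform in $\grass_p(n,n-k)$. So the two events can be handled symmetrically and combined by a union bound.

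The key steps, in order, are as follows.
\begin{enumerate}
\item Condition on the good event from \Cref{thm:cores_M injective whp}.
\item For the tuple $(\cC_i)_i$, apply \Cref{lem:equivalence of extendability inner-generated cores and res}. Injectivity of $\cores_M$ for all \ec $M$ means every \ec set is extendable in $\cC_1\otimes\cdots\otimes\cC_D$. Equivalently, every \ec set is inner-generated in $\cC_1^\perp\boxplus\cdots\boxplus\cC_D^\perp$.
\item Invoke Lemma 19 of \cite{kalachev_maximally_2025}: if every \ec set is inner-generated for a check-product code $\cB_1\boxplus\cdots\boxplus\cB_D$, then $(\cB_i)_i$ is $\gamma$-product expanding, with $\gamma$ depending only on $\varepsilon$ and $D$. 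This gives product expansion of $(\cC_i^\perp)_i$.
\item Run the same argument with $(\cC_i^\perp)_i$ in place of $(\cC_i)_i$, using $(\cC_i^\perp)^\perp=\cC_i$. This gives product expansion of $(\cC_i)_i$. Together the two conclusions are exactly two-sided $\gamma$-product expansion in the sense of \Cref{def:product expansion}.
\end{enumerate}

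The main obstacle I expect is matching the explicit constant $\gamma(\varepsilon,D)=\varepsilon^D/(D(2^D+1)^D)$ to the statement of Lemma 19 in \cite{kalachev_maximally_2025}. Their lemma is phrased with their own normalization of the closure parameter and of product expansion (normalized line counts $\norm{\cdot}_i$ against $\norm{\cdot}_H$). My first attempt would be to re-derive their inductive peeling argument in our normalization. Take a check-product word $x$ and let $M$ be the smallest \ec set containing $\supp(x)$, obtained by repeatedly adding full lines that meet the set in more than $\varepsilon n$ points. Each added line is paid for by $\varepsilon n$ points of the current set. Iterating through the $D$ directions, this should give $\abs{\lines(M)}\le (2^D+1)^D\varepsilon^{-D}\abs{x}/n$, so that $\abs{M}$ stays bounded by a constant multiple of $\abs{x}$. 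Inner-generatedness then writes $x$ as a sum of line-codewords on $\lines(M)$, whose number is bounded by this count. Splitting by direction and summing the $D$ directions accounts for the remaining factor $D$.

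A secondary concern is whether $n\ge N^*_{D,\varepsilon}$ and $\varepsilon\le\varepsilon_0(D,\rho)$ are exactly the hypotheses under which \Cref{thm:cores_M injective whp} applies. I would simply pass these hypotheses through unchanged.
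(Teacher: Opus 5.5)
Your proposal is correct and follows the paper's proof: invoke \Cref{thm:cores_M injective whp} with $t=1$, turn simultaneous injectivity of $\cores_M$ into every $\varepsilon$-closed set being inner-generated in both $\cC_1^\perp\boxplus\cdots\boxplus\cC_D^\perp$ and $\cC_1\boxplus\cdots\boxplus\cC_D$, and apply \Cref{lem:KP25 lem 19} to each tuple in turn. Your worry about matching the constant is unnecessary, because the paper uses Lemma 19 of \cite{kalachev_maximally_2025} as a black box already stated with $\gamma(\varepsilon,D)$, so you can drop the re-derivation sketch; you should, however, note, as the paper does, that this lemma's proof does not depend on the characteristic, since here $p$ is odd while \cite{kalachev_maximally_2025} works over characteristic two.
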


The one ingredient we need besides \Cref{thm:cores_M injective whp} is the implication from inner-generatedness to product expansion, which we take from \cite{kalachev_maximally_2025}.
\begin{lemma}[Lemma 19 of \cite{kalachev_maximally_2025}]\label{lem:KP25 lem 19}
    Let $\cC_1,\dots,\cC_D\subsetneq\gfield^n$ be codes over any field, and let $0<\varepsilon\leq1$. If every \ec subset of $[n]^D$ is inner-generated in $\bigboxplus_i\cC_i$, then $(\cC_i)_{i\in[D]}$ is $\gamma(\varepsilon,D)$-product expanding.
\end{lemma}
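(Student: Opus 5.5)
The plan is to run the standard ``closure'' argument. Given a word $x$ in the check-product code, I would build an \ec set $M$ that contains $\supp(x)$ and whose size is at most a constant times $\abs{x}$. Inner-generatedness of $M$ then writes $x$ as a sum of line-codewords on lines inside $M$. The number of such lines is controlled by $\abs{M}/n$, and this gives the product expansion bound. The constant $\gamma(\varepsilon,D)=\varepsilon^D/(D(2^D+1)^D)$ suggests building $M$ from planes of every dimension, with a density threshold that shrinks by a factor $\varepsilon/2^D$ per dimension. That is the construction I would use, rather than the naive iterated line-closure, whose cascading growth is hard to control.

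Concretely, let $S=\supp(x)$ and $\delta=\varepsilon/2^D$. Call a plane $\plane(I,a)$ \emph{dense} if $\abs{S\cap\plane(I,a)}\geq \delta^{\abs{I}}n^{\abs{I}}$, and let $M$ be the union of all dense planes. For $I=\varnothing$ the dense planes are exactly the points of $S$, so $S\subseteq M$.

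For the size bound, fix a direction $I$. The $I$-planes partition $[n]^D$, so there are at most $\abs{S}/(\delta n)^{\abs{I}}$ dense $I$-planes, and each has $n^{\abs{I}}$ points. Summing over $I$ gives
\[
\abs{M}\leq\sum_{I\subseteq[D]}\delta^{-\abs{I}}\abs{S}=(1+2^D/\varepsilon)^D\abs{S}\leq\left(\tfrac{2^D+1}{\varepsilon}\right)^D\abs{S},
\]
using $\varepsilon\leq1$.

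The step needing care is showing that $M$ is \ec. Take a line $\ell$ in direction $j$ with $\abs{M\cap\ell}>\varepsilon n$. Each point of $M\cap\ell$ lies in some dense plane. Pigeonholing over the at most $2^D$ directions $I$, some single $I$ accounts for more than $\delta n$ points of $\ell$ covered by dense $I$-planes.
\begin{itemize}
\item If $j\in I$, then the whole line $\ell$ lies in a single such $I$-plane, so $\ell\subseteq M$.
\item If $j\notin I$, these points lie in more than $\delta n$ distinct dense $I$-planes. All of them sit inside the single plane $Q=\plane(I\cup\{j\},b)$ containing $\ell$ in the $I$-directions, and they are disjoint. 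Hence
\[
\abs{S\cap Q}>\delta n\cdot\delta^{\abs{I}}n^{\abs{I}}=\delta^{\abs{I}+1}n^{\abs{I}+1},
\]
so $Q$ is dense, $Q\subseteq M$, and $\ell\subseteq Q$.
\end{itemize}
Either way $\ell\subseteq M$, so $M$ is \ec. If $M$ happens to be all of $[n]^D$, it is trivially \ec and the size bound still holds, so the hypothesis applies in every case.

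To conclude, by hypothesis $M$ is inner-generated, so $x=\sum_{\ell\in\lines(M)}c_\ell$ with $c_\ell\in\cC_\ell$. Group the terms by direction to get $a_i\in\cC^{(i)}$ with $\abs{a_i}_i\leq\abs{\lines_i(n,D)\cap\lines(M)}\leq\abs{M}/n$, because the $i$-lines are disjoint. Then
\[
n\sum_i\abs{a_i}_i\leq D\abs{M}\leq D\left(\tfrac{2^D+1}{\varepsilon}\right)^D\abs{x},
\]
which is exactly $\gamma(\varepsilon,D)$-product expansion in the sense of \Cref{def:product expansion}. The main obstacle I expect is the closedness verification: the thresholds must scale geometrically with $\abs{I}$ so that the density inherited from $\delta n$ stacked dense $I$-planes reaches the threshold for $(I\cup\{j\})$-planes. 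The choice $\delta=\varepsilon/2^D$, which absorbs the pigeonhole loss, is what makes this work.
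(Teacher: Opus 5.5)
Your proof is correct and complete. The paper gives no argument for this lemma of its own: it imports it from [KP25] and only remarks that the characteristic of the field is never used. So there is no internal proof to compare against, and your argument supplies one.

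The three steps all check out:
\begin{itemize}
\item the closure $M$ built from planes whose intersection with $\mathrm{supp}(x)$ meets the threshold $(\varepsilon n/2^D)^{|I|}$;
\item the pigeonhole over the $2^D$ directions, which forces a long line either into a dense plane containing it or, via more than $\delta n$ disjoint dense $I$-planes, into a dense $(I\cup\{j\})$-plane;
\item the count $n\sum_i |a_i|_i\le D|M|$.
\end{itemize}
The degenerate cases $x=0$, $\delta n<1$ and $M=[n]^D$ also go through. The argument reproduces exactly the stated constant $\gamma(\varepsilon,D)=\varepsilon^D/(D(2^D+1)^D)$, and the hypothesis $\varepsilon\le 1$ enters only in the final simplification $1+2^D/\varepsilon\le (2^D+1)/\varepsilon$.

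Nothing in your argument depends on the field, and it does not use the hypothesis that the codes are proper. It therefore also substantiates the paper's claim that the lemma carries over verbatim from characteristic two to $\mathbb{F}_p$.
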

Although \cite{kalachev_maximally_2025} works in characteristic two throughout, the proof of their Lemma 19 never uses the characteristic of the base field and immediately generalizes to any finite field.

\begin{proof}[Proof of \Cref{thm:main two sided prod expanding}]
    Let $p$ be the prime given by \Cref{thm:cores_M injective whp} and apply that theorem with $t=1$. Then with probability at least $1-1/\log(n+1)$, every \ec set $M$ has injective corestriction maps for both $(\cC_i)_{i\in[D]}$ and $(\cC_i^\perp)_{i\in[D]}$. 

    Conditioned on this event, \Cref{claim:inner generated iff cores injective} to $(\cC_i)_{i\in[D]}$, whose dual codes are the $\cC_i^\perp$, makes every \ec set inner-generated in $\bigboxplus_i\cC_i^\perp$. Applying the same fact to $(\cC_i^\perp)_{i\in[D]}$ makes every \ec set inner-generated in $\bigboxplus_i\cC_i$. So \Cref{lem:KP25 lem 19} applies to each of the two tuples in turn and gives $\gamma(\varepsilon,D)$-product expansion of both, which is exactly two-sided expansion. 
\end{proof}

\dobib

\section{Injectivity of the Corestriction Map Via the Plane-Interaction Matrix}\label{sec:injectivity of cores via K}
As we saw in the previous section, our remaining goal is to show the simultaneous injectivity of $\cores_M$ over a single prime field for all \ec sets $M$ and both $(\cC_i)_{i\in[D]}$ and $(\cC_i^\perp)_{i\in[D]}$. In particular, we show the following.

\begin{theorem}\label{thm:cores_M injective whp}
    Let $D\geq1$, $0<\rho<1/2$, $0<\varepsilon\leq\varepsilon_0(D,\rho)$, and $n\geq N^*_{D,\varepsilon}$, with the parameters from \Cref{eq:real witness parameters}, and let $k_1,\dots,k_D$ be integers with $\rho n\leq k_i\leq(1-\rho)n$. Write $k_i^+=k_i$ and $k_i^-=n-k_i$ for the dimensions of $(\cC_i)_{i\in[D]}$ and $(\cC_i^\perp)_{i\in[D]}$, respectively. Then there is a prime
    \begin{align*}
        p=O_D\left(2^{n^D}n^D\log n\right)
    \end{align*}
    such that the following holds: for every integer $t\geq1$, if $\cC_i\sim\grasspt(n,k_i)$ independently for $i\in[D]$, then, with probability at least
    \begin{align*}
        1-\frac{p^{1-t}}{\log(n+1)},
    \end{align*}
    the corestriction map $\cores_M$ is injective for every \ec set $M\subseteq[n]^D$, simultaneously for $(\cC_i)_{i\in[D]}$ and $(\cC_i^\perp)_{i\in[D]}$.
\end{theorem}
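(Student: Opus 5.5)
Our plan follows the overview: establish a real sign witness for each $\varepsilon$-closed $M$ and each dimension tuple, turn each witness into a nonzero integer polynomial, and then run a Schwartz--Zippel argument over a suitable prime field.

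\emph{Step 1 (real witnesses).} Fix an $\varepsilon$-closed $M\subsetneq[n]^D$ and a sign choice $\pm$, with dimensions $k_i^\pm$. Both $k_i^+$ and $k_i^-$ lie in $[\rho n,(1-\rho)n]$. The case $M=[n]^D$ is trivial, since then $\cores_M$ is the identity on $\cosheaf([n]^D)$.

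Sample Rademacher matrices and set $G_i=(k_i^\pm)^{-1/2}\sigma^{(i)}$. Through the atom decomposition (\Cref{lem:direct sum of atoms spans cosheaf}), injectivity of $\overline\beta_M$ reduces to injectivity of $\Theta$ on $\bigoplus_P A_P$. This in turn follows from $\opnorm{K}<1$, using the orthogonality of atoms of intersecting planes and the isometry on a single atom.

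We bound $K$ in three stages:
\begin{itemize}
\item split it into at most $4^D$ directional blocks $K_{I,J}$;
\item split each block into at most $2^D$ terms $K_{I,J}^E$;
\item bound each term by a product of $\opnorm{G_i}$ factors and $\max_\lambda\opnorm{T_\lambda}$, with $X_\lambda,Y_\lambda$ being $(2^D\varepsilon)$-sparse by \Cref{cor:maxAIclosed is eps sparse}.
\end{itemize}

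The trace-power/weak-Wigner-word estimate then gives $\E\opnorm{K}\le A_{D,\rho}\sqrt{2^D\varepsilon}\le 1/2$ for $\varepsilon\le\varepsilon_0$ and $n\ge N^*_{D,\varepsilon}$. Here $\opnorm{G_i}\le C_\rho$ is controlled in expectation or with high probability. Markov's inequality then yields a sign tuple $\sigma_M^\pm$ that is a real sign witness in the sense of \Cref{def:real sign witness}, since rescaling columns does not affect injectivity. I expect this step to be the main obstacle, specifically the counting of labelings of weak Wigner words under sparsity. Everything else is bookkeeping.

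\emph{Step 2 (polynomials).} Injectivity of $\overline\beta_M$ over a field is a rank condition on integer matrices specialized at $\bX$. Concretely, $\dim\Coker\alpha_M$ equals $\dim U_M-\operatorname{rank}\alpha_M$, and injectivity means $\operatorname{rank}[\beta_M]=\operatorname{rank}$ of the appropriate block matrix.

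I would phrase it as follows. Let $r_M$ be the rank over $\R$ at the witness of the matrix $\begin{pmatrix}\alpha_M^{\bX}\end{pmatrix}$ and the rank of $\begin{pmatrix}\alpha_M & \ \\ \end{pmatrix}$ composed with $\beta_M$. Then injectivity is equivalent to $\operatorname{rank}(\beta_M)$ on $U_M$ equalling $\dim U_M-\operatorname{rank}\alpha_M$, given $\beta\alpha=0$.

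The generic ranks bound the specialized ones from above, so it suffices to choose nonsingular minors $f_M^\pm$ of $\beta_M^{\bX}$ and of $\alpha_M^{\bX}$ that are nonzero at the witness. A specialization at which $f_M^\pm\neq0$ then has both ranks maximal and is still injective. Because the witness is a $\pm1$ point, $f_M^\pm\in\Z[\bX]$ is nonzero there. Its degree is $O_D(n^D)$, since entries have degree at most $D$ and the relevant ranks are at most $\dim U_{[D]}\le n^D$. Sizes of $U_M$ can be larger, so I would bound the minor size by $\dim U_{[D]}$ for $\beta$ and use the $\beta$-minor alone, if possible, via the identity $\operatorname{rank}\alpha$ being forced.

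Since the formal maps for the dual tuple use the same variables relabeled to $(n-k_i)$-row matrices, I would parametrize the dual via a parity-check matrix and invoke that a uniform code in $\grasspt(n,k_i)$ has uniformly random dual.

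\emph{Step 3 (field and Schwartz--Zippel).} Let $F=\prod_{M,\pm}f_M^\pm$, of degree at most $2\cdot2^{n^D}\cdot O_D(n^D)$ and nonzero over $\Z$. Pick a nonzero coefficient $c$ and use \Cref{cor:PNT bounding non-dividing prime} to choose a prime $p\nmid c$ with $p>\deg F\cdot\log(n+1)$ and $p=O_D(2^{n^D}n^D\log n)$, noting that $\log c$ is controlled by the coefficient sizes. Then $F\not\equiv0$ over $\F_p$.

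Take uniform matrices over $\F_{p^t}$, conditioned on full rank. The failure probability from Schwartz--Zippel is at most $\deg F/p^t\le p^{1-t}/\log(n+1)$. Full-rank conditioning changes this only by a negligible factor, absorbed by enlarging the constant. The row space of a uniform full-rank matrix is uniform on the Grassmannian, which finishes the argument.
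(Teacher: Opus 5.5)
Your proposal is correct and follows the paper's route: real sign witnesses from $\E_\sigma[\opnorm{K}]\le A_{D,\rho}\sqrt{2^D\varepsilon}\le 1/2$ and Markov, integer determinant certificates of degree at most $DN_M$, one Chebyshev prime dividing none of the relevant integers, and Schwartz--Zippel over $\F_{p^t}$ with a union bound over the tuple and its dual. Your local deviations all work. These are a product of an $\alpha$-minor and a $\beta$-minor instead of the paper's single $\det[\widetilde{\alpha}^{\bX}\mid(\widetilde{\beta}^{\bX})^\transpose]$, which even avoids the orthogonal-complement argument; a nonzero coefficient of $F$, of absolute value at most $\prod_{M,\eta}(N_M^\eta)!$, in place of the witness values $f_M^\eta(\sigma_M^\eta)$; and paying a factor $2$ for full-rank conditioning by enlarging $p$, where the paper multiplies in the leading minors $h^\eta$ and loses nothing. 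However, drop the ``$\beta$-minor alone'' shortcut. It is unjustified, since $\rank\alpha_M^{x}$ can drop under specialization (e.g.\ when inserted columns vanish), and it is unnecessary, since $N_M\le(2^D-1)n^D$ already keeps the $\alpha$-minor's degree $O_D(n^D)$.
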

Recall from \Cref{def:real sign witness} that a witness is not required to have full row rank. Our analysis of the \emph{plane-interaction matrix} over the reals is what we use to show the existence of such witnesses. The rest of our argument involves going back to a finite field from the reals, which we give at the end of this section. We begin by defining the random model over the reals that we will use.

\paragraph{Setup} 
Our random model over $\R$ samples matrices with independent Rademacher entries and then normalizes the columns to have norm one.
\begin{definition}[Random model]
    For fixed integers $1\leq k_i\leq n$ for $i\in[D]$, take $G_i\in\R^{k_i\times n}$ to have columns $(g_i(j))_{j\in[n]}\in\R^{k_i}$ where
    \begin{align*}
        g_i(j)=k_i^{-1/2}(\sigma^{(i)}_{1,j},\sigma^{(i)}_{2,j},\dots,\sigma^{(i)}_{k_i,j})\in\R^{k_i}
    \end{align*}
    and each $\sigma^{(i)}_{r,j}\sim\{-1,1\}$ is sampled identically and independently. By construction, $\norm{g_i(j)}=1$. The row-space code $\cC_i$ has dimension $\rank G_i$, which may be smaller than $k_i$.
\end{definition}

We set the formal variables $\bX$ of \Cref{def:formal insertion diagram} to $G=(G_i)_{i\in[D]}$, keeping $U_i=\R^{k_i}$ and $U_\varnothing=\R$ regardless of the ranks of the matrices. For the real argument up to \Cref{subsec:injectivity cores-reals to finite fields}, we use the notation
\begin{align*}
    \cosheaf(M):=\Coker\alpha_M^G,
    \qquad
    \cores_M:=\overline\beta_M^G:\cosheaf(M)\to U_{[D]}.
\end{align*}
These agree with the cosheaf and corestriction map of the row-space codes when every $G_i$ has full row rank. We do not condition the signs on this event. The atoms are defined from these insertion maps as in \Cref{def:atom}. 

With this setup, our first ingredient is the following bound on the probability that $\cores_M$ is injective over the reals.

\begin{restatable}{lemma}{coresMinjectivewhpoverR}\label{lem:cores_M injective whp over reals}
    For constants $D\in\N$, $\rho\in(0,1)$, and $\varepsilon\in(0,2^{-D})$, let
    \[
        n\geq N_{D,\varepsilon}:=\max\left\{16\left(\frac{192D}{e}\right)^{24}, (2^D\varepsilon)^{-2}\right\}.
    \]
    Fix an \ec set $M\subseteq[n]^D$, and let $\rho n\leq k_i\leq n$ for every $i\in[D]$. Then
    \begin{align*}
        \Pr_\sigma[\cores_M\text{ injective}]\geq 1- C_{D,\rho,\varepsilon},
    \end{align*}
    where
    \begin{align*}
        C_{D,\rho,\varepsilon}&=2\left(4 C_\rho\right)^D\left(1+\frac{4}{\rho}\right)^D\sqrt{2^D\varepsilon} \qquad\text{and}\qquad C_\rho:=3\left(1+\rho^{-1/2}\right).
    \end{align*}
\end{restatable}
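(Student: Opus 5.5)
The plan follows the overview: reduce injectivity of $\cores_M$ to the bound $\opnorm{K}<1$ for the plane-interaction matrix, then control $\E\opnorm{K}$ and apply Markov's inequality.

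\emph{Step 1: the norm criterion.} Define $\Theta:\bigoplus_{P\in\maxplanesclosed(M)}A_P\to U_{[D]}$ by $\Theta((a_P)_P)=\sum_P\cores_{P\to[n]^D}(a_P)$. Then $\Theta=\cores_M\circ\Psi$, where $\Psi$ is surjective by \Cref{lem:direct sum of atoms spans cosheaf}. I would first check that $\Psi$ is also injective, so that $\cores_M$ is injective whenever $\Theta$ is. If injectivity of $\Psi$ is awkward, a dimension count should do instead: show $\dim\cosheaf(M)\le\sum_P\dim A_P$, which the same induction gives.

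Next, compute $\Theta^\adjoint\Theta$ blockwise.
\begin{itemize}
\item For distinct $P,Q$ with $R=P\cap Q\neq\varnothing$, write $\langle\cores_P a_P,\cores_Q a_Q\rangle=\langle\cores_{R\to P}^\adjoint a_P,\cores_{R\to Q}^\adjoint a_Q\rangle$. This uses that the inserted columns have unit norm, so $g_j(a_j)^\transpose g_j(a_j)=1$. Since $R\in\maxplanesclosed(M)$ is a proper subplane of one of $P,Q$, the atom condition kills the block.
\item For $P=Q$, the same unit-norm fact gives an isometry.
\end{itemize}
Hence $\Theta^\adjoint\Theta=I+K$, with $K$ supported on disjoint pairs as in \Cref{eq:overview-plane interaction}. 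This yields $\norm{\Theta a}^2\ge(1-\opnorm{K})\norm a^2$. Enlarging the domain to $\bigoplus_P U_P$ only increases $\opnorm K$, and from then on $K$ depends only on $G$ and $M$.

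\emph{Step 2: block decomposition.} Split $K=\sum_{I,J}K_{I,J}$ over at most $4^D$ direction pairs. Write each $K_{I,J}=\sum_{\varnothing\ne E\subseteq F}K^E_{I,J}$ with $F=\setcomp{(I\cup J)}$, grouping by the set $E$ of fixed coordinates on which the anchors differ. The planes must be disjoint, so $E\neq\varnothing$.

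On coordinates in $F\setminus E$ the anchors agree, and the inner product contributes $\norm{g}^2=1$. On $E$ the factor is $G_j^\transpose G_j$ restricted to off-diagonal entries, i.e. $D_j$. On $I\triangle J$ the factors are of the form $G_i$. After permuting coordinates, $K^E_{I,J}$ is a direct sum over the shared coordinates $\lambda$ of $(\otimes G_i)\otimes(\Pi_{X_\lambda}(\otimes_{j\in E}D_j)\iota_{Y_\lambda})\otimes(\otimes G_j^\transpose)$. By \Cref{fact:opnorm of direct sum is max of opnorms,fact:opnorm of tensor is prod of opnorms},
\[
\opnorm{K^E_{I,J}}\le\prod_{i\in I\triangle J}\opnorm{G_i}\cdot\max_\lambda\opnorm{T_\lambda}.
\]
The slices $X_\lambda,Y_\lambda$ are slices of $\maxAIclosed(M)$, so they are $\eta=2^D\varepsilon$-sparse by \Cref{cor:maxAIclosed is eps sparse}. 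For the factors $\opnorm{G_i}$, I would use a standard Rademacher bound $\E\opnorm{G_i}\le 1+\sqrt{n/k_i}+o(1)\le C_\rho$, with independence across $i$. Products of expectations are only needed over disjoint index sets, but $G_j$ may appear both in an outer factor and in $D_j$ when $j\in E$. That cannot happen, since $E\subseteq F$ is disjoint from $I\cup J$, so independence lets the expectation factor.

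\emph{Step 3 (main obstacle): trace method for $\max_\lambda\opnorm{T_\lambda}$.} Expand $\E\trace((TT^\transpose)^m)$ as in \Cref{eq:overview-trace power monomial expansion}. Only closed bipartite walks that are even and non-left-backtracking survive. Classify these by equivalence class, using the Anderson--Zeitouni count $16^m(4m)^{6\Delta}$. Then bound the labelings by $n^{W_L}\eta^{W_L/q}\prod_j k_j^{\wt_R(j)}$, fixing coordinates sequentially and using \Cref{fact:number of completions in sparse set}.

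The delicate part is the bookkeeping. The normalization $k_j^{-2m}$ against $n^{\wt_L}k^{\wt_R}$ must leave a factor $\rho^{-O(m)}$ times $(n^{-1}\cdot\mathrm{poly}(m))^{\Delta}$, so that the deficiency sum converges once $n\ge(192D/e)^{24}$-type thresholds hold. The count must also extract at least about $m$ powers of $\eta$: non-backtracking forces each new $x_t,y_t$ to introduce fresh coordinates often enough. The target is
\[
\E\trace\le 2^q n^q\eta^{m+1}(16^q\rho^{-2q})^m .
\]
Union bound over the at most $n^D$ choices of $\lambda$ via $\E\max\le(\sum_\lambda\E\trace)^{1/2m}$, and take $m=D\log n$; the $n^{O(D)/2m}$ factor becomes a constant. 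This gives $\E\max_\lambda\opnorm{T_\lambda}\le2(4/\rho)^{|E|}\sqrt\eta$.

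Summing over $E$ (at most $(1+4/\rho)^D$ after the binomial sum) and over $(I,J)$, we get $\E\opnorm K\le C_{D,\rho,\varepsilon}$. Markov's inequality gives $\Pr[\opnorm K\ge1]\le C_{D,\rho,\varepsilon}$, which combined with Step 1 yields the lemma.
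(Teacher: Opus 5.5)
Your outline follows the paper's route: injectivity of $\cores_M$ from $\opnorm{K}<1$ via the atoms, then Markov's inequality on $\E\opnorm{K}$ through the directional blocks, the split $K_{I,J}=\sum_{\varnothing\neq E\subseteq F}K_{I,J}^E$, norm bounds on $G_L,G_R$, and the weak-Wigner-word trace bound. In Step 1 you need neither injectivity of $\Psi$ nor a dimension count, since surjectivity alone gives $\ker\cores_M\subseteq\Psi(\ker\Theta)=0$. Two steps, however, do not go through as written.

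\emph{First gap: the structure of $K_{I,J}^E$.} It is not a direct sum of operators $(\otimes G_i)\otimes T_\lambda\otimes(\otimes G_j^\transpose)$. The sparse slices $X_{r,z}$ and $Y_{\ell,z}$ of the anchor sets depend on the index $r\in[n]^R$ that $G_R^\transpose$ produces and on the index $\ell\in[n]^L$ that $G_L$ consumes. So the projection and inclusion cannot be split off as a separate tensor factor. The bound $\opnorm{K_{I,J}^E}\le\opnorm{G_L}\opnorm{G_R}\max_{\ell,r,z}\opnorm{\Pi_{X_{r,z}}D_E\iota_{Y_{\ell,z}}}$ is still true, but it comes from a composition, not a tensor product:
\begin{itemize}
\item first apply $G_R^\transpose$, which commutes with the inclusion in the $(\ell,b)$ coordinates;
\item then apply the block-diagonal map $\bigoplus_{\ell,r,z}\Pi_{X_{r,z}}D_E\iota_{Y_{\ell,z}}$;
\item finally apply $G_L$, which commutes with the projection in the $(r,a)$ coordinates;
\item conclude by submultiplicativity.
\end{itemize}

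\emph{Second gap: the $\eta$-bookkeeping.} You cannot obtain $\eta^{m+1}$ and a deficiency factor $(\mathrm{poly}(m)/n)^{\Delta}$ independently. The labeling count gives $\eta^{W_\sL/q}$, since each row has at most $q$ fresh coordinates. Non-left-backtracking gives $\wt_\sR(j)\le m$, because every right vertex has two distinct left neighbours. Hence $W_\sL\ge q(m+1)-\Delta$, which yields only $\eta^{m+1-\Delta/q}$. For a walk bouncing between two left vertices through one right vertex, only $O(1)$ powers of $\eta$ are available. The per-deficiency ratio is therefore $(4m)^6/(n\eta^{1/q})$, and bounding it by $1/2$ needs both hypotheses on $n$:
\begin{itemize}
\item $n\ge16(192D/e)^{24}$ gives $(4m)^6\le n^{1/2}/2$;
\item $n\ge(2^D\varepsilon)^{-2}\ge\eta^{-2/q}$ gives $\eta^{1/q}\ge n^{-1/2}$.
\end{itemize}
Your sketch never uses the second hypothesis, which is exactly this missing trade-off.

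\emph{Constants.} Two smaller points matter for reaching the stated $C_{D,\rho,\varepsilon}$ exactly. You need $\abs{\Lambda}\le n^{\nu}$ with $\nu+\abs{E}\le D$, counting the triples $(\ell,r,z)$, so that $2^{q/2m}n^{(q+\nu)/2m}\le\exp(\tfrac{\log 2}{2\log n}+\tfrac12)<2$; with the cruder count $n^D$ the prefactor can exceed $2$. Also, the bound $1+\sqrt{n/k}+o(1)$ must be made non-asymptotic for every $n\ge N_{D,\varepsilon}$. The paper gets exactly $C_\rho$ from Gaussian symmetrization (factor $3$) together with $\E\opnorm{A}\le\sqrt k+\sqrt n$ for a Gaussian matrix $A$.
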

For the parameters in \Cref{eq:real witness parameters}, this lower bound is at least $1/2$.

Our second ingredient transfers this result over the reals to a finite field. We first state the transfer for a single fixed set $M$, where the resulting prime is allowed to depend on $M$. Then, the proof of \Cref{thm:cores_M injective whp} at the end of this section upgrades it to one prime for every \ec set and both $(\cC_i)_{i\in[D]}$ and $(\cC_i^\perp)_{i\in[D]}$ simultaneously.

\begin{restatable}{lemma}{coresMinjectiveoverrealsimpliesoverF}\label{lem:cores_M injective over reals gives injective over finite field}
    Fix constants $D\in\N$ and $\varepsilon,\rho\in(0,1)$, let $k_i\geq\rho n$ for every $i\in[D]$, and let $\emptyset\neq M\subsetneq[n]^D$ be \ec. Suppose that there is a real sign witness for $(M,(k_i)_{i\in[D]})$. Then, for all sufficiently large $n$, there is a prime
    \begin{align*}
        p=O_{D,\varepsilon}(n^D\log n)
    \end{align*}
    such that, for independent uniformly random $k_i$-dimensional codes $\cC_i\subseteq\F_p^n$,
    \begin{align*}
        \Pr[\cores_M\text{ is injective}]\geq 1-\frac{2}{\log(n+1)}.
    \end{align*}
\end{restatable}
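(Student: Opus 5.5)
The plan is to turn injectivity of $\overline\beta_M^{\bX}$ into the nonvanishing of one integer polynomial $f_M\in\Z[\bX]$, and then run a Schwartz--Zippel argument over a suitably chosen prime field.

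\emph{Step 1: reduce to a rank condition.} By \Cref{lem:direct sum of atoms spans cosheaf}, or more simply from the presentation $\Coker\alpha_M^{\bX}$, injectivity of $\overline\beta_M$ after any specialization over a field $\gfield$ says the following. The composite $U_M\to U_{[D]}$ has kernel exactly $\Im\alpha_M$. Equivalently,
\begin{align*}
    \rank\beta_M=\dim U_M-\rank\alpha_M .
\end{align*}
Since $\Im\alpha_M\subseteq\ker\beta_M$ always holds, we always have $\rank\beta_M\le\dim U_M-\rank\alpha_M$. So injectivity holds iff $\rank\alpha_M+\rank\beta_M\ge\dim U_M$.

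\emph{Step 2: build $f_M$ from the real witness.} At the real sign witness $\sigma$, fix $r_\alpha=\rank\alpha_M(\sigma)$ and $r_\beta=\rank\beta_M(\sigma)$, so $r_\alpha+r_\beta=\dim U_M$. Choose an $r_\alpha\times r_\alpha$ minor of $\alpha_M^{\bX}$ and an $r_\beta\times r_\beta$ minor of $\beta_M^{\bX}$ that are nonzero at $\sigma$. Both matrices have entries that are integer polynomials in $\bX$. Let $f_M\in\Z[\bX]$ be the product of these two minors; it is nonzero because $f_M(\sigma)\ne0$.

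For any specialization $G$ over any field with $f_M(G)\ne0$, the ranks satisfy $\rank\alpha_M(G)\ge r_\alpha$ and $\rank\beta_M(G)\ge r_\beta$. Step 1 then gives injectivity.

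\emph{Step 3: choose the prime.} The evaluation $f_M(\sigma)$ is a nonzero integer, and $f_M$ has some nonzero integer coefficient $c$. Pick $p$ not dividing $c$, so that $f_M\bmod p$ is a nonzero polynomial over $\F_p$. \Cref{cor:PNT bounding non-dividing prime}, applied with $q$ of order $\deg f_M\cdot\log(n+1)$, produces such a prime $p\le C(q+\log(2c))$.

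To get $p=O_{D,\varepsilon}(n^D\log n)$ we need to control both $\deg f_M$ and $\log c$. For the degree, each entry of $\cores_{P\to[n]^D}$ has degree at most $D$ and each entry of $\alpha_M$ has degree at most $D$. The relevant sizes are $r_\beta\le\dim U_{[D]}\le n^D$ and $r_\alpha\le\dim\cE_M$. So $\deg f_M\le D(r_\alpha+r_\beta)\le D\dim U_M$.

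\emph{Main obstacle: bounding $\dim U_M$ and $\log c$.} Here I would use sparsity. By \Cref{cor:maxAIclosed is eps sparse}, $\abs{\maxAIclosed(M)}\le 2^D\varepsilon n^{D-\abs I}$. Hence
\begin{align*}
    \dim U_M\le\sum_I 2^D\varepsilon\, n^{D-\abs I}\, n^{\abs I}\le 4^D\varepsilon n^D .
\end{align*}
This gives $\deg f_M=O_{D,\varepsilon}(n^D)$.

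For $c$, I would not bound a coefficient directly. Instead I would take $c$ to be the integer $f_M(\sigma)$ itself: if $p\nmid f_M(\sigma)$, then $f_M\not\equiv0\pmod p$. Hadamard's inequality bounds $\abs{f_M(\sigma)}$ by $\exp(O(\dim U_M\log n))$, since the matrix entries are small integers. Therefore $\log(2c)=O_{D,\varepsilon}(n^D\log n)$ as well.

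\emph{Step 4: apply Schwartz--Zippel.} Schwartz--Zippel over $\F_p$ gives, for a uniform $G$,
\begin{align*}
    \Pr[f_M(G)=0]\le\deg f_M/p\le1/\log(n+1).
\end{align*}
The distribution we need is uniformly random codes rather than uniform $G$. Conditioning a uniform $G$ on every $G_i$ having full row rank yields exactly uniform codes $\cC_i\in\grass_p(n,k_i)$, and when the ranks are full, $\overline\beta_M$ coincides with $\cores_M$. The rank condition fails with probability at most $\sum_i p^{k_i-n}/(p-1)\le1/\log(n+1)$ for large $n$. Dividing by the conditioning probability and adding the two failure probabilities gives $\Pr[\cores_M\text{ injective}]\ge1-2/\log(n+1)$.
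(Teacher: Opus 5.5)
Your proof is correct and follows the paper's route: a polynomial certificate that is nonzero at the real sign witness, a prime from \Cref{cor:PNT bounding non-dividing prime} that does not divide the certificate's integer value at $\sigma$ (bounded by Hadamard), Schwartz--Zippel over $\F_p$, and conditioning on full row rank to pass to uniform codes. The only difference is in the certificate: you take the product of an $r_\alpha\times r_\alpha$ minor of $\alpha_M$ and an $r_\beta\times r_\beta$ minor of $\beta_M$, whereas the paper takes a single $N_M\times N_M$ determinant of chosen columns of $\alpha_M$ bordered by chosen rows of $\beta_M$, which needs the real orthogonal-complement argument to be invertible at $\sigma$; yours avoids that argument and gives the same degree and size bounds (in the last step you only need $\Pr[\text{full rank}]\geq 1/2$, which the paper gets from $p>2D+1$, rather than a $1/\log(n+1)$ bound on rank failure).
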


To obtain the real sign witnesses for \Cref{thm:cores_M injective whp}, we introduce the plane-interaction matrix $K$ which will be the main tool we use to prove the injectivity of $\cores_M$ over $\R$. In particular, the construction of the plane-interaction matrix will allow us to make explicit the obstructions to the injectivity of $\cores_M$.

In \Cref{subsec:injectivity cores-plane interaction observations} we observe how the corestriction maps of intersecting planes interact to motivate our derivation of the definition of the plane-interaction matrix over $\R$ in \Cref{subsec:injectivity cores-constructing K}. Then, in \Cref{subsec:injectivity cores-operator norm < 1 implies cores injective}, we prove \Cref{lem:cores_M injective whp over reals} by showing that if the operator norm of $K$ is less than one, then $\cores_M$ is injective. Finally, in \Cref{subsec:injectivity cores-reals to finite fields} we prove the fixed-set transfer in \Cref{lem:cores_M injective over reals gives injective over finite field} and the simultaneous statement in \Cref{thm:cores_M injective whp}.

\subsection{Interactions between the corestriction maps of planes}\label{subsec:injectivity cores-plane interaction observations}

We begin by analyzing the interactions between different planes under our random Rademacher model to motivate the definition of the plane-interaction matrix. We first make a quick notational definition for succinctness.
\begin{definition}\label{def:random rademacher model}
    For a plane $P\in\maxplanesclosed(M)$, we write its global corestriction map $\cores_{P\to[n]^D}: U_P\to U_{[D]}$ as simply $\cores_P$.
\end{definition}

Now we show that the inner product of the global corestriction maps of incident planes vanishes when restricted to atoms.
\begin{lemma}\label{lem:cores of elts of intersecting atoms are orthogonal}
    For distinct intersecting $P,Q\in\maxplanesclosed(M)$, we have that for all $x\in A_P$ and $y\in A_Q$:
    \begin{align*}
        \left\langle\cores_P x,\cores_Q y\right\rangle = 0.
    \end{align*}
\end{lemma}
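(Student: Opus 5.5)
The plan follows the sketch from the proof overview. Set $R=P\cap Q$. Since $P,Q\in\maxplanesclosed(M)$ intersect nontrivially and $\maxplanesclosed(M)$ is closed under nonempty intersections, $R\in\maxplanesclosed(M)$. Because $P\neq Q$, at least one of the inclusions $R\subseteq P$ and $R\subseteq Q$ is strict; by symmetry we may assume $R\subsetneq P$. The argument then has two steps: first express the inner product through $R$ via adjoints of corestriction maps, and then use that $A_P$ is orthogonal to the image of $\cores_{R\to P}$.

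\textbf{Step 1: factoring through the intersection.} Write $P=\plane(I,a)$, $Q=\plane(J,b)$ and $R=\plane(I\cap J,c)$. Here $c$ agrees with $a$ on $\setcomp{I}$ and with $b$ on $\setcomp{J}$; this is consistent because $P\cap Q\neq\varnothing$ forces $a$ and $b$ to agree on $\setcomp{I}\cap\setcomp{J}$. I claim that for pure tensors $x=\bigotimes_{i\in I}x_i$ and $y=\bigotimes_{j\in J}y_j$,
\begin{align*}
    \langle\cores_P x,\cores_Q y\rangle=\langle\cores_{R\to P}^\adjoint x,\cores_{R\to Q}^\adjoint y\rangle.
\end{align*}
The claim follows by comparing the two sides coordinate by coordinate using \Cref{eq:cores explicit def}.
\begin{itemize}
    \item For $i\in I\cap J$, both sides contribute $\langle x_i,y_i\rangle$.
    \item For $i\in I\setminus J$, the left side gives $\langle x_i,g_i(b_i)\rangle$. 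On the right, $\cores_{R\to P}^\adjoint$ leaves $x_i$ untouched, while $\cores_{R\to Q}^\adjoint$ contracts $y$ against $g_i(c_i)=g_i(b_i)$.
    \item The case $i\in J\setminus I$ is symmetric to the previous one.
    \item For $i\notin I\cup J$, the left side gives $\langle g_i(a_i),g_i(b_i)\rangle=\langle g_i(c_i),g_i(c_i)\rangle$, since $a_i=b_i=c_i$. On the right this coordinate does not appear at all.
\end{itemize}
The last case is where the normalization matters. Each column has unit norm, $\norm{g_i(c_i)}=1$, so this factor equals $1$. This identity holds for any real specialization with unit-norm columns, which the random model guarantees. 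Multilinearity then extends the claim from pure tensors to all $x\in U_P$ and $y\in U_Q$.

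\textbf{Step 2: the atom kills the adjoint.} Since $R\subsetneq P$ and $R\in\maxplanesclosed(M)$, \Cref{def:atom} gives $\cores_{R\to P}(U_R)\subseteq U_{\subsetneq P}$, and therefore $A_P\perp\cores_{R\to P}(U_R)$. For every $z\in U_R$ we then have $\langle\cores_{R\to P}^\adjoint x,z\rangle=\langle x,\cores_{R\to P}z\rangle=0$, which shows $\cores_{R\to P}^\adjoint x=0$. Substituting into the identity from Step 1, the inner product vanishes.

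The main obstacle is the bookkeeping in Step 1, in particular checking that the fixed coordinates of $R$ agree with those of both $P$ and $Q$, and that the coordinates fixed by both planes contribute exactly $\norm{g_i(c_i)}^2=1$. Once that identity is in place, Step 2 is immediate.
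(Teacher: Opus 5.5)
Your proof is correct and is essentially the paper's: it too first shows $\langle\cores_P x,\cores_Q y\rangle=\langle\cores_{R\to P}^\adjoint x,\cores_{R\to Q}^\adjoint y\rangle$ on pure tensors, using $\norm{g_i(c_i)}=1$ for the coordinates outside $I\cup J$. It then uses $A_P=U_{\subsetneq P}^\perp$ to get $\cores_{R\to P}^\adjoint x=0$. Your explicit remark that $R\in\maxplanesclosed(M)$ fills in a point the paper leaves implicit.

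One bookkeeping slip: in your $i\in I\setminus J$ bullet the roles are reversed. Since $\cores_{R\to P}$ inserts exactly the coordinates in $I\setminus J$, it is $\cores_{R\to P}^\adjoint$ that contracts $x_i$ against $g_i(c_i)=g_i(b_i)$, while $y\in U_J$ has no $i$-th factor at all. The resulting scalar $\langle x_i,g_i(b_i)\rangle$ is the same, so the identity and the proof are unaffected.
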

The main technical tool required for this is the fact that we can work with the inner product of the adjoint of the corestriction maps instead of the corestriction maps themselves.

\begin{lemma}\label{lem:inner prod of cores is inner prod of adjoints to intersection}
    For two intersecting planes $P=\plane(I,a),Q=\plane(J,b)\in\maxplanesclosed(M)$, let $K=I\cap J$ and define the plane at their intersection $R=\plane(K,c)$ where
    \begin{align*}
        c_i=
        \begin{cases}
            a_i=b_i & i\in \setcomp{(I\cup J)}\\
            a_i & i\in J\setminus I\\
            b_i & i\in I\setminus J.
        \end{cases}
    \end{align*}
    Then we have that for all $x\in U_P$ and $y\in U_Q$
    \begin{align*}
        \langle\cores_P x,\cores_Q y\rangle=\langle\gamma_{P\to R} x,\gamma_{Q\to R}y\rangle
    \end{align*}
    where $\gamma_{P\to R}$ and $\gamma_{Q\to R}$ are the adjoints of $\cores_{R\to P}$ and $\cores_{R\to Q}$, respectively, given on pure tensors by\footnote{We use the standard Euclidean inner products on the real spaces $U_P$, $U_Q$, and $U_R$.}
    \begin{align*}
        \gamma_{P\to R}=\cores_{R\to P}^\adjoint : U_P&\to U_R &\gamma_{Q\to R}=\cores_{R\to Q}^\adjoint:U_Q&\to U_R\\
        x&\mapsto \left(\prod_{i\in I\setminus J}\langle x_i,g_i(c_i)\rangle\right)\bigotimes_{k\in K} x_k & y &\mapsto \left(\prod_{j\in J\setminus I}\langle y_j,g_j(c_j)\rangle\right)\bigotimes_{k\in K} y_k.
    \end{align*}
\end{lemma}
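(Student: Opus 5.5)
The approach is to factor both global corestrictions through the intersection plane $R$ and then move everything onto $U_R$ using adjoints. First we check that $R$ is well defined and equals $P\cap Q$. A point $x$ lies in $P\cap Q$ exactly when $x_i=a_i$ for $i\notin I$ and $x_i=b_i$ for $i\notin J$. Since the planes intersect, $a_i=b_i$ on $\setcomp{(I\cup J)}$. So the fixed coordinates are $\setcomp{K}$, with the values $c$ given in the statement, and $R=\plane(K,c)=P\cap Q$. In particular $R\subseteq P$ and $R\subseteq Q$. Moreover, the anchor of $R$ agrees with $a$ on $\setcomp{I}$ and with $b$ on $\setcomp{J}$.

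The key step is the explicit formula \Cref{eq:cores explicit def}. It shows that $\cores_{P\to[n]^D}$ inserts the columns $g_m(a_m)$ for $m\in\setcomp{I}$, and that $\cores_{R\to P}$ inserts $g_i(c_i)$ for $i\in I\setminus J$. Since $c$ agrees with $a$ off $I$, compatibility gives
\begin{align*}
\cores_R=\cores_P\circ\cores_{R\to P}=\cores_Q\circ\cores_{R\to Q}.
\end{align*}
This alone does not give the identity, because we need to pass from $U_P,U_Q$ down to $U_R$. So instead I would compute directly on pure tensors.

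Write $x=\bigotimes_{i\in I}x_i$ and $y=\bigotimes_{j\in J}y_j$. Then $\cores_P x$ has factor $x_m$ for $m\in I$ and $g_m(a_m)$ otherwise. Likewise $\cores_Q y$ has factor $y_m$ for $m\in J$ and $g_m(b_m)$ otherwise. The inner product on $U_{[D]}$ factorizes coordinatewise, so we can evaluate it one coordinate at a time:
\begin{itemize}
\item For $m\in K$, the factor is $\langle x_m,y_m\rangle$.
\item For $m\in I\setminus J$, it is $\langle x_m,g_m(b_m)\rangle=\langle x_m,g_m(c_m)\rangle$.
\item For $m\in J\setminus I$, it is $\langle g_m(a_m),y_m\rangle=\langle y_m,g_m(c_m)\rangle$.
\item For $m\notin I\cup J$, it is $\langle g_m(a_m),g_m(b_m)\rangle=\norm{g_m(c_m)}^2=1$, using $a_m=b_m$ and the unit-norm columns of the random model.
\end{itemize}
The product of these factors equals $\langle\gamma_{P\to R}x,\gamma_{Q\to R}y\rangle$ with the stated formulas, since the $K$-factors give $\prod_{k\in K}\langle x_k,y_k\rangle$ and the remaining scalars match.

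It remains to check that the stated $\gamma$'s really are the adjoints. For pure $u\in U_R$,
\begin{align*}
\langle\cores_{R\to P}u,x\rangle=\prod_{k\in K}\langle u_k,x_k\rangle\prod_{i\in I\setminus J}\langle g_i(c_i),x_i\rangle,
\end{align*}
which is exactly $\langle u,\gamma_{P\to R}x\rangle$. Both sides of the main identity are bilinear, so extending from pure tensors to all of $U_P\times U_Q$ finishes the proof.

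The main subtlety is the coordinates in $\setcomp{(I\cup J)}$. The identity relies on $\norm{g_m(c_m)}=1$. For general generator matrices it would instead carry a factor of $\prod\norm{g_m(c_m)}^2$. This is precisely why the random model normalizes columns to unit norm, and I would point this out explicitly in the proof.
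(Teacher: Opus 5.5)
Your proposal is correct and follows essentially the same route as the paper: expand $\cores_P x$ and $\cores_Q y$ on pure tensors via \Cref{eq:cores explicit def}, evaluate the inner product coordinate by coordinate over $K$, $I\setminus J$, $J\setminus I$ and $\setcomp{(I\cup J)}$ (using the unit-norm columns on the last block), recognize the result as $\langle\gamma_{P\to R}x,\gamma_{Q\to R}y\rangle$, and extend by bilinearity. Your extra checks that $R=P\cap Q$ and that the displayed formulas really are the adjoints of $\cores_{R\to P}$ and $\cores_{R\to Q}$ fill in details the paper leaves implicit, and your compatibility remark $\cores_R=\cores_P\circ\cores_{R\to P}$ is harmless but not used.
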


We now prove \Cref{lem:cores of elts of intersecting atoms are orthogonal}.

\begin{proof}[Proof of \Cref{lem:cores of elts of intersecting atoms are orthogonal}]
    Let $R=P\cap Q=\plane(I\cap J,c)$, using the same notation as in \Cref{lem:inner prod of cores is inner prod of adjoints to intersection}. Using the assumption on $P$ and $Q$ that they have non-empty intersection, at least one of $R\subsetneq P$ or $R\subsetneq Q$ must hold.

    \paragraph{Case $R\subsetneq P$:}
    In this case, by \Cref{def:atom} we have that $\cores_{R\to P}(U_R)\subseteq U_{\subsetneq P}$. Since $A_P=U_{\subsetneq P}^\perp$, we have that for any $x\in A_P$ and $z\in U_R$
    \begin{align*}
        0&=\langle x,\cores_{R\to P} z\rangle\\
        &=\langle\gamma_{P\to R} x,z\rangle,
    \end{align*}
    by the definition of adjoints. However, since this holds for any $z\in U_R$, it must be that $\gamma_{P\to R} x=0$. Consequently, using \Cref{lem:inner prod of cores is inner prod of adjoints to intersection} we get
    \begin{align*}
        \langle\cores_P x,\cores_Q y\rangle &= \langle \gamma_{P\to R} x,\gamma_{Q\to R} y\rangle\\
        &=\langle 0,\gamma_{Q\to R} y\rangle\\
        &=0,
    \end{align*}
    as claimed.

    \paragraph{Case $R\subsetneq Q$:}
    Symmetrically in this case we can derive that $\gamma_{Q\to R} y=0$, so 
    \begin{align*}
        \langle\cores_P x,\cores_Q y\rangle &= \langle \gamma_{P\to R} x,\gamma_{Q\to R} y\rangle\\
        &=\langle \gamma_{P\to R} x,0\rangle\\
        &=0,
    \end{align*}
    as required.
\end{proof}

On the other hand, if instead of distinct planes $P$ and $Q$ we have that $P=Q$, then the corestriction map is an isometry when restricted to atoms.

\begin{lemma}\label{cor:cores is isometry on atoms}
    If instead $P=Q$, then for all $x,y\in A_P$ we have that
    \begin{align*}
        \langle\tau_P x,\tau_P y\rangle_{U_{[D]}}=\langle x,y\rangle_{U_P}.
    \end{align*}
\end{lemma}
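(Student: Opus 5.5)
The proof is a direct computation from the explicit formula for corestriction. Write $P=\plane(I,a)$. By the explicit formula in \Cref{eq:cores explicit def}, for a pure tensor $x=\bigotimes_{i\in I}x_i\in U_P$ we have
\begin{align*}
    \cores_P x = x\otimes\bigotimes_{j\in\setcomp{I}} g_j(a_j),
\end{align*}
with the factors placed in the order of $[D]$. Both sides of the claimed identity are bilinear in $(x,y)$, so it suffices to verify it on pure tensors $x=\bigotimes_{i\in I}x_i$ and $y=\bigotimes_{i\in I}y_i$ and then extend by linearity.

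On pure tensors, the Euclidean inner product on $U_{[D]}=\bigotimes_{i\in[D]}\R^{k_i}$ factorizes across tensor factors. This gives
\begin{align*}
    \langle\cores_P x,\cores_P y\rangle=\prod_{i\in I}\langle x_i,y_i\rangle\cdot\prod_{j\in\setcomp{I}}\norm{g_j(a_j)}^2 .
\end{align*}
In the random model every column is normalized, so $\norm{g_j(a_j)}=1$. The second product is therefore $1$, and the first product is exactly $\langle x,y\rangle_{U_P}$.

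Consequently $\cores_P$ is an isometry on all of $U_P$, and in particular on $A_P\subseteq U_P$. The atom structure is not needed; the statement is phrased for atoms only so that it sits alongside \Cref{lem:cores of elts of intersecting atoms are orthogonal}.

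An equivalent route is to take $R=P$ in \Cref{lem:inner prod of cores is inner prod of adjoints to intersection}. Then $\cores_{R\to P}$ is the identity, and the computation reduces to the same factorization. The only point that needs care is the unit-norm normalization of the columns $g_j(a_j)$. This holds in the normalized Rademacher model by construction (each column is $k_j^{-1/2}$ times a sign vector). It would fail for an arbitrary specialization, where one would instead obtain the scalar factor $\prod_{j\in\setcomp{I}}\norm{g_j(a_j)}^2$.
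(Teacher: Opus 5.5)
Your proof is correct and is essentially the paper's. The paper applies \Cref{lem:inner prod of cores is inner prod of adjoints to intersection} with $R=P$, where $\gamma_{P\to P}$ is the identity, and that lemma's proof is exactly your tensor factorization using $\norm{g_j(a_j)}=1$; your remark that the restriction to atoms is unnecessary, so that $\cores_P$ is an isometry on all of $U_P$, is also right.
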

\begin{proof}
    We directly apply \Cref{lem:inner prod of cores is inner prod of adjoints to intersection} to compute
    \begin{align*}
        \langle\tau_P x,\tau_P y\rangle_{U_{[D]}}&=\langle\gamma_{P\to P} x,\gamma_{P\to P}y\rangle_{U_P}\\
        &=\langle x,y\rangle_{U_P}.
    \end{align*}
\end{proof}

We end by providing the proof of \Cref{lem:inner prod of cores is inner prod of adjoints to intersection}.
\begin{proof}[Proof of \Cref{lem:inner prod of cores is inner prod of adjoints to intersection}]
    To examine the result of the desired inner product $\left\langle\cores_P x,\cores_Q y\right\rangle$, we begin by taking pure tensors 
    \begin{align*}
        x = \bigotimes_{i\in I} x_i\in U_P=U_I \qquad y=\bigotimes_{j\in J}y_j\in U_Q=U_J
    \end{align*}
    and analyzing $\cores_P x$ and $\cores_Q y$. Recalling \Cref{eq:cores explicit def}, we get that
    \begin{align*}
        \cores_P x= x\otimes\left(\bigotimes_{i\in\setcomp{I}}g_i(a_i)\right) \qquad
        \cores_Q y= y\otimes\left(\bigotimes_{j\in\setcomp{J}}g_j(b_j)\right).
    \end{align*}

    This allows us to compute their inner product explicitly as
    \begin{align*}
        \left\langle\cores_P x,\cores_Q y\right\rangle&=\left\langle x\otimes\left(\bigotimes_{i\in\setcomp{I}}g_i(a_i)\right),y\otimes\left(\bigotimes_{j\in\setcomp{J}}g_j(b_j)\right)\right\rangle\\
        &=\left(\prod_{k\in K}\langle x_k,y_k\rangle\right)\left(\prod_{i\in I\setminus J} \langle x_i,g_i(c_i)\rangle\right)\left(\prod_{j\in J\setminus I}\langle g_j(c_j),y_j\rangle \right)\left(\prod_{i\in\setcomp{(I\cup J)}}\langle g_i(c_i),g_i(c_i)\rangle\right)\\
        &=\left(\prod_{k\in K}\langle x_k,y_k\rangle\right)\left(\prod_{i\in I\setminus J} \langle x_i,g_i(c_i)\rangle\right)\left(\prod_{j\in J\setminus I}\langle g_j(c_j),y_j\rangle \right)\left(\prod_{i\in\setcomp{(I\cup J)}} \norm{g_i(c_i)}^2\right)\\
        &=\left(\prod_{k\in K}\langle x_k,y_k\rangle\right)\left(\prod_{i\in I\setminus J} \langle x_i,g_i(c_i)\rangle\right)\left(\prod_{j\in J\setminus I}\langle g_j(c_j),y_j\rangle \right)\numberthis\label{eq:cores prod expansion used norm is 1}\\
        &=\left\langle \left(\prod_{i\in I\setminus J}\langle x_i,g_i(c_i)\rangle\right)\bigotimes_{k\in K} x_k, \left(\prod_{j\in J\setminus I}\langle y_j,g_j(c_j)\rangle\right)\bigotimes_{k\in K} y_k\right\rangle\\%
        &=\langle\gamma_{P\to R} x,\gamma_{Q\to R} y\rangle,\numberthis\label{eq:equality of inner prod of cores and adjoints}
    \end{align*}
    where to get \Cref{eq:cores prod expansion used norm is 1} we used the fact that the columns are normalized as $\norm{g_i(c_i)}=1$. Moreover, we have that \Cref{eq:equality of inner prod of cores and adjoints} holds for any $x\in U_P$ and $y\in U_Q$ because pure tensors span $U_P$ and $U_Q$.
\end{proof}

\subsection{Constructing the plane-interaction matrix}\label{subsec:injectivity cores-constructing K}
With these observations of the interactions between the corestriction maps of intersecting planes, we move on to defining the plane-interaction matrix $K$. We begin by introducing some useful notation.
\begin{definition}
    Recall that for $P\in\maxplanesclosed(M)$ we have $A_P:=U_{\subsetneq P}^\perp\subseteq U_P$, and let $\Pi_P:U_P\to A_P$ be the orthogonal projection map. Define
    \begin{align*}
        U=U_M&=\bigoplus_{P\in\maxplanesclosed(M)} U_P\\
        A&=\bigoplus_{P\in\maxplanesclosed(M)} A_P\\
        \Pi_A&=\bigoplus_{P\in\maxplanesclosed(M)} \Pi_P:U\to A.
    \end{align*}
    We drop the subscript $M$ when it is clear from context.
\end{definition}

Next, we examine all interactions between the corestriction maps.
\begin{definition}
    For $P\in\maxplanesclosed(M)$, define the map $\theta_P:=\cores_P|_{A_P}:A_P\to U_{[D]}$ and the global operator
    \begin{align*}
        \Theta:A&\to U_{[D]}\\
        (a_P)_{P\in\maxplanesclosed(M)}&\mapsto \sum_{P\in\maxplanesclosed(M)}\theta_P a_P.
    \end{align*}
    Moreover, we define its Gram operator
    \begin{align*}
        \Upsilon:=\Theta^{\adjoint}\Theta:A&\to A.
    \end{align*}
\end{definition}
We can immediately observe that for $P,Q\in\maxplanesclosed(M)$, the $(P,Q)$-th block of $\Upsilon$ is $\Upsilon_{P,Q}=\theta_P^\adjoint\theta_Q:A_Q\to A_P$. Consequently, $\Upsilon_{P,P}=\theta_P^\adjoint \theta_P= \Id_{A_P}$ by \Cref{cor:cores is isometry on atoms} and $\Upsilon_{P,Q}=0$ when $P\neq Q$, and $P\cap Q\neq\varnothing $ by \Cref{lem:cores of elts of intersecting atoms are orthogonal}. Since the diagonal of $\Upsilon$ is only the identity, only the off-diagonal blocks are interesting. This motivates the following definition.
\begin{definition}[Plane-interaction matrix]\label{def:comparison matrix}
    Let $H:=\Upsilon-\Id_A:A\to A$, so by \Cref{lem:cores of elts of intersecting atoms are orthogonal} and \Cref{cor:cores is isometry on atoms} for $P,Q\in\maxplanesclosed(M)$ we have
    \begin{align*}
        H_{P,Q}=
        \begin{cases}
            0 & P\cap Q\neq\varnothing\\
            \theta^\adjoint_P\theta_Q & P\cap Q=\varnothing.
        \end{cases}
    \end{align*}
    Similarly, on the entire space $U\supseteq A$, we will define the \emph{plane-interaction matrix} $K:U\to U$ via blocks as
    \begin{align*}
        K_{P,Q}=
        \begin{cases}
            0 & P\cap Q\neq\varnothing\\\
            \cores_P^\adjoint\cores_Q & P\cap Q=\varnothing.
        \end{cases}
    \end{align*}
\end{definition}

\subsection{Injectivity of \texorpdfstring{$\cores_M$}{the corestriction map} over the reals}\label{subsec:injectivity cores-operator norm < 1 implies cores injective}
In this subsection, we will prove that $\cores_M$ is injective with probability at least $1-C_{D,\rho,\varepsilon}$ under our real Rademacher model.
\coresMinjectivewhpoverR*

This follows from two facts. First, that if $K$ has operator norm less than one, then $\cores_M$ is injective.
\begin{restatable}{lemma}{Kopnormsmallimpliescoresinjective}\label{thm:K or H have op norm < 1 implies cores injective}
    If $\opnorm{K}<1$, then $\cores_M$ is injective.
\end{restatable}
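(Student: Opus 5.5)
The plan is to show that $\|K\|_{\mathrm{op}}<1$ forces the Gram operator $\Upsilon=\Theta^\adjoint\Theta$ on $A$ to be positive definite, so that $\Theta$ is injective, and then to transfer this injectivity to $\cores_M$ through the surjection $\Psi$ of \Cref{lem:direct sum of atoms spans cosheaf}.

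\textbf{Step 1: relate $H$ to $K$.} Let $\iota_A:A\to U$ be the inclusion, with adjoint the orthogonal projection $\Pi_A$. By definition, $\theta_P=\cores_P\circ\iota_{A_P}$. Comparing the two case definitions blockwise, $\theta_P^\adjoint\theta_Q=\Pi_P\cores_P^\adjoint\cores_Q\iota_{A_Q}$ when $P\cap Q=\varnothing$, and both blocks vanish otherwise. Hence $H=\Pi_A K\iota_A$, and therefore $\|H\|_{\mathrm{op}}\le\|K\|_{\mathrm{op}}<1$.

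\textbf{Step 2: injectivity of $\Theta$.} By \Cref{lem:cores of elts of intersecting atoms are orthogonal} and \Cref{cor:cores is isometry on atoms}, $\Upsilon=\Id_A+H$. So for $a\in A$,
\[
\|\Theta a\|^2=\langle a,(\Id+H)a\rangle\ge(1-\|H\|_{\mathrm{op}})\|a\|^2 .
\]
This lower bound is positive for $a\neq 0$, so $\Theta$ is injective.

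\textbf{Step 3: transfer to $\cores_M$.} I first check that $\cores_M\circ\Psi=\Theta$. On a summand $a_P\in A_P$, the map $\Psi$ sends $a_P$ to $\overline e_P(a_P)$, and $\cores_M=\overline\beta_M$ sends $\overline e_P(a_P)$ to $\cores_{P\to[n]^D}(a_P)=\theta_P a_P$, by the definition of $\beta_M$ in \Cref{def:formal insertion diagram}. Since $\Psi$ is surjective by \Cref{lem:direct sum of atoms spans cosheaf}, take any $z\in\ker\cores_M$ and write $z=\Psi(a)$. Then $\Theta a=\cores_M(z)=0$, so $a=0$ by Step 2, and hence $z=0$. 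So $\cores_M$ is injective.

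The main subtlety is making sure \Cref{lem:direct sum of atoms spans cosheaf} applies in the present setting, where $\cosheaf(M)$ is defined as the formal cokernel $\Coker\alpha_M^G$ with possibly rank-deficient $G_i$. The remark at the end of that lemma's proof says it does, since the proof uses only the insertion maps and colimit relations. The orthogonality and isometry lemmas used in Step 2 rely only on the columns having unit norm, which holds in the random model regardless of rank.
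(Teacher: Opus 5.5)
Your proposal is correct and follows the paper's proof essentially step for step. You show $H=\Pi_A K\iota_A$, hence $\|H\|_{\mathrm{op}}\le\|K\|_{\mathrm{op}}$; then $\|\Theta a\|^2\ge(1-\|H\|_{\mathrm{op}})\|a\|^2$ gives injectivity of $\Theta$; and $\cores_M\circ\Psi=\Theta$ together with the surjectivity of $\Psi$ transfers this to $\cores_M$. Your check that this survives rank-deficient $G_i$ matches the paper's own remarks.
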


Second, that the operator norm of $K$ is less than one with probability at least $1-C_{D,\rho,\varepsilon}$. Proving this theorem is the main focus of \Cref{sec:expected op norm of K}.

\begin{restatable}{theorem}{probOpNormKlessonelarge}\label{thm:prob op norm K < 1 large}
For constants $D\in\N$, $\rho\in(0,1)$, and $\varepsilon\in(0,2^{-D})$, let
    \[
        n\geq N_{D,\varepsilon}:=\max\left\{16\left(\frac{192D}{e}\right)^{24}, (2^D\varepsilon)^{-2}\right\}.
    \]
    Fix an \ec set $M\subseteq[n]^D$, and let $\rho n\leq k_i\leq n$ for every $i\in[D]$. Then
    \begin{align*}
        \Pr_\sigma[\opnorm{K}<1]\geq 1- C_{D,\rho,\varepsilon},
    \end{align*}
    where
    \begin{align*}
        C_{D,\rho,\varepsilon}&=2\left(4 C_\rho\right)^D\left(1+\frac{4}{\rho}\right)^D\sqrt{2^D\varepsilon} \qquad\text{and}\qquad C_\rho:=3\left(1+\rho^{-1/2}\right).
    \end{align*}
\end{restatable}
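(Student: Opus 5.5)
The plan is to follow the outline of the overview: bound $\E_\sigma[\opnorm{K}]\le C_{D,\rho,\varepsilon}$ and then conclude by Markov's inequality, $\Pr[\opnorm{K}\ge 1]\le \E[\opnorm{K}]\le C_{D,\rho,\varepsilon}$.

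\textbf{Step 1: directional blocks.} Group the blocks $K_{P,Q}$ by the directions of $P=\plane(I,a)$ and $Q=\plane(J,b)$, giving directional blocks $K_{I,J}$ for $I,J\subseteq[D]$, and use $\opnorm{K}\le\sum_{I,J}\opnorm{K_{I,J}}$. Pairs with $I=[D]$ or $J=[D]$ contribute nothing. If $M=[n]^D$ there is a single plane; otherwise the whole cube meets every plane, so its block vanishes.

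\textbf{Step 2: decompose by disjointness pattern.} Let $F=\setcomp{(I\cup J)}$. Two planes $P,Q$ intersect iff $a_F=b_F$, since the coordinates in $I\setminus J$ and $J\setminus I$ are free in one of the two planes. Hence disjointness means $a_E\ne b_E$ coordinatewise on some nonempty set $E\subseteq F$ of disagreeing coordinates. Split by the exact set $E$ on which they disagree, giving $K_{I,J}=\sum_{\varnothing\ne E\subseteq F}K^E_{I,J}$.

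For fixed $E$, expand $\cores_P^\adjoint\cores_Q$ using the pure-tensor formula from \Cref{lem:inner prod of cores is inner prod of adjoints to intersection}:
\begin{itemize}
\item coordinates in $I\cap J$ contribute identities;
\item coordinates in $I\setminus J$ contribute $G_i$ evaluated at the $Q$-anchor;
\item coordinates in $J\setminus I$ contribute $G_j^\transpose$ evaluated at the $P$-anchor;
\item on $F\setminus E$ the anchors agree and the factor $\norm{g}^2=1$;
\item on $E$ the factor is $\langle g_j(a_j),g_j(b_j)\rangle$ with $a_j\ne b_j$, which is exactly the entry $D_j(a_j,b_j)$ of $D_j=G_j^\transpose G_j-\Id$.
\end{itemize}
After permuting coordinates, $K^E_{I,J}$ is a direct sum over the common value $\lambda\in[n]^{F\setminus E}$, and over the free coordinates in $(I\setminus J)\cup(J\setminus I)$, of operators of the form
\[
\Bigl(\bigotimes_{I\setminus J}G_i\Bigr)\otimes\Bigl(\bigotimes_{J\setminus I}G_j^\transpose\Bigr)\otimes\Pi_{X_\lambda}\Bigl(\bigotimes_{E}D_j\Bigr)\iota_{Y_\lambda},
\]
possibly further restricted to anchor sets. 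Here $X_\lambda,Y_\lambda\subseteq[n]^E$ are slices of $\maxAIclosed(M)$ and $\maxAclosed{J}(M)$. This is the bound of \Cref{lem:UB on op norm of KIJE}, obtained by combining \Cref{fact:opnorm of direct sum is max of opnorms} with \Cref{fact:opnorm of tensor is prod of opnorms} and the fact that restrictions do not increase the norm. By \Cref{cor:maxAIclosed is eps sparse}, the slices $X_\lambda,Y_\lambda$ are $\eta$-sparse with $\eta=2^D\varepsilon$.

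\textbf{Step 3: bounding the two factors.} The expectation of the product of $\opnorm{G_i}$ factors times the max over $\lambda$ is bounded using independence across coordinates: the index sets $I\setminus J$, $J\setminus I$, and $E$ are disjoint. For each $G_i$, $\E\opnorm{G_i}\le\sqrt{n/k_i}+1+o(1)\le C_\rho$, either by standard Rademacher bounds or by running the same trace method. For the $E$-factor, the main work is
\[
\E\max_\lambda\opnorm{T_\lambda}\le 2(4/\rho)^{|E|}\sqrt{\eta}.
\]
To prove this, bound the max by $\bigl(\sum_\lambda\trace((T_\lambda T_\lambda^\transpose)^m)\bigr)^{1/2m}$ and apply Jensen. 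Expand each trace into Rademacher sign monomials. Only even, non-left-backtracking closed bipartite walks survive, and their shapes are counted by the Anderson--Zeitouni bound $|\fW_m(\Delta)|\le16^m(4m)^{6\Delta}$. Labelings are counted by $k_j^{\wt_\sR}$ on the right and, on the left, by sequential completion in sparse sets via \Cref{fact:number of completions in sparse set}, which gains a factor $\eta$ for each new vector that has a free coordinate.

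Summing over deficiencies should give
\[
\E\trace\le 2^q n^q\eta^{m+1}(16^q\rho^{-2q})^m .
\]
Multiplying by $n^{D}$ choices of $\lambda$ and taking $m\approx D\log n$ turns $n^{O(D)/2m}$ into a constant. The conditions $n\ge(192D/e)^{24}$ and $n\ge\eta^{-2}$ are what make the $(4m)^{6\Delta}$ deficiency sum converge against $k^{-\Delta}$, and make the $\eta^{1/2m}$-type losses harmless.

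\textbf{Step 4: assemble.} Summing at most $4^D\cdot2^D$ terms and absorbing the constants gives $C_{D,\rho,\varepsilon}$.

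The main obstacle is the labeling count in Step 3. The danger is ensuring that each newly chosen vector with free coordinates truly gains an $\eta$ factor, and that the total number of such vectors is at least $W_\sL/q$. Without this, the bound would produce only $\eta^{1}$ in total instead of $\eta^{\Theta(m)}$, and the final estimate would collapse. I would first verify this on the zero-deficiency (tree) walks, where each left vertex is new exactly once, before handling general deficiency.
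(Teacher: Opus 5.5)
Your proposal is correct and follows the paper's proof essentially step for step: Markov on $\E_\sigma[\opnorm{K}]$, directional blocks $K_{I,J}$, the exact-disagreement-set split (which coincides with the paper's expansion $B_F-\Id=\sum_{\varnothing\neq E\subseteq F}D_E\otimes\Id$, since $D_j$ has zero diagonal), the bound $\opnorm{G_L}\opnorm{G_R}\max\opnorm{T_\lambda}$, and the trace method with Anderson--Zeitouni shape counts and sparse sequential-completion label counts. Two details to tighten. First, $K^E_{I,J}$ is not a direct sum of tensor products over the $L,R$ coordinates; it is the composition $\Psi_L T_E\Psi_R$ of \Cref{lem:UB on op norm of KIJE}, which works because $X_{r,z}$ depends only on $(r,z)$ and $Y_{\ell,z}$ only on $(\ell,z)$, so $G_R^\transpose$ can be applied before the block-diagonal sparse part and $G_L$ after it. Second, the deficiency sum converges against $(n\eta^{1/q})^{-\Delta}$ rather than $k^{-\Delta}$: each unit of deficiency also costs $\eta^{-1/q}$ in the left-labeling count, and this is exactly where $n\geq\eta^{-2}$ (indeed $n\geq\eta^{-2/q}$ suffices) is used.
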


The proof of \Cref{lem:cores_M injective whp over reals} now follows directly.
\begin{proof}[Proof of \Cref{lem:cores_M injective whp over reals}]
    Apply \Cref{thm:K or H have op norm < 1 implies cores injective} to \Cref{thm:prob op norm K < 1 large}.
\end{proof}

\subsubsection{Injectivity of \texorpdfstring{$\cores_M$}{the corestriction map} from the operator norm of \texorpdfstring{$K$}{K}}
The rest of this subsection is dedicated to proving \Cref{thm:K or H have op norm < 1 implies cores injective}. The only tool we need is that the operator norm of $K$ bounds the operator norm of $H$.

\begin{claim}\label{claim:opnorm of K geq H}
    We have $\opnorm{H}\leq\opnorm{K}$.
\end{claim}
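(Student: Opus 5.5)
The plan is to show that $H$ is a compression of $K$: $H=\Pi_A K\iota_A$, where $\iota_A:A\to U$ is the inclusion. Compressions cannot increase the operator norm.

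\textbf{Step 1: compare blocks.} For each $P$, let $\iota_P:A_P\to U_P$ denote the inclusion, so $\theta_P=\cores_P\iota_P$. Its adjoint is $\theta_P^\adjoint=\iota_P^\adjoint\cores_P^\adjoint=\Pi_P\cores_P^\adjoint$, because the adjoint of the inclusion of a subspace is the orthogonal projection onto it.

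Now fix a pair $(P,Q)$.
\begin{itemize}
\item If $P\cap Q=\varnothing$, then
\[
H_{P,Q}=\theta_P^\adjoint\theta_Q=\Pi_P\cores_P^\adjoint\cores_Q\iota_Q=\Pi_P K_{P,Q}\iota_Q .
\]
\item If $P\cap Q\neq\varnothing$, then both blocks vanish by \Cref{def:comparison matrix}, so again $H_{P,Q}=\Pi_P K_{P,Q}\iota_Q$.
\end{itemize}

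\textbf{Step 2: assemble.} Set $\iota_A=\bigoplus_P\iota_P:A\to U$, and recall $\Pi_A=\bigoplus_P\Pi_P$. Block-diagonal operators act blockwise on a block matrix, so Step 1 gives the identity $H=\Pi_A K\iota_A$ of operators $A\to A$.

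\textbf{Step 3: bound the norm.} By submultiplicativity of the operator norm,
\[
\opnorm{H}\leq\opnorm{\Pi_A}\opnorm{K}\opnorm{\iota_A}.
\]
By \Cref{fact:opnorm of direct sum is max of opnorms}, $\opnorm{\Pi_A}=\max_P\opnorm{\Pi_P}\leq 1$ and $\opnorm{\iota_A}=\max_P\opnorm{\iota_P}\leq 1$, since orthogonal projections and isometric inclusions have norm at most one. This gives $\opnorm{H}\leq\opnorm{K}$.

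There is no real obstacle here. The only point needing care is to record explicitly that $\iota_P^\adjoint=\Pi_P$ under the Euclidean inner products. This ensures the blocks of $H$, which are defined through the restrictions $\theta_P$, match the compressions of the blocks of $K$, which are defined through the full $\cores_P$. It also makes the vanishing pattern on intersecting pairs consistent between $H$ and $K$.
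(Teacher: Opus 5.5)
Your proof is correct and is the same as the paper's: the paper also writes $H=\Pi_A K\iota_A$ and concludes by submultiplicativity, using $\opnorm{\Pi_A},\opnorm{\iota_A}\leq 1$. Your blockwise check via $\iota_P^\adjoint=\Pi_P$ simply spells out the identity that the paper asserts holds ``by construction.''
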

\begin{proof}
    Let $\iota_A:A\hookrightarrow U$ be the inclusion. By construction, we have that $H=\Pi_A K \iota_A$, so we can compute
    \begin{align*}
        \opnorm{H}&=\opnorm{\Pi_A K \iota_A}\\
        &\leq\opnorm{\Pi_A}\opnorm{K}\opnorm{\iota_A}\\
        &\leq\opnorm{K},
    \end{align*}
    since both $\Pi_A$ and $\iota_A$ have operator norm at most $1$.
\end{proof}

We can now condition the injectivity of $\cores_M$ on the operator norm of $H$ and $K$.
\begin{proof}[Proof of \Cref{thm:K or H have op norm < 1 implies cores injective}]
    By \Cref{claim:opnorm of K geq H}, we have that $\opnorm{H}\leq\opnorm{K}<1$. Since $H$ is self-adjoint, for any $a\in A$ we can lower bound its quadratic form as
    \begin{align}
        \langle a,Ha\rangle\geq-\opnorm{H}\norm{a}^2.\label{eq:quadratic form of H LB by opnorm}
    \end{align}
    Therefore, $\Theta$ is injective because $1-\opnorm{H}>0$ and 
    \begin{align*}
        \norm{\Theta a}^2&=\langle \Theta a,\Theta a\rangle\\
        &=\langle a,\Upsilon a\rangle\\
        &=\langle a,(\Id_A+H)a\rangle\\
        &=\norm{a}^2+\langle a,Ha\rangle\\
        &\geq\norm{a}^2-\opnorm{H}\norm{a}^2 &\text{By \Cref{eq:quadratic form of H LB by opnorm}}\\
        &=(1-\opnorm{H})\norm{a}^2.
    \end{align*}
    To conclude that $\cores_M$ is injective, we consider any $z\in\Ker(\cores_M)\subseteq\cosheaf(M)$. The surjectivity of $\Psi$ gives an $a\in A$ such that $\Psi(a)=z$. Finally, compatibility of the maps in the colimit defining $\cosheaf(M)$ gives 
    \begin{align*}
        \Theta(a)&=\cores_M(\Psi(a))\\
        &=\cores_M(z)\\
        &=0.
    \end{align*}
    This forces $a=0$ since $\Theta$ is injective, so $z=\Psi(a)=\Psi(0)=0$, meaning that $\cores_M$ is injective, as desired.
\end{proof}
This leads to a nice observation that $\cores_M$ is injective if all maximal planes pairwise intersect, showing that the barrier to $\cores_M$ being injective is exactly pairs of maximal planes with empty intersection.
\begin{corollary}
    If all planes in $\maxplanes(M)$ have pairwise nonempty intersections, then $K=0$ so $\cores_M$ is injective.
\end{corollary}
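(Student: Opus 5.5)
The plan is to show that every block $K_{P,Q}$ vanishes and then invoke \Cref{thm:K or H have op norm < 1 implies cores injective}. By \Cref{def:comparison matrix}, $K_{P,Q}=0$ whenever $P\cap Q\neq\varnothing$. So it suffices to show that, under the hypothesis, any two planes $P,Q\in\maxplanesclosed(M)$ intersect. Once that holds, $K=0$, so $\opnorm{K}=0<1$, and the lemma gives injectivity of $\cores_M$ with no randomness needed. The same applies to any specialization of the $G_i$ with normalized columns.

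The key combinatorial step is to show that pairwise intersection of the maximal planes forces pairwise intersection of all of $\maxplanesclosed(M)$. The natural first guess, a Helly-type statement for axis-parallel planes, is not needed. Take $P=\bigcap_{S\in\mathcal S}S$ and $Q=\bigcap_{T\in\mathcal T}T$, each a nonempty intersection of maximal planes. Then $P\cap Q=\bigcap_{S\in\mathcal S\cup\mathcal T}S$, and we must show this is nonempty.

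For this I would use the coordinate structure. A plane $\plane(I,a)$ is the product over coordinates $i\in[D]$ of either all of $[n]$ (if $i\in I$) or the singleton $\{a_i\}$. An intersection of such product sets is empty iff some coordinate has an empty factor intersection, that is, two planes in the family fix the same coordinate to different values. This is a pairwise condition. By hypothesis no two maximal planes conflict, since any two intersect. Hence the full family $\mathcal S\cup\mathcal T$ has a nonempty intersection. This coordinatewise Helly property, which holds with Helly number $2$ for products of "singleton-or-everything" sets, is the only real content. I expect it to be the main, though mild, obstacle: one has to state cleanly that emptiness is detected coordinatewise.

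With every pair in $\maxplanesclosed(M)$ intersecting, all blocks of $K$ are zero. Therefore $K=0$, and \Cref{thm:K or H have op norm < 1 implies cores injective} yields that $\cores_M$ is injective. Equivalently, by \Cref{claim:opnorm of K geq H}, $H=0$ and $\Theta$ is an isometry, and surjectivity of $\Psi$ from \Cref{lem:direct sum of atoms spans cosheaf} finishes the argument.
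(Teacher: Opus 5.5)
Your proof is correct and is essentially the argument the paper leaves implicit: the blocks of $K$ are indexed by pairs in $\maxplanesclosed(M)$ and vanish on intersecting pairs, so $K=0$, and the lemma that $\opnorm{K}<1$ forces injectivity of $\cores_M$ then applies. The one step the paper does not spell out is that pairwise-intersecting maximal planes make every pair in $\maxplanesclosed(M)$ intersect. Your coordinatewise observation supplies exactly this: an intersection of axis-parallel planes is empty only if two of them fix some coordinate to different values.
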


\subsection{From the reals to finite fields}\label{subsec:injectivity cores-reals to finite fields}
We finish this section by proving that we can transfer our result about the injectivity of $\cores_M$ over the reals to a finite field of large enough characteristic.

To simplify analysis, we note that we can remove the normalization of the sign matrices without affecting the injectivity of $\cores_M$ since they're just a rescaling. In particular, we can write $\sigma^{(i)}=\sqrt{k_i}G_i$ and, for $I\subseteq[D]$, set
\begin{align*}
    S_I=\left(\prod_{i\in I}\sqrt{k_i}\right)\Id_{U_I},
    \qquad S_\varnothing=\Id_\R.
\end{align*}
Then for $P=\plane(I,a)\subseteq Q=\plane(J,b)$, we can relate the corestriction map with the unnormalized signs $\cores_{P\to Q}^{\sigma}$ to the corestriction map with the normalized signs $\cores_{P\to Q}^{G}$ as
\begin{align*}
    \cores_{P\to Q}^{\sigma}=S_J\cores_{P\to Q}^{G}S_I^{-1}.
\end{align*}
Thus, these maps give an isomorphism of the normalized and unnormalized diagrams. Consequently, $\overline\beta_M^G$ is injective if and only if $\overline\beta_M^\sigma$ is injective, so any injective normalized instance gives a real sign witness in the sense of \Cref{def:real sign witness}. No full-row-rank assumption is needed here. 

\coresMinjectiveoverrealsimpliesoverF*

We will prove \Cref{lem:cores_M injective over reals gives injective over finite field} in two steps. First, we will show that a real sign witness gives a non-identically zero determinant polynomial over the integers with bounded degree and a specific value when evaluated at the witness. We state this for an arbitrary set and arbitrary dimensions, since the proof of \Cref{thm:cores_M injective whp} will need to apply it to both $(\cC_i)_{i\in[D]}$ and $(\cC_i^\perp)_{i\in[D]}$.
\begin{lemma}\label{lem:injectivity of cores_M over reals gives non zero det poly over Z}
    Let $M\subseteq[n]^D$, let $(k_i)_{i\in[D]}$ be dimensions, and let $\sigma$ be a real sign witness for $(M,(k_i)_{i\in[D]})$. Put
    \begin{align*}
        N_M:=\dim U_M=\sum_{P=\plane(I,a)\in\maxplanesclosed(M)}\ \prod_{i\in I}k_i.
    \end{align*}
    There is a polynomial $f_\sigma\in\Z[\bX]$ such that
    \begin{align}
        1\leq\abs{f_\sigma(\sigma)}&\leq N_M^{N_M/2}, &
        \deg f_\sigma&\leq DN_M,
        \label{eq:determinant certificate bounds}
    \end{align}
    and such that, over any field, every specialization $x$ of $\bX$ with $f_\sigma(x)\neq0$ makes the induced map $\overline\beta_M^{x}$ of \Cref{def:formal insertion diagram} injective. If, in addition, every component of $x$ has full row rank, then $\overline\beta_M^{x}$ is the corestriction map $\cores_M$ of the corresponding row-space codes, so $\cores_M$ is injective.
\end{lemma}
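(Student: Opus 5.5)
The plan is to encode injectivity of $\overline\beta_M^{\bX}$ as the nonvanishing of a single maximal minor of an integer polynomial matrix. Work over $\Z[\bX]$ with the standard bases of $U_P=\bigotimes_{i\in I}\Z[\bX]^{k_i}$. In these bases $\alpha_M^{\bX}$ and $\beta_M^{\bX}$ are matrices whose entries are integer polynomials. Each entry of $\cores_{P\to Q}^{\bX}$ is a product of at most $D$ indeterminates, namely one entry of $x_j(a_j)$ for each $j\in J\setminus I$. Hence every entry of $\beta_M^{\bX}$ is a monomial of degree at most $D$ with coefficient $\pm1$ or $0$, and $\alpha_M^{\bX}$ has entries $0,\pm1$ or such monomials.

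Over any field $\gfield$ and any specialization $x$, $\overline\beta_M^x$ is injective iff $\Ker\beta_M^x=\Im\alpha_M^x$. Since $\beta_M^x\alpha_M^x=0$, this holds iff $\rank\beta_M^x=N_M-\rank\alpha_M^x$. A determinant of a single matrix certifies a lower bound on rank but not an upper bound on $\rank\alpha$, so I would avoid the $\alpha$ rank directly. Instead I would use the structure from \Cref{lem:cosheaf(M) as quotient of direct sum}: choose a complement of $\Im\alpha$ that is determined combinatorially and independently of $x$.

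Concretely, order $\maxplanesclosed(M)$ as in the proof of \Cref{lem:direct sum of atoms spans cosheaf}. Every relation $e_P(u)-e_Q(\cores_{P\to Q}u)$ lets us eliminate the $e_P(u)$ coordinate in favor of a larger plane. The cleaner route, however, is to pick any set $S$ of $r:=\dim_\R\Coker\alpha_M^\sigma$ basis vectors of $U_M$ whose images span $\Coker\alpha_M^\sigma$.

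Since $\sigma$ is a witness, $\beta_M^\sigma$ restricted to $\mathrm{span}(S)$ is injective. Moreover, $\Im\alpha^\sigma$ together with $\mathrm{span}(S)$ spans $U_M$. Choose columns $T$ of $\alpha^\sigma$ that form a basis of $\Im\alpha^\sigma$, so $|T|=N_M-r$. Then form the $N_M\times N_M$ block matrix
\[
Z^{\bX}=\begin{pmatrix}\alpha_M^{\bX}|_T & \iota_S\end{pmatrix},
\]
which is nonsingular at $\sigma$. Also form the $r\times r$ minor $B^{\bX}$ of $\beta_M^{\bX}\iota_S$ on rows $R$ that is nonsingular at $\sigma$. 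I would take $f_\sigma=\det Z^{\bX}\cdot\det B^{\bX}$.

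Next, check the certificate property. If $f_\sigma(x)\neq0$, then $\Im\alpha^x+\mathrm{span}(S)=U_M$, so $\mathrm{span}(S)$ surjects onto $\Coker\alpha^x$. Also $\beta^x$ is injective on $\mathrm{span}(S)$. For $z\in\Coker\alpha^x$ with $\overline\beta^x z=0$, lift $z$ to $s\in\mathrm{span}(S)$. Then $\beta^x s=0$ forces $s=0$, so $z=0$ and $\overline\beta^x$ is injective. The full-rank addendum follows from the remark after \Cref{def:formal insertion diagram}.

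For the bounds, $\det Z$ has degree at most $D(N_M-r)$ and $\det B$ has degree at most $Dr$, giving $\deg f_\sigma\le DN_M$. At $\sigma$ both determinants are nonzero integers, so $\abs{f_\sigma(\sigma)}\ge1$.

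The main obstacle I expect is the upper bound $N_M^{N_M/2}$. Hadamard's inequality bounds $\abs{\det}$ by the product of column norms. Columns of $\iota_S$ have norm $1$ and columns of $B^\sigma$ have entries $\pm1$, giving at most $r^{r/2}$. A column of $\alpha^\sigma$ has norm $\sqrt{1+(\text{number of}\pm1\text{ entries})}$. This norm must be controlled by $\sqrt{N_M}$, which holds since it has at most $N_M$ nonzero $\pm1$ entries, giving $\le N_M^{(N_M-r)/2}$. The product is then at most $N_M^{N_M/2}$ after using $r\le N_M$. If this crude split loses too much, I would instead combine everything into one $N_M\times N_M$ matrix, for instance replacing $\iota_S$ by a suitable block, before applying Hadamard once.
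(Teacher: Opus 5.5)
Your argument is correct, but you build the certificate differently from the paper.

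The paper's route uses a single determinant. It sets $a=\rank\alpha^\sigma$ and $b=\rank\beta^\sigma$. Injectivity of $\overline\beta^\sigma$ gives $\Im\alpha^\sigma=\ker\beta^\sigma$, hence $a+b=N_M$. It then takes $f_\sigma=\det[\widetilde\alpha^{\bX}\mid(\widetilde\beta^{\bX})^\transpose]$, built from $a$ independent columns of $\alpha^\sigma$ and $b$ independent rows of $\beta^\sigma$. This square matrix is invertible at $\sigma$ because, over $\R$, the row space of $\beta^\sigma$ is the orthogonal complement of $\ker\beta^\sigma$. At any $x$ with $f_\sigma(x)\neq0$, one gets $\rank\alpha^x\geq a$ and $\rank\beta^x\geq b$. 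The identity $\beta^x\alpha^x=0$ then supplies the upper bound $\rank\alpha^x+\rank\beta^x\leq N_M$, which is exactly what you said a determinant cannot certify. Together these force $\Im\alpha^x=\ker\beta^x$.

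Your route instead fixes a coordinate complement $\mathrm{span}(S)$ of $\Im\alpha^\sigma$ and multiplies two determinants. The first certifies $\Im\alpha^x+\mathrm{span}(S)=U_M$. The second certifies that $\beta^x$ is injective on $\mathrm{span}(S)$. Injectivity of $\overline\beta^x$ then follows by lifting, without using $\beta^x\alpha^x=0$ beyond well-definedness of $\overline\beta^x$.

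What each route buys:
\begin{itemize}
\item Your construction never uses the real inner product, so it would go through verbatim from a witness over any field. Its Hadamard estimate, $N_M^{(N_M-r)/2}r^{r/2}$, is also marginally sharper.
\item The paper's version needs only one determinant and a rank count.
\end{itemize}
The degree bounds agree: $D(N_M-r)+Dr$ versus $D(a+b)$. Your closing fallback of merging everything into one matrix is unnecessary, since the split already gives $N_M^{N_M/2}$.
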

\begin{proof}
    Throughout the proof $M$ is fixed, so we drop it from the notation and write $\alpha^{\bX},\beta^{\bX}$ for the maps $\alpha_M^{\bX},\beta_M^{\bX}$ of \Cref{def:formal insertion diagram}, and $\alpha^{x},\beta^{x}$ for their specializations at a tuple of matrices $x$. Recall that $\beta^{\bX}\alpha^{\bX}=0$ is a polynomial identity and so holds after every specialization.

    We work first at the witness $\sigma$ and set $a:=\rank\alpha^{\sigma}$ and $b:=\rank\beta^{\sigma}$. Because $\ker\overline\beta^{\sigma}=\ker\beta^{\sigma}/\Im\alpha^{\sigma}$, the injectivity of $\overline\beta^{\sigma}$ says exactly that $\ker\beta^{\sigma}\subseteq\Im\alpha^{\sigma}$, while $\beta^{\sigma}\alpha^{\sigma}=0$ gives the reverse inclusion. Hence $\Im\alpha^{\sigma}=\ker\beta^{\sigma}$, and the rank-nullity theorem gives $a+b=N_M$.

    Choose $a$ independent columns of $\alpha^{\sigma}$ and $b$ independent rows of $\beta^{\sigma}$, and let $\widetilde{\alpha}^{\bX}$ and $\widetilde{\beta}^{\bX}$ be the submatrices of $\alpha^{\bX}$ and $\beta^{\bX}$ on those indices. The columns of $\widetilde{\alpha}^{\sigma}$ span $\Im\alpha^{\sigma}=\ker\beta^{\sigma}$, while the columns of $(\widetilde{\beta}^{\sigma})^\transpose$ span the row space of $\beta^{\sigma}$, which is the orthogonal complement of $\ker\beta^{\sigma}$ in $\R^{N_M}$. Consequently, the square matrix
    \begin{align*}
        \left[\widetilde{\alpha}^{\sigma}\ \middle|\ (\widetilde{\beta}^{\sigma})^\transpose\right]
    \end{align*}
    is invertible, so the determinant polynomial
    \begin{align*}
        f_\sigma(\bX):=
        \det\left[\widetilde{\alpha}^{\bX}\ \middle|\ (\widetilde{\beta}^{\bX})^\transpose\right]
        \in\Z[\bX]
    \end{align*}
    satisfies $f_\sigma(\sigma)\neq0$. This is the only place where the real inner product is used.

    Now let $x$ be any specialization over any field with $f_\sigma(x)\neq0$. Then the selected columns of $\alpha^{x}$ and rows of $\beta^{x}$ are still independent, so
    \begin{align*}
        \rank\alpha^{x}\geq a
        \qquad\text{and}\qquad
        \rank\beta^{x}\geq b.
    \end{align*}
    On the other hand $\beta^{x}\alpha^{x}=0$ forces $\Im\alpha^{x}\subseteq\ker\beta^{x}$, so $\rank\alpha^{x}+\rank\beta^{x}\leq N_M$. Since $a+b=N_M$, we can conclude that $\Im\alpha^{x}=\ker\beta^{x}$ and $\overline\beta^{x}$ is injective. Moreover, if every component of $x$ has full row rank, then \Cref{lem:cosheaf(M) as quotient of direct sum} identifies $\Coker\alpha^{x}$ with $\cosheaf(M)$ and $\overline\beta^{x}$ with $\cores_M$, so $\cores_M$ is injective.

    It remains to prove the bounds in \eqref{eq:determinant certificate bounds}. By \Cref{eq:cores explicit def}, each corestriction map inserts at most $D$ columns, so every entry of $\alpha^{\bX}$ and of $\beta^{\bX}$ is either zero or a signed monomial of degree at most $D$. Expanding the $N_M\times N_M$ determinant gives $\deg f_\sigma\leq DN_M$. At the sign instance $\sigma$, the matrix defining $f_\sigma$ has entries in $\{0,\pm1\}$, so each of its $N_M$ columns has Euclidean norm at most $\sqrt{N_M}$, and Hadamard's inequality gives $\abs{f_\sigma(\sigma)}\leq N_M^{N_M/2}$. Finally, $f_\sigma(\sigma)$ is a nonzero integer, so $\abs{f_\sigma(\sigma)}\geq1$.
\end{proof}
Second, we will show that we can upper bound the characteristic of the finite field. We need to ensure that, when modding by this characteristic, the determinant polynomial does not become zero.
\begin{lemma}\label{lem:bound on characteristic ensuring det poly not zero}
    Let $d_\sigma:=\deg f_\sigma$ and $m_\sigma:=\abs{f_\sigma(\sigma)}$. There is a prime $p\nmid m_\sigma$ satisfying
    \begin{align}
        \max\{2D+1,d_\sigma\log(n+1)\}<p,
        \qquad
        p=O_{D,\varepsilon}(n^D\log n).
        \label{eq:good characteristic bound}
    \end{align}
    In particular, $f_\sigma$ modulo $p$ is not identically zero.
\end{lemma}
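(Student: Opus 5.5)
We apply \Cref{cor:PNT bounding non-dividing prime} with $m=m_\sigma$ and $q=\max\{2D+1,\ d_\sigma\log(n+1)\}$. This gives a prime $p\nmid m_\sigma$ with $q<p\leq C(q+\log(2m_\sigma))$, which is the lower bound in \eqref{eq:good characteristic bound}. The remaining work is to show that the upper bound $C(q+\log(2m_\sigma))$ is $O_{D,\varepsilon}(n^D\log n)$. For that it suffices to bound $N_M$ by $O_{D}(n^D)$.

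\emph{Bounding $N_M$.} We have $N_M=\sum_{P=\plane(I,a)\in\maxplanesclosed(M)}\prod_{i\in I}k_i\leq\sum_{I\subseteq[D]}\abs{\maxAIclosed(M)}\,n^{\abs{I}}$. Each anchor set satisfies $\abs{\maxAIclosed(M)}\leq n^{D-\abs{I}}$ trivially, so $N_M\leq 2^Dn^D$. The $\varepsilon$-sparsity from \Cref{cor:maxAIclosed is eps sparse} would sharpen this by a factor $2^D\varepsilon$, but the crude bound is enough.

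\emph{Bounding $q$ and $m_\sigma$.} By \Cref{lem:injectivity of cores_M over reals gives non zero det poly over Z}, $d_\sigma\leq DN_M\leq D2^Dn^D$, so $q=O_D(n^D\log n)$. The same lemma gives $m_\sigma\leq N_M^{N_M/2}$, hence
\[
\log(2m_\sigma)\leq 1+\tfrac{N_M}{2}\log N_M=O_D(n^D\log n).
\]
Substituting both bounds gives $p=O_{D,\varepsilon}(n^D\log n)$; the constant in fact depends only on $D$.

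\emph{The polynomial survives reduction.} Since $p\nmid m_\sigma=\abs{f_\sigma(\sigma)}$, reducing modulo $p$ gives $f_\sigma(\sigma)\not\equiv0\pmod p$. Because $\sigma$ has integer entries, $\bar f_\sigma(\bar\sigma)=\overline{f_\sigma(\sigma)}\neq0$ in $\F_p$. Thus $f_\sigma\bmod p$ has a non-vanishing evaluation and is not the zero polynomial.

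The one subtle point is $\log m_\sigma$: the paper's target is $n^D\log n$, not $n^D$, so the $N_M\log N_M$ term from Hadamard's inequality must be absorbed, and it fits exactly. Everything else is routine.
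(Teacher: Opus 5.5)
Your proof is correct and follows the paper's argument: apply \Cref{cor:PNT bounding non-dividing prime} with $m=m_\sigma$ and $q=\max\{2D+1,d_\sigma\log(n+1)\}$, then control $q$ and $\log(2m_\sigma)$ through $d_\sigma\le DN_M$ and the Hadamard bound $m_\sigma\le N_M^{N_M/2}$, and conclude nonvanishing mod $p$ by evaluating at the integer point $\sigma$. The one difference is that the paper bounds $N_M\le 2^D(2^D-1)\varepsilon n^D$ using the sparsity of the anchor sets, whereas you use the trivial bound $N_M\le 2^Dn^D$ (the same crude count the paper itself uses in Step~1 of the proof of \Cref{thm:cores_M injective whp}); this suffices and shows the implied constant depends only on $D$.
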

\begin{proof}
    From \Cref{cor:PNT bounding non-dividing prime}, there exists a universal constant $C$ such that for $q=\max\{2D+1,d_\sigma\log(n+1)\}$ there exists a prime $p\nmid m_\sigma$ satisfying
    \begin{align}
        q< p\leq C(q+\log(2m_\sigma)).\label{eq:PNT consequence}
    \end{align}
    The lower bound of \Cref{eq:PNT consequence} is the first claim, so it remains only to bound its upper endpoint. We first bound $N_M$ by computing
    \begin{align*}
        N_M&=\dim\left(\bigoplus_{P\in\maxplanesclosed(M)}U_P\right)\\
        &=\sum_{I\subsetneq[D]}\abs{\maxAclosed{I}(M)}\prod_{i\in I}k_i\\
        &\leq 2^D\varepsilon n^D
        \sum_{I\subsetneq[D]}\prod_{i\in I}\frac{k_i}{n} &\text{By \Cref{cor:maxAIclosed is eps sparse,lem:eps closed and line free gives cardinality bound}}\\
        &\leq 2^D(2^D-1)\varepsilon n^D. &\text{By }k_i\leq n
    \end{align*}
    From \Cref{lem:injectivity of cores_M over reals gives non zero det poly over Z} we have that $d_\sigma\leq D N_M$ and $m_\sigma\leq  N_M^{N_M/2}$, so $\log(2m_\sigma)\leq \log 2+\frac{N_M}{2}\log N_M$. Thus,
    \begin{align*}
        C(q+\log(2m_\sigma))&\leq C\left(\max\{2D+1,d_\sigma\log(n+1)\}+\log(2m_\sigma)\right)\\
        &=O_{D,\varepsilon}(n^D\log(n)),
    \end{align*}
    as required.
    
    Since $p\nmid m_\sigma$, the reduction of $f_\sigma$ modulo $p$ evaluates nontrivially at $\sigma$ and is consequently not the zero polynomial.
\end{proof}
Now, we can prove \Cref{lem:cores_M injective over reals gives injective over finite field}. Our use of the Schwartz-Zippel lemma here is similar to that of Lemma 21 of \cite{kalachev_maximally_2025}.
\begin{proof}[Proof of \Cref{lem:cores_M injective over reals gives injective over finite field}]
    Let $\sigma$ be a real sign witness for $(M,(k_i)_{i\in[D]})$, let $f_\sigma$ be as in \Cref{lem:injectivity of cores_M over reals gives non zero det poly over Z}, and choose $p$ as in \Cref{lem:bound on characteristic ensuring det poly not zero}. Sample independent uniform matrices $G_i\in\F_p^{k_i\times n}$. Since $f_\sigma\bmod p$ is nonzero, the Schwartz--Zippel lemma and \Cref{lem:bound on characteristic ensuring det poly not zero} give
    \begin{align*}
        \Pr[f_\sigma(G_1,\ldots,G_D)=0]
        \leq\frac{d_\sigma}{p}
        <\frac{1}{\log(n+1)}.
    \end{align*}
    Let $E_{\text{full}}$ be the event that every $G_i$ has full row rank. For each $G_i$ individually, we have
    \begin{align*}
        \Pr[\rank G_i<k_i]
        \leq\sum_{j=0}^{k_i-1}p^{j-n}
        <\frac{1}{p-1}.
    \end{align*}
    Consequently, since $p>2D+1$ we can use a union bound to get
    \begin{align*}
        \Pr[E_{\text{full}}]&\geq1-\sum_{i=1}^D\Pr[\rank G_i<k_i]\\
        &>1-\frac{D}{p-1}\\
        &>\frac{1}{2}.
    \end{align*}
By \Cref{lem:injectivity of cores_M over reals gives non zero det poly over Z},
    \begin{align*}
        \Pr[\cores_M\text{ is not injective}\mid E_{\text{full}}]
        \leq\frac{\Pr[f_\sigma(G_1,\ldots,G_D)=0]}{\Pr[E_{\text{full}}]}
        <\frac{2}{\log(n+1)}.
    \end{align*}
    Conditioned on $E_{\text{full}}$, the row spaces of the $G_i$ are independent uniformly random $k_i$-dimensional codes. Every such code has exactly $\abs{\mathrm{GL}_{k_i}(\F_p)}$ ordered generator matrices, so choosing random generator matrices gives the same distribution.
\end{proof}

We can now choose a single characteristic that works for every \ec set and both $(\cC_i)_{i\in[D]}$ and $(\cC_i^\perp)_{i\in[D]}$ at once. The strategy is the product-of-polynomials method from the proof of Lemma 21 of \cite{kalachev_maximally_2025}, but applied to the witnesses of \Cref{lem:injectivity of cores_M over reals gives non zero det poly over Z} while taking care to choose a large enough characteristic.
\begin{proof}[Proof of \Cref{thm:cores_M injective whp}]
    The argument has five steps. We consider one set of dimensions for $(\cC_i)_{i\in[D]}$ and a fixed \ec $M$, then produce one real witness for each of them, then choose a single prime at which every witness survives, and finally sample the codes and take a union bound over both the codes and their duals.

    \paragraph{Step 1: A uniform dimension bound.}
    For $M=\varnothing$ the map $\cores_M$ has zero domain, and for $M=[n]^D$ it is the identity, so both are injective for every choice of codes. We may therefore restrict attention to the family $\mathscr M_\varepsilon$ of nonempty proper \ec subsets of $[n]^D$, and we write $s:=\abs{\mathscr M_\varepsilon}\leq2^{n^D}$.

    Fix $M\in\mathscr M_\varepsilon$ and $\eta\in\{+,-\}$, and let $N_M^\eta$ denote the dimension $N_M$ of \Cref{lem:injectivity of cores_M over reals gives non zero det poly over Z} for the profile $(k_i^\eta)_{i\in[D]}$. Using the fact that, for a given $I\subseteq[D]$, we have $\abs{\maxAIclosed}\leq n^{D-\abs{I}}$, and that $k_i^\eta\leq n$,
    \begin{align*}
        N_M^\eta=\sum_{P=\plane(I,a)\in\maxplanesclosed(M)}\ \prod_{i\in I}k_i^\eta
        \leq\sum_{I\subsetneq[D]}n^{D-\abs{I}}\prod_{i\in I}n=(2^D-1)n^D=:R.
    \end{align*}
    This loose count suffices, and we do not need to use any sparsity properties for the anchor sets here. Note also that $R\geq n\geq2$.

    \paragraph{Step 2: an integer certificate for each set and profile.}
    Both dimension profiles satisfy $k_i^\eta\geq\rho n$, and $n\geq N^*_{D,\varepsilon}\geq N_{D,\varepsilon}$. Moreover, \Cref{eq:real witness parameters} gives $2^D\varepsilon\leq1/(4A_{D,\rho}^2)<1$ and $C_{D,\rho,\varepsilon}=A_{D,\rho}\sqrt{2^D\varepsilon}\leq1/2$. Thus, \Cref{lem:cores_M injective whp over reals} gives an injective normalized instance with probability at least $1/2$ for this fixed $M$ and profile $(k_i^\eta)_{i\in[D]}$. Removing the normalization as above gives a real sign witness $\sigma_M^\eta$ for $(M,(k_i^\eta)_{i\in[D]})$. Applying \Cref{lem:injectivity of cores_M over reals gives non zero det poly over Z} to it produces a polynomial $f_M^\eta\in\Z[\bX]$ with
    \begin{align*}
        1\leq\abs{f_M^\eta(\sigma_M^\eta)}\leq\left(N_M^\eta\right)^{N_M^\eta/2}\leq R^{R/2}
        \qquad\text{and}\qquad
        \deg f_M^\eta\leq DN_M^\eta,
    \end{align*}
    using in the third inequality that $y\mapsto y^{y/2}$ is nondecreasing for $y\geq1$. Over any field, any specialization $x$ at which $f_M^\eta$ does not vanish makes $\overline\beta_M^{x}$ injective, and makes it equal to $\cores_M$ as soon as $x$ has full row rank. The witnesses are allowed to depend on both $M$ and $\eta$. However, we need one characteristic in which all of the $f_M^\eta$ survive.

    \paragraph{Step 3: one prime avoiding all of the certificates.}
    Set
    \begin{align*}
        m_*:=\prod_{\eta\in\{+,-\}}\ \prod_{M\in\mathscr M_\varepsilon}\abs{f_M^\eta(\sigma_M^\eta)},
        \qquad\text{and}\qquad
        L:=Dn+2DsR,
    \end{align*}
    with the convention that an empty product is $1$, so that $m_*\geq1$ is an integer. There are $2s$ factors, each at most $R^{R/2}$ by Step 2, so $\log m_*\leq sR\log R$. Applying \Cref{cor:PNT bounding non-dividing prime} with $m=m_*$ and $q=2+L\log(n+1)$ gives a prime $p\nmid m_*$ with
    \begin{align*}
        L\log(n+1)<p\leq C\bigl(2+L\log(n+1)+\log2+sR\log R\bigr)
        =O_D\left(2^{n^D}n^D\log n\right),
    \end{align*}
    where the last estimate uses $s\leq2^{n^D}$ and $R=(2^D-1)n^D$. Since $p$ divides none of the integers $f_M^\eta(\sigma_M^\eta)$, each $f_M^\eta$ reduces to a nonzero polynomial over $\F_p$, and hence over each of its extensions $\F_{p^t}$. Observe that this step fixes $p$ before any code is sampled, and that the same $p$ works for every $M$ and both dimension profiles.

    \paragraph{Step 4: full rank and simultaneous injectivity over \texorpdfstring{$\F_{p^t}$}{the extension field}.}
    Fix an integer $t\geq1$ and a profile $\eta\in\{+,-\}$, and let $X_1,\dots,X_D$ with $X_i\in\F_{p^t}^{k_i^\eta\times n}$ have independent uniformly random entries. Writing $\Delta_i(\bX)$ for the determinant of the formal matrix $(X^{(i)}_{r,j})_{r,j\in[k_i^\eta]}$, set
    \begin{align*}
        h^\eta:=\prod_{i=1}^D\Delta_i,
        \qquad
        F^\eta:=h^\eta\prod_{M\in\mathscr M_\varepsilon}f_M^\eta,
        \qquad
        d_\eta:=\sum_{i=1}^Dk_i^\eta+D\sum_{M\in\mathscr M_\varepsilon}N_M^\eta,
    \end{align*}
    where all of the $f_M^\eta$ are formed in the same variables $\bX$, so that a single sample specializes all of them at once. Each $\Delta_i$ contains the diagonal monomial $\prod_{r\in[k_i^\eta]}X^{(i)}_{r,r}$ with coefficient $1$ and so is a nonzero polynomial in $\F_{p^t}[\bX]$. Combined with Step 3 this shows that $F^\eta$ is a nonzero polynomial in $\F_{p^t}[\bX]$, and $\deg F^\eta\leq d_\eta$ by Step 2. The Schwartz--Zippel lemma therefore gives
    \begin{align*}
        \Pr\left[F^\eta(X_1,\dots,X_D)\neq0\right]\geq1-\frac{d_\eta}{p^t}.
    \end{align*}
    Let $E_\eta$ be the event that every $X_i$ has full row rank, and let $A_\eta$ be the event that $\cores_M$ is injective for every $M\in\mathscr M_\varepsilon$ in the diagram of the row spaces of $X_1,\dots,X_D$. A nonzero evaluation $\Delta_i(X_1,\dots,X_D)$ forces $X_i$ to have full row rank, so by Step 2 we have $\{F^\eta(X_1,\dots,X_D)\neq0\}\subseteq A_\eta\cap E_\eta$. Since $k_i^\eta\leq n$, full-row-rank matrices exist and $\Pr[E_\eta]>0$, so we may condition on $E_\eta$; using $\Pr[E_\eta]\leq1$ in the denominator,
    \begin{align*}
        \Pr[A_\eta\mid E_\eta]=\frac{\Pr[A_\eta\cap E_\eta]}{\Pr[E_\eta]}
        \geq\Pr\left[F^\eta(X_1,\dots,X_D)\neq0\right]
        \geq1-\frac{d_\eta}{p^t}.
    \end{align*}
    Moreover, conditioned on $E_\eta$ the row spaces of $X_1,\dots,X_D$ are independent and uniformly random in $\grasspt(n,k_i^\eta)$, because every $k_i^\eta$-dimensional code has exactly $\abs{\mathrm{GL}_{k_i^\eta}(\F_{p^t})}$ ordered generator matrices. So this conditional sampling is the same as choosing a random full rank generator matrix for the profile $(k_i^\eta)_{i\in[D]}$.

    \paragraph{Step 5: the union bound over the tuple and its dual.}
    Let $\cC_1,\dots,\cC_D$ be independent and uniformly random with $\cC_i\in\grasspt(n,k_i)$. Applying Step 4 with $\eta=+$, the maps $\cores_M$ for $(\cC_i)_{i\in[D]}$ fail to be simultaneously injective with probability at most $d_+/p^t$. Since the dual of a random $k_i$ dimensional space is a random $(n-k_i)$ dimensional space, the tuple $(\cC_i^\perp)_{i\in[D]}$ is itself uniformly random with the dimensions $(k_i^-)_{i\in[D]}$. Applying Step 4 with $\eta=-$, its global corestriction maps fail to be injective with probability at most $d_-/p^t$. A union bound gives a total failure probability of at most
    \begin{align*}
        \frac{d_++d_-}{p^t}\leq\frac{L}{p^t}<\frac{p^{1-t}}{\log(n+1)}.
    \end{align*}
    Here the first inequality is the computation
    \begin{align*}
        d_++d_-=\sum_{i=1}^D\left(k_i+(n-k_i)\right)+D\sum_{M\in\mathscr M_\varepsilon}\left(N_M^++N_M^-\right)\leq Dn+2DsR=L,
    \end{align*}
    using $N_M^\eta\leq R$ from Step 1, and the second holds because $p>L\log(n+1)$ by Step 3. Together with Step 1, this is exactly the claim.
\end{proof}

\dobib

\section{Bounding the Operator Norm of the Plane-Interaction Matrix}\label{sec:expected op norm of K}
In this section, we will take the same setup as the previous section and continue to work over $\R$ to lower bound the probability that the operator norm of the plane-interaction matrix $K$ of $M$ is less than one.
\probOpNormKlessonelarge*

Since we will do this via Markov's inequality, it suffices to bound the expected operator norm of $K$. Thus, the main result that we will derive in this section is the following upper bound. 
\begin{lemma}\label{lem:UB on expected opnorm of K}
    For constants $D\in\N$, $\rho\in(0,1)$, and $\varepsilon\in(0,2^{-D})$, let
    \[
        n\geq N_{D,\varepsilon}:=\max\left\{16\left(\frac{192D}{e}\right)^{24}, (2^D\varepsilon)^{-2}\right\}.
    \]
    Fix an \ec set $M\subseteq[n]^D$, and let $\rho n\leq k_i\leq n$ for every $i\in[D]$. Then
    \begin{align*}
        \E_\sigma[\opnorm{K}]&\leq C_{D,\rho,\varepsilon}.
    \end{align*}
\end{lemma}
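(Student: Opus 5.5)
We bound $\E_\sigma[\opnorm{K}]$ by the block decomposition sketched in the proof overview. Group the planes of $\maxplanesclosed(M)$ by direction, so that $K=\sum_{I,J\subsetneq[D]}K_{I,J}$, where $K_{I,J}$ collects the blocks $K_{P,Q}$ with $P=\plane(I,a)$, $Q=\plane(J,b)$ and $P\cap Q=\varnothing$. The triangle inequality gives $\E\opnorm{K}\leq\sum_{I,J}\E\opnorm{K_{I,J}}$, a sum of at most $4^D$ terms. Two intersecting planes have $K_{P,Q}=0$, and for $P\cap Q=\varnothing$ the anchors disagree on some coordinate of $F=\setcomp{(I\cup J)}$. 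We therefore decompose $K_{I,J}=\sum_{\varnothing\neq E\subseteq F}K_{I,J}^E$ according to the set $E$ of coordinates in $F$ on which $a$ and $b$ disagree (inclusion–exclusion over the agreement pattern). Each $K_{P,Q}$ factors as a tensor product. On $K=I\cap J$ it is the identity. On $I\setminus J$ it is $G_i^\transpose$-type insertion of $g_i(b_i)$, and on $J\setminus I$ it is the analogous insertion of $g_j(a_j)$. On $F$ it is the scalar $\prod_{f\in F}\langle g_f(a_f),g_f(b_f)\rangle$. This scalar equals $1$ on agreeing coordinates, by unit norms, and equals the entry $D_f(a_f,b_f)$ of $D_f=G_f^\transpose G_f-\Id$ on disagreeing ones.

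Next we reorganize each $K_{I,J}^E$ as a direct sum over $\lambda$, the shared values of the anchors on $F\setminus E$ and the free coordinates. Facts \ref{fact:opnorm of direct sum is max of opnorms} and \ref{fact:opnorm of tensor is prod of opnorms} together with submultiplicativity then give
\[
\opnorm{K^E_{I,J}}\leq\opnorm{\textstyle\bigotimes_{i\in I\setminus J}G_i}\,\opnorm{\textstyle\bigotimes_{j\in J\setminus I}G_j}\,\max_{\lambda}\opnorm{T_\lambda},
\]
with $T_\lambda=\Pi_{X_\lambda}(\bigotimes_{j\in E}D_j)\iota_{Y_\lambda}$. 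The slices $X_\lambda,Y_\lambda\subseteq[n]^E$ of $\maxAIclosed(M)$ and $\maxAclosed{J}(M)$ are $2^D\varepsilon$-sparse by Corollary \ref{cor:maxAIclosed is eps sparse}. Only the off-diagonal part of $D_j$ appears, because the coordinates in $E$ disagree, and this is where the indicator comes from. The Rademacher matrix factors satisfy $\opnorm{G_i}\leq\sqrt{n/k_i}+1+O(1)\leq C_\rho$ with the needed integrability. We handle them by Cauchy–Schwarz or by a standard tail bound, e.g.\ via $\E\opnorm{G_i}^2\leq C_\rho^2$ from a trace-moment bound. This keeps the expectation of the product controlled by $C_\rho^{2D}\cdot(\E\max_\lambda\opnorm{T_\lambda}^2)^{1/2}$.

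The main obstacle is bounding $\E\max_\lambda\opnorm{T_\lambda}$ by $2(4/\rho)^{|E|}\sqrt{2^D\varepsilon}$. We use the trace method: $\E\max_\lambda\opnorm{T_\lambda}^{2m}\leq\sum_\lambda\E\trace((T_\lambda T_\lambda^\transpose)^m)$, and we expand each trace into the sign monomials $\Phi_j$. Only walks with every edge traversed an even number of times survive, and these are even, non-left-backtracking closed bipartite walks. For each deficiency profile $(\Delta_j)$ we count shapes using the Anderson–Zeitouni bound $|\fW_m(\Delta)|\leq16^m(4m)^{6\Delta}$. We count labelings using Fact \ref{fact:number of completions in sparse set}, choosing $x_1,y_1,\dots$ in order so that each vector with a new coordinate costs a factor $\eta=2^D\varepsilon$, and we pay $k_j^{\wt_\sR(j)}$ for the right vertices. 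The delicate part is the accounting: we normalize by $k_j^{-2m}$, use $\wt_\sR\leq 2m+1-\wt_\sL-\Delta$, and sum over $\Delta$. The factor $(4m)^{6\Delta}$ has to be absorbed by $(n/k)^{-\Delta}$-type gains, which holds once $n\geq16(192D/e)^{24}$ with $m=D\log n$. The target is $\E\trace\leq2^qn^q\eta^{m+1}(16^q\rho^{-2q})^m$. Summing over the at most $n^{D}$ values of $\lambda$ and taking the $2m$-th root with $m\approx D\log n$, so that $n^{O(D)/m}=O(1)$, gives $\sqrt{\eta}\,(4/\rho)^q$ up to the constant $2$. Multiplying by the $C_\rho$ factors, the $2^{|F|}$ choices of $E$, and the $4^D$ direction pairs yields $C_{D,\rho,\varepsilon}$.
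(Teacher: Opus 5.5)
Your outline matches the paper's proof. It uses the directional blocks $K_{I,J}$ and splits $K_{I,J}$ by the disagreement set $E\subseteq F$. That split is an exact partition coming from $\bigotimes_{f\in F}(D_f+\Id)-\Id$ and the zero diagonal of $D_f$, not an inclusion--exclusion. It then uses the factorization $\opnorm{K^E_{I,J}}\leq\opnorm{G_{I\setminus J}}\opnorm{G_{J\setminus I}}\max_\lambda\opnorm{T_\lambda}$ and the trace method with Anderson--Zeitouni shape counts and sparsity-based labeling counts.

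The problem is the bookkeeping. The lemma asserts the explicit constant $C_{D,\rho,\varepsilon}=2(4C_\rho)^D(1+4/\rho)^D\sqrt{2^D\varepsilon}$. The factors you list multiply to something strictly larger, so your final sentence does not follow from them. Three steps lose too much.

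(i) Cauchy--Schwarz is unnecessary here. The matrices $G_{I\setminus J}$, $G_{J\setminus I}$ and $(D_j)_{j\in E}$ are built from disjoint families of signs: the three index sets are pairwise disjoint, and $X_\lambda,Y_\lambda$ depend only on $M$. So the expectation of the product factorizes exactly, and only the first-moment bound $\E\opnorm{G_i}\leq C_\rho$ of \Cref{prop:UB on opnorm of rademacher matrix} is needed. Your route instead needs $\E\opnorm{G_i}^2\leq C_\rho^2$. That bound is true, but you only gesture at it: $\opnorm{G_i}\leq\sqrt{n/k_i}+1+O(1)$ holds with high probability, not deterministically. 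Also, your factor $C_\rho^{2D}$ should be $C_\rho^{\abs{I\setminus J}+\abs{J\setminus I}}\leq C_\rho^{D}$.

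(ii) Multiplying by the $2^{\abs{F}}$ choices of $E$ and using $(4/\rho)^{\abs{E}}\leq(4/\rho)^D$ gives $(8/\rho)^D$ rather than $(1+4/\rho)^D$. You need the binomial sum $\sum_{\varnothing\neq E\subseteq F}(4/\rho)^{\abs{E}}\leq(1+4/\rho)^D$.

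(iii) With $\abs{\Lambda}\leq n^D$, the $2m$-th root leaves a factor $n^{(D+\abs{E})/(2m)}$, which can be about $e$, so you do not get the constant $2$. You need $\abs{\Lambda}\leq n^{\abs{I\setminus J}+\abs{J\setminus I}+\abs{F\setminus E}}$. Then $\abs{\Lambda}\,n^{\abs{E}}\leq n^D$ and $2^{\abs{E}/(2m)}n^{D/(2m)}<2$.

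Separately, the mechanism you name for absorbing $(4m)^{6\Delta}$ is not the right one. A factor like $(n/k)^{-\Delta}$ lies between $\rho^{\Delta}$ and $1$ and cannot beat $(4m)^{6\Delta}$. Write $q=\abs{E}$ and $\eta=2^D\varepsilon$. Each unit of total deficiency removes a vertex and so saves a factor $n$, after normalizing by $\prod_j k_j^{-2m}$ with $k_j\geq\rho n$. Through $W_\sL\geq q(m+1)-\Delta$, it can also cost a factor $\eta^{-1/q}$ of sparsity gain. The per-unit gain is therefore $n\eta^{1/q}$, and absorbing $(4m)^6$ into it uses both hypotheses on $n$. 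The bound $n\geq16(192D/e)^{24}$ gives $(4m)^6\leq n^{1/2}/2$, and $n\geq(2^D\varepsilon)^{-2}$ gives $\eta^{1/q}\geq n^{-1/2}$. With these repairs your argument coincides with the paper's.
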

A single application of Markov's inequality gives us the necessary result for the previous section.
\begin{proof}[Proof of \Cref{thm:prob op norm K < 1 large}]
    We directly compute
    \begin{align*}
        \Pr_\sigma[\opnorm{K}<1]&\geq 1-\E_\sigma[\opnorm{K}] &\text{By Markov}\\
        &\geq 1- C_{D,\rho,\varepsilon} &\text{By \Cref{lem:UB on expected opnorm of K}}
    \end{align*}
\end{proof}

Our path to proving \Cref{lem:UB on expected opnorm of K} begins in \Cref{subsec:boundexpopK-decomp of K} by rearranging $K$ into ``superblocks'' of the blocks $K_{P,Q}$ used to define it and then decomposing these blocks as sums of restricted operators whose operator norm we can cleanly express as a product of operator norms. Then, in \Cref{subsec:boundexpopK-bounding KIJE}, we bound the expected operator norm of these restricted operators. Finally, we complete several deferred proofs in \Cref{subsec:boundexpopK-deferred}.

\subsection{Decomposition of the plane-interaction matrix}\label{subsec:boundexpopK-decomp of K}
To bound the expected operator norm of $K$, we proceed via two decomposition steps. In \Cref{subsubsec:expop-directional block decomposition}, we simply change our view from the block definition of $K$ via the plane blocks $K_{P,Q}$ for planes $P$ and $Q$ to what we call \emph{directional blocks} $K_{I,J}$ defined by two directions $I,J\subseteq[D]$. There are only $4^D$ such pairs $(I,J)$, so bounding each $\opnorm{K_{I,J}}$ lets us easily bound $\opnorm{K}$ via the triangle inequality. To work on these $K_{I,J}$'s, however, we perform one more decomposition in \Cref{subsubsec:expop-decomp over mutually fixed} over subsets of $\setcomp{(I\cup J)}$. This latter decomposition is where we use both the fact that $M$ is \ec and that the entries of our generator matrices are random Rademacher variables.

\subsubsection{Directional block decomposition}\label{subsubsec:expop-directional block decomposition}
This first decomposition combines all plane blocks $K_{P,Q}$, where $P$ is an $I$-plane and $Q$ is a $J$-plane, together into one direction block $K_{I,J}$. 
\begin{definition}
    Since each plane block $K_{P,Q}:U_Q\to U_P$ and for two directions $I,J\subseteq[D]$ the planes in $\maxplanesclosed(M)$ are indexed by $\maxAIclosed(M)$ and $\maxAclosed{J}(M)$, respectively, we can define the block $K_{I,J}$ combining all blocks $K_{P,Q}$ for an $I$-plane $P$ and a $J$-plane $Q$ as an operator $K_{I,J}:U_J\otimes \R^{\maxAclosed{J}}\to U_I\otimes\R^{\maxAIclosed}$ where for $a\in\maxAIclosed$ and $b\in\maxAclosed{J}$:
    \begin{align*}
        (K_{I,J})_{a,b}=K_{\plane(I,a),\plane(J,b)}.
    \end{align*}
    Thus, $K_{I,J}$ is $\left(\abs{\maxAIclosed}\prod_{i\in I}k_i\right)\times\left(\abs{\maxAclosed{J}}\prod_{j\in J}k_j\right)$ dimensional.
\end{definition}

Later, we will prove the following bound on the expected operator norm of each $K_{I,J}$.
\begin{lemma}\label{lem:UB on opnorm of KIJ}
    For the direction block $K_{I,J}$ defined as above and $n\geq N_{D,\varepsilon}:=\max\left\{16\left(\frac{192D}{e}\right)^{24}, (2^D\varepsilon)^{-2}\right\}$, we have
    \begin{align*}
        \E_\sigma[\opnorm{K_{I,J}}]&\leq 2C_\rho^D\left(1+\frac{4}{\rho}\right)^D\sqrt{2^D\varepsilon}.
    \end{align*}
\end{lemma}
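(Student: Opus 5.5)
The plan follows the route laid out in the proof overview. First, decompose $K_{I,J}$ over the mutually fixed coordinates $F=\setcomp{(I\cup J)}$. For $a\in\maxAIclosed$ and $b\in\maxAclosed{J}$, the block $K_{P,Q}$ with $P=\plane(I,a)$ and $Q=\plane(J,b)$ has the form
\[\cores_P^\adjoint\cores_Q=\Bigl(\bigotimes_{k\in I\cap J}\Id\Bigr)\otimes\Bigl(\bigotimes_{i\in I\setminus J}g_i(b_i)^\transpose\Bigr)\otimes\Bigl(\bigotimes_{j\in J\setminus I}g_j(a_j)\Bigr)\cdot\prod_{f\in F}\langle g_f(a_f),g_f(b_f)\rangle.\]
The block is zero exactly when $P\cap Q\neq\varnothing$, which means $a_F=b_F$. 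For $f\in F$ we have $\langle g_f(a_f),g_f(b_f)\rangle=1$ when $a_f=b_f$ and $=D_f(a_f,b_f)$ otherwise, where $D_f=G_f^\transpose G_f-\Id$ has zero diagonal. Let $E=\{f\in F: a_f\neq b_f\}$. Then the block equals $\prod_{f\in E}D_f(a_f,b_f)$ times a fixed factor, so $K_{I,J}=\sum_{\varnothing\neq E\subseteq F}K^E_{I,J}$. Here $K^E_{I,J}$ collects the pairs whose anchors agree on $F\setminus E$. Because $D_f$ vanishes on the diagonal, the disagreement constraint on $E$ is automatic, so in $K^E_{I,J}$ we may drop the indicator $a_f\neq b_f$ and simply multiply by $\prod_{f\in E}D_f(a_f,b_f)$. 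By the triangle inequality, $\E\opnorm{K_{I,J}}\le\sum_E\E\opnorm{K^E_{I,J}}$, and there are fewer than $2^D$ terms.

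Next, bound each $K^E_{I,J}$. Group the rows and columns by the shared value $\lambda$ of the coordinates $(a_{J\setminus I},\,b_{I\setminus J},\,a_{F\setminus E}=b_{F\setminus E})$. I would rather factor the $G$-columns through whole matrices: the map $u\mapsto u\otimes g_j(a_j)$ over all $a_j$ is $G_j^\transpose$ tensored appropriately. Doing this gives the bound stated in the overview,
\[\opnorm{K^E_{I,J}}\le\opnorm{\textstyle\bigotimes_{I\setminus J}G_i}\,\opnorm{\textstyle\bigotimes_{J\setminus I}G_j}\,\max_{\lambda\in\Lambda}\opnorm{T_\lambda},\qquad T_\lambda=\Pi_{X_\lambda}\Bigl(\textstyle\bigotimes_{j\in E}D_j\Bigr)\iota_{Y_\lambda}.\]
Here $X_\lambda,Y_\lambda\subseteq[n]^E$ are fibers of $\maxAIclosed$ and $\maxAclosed{J}$. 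By \Cref{cor:maxAIclosed is eps sparse} they are $(2^D\varepsilon)$-sparse, since fibers of sparse sets are sparse. The block structure yields a direct sum over $\lambda$, so \Cref{fact:opnorm of direct sum is max of opnorms} applies, and the tensor factors are handled by \Cref{fact:opnorm of tensor is prod of opnorms}. The norm $\opnorm{G_i}\le\sqrt{n/k_i}+1+o(1)$ is controlled by standard subgaussian matrix bounds, with a tail good enough for expectations; this gives a factor of at most $C_\rho$ per coordinate. To avoid bounding the expectation of a product, I would use Cauchy--Schwarz or independence: the sets $I\setminus J$, $J\setminus I$ and $E$ are disjoint, so the three factors are independent, and the expectation factors exactly.

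The main obstacle is the bound $\E\max_\lambda\opnorm{T_\lambda}\le 2(4/\rho)^{\abs E}\sqrt{2^D\varepsilon}$. I would prove it with the trace method. By \Cref{fact:opnorm^2p is bounded by trace^p}, $\E\max_\lambda\opnorm{T_\lambda}\le(\sum_\lambda\E\trace((T_\lambda T_\lambda^\transpose)^m))^{1/2m}$. Expanding the trace gives sign monomials, whose expectation is $1$ exactly when each edge of the associated non-left-backtracking closed bipartite walk is traversed an even number of times, and $0$ otherwise. I then count the walks by shape and labeling. The number of shapes with deficiency $\Delta$ is bounded by $16^m(4m)^{6\Delta}$ (Anderson--Zeitouni). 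For labelings, the right vertices contribute at most $k_j^{\wt_\sR}$. The left vertices are chosen vector by vector, and \Cref{fact:number of completions in sparse set} gives a factor of $\eta n^f$ per vector with $f$ free coordinates; this yields at most $n^{W_\sL}\eta^{W_\sL/q}$ labelings. Summing over the deficiencies, the normalization $k_j^{-2m}$ together with $k_j\ge\rho n$ turns the total into a geometric series. This series converges once $(4m)^6/(\rho n)$ is small, which is what the threshold $n\ge16(192D/e)^{24}$ with $m=D\log n$ enforces. The result should be $\E\trace\le 2^qn^q\eta^{m+1}(16^q\rho^{-2q})^m$. Multiplying by $\abs\Lambda\le n^{D-q}$ and taking the $2m$-th root with $m=D\log n$, the factor $n^{D/2m}$ becomes a constant. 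The remaining power $\eta^{(m+1)/2m}\le\sqrt\eta$ for $\eta<1$, and $(16^q\rho^{-2q})^{1/2}=(4/\rho)^q$. The condition $n\ge\eta^{-2}$ absorbs the leftover slack. The delicate point is the sparsity counting: the vector-by-vector argument must place each factor of $\eta$ correctly when coordinates are partially fixed across the $q$ walks, and this is where I expect to spend the most care. Finally, summing $C_\rho^{\abs{I\triangle J}}\cdot2(4/\rho)^{\abs E}\sqrt{2^D\varepsilon}$ over $E\subseteq F$ gives at most $2C_\rho^D(1+4/\rho)^D\sqrt{2^D\varepsilon}$, using $C_\rho\ge1$ and the binomial theorem.
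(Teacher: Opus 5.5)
Your proposal is correct and follows essentially the same route as the paper: the split $K_{I,J}=\sum_{\varnothing\neq E\subseteq F}K_{I,J}^E$ coming from $B_F-\mathrm{Id}=\sum_{E\neq\varnothing}D_E\otimes\mathrm{Id}$, the factorization through $G_L$, $G_R^{\top}$ and a direct sum of sparse operators $T_\lambda$, independence of the three factors, the trace method with weak Wigner words, and the binomial theorem at the end.

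One bookkeeping slip needs fixing. The ratio of the geometric series over deficiencies is $(4m)^6/(n\eta^{1/q})$, not $(4m)^6/(\rho n)$. The parameter $\rho$ only enters through the $\Delta$-independent prefactor $\rho^{-2qm}$, whereas the sparsity count loses a factor $\eta^{-\Delta/q}$. It is precisely $n\geq\eta^{-2}$ (so $\eta^{1/q}\geq n^{-1/2}$), together with $(4m)^6\leq n^{1/2}/2$ from the other threshold, that makes this ratio at most $1/2$. Separately, your block formula has the transposes on the $I\setminus J$ and $J\setminus I$ factors swapped, which is harmless.
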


With this bound in hand, we can use the triangle inequality as mentioned previously to finish off \Cref{lem:UB on expected opnorm of K}.

\begin{proof}[Proof of \Cref{lem:UB on expected opnorm of K}]
    If $M=\varnothing$ or $M=[n]^D$, then $K=0$, so assume $\varnothing\neq M\subsetneq[n]^D$. Since both $I,J\subseteq[D]$, there are at most $4^D$ possible choices of $(I,J)$. For each $K_{I,J}$, we can pad it with zeros, which doesn't change the operator norm, to be the same dimension as $K$. Thus, we can compute
    \begin{align*}
        \E_\sigma\left[\opnorm{K}\right]&=\E_\sigma\left[\opnorm{\sum_{I,J\subseteq[D]}K_{I,J}}\right]\\
        &\leq \sum_{I,J\subseteq[D]}\E_\sigma\left[\opnorm{K_{I,J}}\right] &\text{By triangle ineq}\\
        &\leq\sum_{I,J\subseteq[D]} 2C_\rho^D\left(1+\frac{4}{\rho}\right)^D\sqrt{2^D\varepsilon} &\text{By \Cref{lem:UB on opnorm of KIJ}}\\
        &=4^D 2C_\rho^D\left(1+\frac{4}{\rho}\right)^D\sqrt{2^D\varepsilon}\\
        &=C_{D,\rho,\varepsilon},
    \end{align*}
    as claimed.
\end{proof}

\subsubsection{Decomposition over sets of mutually fixed coordinates}\label{subsubsec:expop-decomp over mutually fixed}
Next, to achieve \Cref{lem:UB on opnorm of KIJ}, we write out explicitly the action of $K_{I,J}$ and observe a clean way to write it as a sum of matrices. This will require some new notation.

\begin{definition}
    With $i\in[D]$ and $G_i\in\R^{k_i\times n}$ as the generator matrix of the component code $\cC_i$, we define the $n\times n$ real matrices
    \begin{align*}
        B_i=G_i^\transpose G_i \qquad D_i=B_i-\Id_{\R^n},
    \end{align*}
    where we can see that $D_i(a,a)=0$ and $D_i(a,b)=\frac{1}{k_i}\sum_{r=1}^{k_i}\sigma^{(i)}_{r,a}\sigma^{(i)}_{r,b}$ for $a\neq b$. Moreover, for a set $S\subseteq[D]$ we let
    \begin{align*}
        G_S=\bigotimes_{i\in S}G_i, \qquad B_S=\bigotimes_{i\in S}B_i, \qquad D_S=\bigotimes_{i\in S}D_i.
    \end{align*}
    We write $g_S(a)=\bigotimes_{i\in S}g_i(a_i)$ for $a\in[n]^S$, with $g_\varnothing=1$ and empty tensor products of operators interpreted as the identity on $\R$.
\end{definition}

\begin{definition}
    For a subset $S\subseteq X$ of some set $X$, we define the canonical inclusion and projection maps
    \begin{align*}
        \iota_S:\R^S&\to\R^X\\
        \Pi_S:\R^X&\to\R^S.
    \end{align*}
\end{definition}
We now state the decomposition.
\begin{restatable}{lemma}{hatKIJdecomp}\label{lem:hatKIJ decomposition}
    For fixed $I,J\subseteq[D]$, define
    \begin{align*}
        C=I\cap J \qquad L=I\setminus J \qquad R=J\setminus I \qquad F=\setcomp{(I\cup J)} .
    \end{align*}
    Then, after ordering the tensor coordinates in order as $(C,L,R,F)$, we can decompose $K_{I,J}$ as
    \begin{align}
        K_{I,J}&=\left(\Id_{U_I}\otimes \Pi_{\maxAIclosed}\right)\left(\Id_{U_C}\otimes\ G_L\otimes G_R^\transpose\otimes\left(B_F-\Id_{\R^{[n]^F}}\right)\right)\left(\Id_{U_J}\otimes\iota_{\maxAclosed{J}}\right)\label{eq:hatKIJ complete expression}\\
        &=\sum_{\varnothing\neq E\subseteq F}K_{I,J}^E\label{eq:hatKIJ sum expression}
    \end{align}
    where
    \begin{align}
        K_{I,J}^E=\left(\Id_{U_I}\otimes \Pi_{\maxAIclosed}\right)\left(\Id_{U_C}\otimes\ G_L\otimes G_R^\transpose\otimes D_E\otimes\Id_{\R^{[n]^{F\setminus E}}}\right)\left(\Id_{U_J}\otimes\iota_{\maxAclosed{J}}\right).\label{eq:hatKIJE expression}
    \end{align}
\end{restatable}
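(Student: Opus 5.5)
Fix $a\in\maxAIclosed$ and $b\in\maxAclosed{J}$, set $P=\plane(I,a)$ and $Q=\plane(J,b)$, and compute the block $K_{P,Q}$ entrywise on pure tensors. By \Cref{eq:cores explicit def}, for $y=\bigotimes_{j\in J}y_j\in U_J$ we have $\cores_Q y=y\otimes g_{\setcomp{J}}(b)$. Pairing with $\cores_P x$ for $x=\bigotimes_{i\in I}x_i$ gives, exactly as in the proof of \Cref{lem:inner prod of cores is inner prod of adjoints to intersection},
\begin{align*}
\langle \cores_P x,\cores_Q y\rangle=\Bigl(\prod_{k\in C}\langle x_k,y_k\rangle\Bigr)\Bigl(\prod_{i\in L}\langle x_i,g_i(b_i)\rangle\Bigr)\Bigl(\prod_{j\in R}\langle g_j(a_j),y_j\rangle\Bigr)\Bigl(\prod_{f\in F}\langle g_f(a_f),g_f(b_f)\rangle\Bigr).
\end{align*}
Reading off the operator gives $\cores_P^\adjoint\cores_Q=\Id_{U_C}\otimes g_L(b_L)\otimes g_R(a_R)^\transpose\cdot B_F(a_F,b_F)$. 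Note that $a$ ranges over $[n]^{\setcomp{I}}=[n]^{R}\times[n]^F$ and $b$ over $[n]^L\times[n]^F$. Viewed as a matrix indexed by $a,b$, the factor $g_L(b_L)$, as a map from the $b_L$ coordinate into $U_L$, is exactly $G_L$. Likewise $g_R(a_R)^\transpose$, as a map $U_R\to\R^{[n]^R}$, is $G_R^\transpose$, and the scalar $B_F(a_F,b_F)$ is the entry of $B_F$. So with the coordinate ordering $(C,L,R,F)$, the unrestricted operator $\Id_{U_C}\otimes G_L\otimes G_R^\transpose\otimes B_F$ has $(a,b)$ block equal to $\cores_P^\adjoint\cores_Q$. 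Here I would check carefully that the tensor factors line up: the domain is $U_C\otimes U_R\otimes[n]^L\otimes[n]^F\cong U_J\otimes\R^{[n]^{\setcomp{J}}}$ and the codomain is $U_I\otimes\R^{[n]^{\setcomp{I}}}$.

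Next I would handle the intersection condition. $P\cap Q\neq\varnothing$ iff $a_F=b_F$, since the coordinates in $L$ and $R$ are free in one of the two planes. The blocks of $K$ vanish precisely in that case. Since $B_F(a_F,b_F)=\Ind\{a_F=b_F\}$ when $a_F=b_F$ (the columns are normalized, so $B_F$ has unit diagonal), subtracting $\Id_{\R^{[n]^F}}$ zeros exactly the intersecting blocks and leaves the others untouched. In particular, when $F=\varnothing$ every pair intersects and the expression is $0$, consistent with the empty sum. Composing with $\Pi_{\maxAIclosed}$ and $\iota_{\maxAclosed{J}}$ restricts to the anchors actually present, which gives \Cref{eq:hatKIJ complete expression}.

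For \Cref{eq:hatKIJ sum expression}, write $B_f=\Id+D_f$ and expand the tensor product $\bigotimes_{f\in F}(\Id+D_f)=\sum_{E\subseteq F}D_E\otimes\Id_{\R^{[n]^{F\setminus E}}}$. The $E=\varnothing$ term is $\Id_{\R^{[n]^F}}$, which cancels. The remaining terms, after distributing the outer projection and inclusion by linearity, are the $K_{I,J}^E$. The main obstacle is purely bookkeeping: making the identification of $G_L$ and $G_R^\transpose$ as maps between the correct index spaces precise, including the degenerate cases where $C$, $L$, or $R$ is empty.
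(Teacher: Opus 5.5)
Your proposal is correct and follows essentially the same route as the paper. In both, you check on pure tensors that the $(a,b)$ block of $\Id_{U_C}\otimes G_L\otimes G_R^\transpose\otimes B_F$ equals $\cores_P^\adjoint\cores_Q$. You then use $P\cap Q\neq\varnothing \iff a_F=b_F$ together with the unit diagonal of $B_F$ to see that subtracting $\Id_{\R^{[n]^F}}$ zeroes exactly the intersecting blocks; you rightly state the unit-diagonal fact, which the paper uses without saying so. Finally, you expand $\bigotimes_{f\in F}(\Id+D_f)$ and cancel the $E=\varnothing$ term.
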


We defer the proof of \Cref{lem:hatKIJ decomposition} to \Cref{subsec:boundexpopK-deferred} and provide a bound on $\E\left[\opnorm{K_{I,J}^E}\right]$.

\begin{lemma}\label{lem:UB on opnorm of KIJE}
    For a non-empty subset $E\subseteq F$ and $n\geq N_{D,\varepsilon}:=\max\left\{16\left(\frac{192D}{e}\right)^{24}, (2^D\varepsilon)^{-2}\right\}$, we have that
    \begin{align*}
        \E_\sigma\left[\opnorm{K_{I,J}^E}\right]&\leq 2C_\rho^{\abs{L}+\abs{R}}\left(\frac{4}{\rho}\right)^{\abs{E}}\sqrt{2^D\varepsilon}
    \end{align*}
\end{lemma}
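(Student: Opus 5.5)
Starting from the expression \Cref{eq:hatKIJE expression}, I would first factor $K_{I,J}^E$ so that the random operators on the $C,L,R$ coordinates separate from the sparse restriction living on the $E$ coordinates. The projections $\Pi_{\maxAIclosed}$ and $\iota_{\maxAclosed{J}}$ act only on the anchor coordinates $[n]^{L'}\times[n]^F$, where for $I$-planes the anchor lives on $\setcomp{I}=R\cup F$ and for $J$-planes on $\setcomp{J}=L\cup F$. I would fix the coordinates in $F\setminus E$ together with the coordinates in $R$ (for the row anchor) and $L$ (for the column anchor), which play the role of the index $\lambda$. Then $K_{I,J}^E$ becomes, after a permutation of coordinates, a product of $\Id_{U_C}\otimes G_L\otimes G_R^\transpose$ with a block operator $\bigoplus_\lambda T_\lambda$, where $T_\lambda=\Pi_{X_\lambda}D_E\iota_{Y_\lambda}$. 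Here $X_\lambda,Y_\lambda\subseteq[n]^E$ are the slices of the two anchor sets, and $\lambda$ ranges over at most $n^{\abs{L}+\abs{R}+\abs{F\setminus E}}\leq n^D$ values. Submultiplicativity, \Cref{fact:opnorm of tensor is prod of opnorms} and \Cref{fact:opnorm of direct sum is max of opnorms} would then give the pointwise bound
\begin{align*}
    \opnorm{K_{I,J}^E}\leq\prod_{i\in L\cup R}\opnorm{G_i}\cdot\max_\lambda\opnorm{T_\lambda}.
\end{align*}
By \Cref{cor:maxAIclosed is eps sparse}, every slice $X_\lambda,Y_\lambda$ is $\eta$-sparse with $\eta=2^D\varepsilon$, since a slice of a sparse set along fixed coordinates is sparse.

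Next I would decouple the two factors, which is the delicate point because they are correlated only through the $G_i$ with $i\in L\cup R$, whereas $T_\lambda$ depends only on $G_j$ for $j\in E\subseteq F$. Since $E$ is disjoint from $L\cup R$, the two factors are independent, so the expectation factorizes. For each $G_i$, $\opnorm{G_i}=\opnorm{k_i^{-1/2}\sigma^{(i)}}$, and the standard bound for Rademacher matrices gives $\E\opnorm{\sigma^{(i)}}\leq C(\sqrt{k_i}+\sqrt n)$. Hence $\E\opnorm{G_i}\leq 3(1+\sqrt{n/k_i})\leq C_\rho$, and by independence the expectation of the product is at most $C_\rho^{\abs{L}+\abs{R}}$.

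The remaining bound, and the main obstacle, is
\begin{align*}
    \E\left[\max_\lambda\opnorm{T_\lambda}\right]\leq 2(4/\rho)^{\abs{E}}\sqrt{\eta},
\end{align*}
which is the content of the overview's \Cref{eq:overview-bound on exp max opnorm T}. I would prove it by the trace method. Jensen's inequality and a union bound give
\begin{align*}
    \E\max_\lambda\opnorm{T_\lambda}\leq\left(\sum_\lambda\E\trace\left((T_\lambda T_\lambda^\transpose)^m\right)\right)^{1/2m}.
\end{align*}
I would then expand the trace as the Rademacher chaos of \Cref{eq:overview-trace power monomial expansion} and keep only the even, non-left-backtracking closed bipartite walks. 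The number of shapes is bounded via Anderson--Zeitouni by $16^m(4m)^{6\Delta}$, and the number of labelings by sparsity by $n^{W_L}\eta^{W_L/q}\prod_j k_j^{\mathrm{wt}_R(j)}$. Summing over deficiencies should give $\E\trace\leq 2^q n^q\eta^{m+1}(16^q\rho^{-2q})^m$. Taking $m\approx D\log n$ makes the factor $(n^{D+q}2^q)^{1/2m}$ at most a constant; the lower bound on $n$ in $N_{D,\varepsilon}$ is exactly what makes the deficiency sum $(4m)^{6\Delta}n^{-\Delta}$ geometric. This leaves $(4/\rho)^{\abs{E}}\sqrt\eta$ times a constant at most $2$. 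Combining this with the previous paragraph yields the claim.
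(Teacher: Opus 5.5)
Your proposal follows the paper's proof. The ingredients are the same: the deterministic factorization giving \Cref{lem:UB on op norm of KIJE}, $(2^D\varepsilon)$-sparsity of the slices via \Cref{cor:maxAIclosed is eps sparse}, independence of the sign blocks indexed by $L$, $R$ and $E$ so that the expectation factorizes, the bound $\E\opnorm{G_i}\le C_\rho$, and the trace-method bound of \Cref{prop:UB on op norm of max T_lambda-verbose}, which you re-derive in outline.

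\textbf{The constant $2$ does not follow as written.} You loosen the index count to $\abs{\Lambda}\le n^D$. After taking the $2m$-th root this leaves the prefactor $(2^q n^{D+q})^{1/(2m)}$. With $m=\lceil D\log n\rceil$ this is about $e^{(D+q)/(2D)}$. That already equals $e>2$ for $D=q=1$, and it exceeds $2$ whenever $q$ is a sizable fraction of $D$.

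\textbf{The fix is to keep the exact count you wrote down.} Since $\abs{L}+\abs{R}+\abs{F\setminus E}+\abs{E}=D-\abs{C}\le D$, you have $\nu+q\le D$ in the proposition. The prefactor is then at most $2^{q/(2m)}n^{D/(2m)}\le\exp\bigl(\frac12+\frac{\log 2}{2\log n}\bigr)<2$ for $n\ge16$, exactly as in the paper.

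\textbf{A smaller imprecision in the deficiency sum.} The ratio of the geometric series is $(4m)^6/(n\eta^{1/q})$, not $(4m)^6/n$. The extra $\eta^{-1/q}$ comes from $\eta^{W_\sL/q}\le\eta^{m+1-\Delta/q}$. It is controlled by the $(2^D\varepsilon)^{-2}$ term of $N_{D,\varepsilon}$, which gives $\eta^{1/q}\ge n^{-1/2}$. Together with the $16(192D/e)^{24}$ term, this makes the ratio at most $1/2$.
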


Given this bound on each $\E\left[\opnorm{K_{I,J}^E}\right]$, we can use the binomial theorem to sum over all subsets $E$ to bound $\E\left[\opnorm{K_{I,J}}\right]$.

\begin{proof}[Proof of \Cref{lem:UB on opnorm of KIJ}]
    We start from \Cref{lem:hatKIJ decomposition} to compute
    \begin{align*}
        \E_\sigma\left[\opnorm{K_{I,J}}\right]&=\E_\sigma\left[\opnorm{\sum_{\varnothing\neq E\subseteq F}K_{I,J}^E}\right] &\text{By \Cref{eq:hatKIJ sum expression}}\\
        &\leq\sum_{\varnothing\neq E\subseteq F}\E_\sigma\left[\opnorm{K_{I,J}^E}\right] &\text{By triangle ineq}\\
        &\leq \sum_{\varnothing\neq E\subseteq F} 2C_\rho^{\abs{L}+\abs{R}}\left(\frac{4}{\rho}\right)^{\abs{E}}\sqrt{2^D\varepsilon} &\text{By \Cref{lem:UB on opnorm of KIJE}}\\
        &\leq 2C_\rho^{D}\sqrt{2^D\varepsilon} \sum_{\varnothing\neq E\subseteq F}\left(\frac{4}{\rho}\right)^{\abs{E}}\\
        &\leq 2C_\rho^{D}\sqrt{2^D\varepsilon}\left(1+\frac{4}{\rho}\right)^D &\text{By binomial thm}
    \end{align*}
\end{proof}

\subsection{Bounding the expected operator norm of restricted directional blocks}\label{subsec:boundexpopK-bounding KIJE}

The proof of \Cref{lem:UB on opnorm of KIJE} requires two main ingredients: First, we will upper bound $\E\left[\opnorm{K_{I,J}^E}\right]$ by the product of the expected norm of the tensor of our generator matrices and the maximum operator norm of a sparsely applied operator. Second, we upper bound the expected operator norm of such sparsely applied operators. 

We begin by using the definition of $K_{I,J}^E$ from \Cref{eq:hatKIJE expression} to derive an upper bound on its operator norm.

\begin{restatable}{lemma}{UBonNormKIJE}\label{lem:UB on op norm of KIJE}
    For a non-empty subset $E\subseteq F=\setcomp{(I\cup J)}$, let $Z=F\setminus E$ so that we can decompose $[n]^{\setcomp{I}}=[n]^R\times[n]^E\times[n]^Z$ and $[n]^{\setcomp{J}}=[n]^L\times[n]^E\times[n]^Z$. Then define
    \begin{align*}
        X_{r,z}&=\{a\in[n]^E:(r,a,z)\in\maxAIclosed\}\\
        Y_{\ell,z}&=\{b\in[n]^E:(\ell,b,z)\in\maxAclosed{J}\}.
    \end{align*}
    Then we can bound the operator norm of $K_{I,J}^E$ as
    \begin{align}
        \opnorm{K_{I,J}^E}&\leq\opnorm{G_L}\opnorm{G_R}\max_{\substack{\ell\in [n]^L\\ r\in [n]^R\\ z\in [n]^Z}}\opnorm{ \Pi_{X_{r,z}}D_E\iota_{Y_{\ell,z}}}.\label{eq:UB on op norm of KIJE}
    \end{align}
\end{restatable}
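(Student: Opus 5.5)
We start from the expression \eqref{eq:hatKIJE expression} and reorder the tensor coordinates so that the factors depending on the $L$, $R$, $E$, and $Z$ coordinates are separated. The domain of $K_{I,J}^E$ is $U_J\otimes\R^{\maxAclosed{J}}$ with $U_J=U_C\otimes U_R$ and $\maxAclosed{J}\subseteq[n]^L\times[n]^E\times[n]^Z$. The codomain is $U_C\otimes U_L\otimes\R^{\maxAIclosed}$ with $\maxAIclosed\subseteq[n]^R\times[n]^E\times[n]^Z$. The middle operator acts as follows:
\begin{itemize}
    \item $\Id_{U_C}$ on $U_C$;
    \item $G_L:\R^{[n]^L}\to U_L$ on the $L$-anchor coordinates;
    \item $G_R^\transpose:U_R\to\R^{[n]^R}$ on the $R$ coordinates;
    \item $D_E$ on $\R^{[n]^E}$;
    \item the identity on $\R^{[n]^Z}$.
\end{itemize}
Since the $U_C$ factor is untouched by the projection and the inclusion, we drop it at the outset: $K_{I,J}^E=\Id_{U_C}\otimes K'$, and $\opnorm{K_{I,J}^E}=\opnorm{K'}$ by \Cref{fact:opnorm of tensor is prod of opnorms}.

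The key step is to factor $K'$ so that the sparse restriction only touches the $E$-coordinates.
\begin{enumerate}
    \item Write $\iota_{\maxAclosed{J}}:\R^{\maxAclosed{J}}\to\R^{[n]^L\times[n]^E\times[n]^Z}$. Decompose $\R^{\maxAclosed{J}}=\bigoplus_{\ell,z}e_\ell\otimes\R^{Y_{\ell,z}}\otimes e_z$, and decompose $\R^{\maxAIclosed}$ similarly as $\bigoplus_{r,z}e_r\otimes\R^{X_{r,z}}\otimes e_z$.
    \item The identity on the $Z$ factor means that blocks with different $z$ do not interact. Hence $K'=\bigoplus_{z}K'_z$, and by \Cref{fact:opnorm of direct sum is max of opnorms} it suffices to bound $\max_z\opnorm{K'_z}$.
    \item For fixed $z$, view $K'_z$ as an operator from $U_R\otimes\bigoplus_\ell\R^{Y_{\ell,z}}$ to $U_L\otimes\bigoplus_r\R^{X_{r,z}}$. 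It factors as
    \begin{align*}
        K'_z=(G_L\otimes\Id)\circ\widetilde T_z\circ(G_R^\transpose\otimes\Id),
    \end{align*}
    where the middle map $\widetilde T_z:\R^{[n]^R}\otimes\bigoplus_\ell\R^{Y_{\ell,z}}\to\R^{[n]^L}\otimes\bigoplus_r\R^{X_{r,z}}$ sends $e_r\otimes(\text{block }\ell)$ into $e_\ell\otimes(\text{block }r)$ through $T_{\ell,r,z}=\Pi_{X_{r,z}}D_E\iota_{Y_{\ell,z}}$.
    \item Up to a permutation of coordinates, the index pair $(r,\ell)$ in the input is sent to the pair $(\ell,r)$ in the output, with the single block operator $T_{\ell,r,z}$ acting on it. So $\widetilde T_z$ is a block-diagonal operator over $(\ell,r)$ composed with coordinate permutations, which are isometries. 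Applying \Cref{fact:opnorm of direct sum is max of opnorms} again gives $\opnorm{\widetilde T_z}=\max_{\ell,r}\opnorm{T_{\ell,r,z}}$.
\end{enumerate}

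The main subtlety is the bookkeeping in this factorization. The outer maps $G_L\otimes\Id$ and $G_R^\transpose\otimes\Id$ must act on the full spaces $\R^{[n]^L}$ and $U_R$ tensored with the correct block spaces. Here one must check that $G_L$ is applied to the $L$-coordinates of the input anchor $b$ (which the $J$-plane fixes), producing the $U_L$ component. Likewise $G_R^\transpose$ maps the $U_R$ input to the $R$-coordinate of the output anchor $a$. Once this is verified entrywise, using $(K_{P,Q})=\cores_P^\adjoint\cores_Q$ and the formula in \Cref{lem:inner prod of cores is inner prod of adjoints to intersection}, the claimed inequality follows from submultiplicativity together with $\opnorm{G\otimes\Id}=\opnorm{G}$ and $\opnorm{G_R^\transpose}=\opnorm{G_R}$.
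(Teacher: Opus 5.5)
Your proposal is correct and is essentially the paper's proof. The paper also factors $K_{I,J}^E=\Psi_L T_E\Psi_R$: $\Psi_R$ applies $G_R^\transpose$ on each $(\ell,z)$-summand, $T_E=\bigoplus_{\ell,r,z}\Id_{U_C}\otimes T_{\ell,r,z}$ sends each block through $T_{\ell,r,z}=\Pi_{X_{r,z}}D_E\iota_{Y_{\ell,z}}$, and $\Psi_L$ applies $G_L$ on each $(r,z)$-summand, so peeling off $U_C$ and the $z$-blocks first is only a reorganization.

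One small correction to your last remark: check the factorization entrywise directly against \eqref{eq:hatKIJE expression}, whose $(a,b)$ block sends $u_C\otimes u_R$ to $(G_R^\transpose u_R)(a_R)\,D_E(a_E,b_E)\,\Ind\{a_Z=b_Z\}\,u_C\otimes g_L(b_L)$. Do not check it against $K_{P,Q}=\cores_P^\adjoint\cores_Q$: a single $K_{I,J}^E$ is only one summand of $K_{I,J}$, and its blocks are not of that form.
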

Again, we defer the proof to \Cref{subsec:boundexpopK-deferred} since it is mainly derived via careful algebraic manipulation of the definition of $K_{I,J}^E$. All that is left to obtain \Cref{lem:UB on opnorm of KIJE} then is to bound, in expectation, $\opnorm{G_L}$, $\opnorm{G_R}$, and $\max_{\ell,r,z}\opnorm{ \Pi_{X_{r,z}}D_E\iota_{Y_{\ell,z}}}$. For $G_L$ and $G_R$, we use standard results relating the operator norm of random Rademacher matrices to Gaussian matrices to get the following bounds, whose proofs we also defer.

\begin{restatable}{proposition}{UbonNormRademacherMatrix}\label{prop:UB on opnorm of rademacher matrix}
    For a normalized (each column is scaled to have norm 1) random Rademacher matrix 
    \begin{align*}
        G=\frac{1}{\sqrt{k}}(\sigma_{r,c})_{r\in[k],c\in[n]}
    \in \R^{k\times n},
    \end{align*}
    
    where $k\geq\rho n$, its expected operator norm over the random signs is at most
    \begin{align*}
        \E[\opnorm{G}]&\leq C_\rho = 3\left(1+\rho^{-1/2}\right).
    \end{align*}
\end{restatable}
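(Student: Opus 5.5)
We bound $\E[\opnorm{G}]$ by comparing the Rademacher matrix with a Gaussian one and then invoking the standard bound for Gaussian matrices. Write $G=k^{-1/2}\Sigma$ with $\Sigma=(\sigma_{r,c})\in\{-1,1\}^{k\times n}$, so that it suffices to show $\E[\opnorm{\Sigma}]\leq 3(\sqrt{k}+\sqrt{n})$. Dividing by $\sqrt{k}$ and using $n/k\leq 1/\rho$ then gives
\begin{align*}
    \E[\opnorm{G}]\leq 3\left(1+\sqrt{n/k}\right)\leq 3\left(1+\rho^{-1/2}\right)=C_\rho .
\end{align*}

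The first step is the comparison (contraction) principle. Let $\Gamma=(\gamma_{r,c})$ have i.i.d.\ standard Gaussian entries, independent of $\Sigma$. Then $\gamma_{r,c}\overset{d}{=}\sigma_{r,c}\abs{\gamma_{r,c}}$ with $\sigma$ and $\abs{\gamma}$ independent. Conditioning on $\sigma$ and using that $\Sigma\mapsto\opnorm{\Sigma}$ is convex, Jensen's inequality gives
\begin{align*}
    \E\opnorm{\Gamma}=\E_\sigma\E_{\abs{\gamma}}\opnorm{(\sigma_{r,c}\abs{\gamma_{r,c}})}\geq\E_\sigma\opnorm{(\sigma_{r,c}\E\abs{\gamma_{r,c}})}=\sqrt{2/\pi}\,\E\opnorm{\Sigma}.
\end{align*}
Hence $\E\opnorm{\Sigma}\leq\sqrt{\pi/2}\,\E\opnorm{\Gamma}$.

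The second step is the Gaussian bound. By Gordon's inequality (equivalently, the Slepian--Fernique comparison; see e.g.\ Vershynin's text), $\E\opnorm{\Gamma}\leq\sqrt{k}+\sqrt{n}$. Therefore
\begin{align*}
    \E\opnorm{\Sigma}\leq\sqrt{\pi/2}\,(\sqrt{k}+\sqrt{n})\leq 3(\sqrt{k}+\sqrt{n}),
\end{align*}
since $\sqrt{\pi/2}\approx 1.25$.

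The only real obstacle is citing these two off-the-shelf facts with correct constants; everything else is rescaling. As a fallback that avoids the Gaussian comparison, a direct $\tfrac14$-net argument combined with Hoeffding's inequality for $\langle u,\Sigma v\rangle$ gives $\E\opnorm{\Sigma}\leq C(\sqrt{k}+\sqrt{n})$ for an explicit $C$. However, the net's constant is likely larger than $3$, which is why the comparison route is preferred.
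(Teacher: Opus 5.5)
Your proposal is correct and follows essentially the same route as the paper: reduce the Rademacher matrix to a Gaussian one, then apply Gordon's bound $\E\opnorm{\Gamma}\leq\sqrt{k}+\sqrt{n}$ (Vershynin's Theorem 7.3.1). The only difference is the comparison step. The paper cites Vershynin's general symmetrization lemma (Lemma 6.7.4), which loses a factor of $3$. Your conditional Jensen argument uses $\gamma\overset{d}{=}\sigma\abs{\gamma}$ to lose only $\sqrt{\pi/2}$, a slightly sharper constant that still lands under $C_\rho$.
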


Using the independence of the signs of our generator matrices, we immediately get the following corollary.

\begin{corollary}\label{cor:op norm UB on G_S}
    For a set $S\subseteq[D]$ and normalized random Rademacher matrices $(G_i)_{i\in[D]}$ with $k_i\geq\rho n$, we have that
    \begin{align*}
        \E[\opnorm{G_S}]&\leq C_\rho^{\abs{S}}.
    \end{align*}
\end{corollary}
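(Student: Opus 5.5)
The plan is to use the tensor factorization of the operator norm together with independence. Since $G_S=\bigotimes_{i\in S}G_i$, \Cref{fact:opnorm of tensor is prod of opnorms} gives the deterministic identity
\begin{align*}
    \opnorm{G_S}=\prod_{i\in S}\opnorm{G_i}.
\end{align*}
When $S=\varnothing$ both sides equal $1=C_\rho^0$, because the empty tensor product of operators is the identity on $\R$. So we may assume $S\neq\varnothing$.

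Next, we use independence. In our random model the sign matrices $\sigma^{(i)}$ for different $i\in[D]$ are independent, so the random variables $\opnorm{G_i}$ for $i\in S$ are independent and nonnegative. The expectation of a product of independent nonnegative random variables is the product of their expectations. These variables are bounded: each $\opnorm{G_i}$ is at most the Frobenius norm $\sqrt{n}$, so every expectation is finite. Hence
\begin{align*}
    \E[\opnorm{G_S}]=\prod_{i\in S}\E[\opnorm{G_i}].
\end{align*}

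Finally, we bound each factor. Every $G_i$ is a normalized random Rademacher matrix in $\R^{k_i\times n}$ with $k_i\geq\rho n$, so \Cref{prop:UB on opnorm of rademacher matrix} gives $\E[\opnorm{G_i}]\leq C_\rho$. Multiplying over the $\abs{S}$ factors yields $\E[\opnorm{G_S}]\leq C_\rho^{\abs{S}}$.

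No step here is a real obstacle. The only points that need care are checking that the tensor-norm identity applies to rectangular maps, which it does for arbitrary linear maps between Hilbert spaces, and confirming that independence across distinct directions $i$ holds in the random model as defined.
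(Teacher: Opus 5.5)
Your proposal is correct and follows the same route as the paper: factor $\opnorm{G_S}=\prod_{i\in S}\opnorm{G_i}$ via \Cref{fact:opnorm of tensor is prod of opnorms}, use independence of the sign matrices across directions to split the expectation, and bound each factor by \Cref{prop:UB on opnorm of rademacher matrix}. Your remarks on the empty case and on finiteness of the expectations are details the paper leaves implicit.
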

\begin{proof}
    Apply \Cref{fact:opnorm of tensor is prod of opnorms} to \Cref{prop:UB on opnorm of rademacher matrix}.
\end{proof}

To deal with $\max_{\ell,r,z}\opnorm{ \Pi_{X_{r,z}}D_E\iota_{Y_{\ell,z}}}$, we state a general proposition about the expected maximum operator norm of such sparsely applied operators with the maximum taken over bounded indexing sets. The proof of this proposition, which comprises all of \Cref{sec:bounding max opnorm with wigner}, crucially utilizes both the fact that $M$ is \ec and properties of random Rademacher matrices.

\begin{proposition}[Verbose version of \Cref{prop:UB on op norm of max T_lambda}]
    \label{prop:UB on op norm of max T_lambda-verbose}
    Take fixed integers $\nu\geq 0$ and $q\geq 1$ such that $\nu+q\leq D$. Let $\Lambda$ be a fixed indexing set such that $\abs{\Lambda}\leq n^\nu$, and take normalized random Rademacher matrices $(G_i)_{i\in[q]}$ with $k_i\geq\rho n$ and $D_i=G_i^\transpose G_i-\Id_{\R^{[n]}}$.
    For all indices $\lambda\in\Lambda$, take $X_\lambda,Y_\lambda\subseteq[n]^q$ to be $\eta$-sparse for some constant $\eta\in(0,1)$, fixed independently of the signs. Letting
    \begin{align*}
        T_\lambda&= \Pi_{X_\lambda} \left(\bigotimes_{j=1}^q D_j\right)\iota_{Y_\lambda}:\R^{Y_\lambda}\to \R^{X_\lambda}, 
    \end{align*}
    we have that for $n\geq N_{D,q,\eta}:=\max\left\{16\left(\frac{192D}{e}\right)^{24}, \eta^{-2/q}\right\}$,
    \begin{align*}
        \E[\max_{\lambda\in\Lambda}\opnorm{T_\lambda}]&\leq 2\left(\frac{4}{\rho}\right)^q\sqrt{\eta}.
    \end{align*}
    If $\Lambda=\varnothing$, then this expectation is just 0. 
\end{proposition}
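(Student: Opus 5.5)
Since the maximum of nonnegative numbers is bounded by its $\ell^{2m}$-type power sum, the plan is to apply the trace power method to each $T_\lambda$ and union over $\lambda$. For an integer $m$ to be fixed later, \Cref{fact:opnorm^2p is bounded by trace^p} together with Jensen gives
\begin{align*}
    \E\left[\max_{\lambda\in\Lambda}\opnorm{T_\lambda}\right]\leq\left(\sum_{\lambda\in\Lambda}\E\left[\trace\left((T_\lambda T_\lambda^\transpose)^m\right)\right]\right)^{1/(2m)}.
\end{align*}
The main lemma is then a uniform bound, for a single pair of fixed $\eta$-sparse sets $X,Y\subseteq[n]^q$, of the form
\begin{align*}
    \E_\sigma\left[\trace\left((TT^\transpose)^m\right)\right]\leq 2^q n^q\eta^{m+1}\left(16^q\rho^{-2q}\right)^m.
\end{align*}
Given this, the sum over $\lambda$ contributes a factor $n^{\nu}$. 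Taking the $(2m)$-th root leaves $(2^qn^{q+\nu})^{1/(2m)}\sqrt{\eta}\,(4/\rho)^q\,\eta^{1/(2m)}$. With $m=\lceil D\log n\rceil$, the factor $n^{(q+\nu)/(2m)}\leq n^{D/(2m)}$ is at most $e^{1/2}$, and $2^{q/(2m)}$ is close to $1$, so the total is at most $2(4/\rho)^q\sqrt\eta$. The condition $n\geq 16(192D/e)^{24}$ is presumably what makes $m$ small enough relative to $n$ for the combinatorial estimates below (e.g.\ $(4m)^{6}\leq$ a small power of $n$), and $n\geq\eta^{-2/q}$ is used when summing over deficiencies.

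To prove the trace bound, I expand as in the overview: write $D_j(a,b)=\Ind\{a\neq b\}k_j^{-1}\sum_r\sigma^{(j)}_{r,a}\sigma^{(j)}_{r,b}$, and expand the trace into a sum over $x\in X^m$, $y\in Y^m$, and $r_j\in[k_j]^{2m}$ of $\prod_j k_j^{-2m}\Phi_j$. Since the signs for different $j$ are independent, $\E[\prod_j\Phi_j]=\prod_j\E[\Phi_j]$. Each $\E[\Phi_j]\in\{0,1\}$, and it equals $1$ exactly when the associated closed bipartite walk on $[n]\sqcup[k_j]$ of length $4m$ is non-left-backtracking and traverses every edge an even number of times. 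Grouping by the equivalence classes $\cW_j\in\fW_m(\Delta_j)$, the Anderson--Zeitouni count gives $\abs{\fW_m(\Delta_j)}\leq16^m(4m)^{6\Delta_j}$. For the labelings, I choose right vertices freely, giving $k_j^{\wt_\sR(j)}$ options. For left vertices, I reveal $x_1,y_1,\dots$ in order: each vector with $f\geq1$ new coordinates has at most $\eta n^f$ completions by \Cref{fact:number of completions in sparse set}. Each vector has at most $q$ new coordinates, so at least $W_\sL/q$ vectors contribute a factor $\eta$, giving $n^{W_\sL}\eta^{W_\sL/q}$.

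Next I combine these with the weights $k_j^{-2m}$, using $\wt_\sL+\wt_\sR=2m+1-\Delta_j$ and $k_j\geq\rho n$. A walk whose edges are all doubled has a tree skeleton, and non-left-backtracking then forces many left vertices, roughly $\wt_\sL\geq m+1-\Delta$, while $\wt_\sR\leq m$. The per-coordinate contribution then looks like $n^{\wt_\sL}k^{\wt_\sR-2m}\leq\rho^{-2m}n^{1-\Delta}\cdot(\text{stuff})$. The sparsity factor gives $\eta^{W_\sL/q}\lesssim\eta^{m+1}$ when $\Delta=0$. Summing over $\Delta_j$, the geometric series in $(4m)^6/n$ (or a similar ratio) is at most $2$ per coordinate, giving $2^q$.

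The main obstacle I expect is precisely the lower bound $\wt_\sL(j)\geq m+1-\Delta_j$, i.e.\ that non-left-backtracking pushes the left weight up, together with getting the exponent of $\eta$ to be $m+1$ rather than something weaker, uniformly over deficient walks. For deficient walks, the loss in $\eta$-powers must be compensated by the $n^{-\Delta}$ gain, which is where $n\geq\eta^{-2/q}$ enters. I would first try to verify the left-weight bound on the bipartite tree skeleton: each right vertex of degree $d$ in the tree is entered and exited via distinct left vertices. If that fails, I would fall back to bounding $\eta^{W_\sL/q}\leq\eta^{(m+1-\Delta)/1}$ crudely and absorbing $\eta^{-\Delta}$ into $n^{-\Delta/2}$ using $n^{q/2}\eta\geq1$.
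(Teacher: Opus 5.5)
Your proposal is correct and follows the paper's proof essentially step for step: Jensen plus the trace-power bound summed over $\lambda$ with $m=\lceil D\log n\rceil$, the sign-monomial/bipartite-walk expansion, the Anderson--Zeitouni count $16^m(4m)^{6\Delta}$, the row-by-row sparsity count $n^{W_\sL}\eta^{W_\sL/q}$, and the geometric sum in $(4m)^6/(n\eta^{1/q})\leq 1/2$. The step you flag as the main obstacle needs no tree structure, so it covers deficient walks as well. Non-left-backtracking gives each right vertex at least two distinct left neighbours, and there are at most $2m$ distinct edges. Hence $\wt_\sR\leq m$, and therefore $\wt_\sL\geq m+1-\Delta$ for every deficiency, so your fallback is never needed.
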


Having bounded each of the components of \Cref{eq:UB on op norm of KIJE}, we can now prove \Cref{lem:UB on opnorm of KIJE}.

\begin{proof}[Proof of \Cref{lem:UB on opnorm of KIJE}]
    From \Cref{lem:UB on op norm of KIJE} we have that 
    \begin{align*}
        \opnorm{K_{I,J}^E}&\leq\opnorm{G_L}\opnorm{G_R}\max_{\substack{\ell\in [n]^L\\ r\in [n]^R\\ z\in [n]^Z}}\opnorm{ \Pi_{X_{r,z}}D_E\iota_{Y_{\ell,z}}}
    \end{align*}
    where $\varnothing\neq E\subseteq F=\setcomp{(I\cup J)}$ and $Z=F\setminus E$ so that we can decompose $[n]^{\setcomp{I}}=[n]^R\times[n]^E\times[n]^Z$ and $[n]^{\setcomp{J}}=[n]^L\times[n]^E\times[n]^Z$ to define
    \begin{align*}
        X_{r,z}&=\{a\in[n]^E:(r,a,z)\in\maxAIclosed\}\\
        Y_{\ell,z}&=\{b\in[n]^E:(\ell,b,z)\in\maxAclosed{J}\}.
    \end{align*}
    To apply \Cref{prop:UB on op norm of max T_lambda-verbose} with $q=\abs{E}$ and $\nu=\abs{L}+\abs{R}+\abs{Z}$, we note that number of triples $(\ell,r,z)$ is at most $n^{\abs{L}+\abs{R}+\abs{Z}}$ and that $\abs{L}+\abs{R}+\abs{Z}+\abs{E}=D-\abs{C}\leq D$.\footnote{In this application of \Cref{prop:UB on op norm of max T_lambda-verbose}, we are considering the $q$ generator matrices used the proposition to be exactly $(G_i)_{i\in E}$ and not necessarily the first $q$ generator matrices of $(G_i)_{i\in[D]}$.}Furthermore, since $\maxAIclosed$ and $\maxAclosed{J}$ are $2^D\varepsilon$-sparse by \Cref{cor:maxAIclosed is eps sparse}, $X_{r,z}$ and $Y_{\ell,z}$ are as well, so we can take $\eta=2^D\varepsilon$. Our choice $n\geq N_{D,\varepsilon}$ also gives $n\geq N_{D,q,\eta}$, since $\eta^{-2/q}\leq\max\{1,\eta^{-2}\}$ for $q\geq1$. The three factors in \Cref{eq:UB on op norm of KIJE} depend on disjoint collections of signs indexed by $L$, $R$, and $E$, respectively, since the sets $X_{r,z}$ and $Y_{\ell,z}$ depend only on the fixed set $M$. Thus, independence and \Cref{cor:op norm UB on G_S} give
    \begin{align*}
        \E_\sigma\left[\opnorm{K_{I,J}^E}\right]&\leq\E_\sigma\left[\opnorm{G_L}\right]\E_\sigma\left[\opnorm{G_R}\right]\E_\sigma\left[\max_{\ell,r,z}\opnorm{ \Pi_{X_{r,z}}D_E\iota_{Y_{\ell,z}}}\right]\\
        &\leq C_\rho^{\abs{L}+\abs{R}}\cdot 2\left(\frac{4}{\rho}\right)^{\abs{E}}\sqrt{2^D\varepsilon},
    \end{align*}
    as claimed.
\end{proof}

\subsection{Deferred proofs}\label{subsec:boundexpopK-deferred}

Finally, we present the deferred proofs of \Cref{lem:hatKIJ decomposition,lem:UB on op norm of KIJE} and \Cref{prop:UB on opnorm of rademacher matrix}. For convenience, we restate the relevant lemma and propositions.

\hatKIJdecomp*
\begin{proof}[Proof of \Cref{lem:hatKIJ decomposition}]
    We begin by formalizing what we mean by ordering the tensor coordinates in order as $(C,L,R,F)$. Since $I=C\sqcup L$, $\setcomp{I}=R\sqcup F$, $J=C\sqcup R$ and $\setcomp{J}=L\sqcup F$, we can decompose and order the ambient domain and codomain of $K_{I,J}$ as
    \begin{align}
        U_J\otimes\R^{[n]^{\setcomp{J}}}&= U_{C\sqcup R}\otimes \R^{[n]^{L\sqcup F}}\cong U_C\otimes\R^{[n]^L}\otimes U_R \otimes \R^{[n]^F}\label{eq:domain of hatKIJ}\\
        U_I\otimes \R^{[n]^{\setcomp{I}}}&= U_{C\sqcup L}\otimes \R^{[n]^{R\sqcup F}}\cong U_C\otimes U_L\otimes \R^{[n]^R}\otimes \R^{[n]^F},\label{eq:codomain of hatKIJ}
    \end{align}
    respectively. Recall that by convention, $U_\varnothing=\R$ and $\R^{[n]^\varnothing}=\R$. Moreover, ordering the tensor coordinates is equivalent to multiplying by a unitary matrix and doesn't change the operator norm.

    To show \Cref{eq:hatKIJ complete expression}, we will let $\widehat{K}_{I,J}$ be defined as the right hand side of \Cref{eq:hatKIJ complete expression} and prove that $K_{I,J}$ and $\widehat{K}_{I,J}$ agree on all of their plane-interaction blocks. Take
    \begin{align*}
        a&=(a_R,a_F)\in\maxAIclosed\subset[n]^R\times[n]^F\\
        b&=(b_L,b_F)\in\maxAclosed{J}\subseteq[n]^L\times[n]^F
    \end{align*}
    and consider the planes $P=\plane(I,a)$ and $Q=\plane(J,b)$, so $(K_{I,J})_{a,b}=K_{P,Q}$ and $(\widehat{K}_{I,J})_{a,b}=\widehat{K}_{P,Q}$. For a pure tensor $u_C\otimes u_R\in U_C\otimes U_R\cong U_J$, we apply the first part $\Id_{U_J}\otimes\iota_{\maxAclosed{J}}$ of $\widehat{K}_{I,J}$ and use \Cref{eq:domain of hatKIJ} to perform the identification
    \begin{align}
        (u_C\otimes u_R)\otimes e_b \mapsto u_C\otimes e_{b_L}\otimes u_R\otimes e_{b_F}. \label{eq:first operator of KIJ}
    \end{align}
    Applying the second operator of $\widehat{K}_{I,J}$ to \Cref{eq:first operator of KIJ} gives
    \begin{align*}
        &\left(\Id_{U_C}\otimes\ G_L\otimes G_R^\transpose\otimes\left(B_F-\Id_{\R^{[n]^F}}\right)\right)\left(u_C\otimes e_{b_L}\otimes u_R\otimes e_{b_F}\right)\\
        =&\ u_C\otimes g_L(b_L)\otimes G_R^\transpose u_R\otimes (B_F-\Id_{\R^{[n]^F}})e_{b_F}\\
        =&\ u_C\otimes g_L(b_L)\otimes \left(\sum_{r\in[n]^R}(G_R^\transpose u_R)(r)e_r\right)\otimes\left(\sum_{f\in[n]^F}\left(B_F(f,b_F)-\Ind\{f=b_F\}\right)e_f\right).
    \end{align*}
    Thus, after applying the third operator, we get
    \begin{align*}
        (\widehat{K}_{I,J})_{a,b}(u_C\otimes u_R)&=\widehat{K}_{P,Q}(u_C\otimes u_R)\\
        &=\left((G_R^\transpose u_R)(a_R)\right)\left(B_F(a_F,b_F)-\Ind\{a_F=b_F\}\right)\left(u_C\otimes g_L(b_L)\right).\numberthis\label{eq:hat KPQ expanded}
    \end{align*}
    We now observe what $K_{P,Q}$ does to $u_C\otimes u_R$ by computing
    \begin{align*}
        \cores_Q(u_C\otimes u_R)&=u_C\otimes g_L(b_L)\otimes u_R\otimes g_F(b_F)\\
        \cores_P^\transpose\cores_Q(u_C\otimes u_R)&=\langle g_R(a_R),u_R\rangle\langle g_F(a_F),g_F(b_F)\rangle (u_C\otimes g_L(b_L))\\
        &=\left((G_R^\transpose u_R)(a_R)\right)\left(B_F(a_F,b_F)\right) \left(u_C\otimes g_L(b_L)\right). 
    \end{align*}
    Recall though that if $P$ and $Q$ intersect, then $K_{P,Q}=0$ instead of $\cores_P^\transpose\cores_Q$. Since $P\cap Q\neq\varnothing$ if and only if $a_F=b_F$, then 
    \begin{align}
        B_F(a_F,b_F)-\Ind\{a_F=b_F\}=
        \begin{cases}
            B_F(a_F,b_F) & P\cap Q=\varnothing\\
            0 & P\cap Q\neq\varnothing
        \end{cases}\label{eq:B_F - Ind case decomposition}
    \end{align}
    Therefore, combining the expression for $\widehat{K}_{P,Q}$ from \Cref{eq:hat KPQ expanded} with \Cref{eq:B_F - Ind case decomposition} shows us that
    \begin{align*}
        \widehat{K}_{P,Q}&=
        \begin{cases}
            \cores_P^\transpose\cores_Q & P\cap Q=\varnothing\\
            0 & P\cap Q\neq\varnothing
        \end{cases}
        \\
        &=K_{P,Q}.
    \end{align*}
    Since this holds for any $a\in\maxAIclosed$ and $b\in\maxAclosed{J}$, we can conclude that $K_{I,J}=\widehat{K}_{I,J}$, as claimed in \Cref{eq:hatKIJ complete expression}.

    Finally, to derive \Cref{eq:hatKIJ sum expression}, we expand $B_F-\Id_{\R^{[n]^F}}$ as 
    \begin{align*}
        B_F-\Id_{\R^{[n]^F}}&=\bigotimes_{f\in F}B_f - \Id_{\R^{[n]^F}}\\
        &=\bigotimes_{f\in F}(D_f+\Id_{\R^{[n]^{\{f\}}}}) - \Id_{\R^{[n]^F}}\\
        &=\left[\sum_{E\subseteq F}\left(\bigotimes_{e\in E} D_e \otimes \bigotimes_{f\in F\setminus E} \Id_{\R^{[n]^{\{f\}}}}\right)\right]-\Id_{\R^{[n]^F}}\\
        &=\sum_{\varnothing\neq E\subseteq F}\left(D_E \otimes \Id_{\R^{[n]^{F\setminus E}}}\right).\numberthis\label{eq:BF minus Id RF as sum}
    \end{align*}
    Plugging \Cref{eq:BF minus Id RF as sum} into \Cref{eq:hatKIJ complete expression} then yields
    \begin{align*}
         K_{I,J}&=\left(\Id_{U_I}\otimes\Pi_{\maxAIclosed}\right)\left(\Id_{U_C}\otimes G_L\otimes G_R^\transpose\otimes\left(B_F-\Id_{\R^{[n]^F}}\right)\right)\left(\Id_{U_J}\otimes\iota_{\maxAclosed{J}}\right)\\
         &=\sum_{\varnothing\neq E\subseteq F}\left(\Id_{U_I}\otimes\Pi_{\maxAIclosed}\right)\left(\Id_{U_C}\otimes G_L\otimes G_R^\transpose\otimes D_E\otimes\Id_{\R^{[n]^{F\setminus E}}}\right)\left(\Id_{U_J}\otimes\iota_{\maxAclosed{J}}\right)\\
         &=\sum_{\varnothing\neq E\subseteq F} K_{I,J}^E,
    \end{align*}
    as claimed in \Cref{eq:hatKIJ sum expression}.
\end{proof}

\UBonNormKIJE*
\begin{proof}[Proof of \Cref{lem:UB on op norm of KIJE}]
    By definition of $X_{r,z}$ and $Y_{\ell,z}$, we can decompose $\maxAIclosed$ and $\maxAclosed{J}$ as
    \begin{align*}
        \maxAclosed{J} &= \bigsqcup_{\substack{\ell\in[n]^L \\ z\in[n]^Z}} \{(\ell,b,z):b\in Y_{\ell,z}\}\\
        \maxAIclosed &= \bigsqcup_{\substack{r\in[n]^R \\ z\in[n]^Z}} \{(r,a,z):a\in X_{r,z}\}.
    \end{align*}
    Therefore, we can decompose the domain $U_J\otimes\R^{\maxAclosed{J}}$ and the codomain $U_I\otimes\R^{\maxAIclosed}$ of $K_{I,J}^E$ as
    \begin{align}
        U_J\otimes\R^{\maxAclosed{J}}&\cong U_C\otimes U_R\otimes\R^{\maxAclosed{J}} \cong\bigoplus_{\substack{\ell\in[n]^L \\ z\in[n]^Z}} U_C\otimes U_R\otimes\R^{Y_{\ell,z}}\label{eq:UJ otimes RnAj decomposed}\\
        U_I\otimes\R^{\maxAIclosed}&\cong U_C\otimes U_L\otimes\R^{\maxAIclosed}\cong\bigoplus_{\substack{r\in[n]^R \\ z\in[n]^Z}} U_C\otimes U_L\otimes\R^{X_{r,z}}\label{eq:UI otimes RnAI decomposed}.
    \end{align}

    Using these decompositions, we'll write $K_{I,J}^E$ as the product of three operators that will make computing its norm in the form of \Cref{eq:UB on op norm of KIJE} immediate. The first operator applies $G_R^\transpose:U_R\to \R^{[n]^R}$ in every summand of the decomposition from \Cref{eq:UJ otimes RnAj decomposed}:
    \begin{align*}
        \Psi_R:=\bigoplus_{\substack{\ell\in[n]^L \\ z\in[n]^Z}} \left(\Id_{U_C}\otimes G_R^\transpose\otimes\Id_{\R^{Y_{\ell,z}}}\right):U_J\otimes\R^{\maxAclosed{J}}\to\bigoplus_{\substack{\ell\in[n]^L \\ r\in[n]^R \\ z\in[n]^Z}} U_C\otimes \R^{Y_{\ell,z}}.
    \end{align*}
    Here, to get the codomain as above, we use the identification
    \begin{align*}
        \R^{[n]^R}\otimes\R^{Y_{\ell,z}} &\cong \bigoplus_{r\in[n]^R} \R^{Y_{\ell,z}}\\
        e_r\otimes e_b &\mapsto[e_b]_r,
    \end{align*}
    where $[e_b]_r$ denotes placing $e_b$ in the $r$-th summand. 

    Next, we define our second operator in parts by taking $\ell\in[n]^L$, $r\in[n]^R$ and $z\in[n]^Z$ and letting
    \begin{align*}
        T_{\ell,r,z}:=\Pi_{X_{r,z}} D_E \iota_{Y_{\ell,z}}:\R^{Y_{\ell,z}}\to\R^{X_{r,z}},
    \end{align*}
    where $\iota_{Y_{\ell,z}}:\R^{Y_{\ell,z}}\to\R^{[n]^E}$ is inclusion by 0's (so padding in the coordinates $[n]^E\setminus Y_{\ell,z}$) and $\Pi_{X_{r,z}}:\R^{[n]^E}\to\R^{X_{r,z}}$ is projection. The direct sum of all $T_{\ell,r,z}$'s gives us our second operator
    \begin{align}
        T_E:=\bigoplus_{\substack{\ell\in[n]^L \\ r\in[n]^R \\ z\in[n]^Z}} \left(\Id_{U_C}\otimes T_{\ell,r,z}\right):\bigoplus_{\substack{\ell\in[n]^L \\ r\in[n]^R \\ z\in[n]^Z}} U_C\otimes \R^{Y_{\ell,z}}\to\bigoplus_{\substack{\ell\in[n]^L \\ r\in[n]^R \\ z\in[n]^Z}} U_C\otimes \R^{X_{r,z}}\cong\bigoplus_{\substack{r\in[n]^R \\ z\in[n]^Z}} U_C\otimes  \R^{[n]^L} \otimes \R^{X_{r,z}}.\label{eq:T_E def}
    \end{align}
    
    Finally, our third operator is similar to the first and applies $G_L:\R^{[n]^L}\to U_L $ to each summand in the codomain of $T_E$ for all pairs $(r,z)$:
    \begin{align*}
        \Psi_L:=\bigoplus_{\substack{r\in[n]^R \\ z\in[n]^Z}} \left(\Id_{U_C}\otimes G_L \otimes \Id_{\R^{X_{r,z}}}\right):\bigoplus_{\substack{r\in[n]^R \\ z\in[n]^Z}} U_C\otimes  \R^{[n]^L} \otimes \R^{X_{r,z}}\to U_I\otimes \R^{\maxAIclosed},
    \end{align*}
    where we get the desired codomain by the decomposition in \Cref{eq:UI otimes RnAI decomposed}. All together, our proposed factorization of $K_{I,J}^E$ looks like
    \begin{align}
        U_J\otimes\R^{\maxAclosed{J}}
        \xrightarrow{\Psi_R}
        \bigoplus_{\substack{\ell\in[n]^L \\ r\in[n]^R \\ z\in[n]^Z}} U_C\otimes \R^{Y_{\ell,z}}
        \xrightarrow[]{T_E}
        \bigoplus_{\substack{\ell\in[n]^L \\ r\in[n]^R \\ z\in[n]^Z}} U_C\otimes \R^{X_{r,z}}
        \xrightarrow[]{\Psi_L}
        U_I\otimes\R^{\maxAIclosed}.\label{eq:proposed KIJE decomposition}
    \end{align}

    All that remains is to verify that this factorization is correct. To do so, we observe what \Cref{eq:proposed KIJE decomposition} does to pure tensors. Take $u_C\in U_C$, $u_R\in U_R$, and $e_{\ell,b,z}\in\R^{\maxAclosed{J}}$, the indicator vector at $(\ell,b,z)\in\maxAclosed{J}$. We then denote $[v]_{\ell,r,z}$ to be the vector $v$ placed in the $(\ell,r,z)$ summand of the domain or codomain of $T_E$ as defined in \Cref{eq:T_E def} to form the pure tensor $u_C\otimes u_R\otimes e_{\ell,b,z}\in U_J\otimes\R^{\maxAclosed{J}}$ (as shown by the decomposition in \Cref{eq:UJ otimes RnAj decomposed}) to compute
    \begin{align*}
        \Psi_R(u_C\otimes u_R\otimes e_{\ell,b,z}) &= \sum_{r\in[n]^R}(G_R^\transpose u_R)(r)[u_C\otimes e_b]_{\ell,r,z}\\
        T_E \Psi_R(u_C\otimes u_R\otimes e_{\ell,b,z}) &= \sum_{\substack{r\in[n]^R \\ a\in X_{r,z}}} (G_R^\transpose u_R)(r) D_E(a,b) [u_C\otimes e_a]_{\ell,r,z}\\
        \Psi_L T_E \Psi_R(u_C\otimes u_R\otimes e_{\ell,b,z}) &= \sum_{\substack{r\in[n]^R \\ a\in X_{r,z}}} (G_R^\transpose u_R)(r) D_E(a,b) (u_C\otimes G_Le_\ell\otimes e_{r,a,z}).\numberthis\label{eq:Psi_L T_E Psi_R expanded}
    \end{align*}
    From here, it is straightforward to see that \Cref{eq:Psi_L T_E Psi_R expanded} is the same operation as $K_{I,J}^E$ as defined in \Cref{eq:hatKIJE expression}. Since pure tensors span the domain, we can conclude that $K_{I,J}^E=\Psi_L T_E \Psi_R$. A quick computation on this factorization yields \Cref{eq:UB on op norm of KIJE}:
    \begin{align*}
        \opnorm{K_{I,J}^E}&=\opnorm{\Psi_L T_E \Psi_R}\\
        &\leq\opnorm{\Psi_L}\opnorm{T_E}\opnorm{\Psi_R} &\text{By submultiplicativity}\\
        &\leq \opnorm{G_L}\opnorm{\bigoplus_{\ell,r,z}\Id_{U_C}\otimes T_{\ell,r,z}}\opnorm{G_R^\transpose}\\
        &\leq\opnorm{G_L}\left(\max_{\ell,r,z}\opnorm{T_{\ell,r,z}}\right)\opnorm{G_R}, &\text{By \Cref{fact:opnorm of direct sum is max of opnorms,fact:opnorm of tensor is prod of opnorms}}
    \end{align*}
    as desired.
\end{proof}

We will need two standard facts about Gaussian matrices to prove \Cref{prop:UB on opnorm of rademacher matrix}. The first allows us to symmetrize the norm of a mean-zero matrix using standard Gaussian random variables.
\begin{lemma}[Lemma 6.7.4 of \cite{vershynin_high-dimensional_2018}]\label{lem:symmetrization with gaussians}
    Let $X_1,\dots,X_N$ be independent mean-zero random vectors in a normed space. Let $g_1,\dots,g_N\sim\cN(0,1)$ be independent Gaussian random variables which are also independent of $X_i$. Then 
    \begin{align*}
        \E\norm{\sum_{i=1}^N X_i}\leq 3\E\norm{\sum_{i=1}^N g_i X_i}. 
    \end{align*}
\end{lemma}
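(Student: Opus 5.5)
This lemma is the standard Gaussian symmetrization inequality, quoted from Vershynin, and I would prove it in two moves: first symmetrize with an independent copy, then replace Rademacher signs by Gaussians. The constant $3$ has slack, since the argument below gives $2\sqrt{\pi/2}\approx 2.51$.

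\textbf{Step 1: symmetrization with an independent copy.} Let $X_1',\dots,X_N'$ be an independent copy of the $X_i$, and write $\E'$ for expectation over this copy only. Since each $X_i'$ has mean zero,
\begin{align*}
    \sum_i X_i = \sum_i \bigl(X_i - \E' X_i'\bigr).
\end{align*}
The norm is convex, so Jensen's inequality gives
\begin{align*}
    \E\norm{\textstyle\sum_i X_i}\leq \E\norm{\textstyle\sum_i (X_i-X_i')}.
\end{align*}
Each difference $X_i-X_i'$ is symmetric, and the differences are independent. So the law of $\sum_i (X_i-X_i')$ is unchanged if we multiply each term by an independent Rademacher sign $\varepsilon_i$. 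Applying the triangle inequality and the equality of laws of $X_i$ and $X_i'$ then yields
\begin{align*}
    \E\norm{\textstyle\sum_i X_i}\leq 2\,\E\norm{\textstyle\sum_i\varepsilon_i X_i}.
\end{align*}

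\textbf{Step 2: from Rademacher signs to Gaussians.} Write $g_i=\varepsilon_i\abs{g_i}$, where $\abs{g_i}$ is independent of $\varepsilon_i$ and $\E\abs{g_i}=\sqrt{2/\pi}$. Condition on the $X_i$ and the $\varepsilon_i$, and apply Jensen's inequality over the $\abs{g_i}$:
\begin{align*}
    \E\norm{\textstyle\sum_i g_iX_i}\geq \E\norm{\textstyle\sum_i \varepsilon_i\,\E\abs{g_i}\,X_i}=\sqrt{2/\pi}\;\E\norm{\textstyle\sum_i\varepsilon_iX_i}.
\end{align*}

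\textbf{Combining.} Steps 1 and 2 together give
\begin{align*}
    \E\norm{\textstyle\sum_i X_i}\leq 2\sqrt{\pi/2}\;\E\norm{\textstyle\sum_i g_iX_i}\leq 3\,\E\norm{\textstyle\sum_i g_iX_i},
\end{align*}
since $2\sqrt{\pi/2}\approx 2.51\leq 3$.

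The only care point is integrability. I assume $\E\norm{X_i}<\infty$, which is implicit in the statement and holds trivially in our application, where everything is bounded. This is needed for the conditional Jensen steps and for Fubini, which lets us integrate out the independent variables in the order used above.
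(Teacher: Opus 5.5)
Your proof is correct. Symmetrization with an independent copy gives a factor $2$, Jensen over the $\abs{g_i}$ gives a factor $\sqrt{\pi/2}$, and together they yield the constant $\sqrt{2\pi}<3$; this is the standard argument behind Lemma 6.7.4 of Vershynin, which the paper only cites without reproducing a proof, and your integrability caveat is harmless since the mean-zero hypothesis already presupposes $\E\norm{X_i}<\infty$ in the finite-dimensional setting where the lemma is applied.
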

The second bounds the norm of a random Gaussian matrix.
\begin{lemma}[Theorem 7.3.1 of \cite{vershynin_high-dimensional_2018}]\label{thm:norms of gaussian random matrices}
    Let $A$ be a $k\times n$ matrix with independent $\cN(0,1)$ entries. Then
    \begin{align*}
        \E\opnorm{A}&\leq\sqrt{k}+\sqrt{n}.
    \end{align*}
\end{lemma}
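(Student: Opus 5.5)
We will follow the standard Gaussian comparison (Sudakov--Fernique/Gordon) argument, which is how Theorem 7.3.1 of \cite{vershynin_high-dimensional_2018} is proved. Write the operator norm as a supremum of a Gaussian process,
\begin{align*}
    \opnorm{A}=\max_{u\in S^{n-1},\,v\in S^{k-1}}\langle Au,v\rangle,
\end{align*}
and set $X_{u,v}:=\langle Au,v\rangle$. This is a centered Gaussian process indexed by $T=S^{n-1}\times S^{k-1}$. The plan is to dominate it by the simpler process
\begin{align*}
    Y_{u,v}:=\langle g,u\rangle+\langle h,v\rangle,
\end{align*}
where $g\sim\cN(0,\Id_n)$ and $h\sim\cN(0,\Id_k)$ are independent. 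The supremum of $Y$ is explicit: $\sup_T Y_{u,v}=\norm{g}+\norm{h}$.

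The key step is the increment comparison $\E(X_{u,v}-X_{w,z})^2\leq\E(Y_{u,v}-Y_{w,z})^2$ for all $(u,v),(w,z)\in T$. For the left side, the entries of $A$ are i.i.d.\ standard Gaussians, so
\begin{align*}
    \E(X_{u,v}-X_{w,z})^2=\norm{vu^\transpose-zw^\transpose}_F^2=2-2\langle u,w\rangle\langle v,z\rangle,
\end{align*}
using that all four vectors are unit vectors. For the right side, by independence of $g$ and $h$,
\begin{align*}
    \E(Y_{u,v}-Y_{w,z})^2=\norm{u-w}^2+\norm{v-z}^2=4-2\langle u,w\rangle-2\langle v,z\rangle.
\end{align*}
Their difference is
\begin{align*}
    \E(Y_{u,v}-Y_{w,z})^2-\E(X_{u,v}-X_{w,z})^2=2(1-\langle u,w\rangle)(1-\langle v,z\rangle),
\end{align*}
which is nonnegative by Cauchy--Schwarz. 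The Sudakov--Fernique inequality then gives $\E\sup_T X\leq\E\sup_T Y$.

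To conclude, we use Jensen's inequality:
\begin{align*}
    \E\opnorm{A}\leq\E\norm{g}+\E\norm{h}\leq\sqrt{\E\norm{g}^2}+\sqrt{\E\norm{h}^2}=\sqrt{n}+\sqrt{k}.
\end{align*}

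The main technical point is applying Sudakov--Fernique, which is usually stated for finite index sets, to the continuous set $T$. We will handle this by taking finite subsets $T_N\uparrow$ that are dense in $T$. Both processes have continuous sample paths, so $\sup_{T_N}X\uparrow\sup_T X$ almost surely. Monotone convergence, which applies because $\sup_{T_N}X\geq X_{t_0}$ is integrable, then transfers the finite inequality to the full supremum. The increment computation itself is the elementary heart of the proof and poses no difficulty.
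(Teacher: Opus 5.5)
Your proof is correct. The paper does not prove this lemma; it cites Theorem 7.3.1 of \cite{vershynin_high-dimensional_2018}, and your argument is essentially the proof given there: a Sudakov--Fernique comparison of $\langle Au,v\rangle$ with $\langle g,u\rangle+\langle h,v\rangle$ using the same increment inequality, followed by $\E\norm{g}\leq\sqrt{n}$. Your dense-finite-subset and monotone-convergence step is a valid way to handle the infinite index set.
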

We can now prove \Cref{prop:UB on opnorm of rademacher matrix}.
\UbonNormRademacherMatrix*
\begin{proof}[Proof of \Cref{prop:UB on opnorm of rademacher matrix}]
    For a row $r\in[k]$ and column $c\in[n]$, let $X_{r,c}\in\R^{k\times n}$ be the matrix of all 0's except with a 1 in the $(r,c)$ position. Taking $\widetilde{G}$ to be the un-normalized version of $G$, we have
    \begin{align*}
        \widetilde{G}:=\sqrt{k}G=\sum_{r\in[k],c\in[n]}\sigma_{r,c}X_{r,c}.
    \end{align*}
    Applying \Cref{lem:symmetrization with gaussians} to the independent mean-zero matrices $\sigma_{r,c}X_{r,c}$, we have that
    \begin{align}
        \E_{\sigma}\left[\opnorm{\widetilde{G}}\right]\leq3\E_{\sigma,\gamma}\left[\opnorm{\sum_{r\in[k],c\in[n]} \gamma_{r,c}\sigma_{r,c} X_{r,c}}\right]\label{eq:versh 6.7.4}
    \end{align}
    where the $\gamma_{r,c}\sim\cN(0,1)$ are jointly independent and independent of the signs. Thus, the products $\gamma_{r,c}\sigma_{r,c}$ are jointly independent standard Gaussians, so we can use \Cref{thm:norms of gaussian random matrices} to get
    \begin{align}
        \E_{\sigma,\gamma}\left[\opnorm{\sum_{r\in[k],c\in[n]} \gamma_{r,c}\sigma_{r,c} X_{r,c}}\right]\leq\sqrt{k}+\sqrt{n}. \label{eq: versh 7.3.1}
    \end{align}
    Putting \Cref{eq:versh 6.7.4,eq: versh 7.3.1} together yields
    \begin{align*}
        \E_{\sigma}[\opnorm{G}]&=\frac{1}{\sqrt{k}}\E_\sigma\left[\opnorm{\widetilde{G}}\right]\\
        &\leq\frac{1}{\sqrt{k}}3\left(\sqrt{k}+\sqrt{n}\right)\\
        &\leq 3\left(1+\rho^{-1/2}\right),
    \end{align*}
    as desired.
\end{proof}

\dobib

\section{Bounding the Operator Norm of Sparse Tensor Gram Matrices}\label{sec:bounding max opnorm with wigner}

In this section, we will prove the standalone proposition regarding the expected max operator norm over a sufficiently small family of sparse operators formed by the tensor of operators with random Rademacher entries. Since the main result of this section should be thought of as an independent result on the operator norm of such operators, we will redefine the objects we're working on. 
\paragraph{Setup} Throughout this section, we will use the following notation. Let $D>0$ be a fixed integer, $q\geq1$ and $\nu\geq0$ be fixed integers such that $q+\nu\leq D$, $\rho\in(0,1)$ be constant, and for any $n\in\N$ and $i\in[D]$ let $k_i\in[n]$ be such that $k_i\geq\rho n$. With these parameters, we define $q$ random matrices $G_i\in\R^{k_i\times n}$ as $(G_i)_{r,a}=k_i^{-1/2}\sigma^{(i)}_{r,a}$ for $r\in[k_i]$ and $a\in[n]$ where all $\sigma^{(i)}_{r,a}\sim\{-1,1\}$ are sampled independently. For $i\in[q]$, we let $D_i=G_i^\transpose G_i-\Id_{\R^{[n]}}$. Lastly, for a set $X\subseteq[n]^q$, we let $\Pi_X:\R^{[n]^q}\to\R^{X}$ be the standard projection map and $\iota_X:\R^X\to\R^{[n]^q}$ be the standard inclusion map by 0's.

\begin{restatable}[\Cref{prop:UB on op norm of max T_lambda-verbose} restated]{proposition}{maxopUB}
\label{prop:UB on op norm of max T_lambda}
    Take fixed integers $\nu\geq 0$ and $q\geq 1$ such that $\nu+q\leq D$. Let $\Lambda$ be a fixed indexing set such that $\abs{\Lambda}\leq n^\nu$.
    For all indices $\lambda\in\Lambda$, take $X_\lambda,Y_\lambda\subseteq[n]^q$ to be $\eta$-sparse for some constant $\eta\in(0,1)$, fixed independently of the signs. Letting
    \begin{align*}
        T_\lambda&= \Pi_{X_\lambda} \left(\bigotimes_{j=1}^q D_j\right)\iota_{Y_\lambda}:\R^{Y_\lambda}\to \R^{X_\lambda}, 
    \end{align*}
    we have that for $n\geq N_{D,q,\eta}:=\max\left\{16\left(\frac{192D}{e}\right)^{24}, \eta^{-2/q}\right\}$,
    \begin{align*}
        \E[\max_{\lambda\in\Lambda}\opnorm{T_\lambda}]&\leq 2\left(\frac{4}{\rho}\right)^q\sqrt{\eta}.
    \end{align*}
    If $\Lambda=\varnothing$, then this expectation is just 0. 
\end{restatable}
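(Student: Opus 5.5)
We follow the trace-power route sketched in the overview. Fix an even integer parameter $m$, to be set to about $D\log n$. For each $\lambda$, \Cref{fact:opnorm^2p is bounded by trace^p} gives $\opnorm{T_\lambda}^{2m}\leq\trace((T_\lambda T_\lambda^\transpose)^m)$. Hence, by Jensen and a union bound over $\Lambda$,
\begin{align*}
    \E\left[\max_{\lambda}\opnorm{T_\lambda}\right]\leq\left(\sum_{\lambda\in\Lambda}\E\,\trace((T_\lambda T_\lambda^\transpose)^m)\right)^{1/(2m)}.
\end{align*}
This reduces the problem to bounding a single expected trace uniformly in $\lambda$. The target bound is
\begin{align*}
    \E\,\trace((T T^\transpose)^m)\leq 2^q n^q\eta^{m+1}(16^q\rho^{-2q})^m.
\end{align*}
Given this, the right-hand side above is at most $(n^{\nu+q}2^q\eta^{m+1})^{1/(2m)}\,4^q\rho^{-q}$. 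Taking $m=\lceil D\log n\rceil$ makes $n^{D/(2m)}=O(1)$ and $2^{q/(2m)}\eta^{1/(2m)}\le$ a small constant. What remains is $\sqrt\eta$ times a constant at most $2$ (absorbing $e^{1/2}$-type factors and adjusting the base $4$).

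\textbf{Expanding the trace.} Write the entries as $D_j(a,b)=\Ind\{a\ne b\}k_j^{-1}\sum_r\sigma^{(j)}_{r,a}\sigma^{(j)}_{r,b}$. Expanding as in \Cref{eq:overview-trace power monomial expansion} gives a sum over $x\in X^m$, $y\in Y^m$, and $r_j\in[k_j]^{2m}$ of $\prod_j k_j^{-2m}\Phi_j$. Independence across $j$ lets the expectation factor as $\prod_j\E\Phi_j$. Each factor is $1$ if every sign appears an even number of times and the indicators hold, and $0$ otherwise. For each $j$, record the closed bipartite walk $x_{1,j},r_{j,1},y_{1,j},\dots,r_{j,2m},x_{1,j}$. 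A contributing walk is closed, uses each edge an even number of times, and is non-left-backtracking; in particular it has $\wt(j)\le 2m+1$. Group the walks by deficiency $\Delta_j=2m+1-\wt(j)$ and by equivalence class. We will invoke the Anderson--Zeitouni count in the form $\abs{\fW_m(\Delta)}\le16^m(4m)^{6\Delta}$, i.e.\ the bipartite corollary the paper states later.

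\textbf{Counting labelings via sparsity.} Fix classes $\cW_1,\dots,\cW_q$. Right labels cost at most $\prod_j k_j^{\wt_\sR(j)}$. Left labels are chosen sequentially: $x_1,y_1,\dots,x_m,y_m$. Each vector has some number $f$ of coordinates not already forced by the shape. If $f<q$, \Cref{fact:number of completions in sparse set} gives at most $\eta n^f$ completions inside $X$ or $Y$. If $f=q$, use the cardinality bound $\abs{X}\le\eta n^q$ from \Cref{lem:eps closed and line free gives cardinality bound}. If $f=0$, the count is $1$. Thus every vector introducing new left coordinates costs a factor $\eta$. The key combinatorial input is that non-backtracking plus evenness forces many vectors to be fresh. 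A crude bound gives at least $W_\sL/q$ fresh vectors, and this yields $n^{W_\sL}\eta^{W_\sL/q}$. However, the target needs $\eta^{m+1}$, so we must do better.

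\textbf{Main obstacle.} The crux is reconciling the $\eta$ exponent with the deficiency sum. With zero deficiency the walk is a tree-like double traversal. There $\wt_\sL(j)=m+1$ for each $j$, and each vector $x_t$ or $y_t$ with $t\ge2$ that is fresh in some coordinate gives at least $m+1$ fresh vectors. The $k_j^{-2m}$ normalization against $k_j^{\wt_\sR}n^{\wt_\sL}$ with $\wt_\sR\le m$ then leaves $(n/k_j)^{m}\cdot n\le\rho^{-m}n$. For positive deficiency, each unit of $\Delta_j$ loses one factor of $n$ (or $k_j$), i.e.\ a gain of $1/(\rho n)$. This gain must pay both for the $(4m)^{6\Delta}$ class count and for any loss in the $\eta$ exponent. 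With $m=O(\log n)$, $(4m)^6/(\rho n)$ is tiny once $n\ge16(192D/e)^{24}$, so the sum over deficiencies is a geometric series bounded by $2^q$. What I expect to be delicate is proving that the number of fresh vectors is at least $m+1-\sum_j\Delta_j$ in a form robust across the $q$ coordinates. My first attempt is to take the coordinate $j$ of minimal deficiency and argue that in its walk, at least $\wt_\sL(j)\ge m+1-\Delta_j$ left steps visit new vertices, each in a distinct vector. Any lost $\eta$ factors are then charged to the $1/(\rho n)$ gains, using $\eta\ge n^{-q/2}$ (guaranteed by $n\ge\eta^{-2/q}$) so that $\eta^{-1}\le n^{q/2}$ is dominated.
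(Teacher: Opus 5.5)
Your route matches the paper's: Jensen plus a union bound over $\Lambda$, trace-power expansion into sign monomials, grouping by bipartite class and deficiency, the Anderson--Zeitouni count, and row-by-row labeling via \Cref{fact:number of completions in sparse set}. Your target trace bound is the one in \Cref{lem:UB on expectation of trace of power}.

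The gap is in the step you call the main obstacle. There you dismiss the ``crude'' bound $\eta^{W_\sL/q}$ as insufficient. Instead you aim for at least $m+1-\sum_j\Delta_j$ fresh rows and charge the lost $\eta$'s to the $1/n$ gains. That accounting loses a full factor $\eta^{-1}$ per unit of total deficiency $\Delta=\sum_j\Delta_j$, but each unit of deficiency only gains $n^{-1}$ (you have $n^{W_\sL+W_\sR-2mq}\le n^{q-\Delta}$). Under the stated threshold $n\ge\eta^{-2/q}$, $\eta$ may be as small as $n^{-q/2}$. So for $q\ge2$ the per-unit ratio $(4m)^6/(n\eta)$ can be at least $(4m)^6$, and the sum over deficiencies blows up. In particular, $\eta^{-1}\le n^{q/2}$ is \emph{not} dominated by a single factor of $n$.

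What you need is a loss of only $\eta^{-1/q}$ per unit of deficiency, and the crude bound gives exactly that. Since $\wt_\sL(j)\ge m+1-\Delta_j$, you get $W_\sL\ge q(m+1)-\Delta$, hence $\eta^{W_\sL/q}\le\eta^{m+1}\eta^{-\Delta/q}$. Each tuple of classes then contributes at most $\rho^{-2mq}n^q\eta^{m+1}(n\eta^{1/q})^{-\Delta}$. Because $\eta^{1/q}\ge n^{-1/2}$, the per-unit ratio satisfies $(4m)^6/(n\eta^{1/q})\le(4m)^6n^{-1/2}\le 1/2$. Summing the geometric series in each $\Delta_j$ gives the factor $2^q$; this is the paper's argument.

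Your minimal-deficiency-coordinate idea also works, provided you keep the minimum rather than weakening it to the sum. The first-appearance rows in any single column are distinct fresh rows. So the number of fresh rows is at least $\max_j\wt_\sL(j)\ge m+1-\min_j\Delta_j\ge m+1-\Delta/q$.

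Two smaller points. First, no ``adjusting the base $4$'' is needed, nor would it prove the stated constant: $(16^q\rho^{-2q})^{1/2}=(4/\rho)^q$ exactly, $\eta^{(m+1)/(2m)}\le\sqrt\eta$, and $n^{(\nu+q)/(2m)}2^{q/(2m)}\le e^{1/2}2^{1/(2\log n)}<2$ for $n\ge16$. Second, at zero deficiency you only have $\wt_\sL(j)\ge m+1$, not equality, since right vertices can have degree larger than $2$.
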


\subsection{Derivation of the sign monomials}
Our proof of \Cref{prop:UB on op norm of max T_lambda} will rely on the trace power method, so we begin by deriving an explicit expression for the trace of the power of these $T_\lambda$'s, which we will eventually decompose into monomials of the signs $\sigma^{(j)}_{r,a}$.
\begin{lemma}\label{lem:trace of power of TT^transpose}
    For $X,Y\subseteq[n]^q$ and $D_1,\dots,D_q$ as above and 
    \begin{align*}
        T:=T_{X,Y}:=\Pi_{X}\left(\bigotimes_{j=1}^q D_j\right)\iota_Y : \R^Y\to \R^X,
    \end{align*}
    we have that for all integers $p\geq 1$:
    \begin{align*}
        \trace\left((TT^\transpose)^p\right)=\sum_{\substack{x_1,\dots,x_p\in X\\y_1,\dots,y_p\in Y}}\prod_{j=1}^q\prod_{t=1}^p D_j(x_{t,j},y_{t,j}) D_j(y_{t,j},x_{t+1,j})
    \end{align*}
    where we define $x_{p+1}:=x_1$. 
\end{lemma}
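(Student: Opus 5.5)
This is a direct expansion of the trace via matrix entries. The first step is to write the entries of $T$ and $T^\transpose$ explicitly. Since $\Pi_X$ keeps only the rows indexed by $X$ and $\iota_Y$ keeps only the columns indexed by $Y$, for $x\in X$ and $y\in Y$ we have
\begin{align*}
    T(x,y)=\Big(\bigotimes_{j=1}^q D_j\Big)(x,y)=\prod_{j=1}^q D_j(x_j,y_j),
\end{align*}
using the standard formula for the entries of a Kronecker product indexed by tuples. Moreover, each $D_j=G_j^\transpose G_j-\Id$ is symmetric, so $T^\transpose(y,x')=T(x',y)=\prod_{j}D_j(x'_j,y_j)=\prod_j D_j(y_j,x'_j)$.

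Next I will expand $\trace((TT^\transpose)^p)$ as a sum over closed index sequences. The product $(TT^\transpose)^p$ alternates $p$ times between a factor of $T$, mapping $Y\to X$ indices, and a factor of $T^\transpose$, mapping $X\to Y$. The trace is therefore
\begin{align*}
    \sum_{x_1,\dots,x_p\in X}\ \sum_{y_1,\dots,y_p\in Y}\ \prod_{t=1}^p T(x_t,y_t)\,T^\transpose(y_t,x_{t+1}),
\end{align*}
where $x_{p+1}=x_1$ enforces the closing condition of the trace. The $X$-indices range over $X$ and the $Y$-indices over $Y$ because those are the row and column index sets of $T$.

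Finally, I will substitute the entry formulas and exchange the finite products over $t$ and $j$. This gives exactly $\prod_{j=1}^q\prod_{t=1}^p D_j(x_{t,j},y_{t,j})D_j(y_{t,j},x_{t+1,j})$, which is the claimed identity. There is no real obstacle here. The only points that need care are using the symmetry of $D_j$ to write the $T^\transpose$ entries in the stated orientation, and checking that the coordinate conventions for tensor indexing match the identification of $\R^{[n]^q}$ with $\bigotimes_j\R^n$.
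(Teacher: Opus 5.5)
Your proposal is correct and follows essentially the same route as the paper. You compute $T(x,y)=\prod_j D_j(x_j,y_j)$, use the symmetry of each $D_j$ to orient the $T^\transpose$ entries, and expand the trace as a sum over closed alternating index sequences. The only cosmetic difference is that the paper first writes out $(TT^\transpose)(x,x')$ and then expands the trace of its $p$-th power, whereas you expand the alternating product in one step.
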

\begin{proof}
    For $x\in X$ and $y\in Y$ we have $T(x,y)=\prod_{j=1}^qD_j(x_j,y_j)$. Using the fact that each $D_j$ is symmetric, we get that for $x,x'\in X$,
    \begin{align*}
        (TT^\transpose)(x,x')&=\sum_{y\in Y} T(x,y)T(x',y)\\
        &=\sum_{y\in Y}\prod_{j=1}^q D_j(x_j,y_j)D_j(y_j,x'_j).\numberthis\label{eq:TT^transpose expanded}
    \end{align*}
    Thus, taking the trace of $(TT^\transpose)^p$ yields
    \begin{align*}
        \trace\left((TT^\transpose)^p\right)&=\sum_{x_1,\dots,x_p\in X}\prod_{t=1}^p(TT^\transpose)(x_t,x_{t+1})\\
        &=\sum_{x_1,\dots,x_p\in X}\prod_{t=1}^p\sum_{y\in Y}\prod_{j=1}^q D_j (x_{t,j},y_j)D_j(y_j,x_{t+1,j}) &\text{By \Cref{eq:TT^transpose expanded}}\\
        &=\sum_{\substack{x_1,\dots,x_p\in X\\y_1,\dots,y_p\in Y}}\prod_{j=1}^q\prod_{t=1}^p D_j(x_{t,j},y_{t,j}) D_j(y_{t,j},x_{t+1,j}),
    \end{align*}
    as claimed.
\end{proof}

Expanding out the result of \Cref{lem:trace of power of TT^transpose} to contain the individual signs $\sigma_{r,a}^{(j)}$ will let us eventually count how many terms in this sum of products could possibly be 1. 
\begin{corollary}\label{cor:tr(TT^transpose) as sum of prod of monomial signs}
    With the setup of \Cref{lem:trace of power of TT^transpose}, 
    \begin{align*}
        \trace\left((TT^\transpose)^p\right)=\sum_{\substack{x_1,\dots,x_p\in X\\y_1,\dots,y_p\in Y}}\sum_{\substack{r_j\in[k_j]^{2p}\\j\in[q]}}\left(\prod_{j=1}^q k_j^{-2p}\right)\left(\prod_{j=1}^q\Phi_j(x,y,r_j)\right),
    \end{align*}
    where we define the monomials of signs
    \begin{align}
        \Phi_j(x,y,r_j)=\prod_{t=1}^p\left(\sigma^{(j)}_{r_{j,2t-1},x_{t,j}}\cdot\sigma^{(j)}_{r_{j,2t-1},y_{t,j}}\cdot\sigma^{(j)}_{r_{j,2t},y_{t,j}}\cdot\sigma^{(j)}_{r_{j,2t},x_{t+1,j}}\cdot\Ind\{x_{t,j}\neq y_{t,j}\text{ and } y_{t,j}\neq x_{t+1,j}\}\right).\label{def:Phi_j sign monomial}
    \end{align}
\end{corollary}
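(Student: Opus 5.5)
This follows by substituting the explicit entries of each $D_j$ into the formula of \Cref{lem:trace of power of TT^transpose} and expanding the resulting products. The only inputs are the entry formula for $D_j$ and distributivity.

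\textbf{Step 1: write each factor as an indicator times a sum of sign products.} From the definition $D_j=G_j^\transpose G_j-\Id$, for every pair $a,b\in[n]$ we have
\begin{align*}
    D_j(a,b)=\Ind\{a\neq b\}\,k_j^{-1}\sum_{r=1}^{k_j}\sigma^{(j)}_{r,a}\sigma^{(j)}_{r,b}.
\end{align*}
This holds uniformly in $a,b$. When $a\neq b$ it is exactly the off-diagonal formula. When $a=b$ both sides vanish, since $D_j(a,a)=\norm{g_j(a)}^2-1=0$ by the column normalization.

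\textbf{Step 2: expand the product over $t$.} Fix $j\in[q]$ and a choice of $x_1,\dots,x_p\in X$ and $y_1,\dots,y_p\in Y$. The product
\begin{align*}
    \prod_{t=1}^p D_j(x_{t,j},y_{t,j})\,D_j(y_{t,j},x_{t+1,j})
\end{align*}
has $2p$ factors, and by Step 1 each factor is a sum over one row index in $[k_j]$.
\begin{itemize}
    \item Label the index used in the factor $D_j(x_{t,j},y_{t,j})$ by $r_{j,2t-1}$.
    \item Label the index used in the factor $D_j(y_{t,j},x_{t+1,j})$ by $r_{j,2t}$.
\end{itemize}
Distributing the product over these $2p$ sums gives a single sum over $r_j\in[k_j]^{2p}$. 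Each factor contributes a normalization $k_j^{-1}$, so the total normalization is $k_j^{-2p}$. The summand is the product of the $4p$ signs together with the $2p$ indicators $\Ind\{x_{t,j}\neq y_{t,j}\}$ and $\Ind\{y_{t,j}\neq x_{t+1,j}\}$. These indicators do not depend on $r_j$, so they can be moved inside the summand. The summand is then precisely $\Phi_j(x,y,r_j)$ as written in \eqref{def:Phi_j sign monomial}, with each pair of indicators for a given $t$ merged into one indicator of a conjunction.

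\textbf{Step 3: combine over $j$.} Take the product over $j\in[q]$ and distribute the product of the $q$ independent sums over $r_1,\dots,r_q$ into one sum over all tuples $(r_j)_{j\in[q]}$. Substituting this into \Cref{lem:trace of power of TT^transpose} gives the claimed identity.

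No step is a real obstacle; this is pure bookkeeping. The only point that needs care is that the indices $r_{j,2t-1}$ and $r_{j,2t}$ are assigned consistently to the factor pairs $(x_t,y_t)$ and $(y_t,x_{t+1})$, including the wrap-around convention $x_{p+1}=x_1$.
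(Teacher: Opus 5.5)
Your proposal is correct and follows the paper's proof essentially verbatim: substitute the entry formula $D_j(a,b)=\Ind\{a\neq b\}\,k_j^{-1}\sum_{r}\sigma^{(j)}_{r,a}\sigma^{(j)}_{r,b}$ into \Cref{lem:trace of power of TT^transpose}, distribute the $2p$ sums for each $j$ with the labeling $r_{j,2t-1},r_{j,2t}$, and then combine over $j$. Your explicit check that the diagonal entries vanish because the columns are normalized is a detail the paper leaves implicit.
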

\begin{proof}
    We use the definition of each $D_j$ to directly compute
    \begin{align*}
        &\ \prod_{t=1}^p D_j(x_{t,j},y_{t,j}) D_j(y_{t,j},x_{t+1,j})\\
        =&\ \prod_{t=1}^p\left[\left(\frac{1}{k_j}\sum_{r=1}^{k_j}\sigma^{(j)}_{r,x_{t,j}}\cdot\sigma^{(j)}_{r,y_{t,j}}\cdot\Ind\{x_{t,j}\neq y_{t,j}\}\right)\left(\frac{1}{k_j}\sum_{r=1}^{k_j}\sigma^{(j)}_{r,y_{t,j}}\cdot\sigma^{(j)}_{r,x_{t+1,j}}\cdot\Ind\{y_{t,j}\neq x_{t+1,j}\}\right)\right]\\
        =&\ k_j^{-2p}\sum_{r_{j,1},\dots,r_{j,2p}\in[k_j]}\prod_{t=1}^p\left(\sigma^{(j)}_{r_{j,2t-1},x_{t,j}}\cdot\sigma^{(j)}_{r_{j,2t-1},y_{t,j}}\cdot\sigma^{(j)}_{r_{j,2t},y_{t,j}}\cdot\sigma^{(j)}_{r_{j,2t},x_{t+1,j}}\cdot\Ind\{x_{t,j}\neq y_{t,j}\text{ and } y_{t,j}\neq x_{t+1,j}\}\right)\\
        =&\ k_j^{-2p}\sum_{r_{j,1},\dots,r_{j,2p}\in[k_j]}\Phi_j(x,y,r_j).\numberthis\label{eq:Phi_j derivation}
    \end{align*}
    Plugging in \Cref{eq:Phi_j derivation} to the result of \Cref{lem:trace of power of TT^transpose} gives us the desired claim.
\end{proof}
\subsection{Graph view of the sign monomials}
With the decomposition of $\trace\left((TT^\transpose)\right)$ into our sign monomials from \Cref{def:Phi_j sign monomial}, the next natural step is to bound how many of these monomials could possibly have non-zero expectation. Our main tool to achieve this is to view the list of indices $(x,y,r_j)$ as a walk along a graph, specifically a weak Wigner word from \cite{wigner_characteristic_1955}, and use the tools from \cite{furedi_eigenvalues_1981,anderson_clt_2006} to bound how many monomials could have expectation 1. Overall, the idea is to remove all monomials with expectation 0, split the remaining monomials into equivalence classes, bound the size of each equivalence class using the sparsity of our indexing sets, and finally bound the number of equivalence classes using bounds on the number of equivalence classes of weak Wigner words from \cite{anderson_clt_2006}.

To motivate this connection to weak Wigner words, we begin by repackaging the information contained in $(x,y,r_j)$ into a single tuple which we will interpret as a walk along a bipartite graph.
\begin{definition}\label{def:w_j from x y and r_j}
    For a fixed $j\in[q]$ and $x\in X^p$, $y\in Y^p$, and $r_j\in[k_j]^{2p}$, define the string $w_j=(w_{j,0},\dots,w_{j,4p})$ as
    \begin{align*}
        w_{j,4t-4}&=(\sL,x_{t,j}) &w_{j,4t-3}&=(\sR,r_{j,2t-1})\\
        w_{j,4t-2}&=(\sL,y_{t,j})  &w_{j,4t-1}&=(\sR,r_{j,2t})\\
        w_{j,4t}&=(\sL, x_{t+1,j})
    \end{align*}
    for all $t\in[p]$ where $\sL$ and $\sR$ are formal symbols indicating ``left'' and ``right''. We will view this as a walk on the bipartite graph $\Gamma_{w_j}=(\sV=\sV_\sL\sqcup\sV_\sR,\sE_j)$ where $\sV_\sL=\{w_{j,4t-4},w_{j,4t-2},w_{j,4t}:t\in[p]\}\subseteq \{\sL\}\times[n]$ and $\sV_\sR=\{w_{j,4t-3},w_{j,4t-1}:t\in[p]\}\subseteq\{\sR\}\times[k_j]$, and the distinct undirected edges are $\sE_j=\{\{w_{j,t},w_{j,t+1}\}:0\leq t<4p\}$.
\end{definition}
\begin{remark}\label{rmk:xyr same as w1-wq}
    It's useful to observe that, via this construction, there is a one-to-one correspondence between the $q$-tuple $(w_1,\dots,w_q)$ and the list $(x,y,r)$ for $r\in\prod_{j=1}^q[k_j]^{2p}$. That is, these are just different ways of indexing the same $q\times 4p$ elements, which we will more clearly see in the proof of \Cref{lem:UB on size of product of equivalence classes}.
\end{remark}
Next, we make a simple observation about when $\Phi_j$ is non-zero.
\begin{lemma}\label{lem:Phi non-zero iff non-left-backtracking and closed walk}
    With the setup from \Cref{def:w_j from x y and r_j}, $\Phi_j(x,y,r_j)$ is non-zero if and only if the walk $w_j$ is a non-left-backtracking closed walk.\footnote{By non-left-backtracking, we mean that at each visit to a right vertex, the preceding and following left vertices differ. Backtracking at left vertices is allowed.}
\end{lemma}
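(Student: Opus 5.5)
The plan is to read off the lemma directly from the definition of $\Phi_j$ in \Cref{def:Phi_j sign monomial}, since $\Phi_j$ is a product of signs (each in $\{-1,1\}$, hence never zero) times indicator functions. I will split the statement into its two parts: closedness and non-left-backtracking. The first is unconditional, and the second is equivalent to all indicators being $1$.

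For closedness, I will note from \Cref{def:w_j from x y and r_j} that $w_{j,0}=(\sL,x_{1,j})$ and $w_{j,4p}=(\sL,x_{p+1,j})=(\sL,x_{1,j})$, because of the convention $x_{p+1}=x_1$. Consecutive entries alternate between $\sL$ and $\sR$ labels, so consecutive vertices are always joined by an edge of $\sE_j$. Thus $w_j$ is always a closed walk on the bipartite graph $\Gamma_{w_j}$, and the closedness condition imposes no restriction. The content of the lemma is therefore the backtracking condition.

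For non-left-backtracking, I will list the right vertices visited: they are $w_{j,4t-3}$ and $w_{j,4t-1}$ for $t\in[p]$. The left neighbours of $w_{j,4t-3}$ in the walk are $w_{j,4t-4}=(\sL,x_{t,j})$ and $w_{j,4t-2}=(\sL,y_{t,j})$. The left neighbours of $w_{j,4t-1}$ are $(\sL,y_{t,j})$ and $(\sL,x_{t+1,j})$. So the walk is non-left-backtracking exactly when $x_{t,j}\neq y_{t,j}$ and $y_{t,j}\neq x_{t+1,j}$ for every $t\in[p]$. This is precisely the conjunction of the indicators $\Ind\{x_{t,j}\neq y_{t,j}\text{ and } y_{t,j}\neq x_{t+1,j}\}$ appearing in $\Phi_j$.

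Finally, I will combine these observations. $\Phi_j$ equals $\pm1$ times the product of these indicators, so it is nonzero if and only if every indicator equals $1$. That holds if and only if the walk is non-left-backtracking, and the walk is always closed. There is no real obstacle here; the only care needed is bookkeeping the indices, in particular the wraparound at $t=p$ using $x_{p+1}=x_1$.
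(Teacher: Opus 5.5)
Your proposal is correct and follows essentially the same argument as the paper. The signs are $\pm1$, so $\Phi_j\neq 0$ exactly when every indicator equals $1$, which is the non-left-backtracking condition at the right vertices $w_{j,4t-3}$ and $w_{j,4t-1}$, while closedness is automatic from the convention $x_{p+1}=x_1$; your explicit remark that closedness imposes no restriction is just a slightly cleaner phrasing of what the paper's proof shows.
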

\begin{proof}
From the definition of $\Phi_j(x,y,r_j)$ in \Cref{def:Phi_j sign monomial}, we see that $\Phi_j(x,y,r_j)$ is non-zero if and only if $x_{t,j}\neq y_{t,j}$ and $y_{t,j}\neq x_{t+1,j}$ for all $t\in[p]$, meaning that $w_{j,4t-4}\neq w_{j,4t-2}\neq w_{j,4t}$. This gives us that the walk is non-left-backtracking.
To see that the walk is closed, we recall from \Cref{lem:trace of power of TT^transpose} that we defined $x_{p+1}=x_1$, giving us $w_{j,0}=(\sL,x_{1,j})=(\sL,x_{p+1,j})=w_{j,4p}$
\end{proof}
In this setting, we can express $\Phi_j(x,y,r_j)$ directly by the edges of $\Gamma_{w_j}$.
\begin{lemma}
    For a non-left-backtracking closed walk $w_j$ from $(x,y,r_j)$, we can compute $\Phi_j(x,y,r_j)$ from $w_j$ as
    \begin{align}
        \Phi_j(x,y,r_j)=\prod_{\substack{r\in[k_j]\\ a\in[n]}}\left(\sigma^{(j)}_{r,a}\right)^{m_j(r,a)},\label{eq:Phi(x y r_j) as product of signs with multiplicity powers}
    \end{align}
    where $m_j(r,a)$ is the number of times $w_j$ traverses the edge $\{(\sL,a),(\sR,r)\}$ in either direction.
\end{lemma}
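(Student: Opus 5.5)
The plan is to read off the identity directly from the definition of $\Phi_j$ in \Cref{def:Phi_j sign monomial}, regrouping its factors according to which edge of the bipartite graph $\Gamma_{w_j}$ they correspond to.

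First, since $w_j$ is a non-left-backtracking closed walk, \Cref{lem:Phi non-zero iff non-left-backtracking and closed walk} and its proof show that every indicator $\Ind\{x_{t,j}\neq y_{t,j}\text{ and } y_{t,j}\neq x_{t+1,j}\}$ equals $1$. Hence $\Phi_j(x,y,r_j)$ is simply the product of the $4p$ signs
\begin{align*}
    \sigma^{(j)}_{r_{j,2t-1},x_{t,j}},\quad\sigma^{(j)}_{r_{j,2t-1},y_{t,j}},\quad\sigma^{(j)}_{r_{j,2t},y_{t,j}},\quad\sigma^{(j)}_{r_{j,2t},x_{t+1,j}},\qquad t\in[p].
\end{align*}

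Next, I would match these factors with the steps of the walk from \Cref{def:w_j from x y and r_j}. The step from $w_{j,4t-4}=(\sL,x_{t,j})$ to $w_{j,4t-3}=(\sR,r_{j,2t-1})$ contributes $\sigma^{(j)}_{r_{j,2t-1},x_{t,j}}$. The step from $w_{j,4t-3}$ to $w_{j,4t-2}=(\sL,y_{t,j})$ contributes $\sigma^{(j)}_{r_{j,2t-1},y_{t,j}}$. The remaining two steps are handled in the same way. In each case the sign attached to a step $\{w_{j,s},w_{j,s+1}\}$ is $\sigma^{(j)}_{r,a}$, where $(\sR,r)$ is the right endpoint and $(\sL,a)$ is the left endpoint. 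This sign depends only on the undirected edge, not on the direction of traversal. Because the graph is bipartite and the walk alternates, every step has exactly one left and one right endpoint, so this assignment is well defined. It is in fact a bijection between the $4p$ factors and the $4p$ steps $s=0,\dots,4p-1$.

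Finally, grouping equal factors gives $\Phi_j=\prod_{s=0}^{4p-1}\sigma^{(j)}_{r(s),a(s)}$. Collecting the steps by the edge $\{(\sL,a),(\sR,r)\}$ they traverse, each $\sigma^{(j)}_{r,a}$ appears exactly $m_j(r,a)$ times, which yields \Cref{eq:Phi(x y r_j) as product of signs with multiplicity powers}. Pairs $(r,a)$ that do not appear in the walk have $m_j(r,a)=0$ and contribute the factor $1$.

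There is no real obstacle here. The only care needed is the index bookkeeping in the bijection between steps and factors, especially at the wraparound step into $w_{j,4p}=(\sL,x_{p+1,j})=(\sL,x_{1,j})$, which is handled by the convention $x_{p+1}=x_1$.
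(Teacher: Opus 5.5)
Your proposal is correct and follows the same route as the paper. The non-left-backtracking condition makes every indicator equal to $1$, and each of the remaining $4p$ sign factors corresponds to exactly one step of $w_j$ along the edge between left vertex $a$ and right vertex $r$, so grouping by edges gives the exponents $m_j(r,a)$. Your explicit matching of steps to factors is a more detailed version of the paper's one-line justification.
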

\begin{proof}
    Since $w_j$ is a non-left-backtracking closed walk, the indicators  $\Ind\{x_{t,j}\neq y_{t,j}\text{ and } y_{t,j}\neq x_{t+1,j}\}$ from \Cref{def:Phi_j sign monomial} are always 1, so $\Phi_j(x,y,r_j)$ is always just a product of signs. The expression in \Cref{eq:Phi(x y r_j) as product of signs with multiplicity powers} then exactly recovers the signs used in $\Phi_j(x,y,r_j)$ because each occurrence is counted by a traversal of the corresponding edge.
\end{proof}
An immediate consequence of this formulation of $\Phi_j$ is an easy expression for its expectation.
\begin{corollary}\label{cor:expectation of Phi_j is 1 iff exponents are even}
    For a non-left-backtracking closed walk $w_j$, we have
    \begin{align*}
        \E_\sigma[\Phi_j(x,y,r_j)]=
        \begin{cases}
            1 & m_j(r,a)\text{ is even for all $r\in[k_j], a\in[n]$}\\
            0 & \text{otherwise.}
        \end{cases}
    \end{align*}
\end{corollary}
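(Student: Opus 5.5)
The plan is to use the representation \eqref{eq:Phi(x y r_j) as product of signs with multiplicity powers}, which holds for any non-left-backtracking closed walk. Under this representation, $\Phi_j(x,y,r_j)$ is a product of independent Rademacher variables $\sigma^{(j)}_{r,a}$, each raised to the nonnegative integer power $m_j(r,a)$. Only finitely many pairs $(r,a)$ have $m_j(r,a)>0$, namely the edges of $\Gamma_{w_j}$. All other factors are $1$ and can be dropped.

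First, because the signs $\sigma^{(j)}_{r,a}$ for distinct $(r,a)$ are independent, the expectation of the product factors as
\begin{align*}
    \E_\sigma[\Phi_j(x,y,r_j)]=\prod_{\substack{r\in[k_j]\\ a\in[n]}}\E\left[\left(\sigma^{(j)}_{r,a}\right)^{m_j(r,a)}\right].
\end{align*}
Next, for a single uniform sign $\sigma\sim\{-1,1\}$ we have $\sigma^{m}=1$ when $m$ is even. When $m$ is odd, $\sigma^m=\sigma$, which has mean $0$. So each factor is either $1$ or $0$ according to the parity of $m_j(r,a)$. The product therefore equals $1$ exactly when every $m_j(r,a)$ is even. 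Otherwise it equals $0$, since a single vanishing factor kills the product.

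There is essentially no obstacle here. The only point to check is that the indicator factors in \eqref{def:Phi_j sign monomial} are identically $1$ for a non-left-backtracking closed walk, and this is exactly what \Cref{lem:Phi non-zero iff non-left-backtracking and closed walk} and the preceding lemma provide. Hence $\Phi_j$ is a genuine deterministic-exponent monomial in the signs, and independence applies directly.
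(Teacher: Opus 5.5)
Your proposal is correct and follows essentially the same route as the paper. Both start from the multiplicity representation of $\Phi_j$ and use independence of the Rademacher signs. The paper phrases the odd case as the whole product being a Rademacher variable, while you factor the expectation into per-sign terms that equal $1$ or $0$ according to the parity of $m_j(r,a)$.
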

\begin{proof}
    If at least one of the $m_j(r,a)$'s is odd, then since all of the signs $\sigma^{(j)}_{r,a}$ are independent Rademacher variables, the whole product is randomized as a Rademacher variable and has expectation 0. If all the exponents are even, then $\Phi_j(x,y,r_j)=1$ deterministically. 
\end{proof}
Taking the expectation over all of $(TT^\transpose)^p$ then gives us a nice expression.
\begin{corollary}\label{cor:Expectation of trace (TT^transpose)^p in terms of even non-left-backtracking cycles}
    Randomizing over the signs, the expectation of $\trace\left((TT^\transpose)^p\right)$ is
    \begin{align*}
        \E_\sigma\left[\trace\left((TT^\transpose)^p\right)\right]=\left(\prod_{j=1}^q k_j^{-2p}\right)\cdot\abs{\scW}
    \end{align*}
    where
    \begin{align*}
        \scW(X,Y)=\left\{(x,y,r)\in X^p\times Y^p\times \prod_{j=1}^q[k_j]^{2p} : \forall j\in[q],\ w_j\text{ is even and non-left-backtracking}\right\}
    \end{align*}
\end{corollary}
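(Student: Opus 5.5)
The plan is to take the expectation of the expansion in \Cref{cor:tr(TT^transpose) as sum of prod of monomial signs} term by term. By linearity, the expectation moves inside the finite double sum over $(x,y)\in X^p\times Y^p$ and $r=(r_j)_{j\in[q]}$. The deterministic prefactor $\prod_{j} k_j^{-2p}$ comes out of the sum. This reduces everything to computing $\E_\sigma\bigl[\prod_{j=1}^q\Phi_j(x,y,r_j)\bigr]$ for each fixed index tuple.

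The key step is factorizing this expectation over $j$. The monomial $\Phi_j$ involves only the signs $\sigma^{(j)}$ of the $j$-th matrix, and the families $(\sigma^{(j)})_{j\in[q]}$ are mutually independent. Hence
\[
\E_\sigma\Bigl[\prod_{j=1}^q\Phi_j(x,y,r_j)\Bigr]=\prod_{j=1}^q\E_\sigma[\Phi_j(x,y,r_j)].
\]
Each factor is then evaluated in two cases.

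\begin{itemize}
\item If $w_j$ fails to be non-left-backtracking, then one of the indicators in \Cref{def:Phi_j sign monomial} vanishes, so $\Phi_j\equiv 0$. Closedness holds automatically because $x_{p+1}=x_1$, as noted in \Cref{lem:Phi non-zero iff non-left-backtracking and closed walk}.
\item If $w_j$ is non-left-backtracking, \Cref{cor:expectation of Phi_j is 1 iff exponents are even} gives that the expectation equals $1$ when every edge multiplicity $m_j(r,a)$ is even, i.e.\ when $w_j$ is even, and $0$ otherwise.
\end{itemize}

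So the product of expectations is the indicator that every $w_j$ is even and non-left-backtracking. Summing this indicator over all index tuples counts exactly the elements of $\scW(X,Y)$. Here we use the bijection of \Cref{rmk:xyr same as w1-wq} between $(x,y,r)$ and $(w_1,\dots,w_q)$, which ensures no tuple is double counted.

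There is no real obstacle. The only points needing care are that the factorization uses the independence across $j$ of the sign matrices, and that the vanishing case is handled before invoking \Cref{cor:expectation of Phi_j is 1 iff exponents are even}, since that corollary presupposes non-left-backtracking.
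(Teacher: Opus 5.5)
Your proposal is correct and follows the argument the paper intends. The paper gives no explicit proof of this corollary; it follows from \Cref{cor:tr(TT^transpose) as sum of prod of monomial signs} by linearity, by independence of the sign families across $j$, and by the case analysis of \Cref{lem:Phi non-zero iff non-left-backtracking and closed walk} and \Cref{cor:expectation of Phi_j is 1 iff exponents are even}. One small point: the appeal to the bijection in \Cref{rmk:xyr same as w1-wq} is unnecessary, since summing the indicator over tuples $(x,y,r)$ counts $\scW(X,Y)$ directly from its definition.
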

Therefore, to bound the expectation of $\trace((TT^\transpose)$, we want to bound the number of monomials $\Phi_j(x,y,r_j)$ where all exponents are even. Equivalently, we want to bound the number of possible non-left-backtracking closed walks $w_j$ that traverse each edge of $\Gamma_{w_j}$ an even number of times---the cardinality of $\scW$. This brings us to weak Wigner words.

\subsection{Weak Wigner words}
One can naturally view a walk over a graph as just a list enumerating the labels of the vertices visited along that walk. That intuition forms the basis for the definition of words and sentences in the context of Wigner words.

\begin{definition}[Words and sentences][\cite{wigner_characteristic_1955,anderson_clt_2006}]
A \emph{word} $w$ is a non-empty finite sequence of letters from some alphabet, where we view letters as length-1 words. We call $w$ \emph{closed} if the first and last letters are the same. A \emph{sentence} $a$ is a non-empty sequence of words, so we may view words as one-word sentences. 
\end{definition}
There are many properties of words and sentences that we will work with.
\begin{definition}
    For a word $w$ and sentence $a$, we define:
    \begin{itemize}
        \item The \emph{length} $\length(w)$ is the number of letters in $w$,
        \item The \emph{support} $\supp(a)$ is the set of letters appearing in $a$, and
        \item The \emph{weight} $\wt(a)$ is the cardinality of $\supp(a)$.
    \end{itemize}
\end{definition}
Given a word, we can translate back and forth between its associated graph.
\begin{definition}[Graph of a word]
    Given a word $w=\alpha_1\cdots\alpha_p$, we define its \emph{graph} $\Gamma_w=(\sV,\sE)$ to have vertices $\sV=\supp(w)$ and distinct undirected edges $\sE=\{\{\alpha_i,\alpha_{i+1}\}:i\in[p-1]\}$.
\end{definition}

We can now finally define weak Wigner words.
\begin{definition}[weak Wigner word]
    We say that a word $w$ is a \emph{weak Wigner word} if $w$ is closed and traverses each edge of $\Gamma_w$  at least twice.
\end{definition}

\subsubsection{Equivalence and canonical representatives}
Given that words define walks on graphs, it is natural to define an equivalence relation between them encoding the shape of the walk.

\begin{definition}[Equivalence]
    Two words $w$ and $w'$ are \emph{equivalent}, denoted as $w\sim w'$, if $\length(w)=\length(w')$ and there exists a bijection $\psi:\supp(w)\to\supp(w')$ such that $\psi(w_i)=w_i'$ for all $i\in[\length(w)]$. 
\end{definition}

\begin{remark}
    It is worthwhile to note that $w\sim w'$ isn't the same as the graphs $\Gamma_w$ and $\Gamma_{w'}$ being isomorphic. For example, $w=\{12321\}$ and $w'=\{12131\}$ have isomorphic graphs but aren't equivalent. On the other hand, if $w\sim w'$ then $\Gamma_w$ and $\Gamma_{w'}$ are clearly isomorphic where the isomorphism is given by the relabeling of their letters, $\psi$. 
\end{remark}

With an equivalence relation, we can talk about the equivalence class $\cW$ of a word $w$. Intuitively, an equivalence class encodes the shape of a walk, but not the actual values of the vertices. Since we will need to reference the shape of the walk of an equivalence class, we define the canonical representative of an equivalence class---this is not necessarily a word generated as in \Cref{def:w_j from x y and r_j}, but it clearly encodes the shape of the path of the elements of $\cW$.
\begin{definition}[Canonical representative]\label{def:canonical representative}
    Let $\cW$ be an equivalence class and $w=(w_1,\dots,w_m)\in\cW$. Define $\psi_w:\supp(w)\to[\wt(w)]$ by ordering the distinct letters by their first appearance in $w$. The canonical representative of $\cW$ is the word $\overline{w}=(\psi_w(w_1),\dots,\psi_w(w_m))$.
\end{definition}
\begin{example}
    For the word $w=\{421314214\}$, the canonical representative of its equivalence class is $\overline{w}=\{123431231\}$.
\end{example}

It is easy to check that this notion of canonical representative is well-defined.
\begin{fact}
    If $w\sim w'$, then their canonical representatives satisfy $\overline{w}=\overline{w}'$.
\end{fact}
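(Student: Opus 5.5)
The claim is that if $w\sim w'$ then $\overline{w}=\overline{w}'$. The plan is to unwind the definition of $\psi_w$ and show that the first-appearance ordering is transported by any witnessing bijection. Fix a bijection $\psi:\supp(w)\to\supp(w')$ with $\psi(w_i)=w'_i$ for all $i\in[\length(w)]$; such a $\psi$ exists because $w\sim w'$, and it forces $\length(w)=\length(w')=:m$.

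The first step is to compare first appearances. For a letter $\alpha\in\supp(w)$, let $f_w(\alpha)=\min\{i:w_i=\alpha\}$. Since $\psi$ is injective and $\psi(w_i)=w'_i$, we have $w_i=\alpha$ if and only if $w'_i=\psi(\alpha)$. Hence the sets of positions coincide, $\{i:w_i=\alpha\}=\{i:w'_i=\psi(\alpha)\}$, and so $f_{w'}(\psi(\alpha))=f_w(\alpha)$.

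The second step is to compare the canonical labelings. By definition, $\psi_w(\alpha)$ is the rank of $f_w(\alpha)$ among the values $\{f_w(\beta):\beta\in\supp(w)\}$. Because $\psi$ is a bijection between the supports, the image $\{f_{w'}(\beta'):\beta'\in\supp(w')\}$ equals $\{f_w(\beta):\beta\in\supp(w)\}$. It follows that $\psi_{w'}(\psi(\alpha))=\psi_w(\alpha)$, so $\psi_{w'}\circ\psi=\psi_w$.

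Finally, for each $i\in[m]$ this gives $\overline{w}'_i=\psi_{w'}(w'_i)=\psi_{w'}(\psi(w_i))=\psi_w(w_i)=\overline{w}_i$, which proves $\overline{w}=\overline{w}'$. The argument is pure bookkeeping. The only point needing care is the first step, where injectivity of $\psi$ is what gives the exact equality of position sets rather than just an inclusion.
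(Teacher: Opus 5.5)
Your proof is correct and follows the same route as the paper's: the witnessing bijection preserves the set of positions at which each letter occurs, so first-appearance positions agree and therefore the canonical labels agree. Your version spells out the identity $\psi_{w'}\circ\psi=\psi_w$ and where injectivity is used, which the paper's one-line argument leaves implicit.
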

\begin{proof}
    Equivalence preserves the equality of letters at any two positions. Thus first appearances occur at the same positions in $w$ and $w'$, and every repeated letter receives the same canonical label in both words.
\end{proof}

\subsection{Weak Wigner words in the bipartite setting}
Since the words we will work on are generated as in \Cref{def:w_j from x y and r_j}  and represent walks on a bipartite graph, we will introduce some additional definitions and consequences in this context. We call a word $w$ \emph{bipartite} if $\Gamma_w$ is a bipartite graph.

\begin{definition}[Properties of bipartite walks]
    Given a bipartite walk $w$ on $\Gamma_w=(\sV=\sV_\sL\sqcup\sV_\sR,\sE)$, we define
    \begin{itemize}
        \item The left weight $\wt_\sL(w)=\abs{\sV_\sL}$, and
        \item The right weight $\wt_\sR(w)=\abs{\sV_\sR}$.
    \end{itemize}
\end{definition}
Moreover, we can measure the loss of vertices from the maximum possible weight of a weak Wigner word from \Cref{def:w_j from x y and r_j}. Such words have length $4p+1$ and traverse each edge at least twice, so their connected graphs have at most $2p+1$ vertices.
\begin{definition}[Deficiency]
    For a weak Wigner word $w$ on $\Gamma_w=(\sV=\sV_\sL\sqcup\sV_\sR,\sE)$ constructed as in \Cref{def:w_j from x y and r_j}, we define its \emph{deficiency} to be
    \begin{align*}
        \Delta(w)=2p+1-\wt(w).
    \end{align*}
\end{definition}

Lastly, we define a specific notion of equivalence on the words from \Cref{def:w_j from x y and r_j} and show that it agrees with the standard notion of equivalence.
\begin{definition}[Bipartite equivalence]
    Given two walks $w$ on $\Gamma_w=(\sV=\sV_\sL\sqcup\sV_\sR,\sE)$ and $w'$ on $\Gamma_{w'}=(\sV'=\sV'_\sL\sqcup\sV'_\sR,\sE')$ of equal length, we say that $w$ is \emph{bipartite-equivalent} to $w'$ if there exist bijections $\psi_\sL:\sV_\sL\to\sV'_\sL$ and $\psi_\sR:\sV_\sR\to\sV'_\sR$ such that $\psi_{\sL}(w_i)=w'_i$ for all even $i$, and $\psi_\sR(w_i)=w'_i$ for all odd $i$. This is denoted as $w\bisim w'$. 
\end{definition}

\begin{fact}\label{fact:bipartite equiv and equiv are the same for bipartite words}
    For two walks $w$ on $\Gamma_w=(\sV=\sV_\sL\sqcup\sV_\sR,\sE)$ and $w'$ on $\Gamma_{w'}=(\sV'=\sV'_\sL\sqcup\sV'_\sR,\sE')$ of equal length produced as in \Cref{def:w_j from x y and r_j}, 
    \begin{align*}
        w\bisim w' \Longleftrightarrow w\sim w'.
    \end{align*}
\end{fact}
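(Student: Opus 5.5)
Both implications come down to the parity structure of the walks in \Cref{def:w_j from x y and r_j}. In every such word, positions with even index carry letters of the form $(\sL,\cdot)$ and positions with odd index carry letters of the form $(\sR,\cdot)$. So the side of a letter is determined by the parity of the position at which it occurs, and this holds for both $w$ and $w'$.

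For $(\Rightarrow)$, suppose $w\bisim w'$ via bijections $\psi_\sL:\sV_\sL\to\sV'_\sL$ and $\psi_\sR:\sV_\sR\to\sV'_\sR$. Set $\psi=\psi_\sL\sqcup\psi_\sR:\supp(w)\to\supp(w')$. This is well defined because $\supp(w)=\sV_\sL\sqcup\sV_\sR$ is a disjoint union; the tags $\sL$ and $\sR$ make the two sides disjoint even when the underlying integers coincide. It is a bijection because the codomains $\sV'_\sL$ and $\sV'_\sR$ are also disjoint. By construction $\psi(w_i)=w'_i$ at every position, so $w\sim w'$.

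For $(\Leftarrow)$, suppose $\psi:\supp(w)\to\supp(w')$ is a bijection with $\psi(w_i)=w'_i$ for all $i$. The key step, which is also the only real content, is to show that $\psi$ maps $\sV_\sL$ into $\sV'_\sL$ and $\sV_\sR$ into $\sV'_\sR$. Take any $v\in\sV_\sL$. It occurs at some position $i$, and that position must be even, since odd positions of $w$ hold right letters. Then $\psi(v)=w'_i$, and the even positions of $w'$ hold left letters, so $\psi(v)\in\sV'_\sL$. The same argument with odd positions handles $\sV_\sR$.

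Consequently $\psi$ restricts to injections $\psi_\sL:\sV_\sL\to\sV'_\sL$ and $\psi_\sR:\sV_\sR\to\sV'_\sR$. Each restriction is also surjective: every element of $\sV'_\sL$ is some $w'_i$ with $i$ even, hence equals $\psi(w_i)$ with $w_i\in\sV_\sL$, and likewise for $\sV'_\sR$. These restrictions satisfy exactly the defining conditions of $w\bisim w'$.

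I do not expect any real obstacle. The one point to state explicitly is the parity property of the construction together with the disjointness of the $\sL$ and $\sR$ tags, since both directions rely on it.
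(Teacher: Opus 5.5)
Your proposal is correct and follows essentially the same route as the paper: the forward direction combines $\psi_\sL$ and $\psi_\sR$ into one bijection on $\supp(w)$, and the reverse direction uses the fact that even positions carry left letters and odd positions carry right letters, so $\psi$ restricts to bijections between the two sides. You are slightly more explicit than the paper about why the $\sL$/$\sR$ tags make the sides disjoint and why each restriction is surjective, which is a welcome addition.
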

\begin{proof}
    The forward direction is immediate since we just define $\psi:\sV\to\sV'$ to act on $\sV_\sL$ via $\psi_\sL$ and on $\sV_\sR$ via $\psi_\sR$. The reverse direction comes from noticing that in \Cref{def:w_j from x y and r_j}, it must be that each even index of $w$ is an element of $\sV_\sL$ and every odd index is an element of $\sV_\sR$, and similarly for $w'$. Thus, for an even index $i$, we have that $\psi(w_i)=\psi((\sL,x))=(\sL,x')=w_i'$. Consequently, on even indices $\psi$ restricts to a bijection from $\sV_\sL$ to $\sV'_\sL$, which allows us to define $\psi_\sL$. Similarly, on odd indices $\psi$ restricts to a bijection from $\sV_\sR$ to $\sV'_\sR$, allowing us to define $\psi_\sR$. This gives us the necessary bipartite-equivalence of $w$ and $w'$. 
\end{proof}

\subsubsection{Equivalence classes from a specified deficiency}
Since all words in an equivalence class have the same weight and length, which are the only two parameters used in the deficiency of a word, they have the same deficiency as well. Thus, we can parametrize a set of equivalence classes by only their deficiency. 

\begin{definition}\label{def:admissible word-class realizations}
    For integers $p\geq 1$ and $\Delta\geq 0$, we let $\fW_p(\Delta)$ be the set of equivalence classes of even non-left-backtracking closed walks with length $4p+1$, deficiency $\Delta$, and left and right positions as in \Cref{def:w_j from x y and r_j}. For a tuple of equivalence classes $(\cW_1,\dots,\cW_q)$, define its admissible part by
    \begin{align*}
        \scW_{X,Y}(\cW_1,\dots,\cW_q)=\left\{(x,y,r)\in\scW(X,Y):w_j(x,y,r_j)\in\cW_j\text{ for all }j\in[q]\right\}.
    \end{align*}
\end{definition}
With this notation, we can cleanly decompose $\scW$. 

\begin{lemma}\label{lem:scW disjoint decomposition}
    \begin{align*}
        \scW(X,Y)=\bigsqcup_{(\Delta_1,\dots,\Delta_q)\in\{0,\dots,2p\}^q}\bigsqcup_{(\cW_1,\dots,\cW_q)\in\fW_p(\Delta_1)\times\cdots\times\fW_p(\Delta_q)}\scW_{X,Y}(\cW_1,\dots,\cW_q).
    \end{align*}
    Here, each $\cW_j$ is one of the equivalence classes of $\fW_p(\Delta_j)$. 
\end{lemma}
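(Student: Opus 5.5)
The plan is to show two things: that the sets on the right-hand side are pairwise disjoint, and that their union is exactly $\scW(X,Y)$. Both follow from the fact that every tuple $(x,y,r)\in\scW(X,Y)$ determines its words $w_1,\dots,w_q$ uniquely, as in \Cref{def:w_j from x y and r_j} and \Cref{rmk:xyr same as w1-wq}. Each word $w_j$ then lies in exactly one equivalence class under $\sim$, and each class has a well-defined deficiency.

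\textbf{Step 1: the decomposition covers $\scW(X,Y)$.} Fix $(x,y,r)\in\scW(X,Y)$ and $j\in[q]$. By definition of $\scW$, the word $w_j$ is even and non-left-backtracking. It is also closed, by \Cref{lem:Phi non-zero iff non-left-backtracking and closed walk}, since $x_{p+1}=x_1$. Its length is $4p+1$, and its left and right positions are as in \Cref{def:w_j from x y and r_j}. Next I bound the deficiency. The walk has $4p$ traversals and each edge is used at least twice, so $\Gamma_{w_j}$ has at most $2p$ distinct edges. The graph is connected because it is traced out by a walk, so $\wt(w_j)\leq 2p+1$. It also has at least one edge (whenever $p\geq1$), so $\wt(w_j)\geq 2$. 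Together these give $\Delta_j=2p+1-\wt(w_j)\in\{0,\dots,2p-1\}\subseteq\{0,\dots,2p\}$. Let $\cW_j$ be the class of $w_j$. Since weight and length are invariant under $\sim$, every member of $\cW_j$ has deficiency $\Delta_j$, and so $\cW_j\in\fW_p(\Delta_j)$. Hence $(x,y,r)\in\scW_{X,Y}(\cW_1,\dots,\cW_q)$. The reverse inclusion holds by definition, because each admissible part is defined as a subset of $\scW(X,Y)$.

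\textbf{Step 2: the union is disjoint.} Suppose $(x,y,r)$ belonged to two admissible parts, indexed by $(\Delta_j,\cW_j)_j$ and $(\Delta'_j,\cW'_j)_j$. Then for every $j$ the single word $w_j(x,y,r_j)$ lies in both $\cW_j$ and $\cW'_j$. Equivalence classes are either equal or disjoint, so $\cW_j=\cW'_j$. The deficiency is determined by the class, so $\Delta_j=\Delta'_j$ as well. Finally, the sets $\fW_p(\Delta)$ for different $\Delta$ are disjoint, so there is no double indexing over the outer union either.

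I expect no real obstacle here. The only point needing care is the deficiency range in Step 1, which relies on the connectedness of $\Gamma_{w_j}$ and on every edge being traversed at least twice.
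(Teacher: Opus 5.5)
Your proof is correct and takes the same approach as the paper's. Each tuple in $\scW(X,Y)$ determines its words $w_j$, and hence a unique class $\cW_j$ and deficiency $\Delta_j\in\{0,\dots,2p\}$; this gives both coverage and disjointness, and the reverse inclusion holds by the definition of the admissible parts. Your inline derivation of the deficiency range, via connectedness and the fact that each edge is traversed at least twice, is the same reasoning the paper records separately in its Fact bounding $\wt(w)$ and $\Delta(w)$.
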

\begin{proof}
    Each $(x,y,r)\in\scW(X,Y)$ determines a unique equivalence class $\cW_j$ for each $w_j(x,y,r_j)$ and a unique deficiency $\Delta_j\in\{0,\dots,2p\}$. Thus it belongs to exactly one of the admissible parts, while every member of an admissible part belongs to $\scW(X,Y)$ by definition.
\end{proof}

\subsection{Bounding the number of equivalence classes}
The first step in bounding the number of non-left-backtracking closed walks with even multiplicity on each edge is to count the number of equivalence classes of such closed walks.
\begin{proposition}[Proposition 7.9 of \cite{anderson_clt_2006}]\label{prop: AZ04 prop 7.9}
    For all positive integers $N$ and $W$ satisfying $N\geq 2W-2$, there are at most
    \begin{align*}
        2^N N^{3(N-2W+2)}
    \end{align*}
    equivalence classes of weak Wigner words of length $N+1$ and weight $W$. 
\end{proposition}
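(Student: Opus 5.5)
The count is purely combinatorial and independent of the rest of the paper, so I will argue it directly, following the Füredi--Komlós style encoding that underlies \cite{anderson_clt_2006}. Since equivalence classes are in bijection with canonical representatives (\Cref{def:canonical representative}), it suffices to injectively encode the canonical representative $\overline{w}$ of a weak Wigner word of length $N+1$ and weight $W$. The encoding will consist of a binary string of length $N$ together with at most $e:=N-2W+2$ ``repair records'', each taking at most $N^3$ values.

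\textbf{Setup.} Let $\overline{w}=\alpha_1\cdots\alpha_{N+1}$ with $\alpha_1=1$. The first-visit steps, i.e.\ steps $i$ with $\alpha_{i+1}$ a new letter, number exactly $W-1$. Their edges form a spanning tree $\mathcal T$ of the connected graph $\Gamma_{\overline w}$. Since every edge is traversed at least twice,
\begin{align*}
    N=\sum_{\text{edges } f} m_f = 2(W-1)+\sum_{f\in\mathcal T}(m_f-2)+\sum_{f\notin\mathcal T}m_f.
\end{align*}
Hence the excess satisfies
\begin{align*}
    e=\sum_{f\in\mathcal T}(m_f-2)+\sum_{f\notin\mathcal T}m_f\geq 0,
\end{align*}
which is where the hypothesis $N\geq 2W-2$ enters.

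\textbf{Encoding.} I run a depth-first-stack reading of the walk. I maintain a stack of tree edges, pushing on each first-visit step. A step is \emph{regular} if it is either a first visit (push) or a traversal of the top stack edge back toward the root (pop). Each step gets one bit: first visit or not. This accounts for the factor $2^N$. For a non-first-visit step, the default decoding is ``pop''. A step where this default is wrong is \emph{irregular}, and for it I record the target letter, at most $N$ choices. A regular walk, which is a closed walk on a tree traversing each edge exactly twice in stack order, is then reconstructed uniquely from the bit string. Allowing up to $N^3$ values per record leaves room to also record, for example, the stack depth to resume at and the position of the next irregular step. This is the slack needed so that decoding resumes deterministically after a disruption.

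\textbf{Main obstacle.} The key step is to show that the number of irregular steps, or more precisely the number of records needed, is at most $e$, or at most a bounded multiple of $e$ that can be absorbed into the exponent $3$. Traversals of non-tree edges and third or later traversals of tree edges are directly charged to $e$ by the identity above. The delicate part is second traversals of tree edges that fail to be a stack pop because an earlier excess step scrambled the stack order. Here I would argue that after each excess step the walk's position relative to the stack is determined once we record where it re-enters the stack (one extra $N$) and how many pops follow before the next anomaly (another $N$). Thus each unit of excess triggers at most one record of size $N^3$, and no second-traversal anomaly arises that is not caused by some excess step.

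If this charging goes through, the number of classes is at most $2^N\cdot(N^3)^{e}=2^N N^{3(N-2W+2)}$, as claimed. If the charging is off by a constant factor, I would instead redistribute the records, for instance by recording the positions of anomalies inside the $N^3$ budget rather than through the bit string, which is exactly the bookkeeping in \cite{anderson_clt_2006}.
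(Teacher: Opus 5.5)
You leave the decisive step unproved. The whole bound rests on the claim that the number of records is at most $e=N-2W+2$, or a multiple of it absorbable into the exponent $3$. You explicitly treat this as a hope (``If this charging goes through\dots''). The paper does not reprove the proposition; it imports it from \cite{anderson_clt_2006}, so this counting lemma is the entire content of the statement.

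Moreover, for the rule you actually specify, the claim is false. Under that rule a non-first-visit step defaults to popping the top stack edge toward the root. Take $L\geq3$ and the cycle traversed twice in the same direction, $\overline{w}=1\,2\cdots L\,1\,2\cdots L\,1$. Here $N=2L$ and $W=L$, every edge is traversed exactly twice, and $e=2$, coming entirely from the non-tree edge $\{L,1\}$. Yet every step $i\to i+1$ of the second lap is a second traversal of a tree edge \emph{away} from the root. It is therefore neither a first visit nor a pop, however you update the stack after the jump $L\to1$. You never specify that update, so decoding after an irregular step is not even defined. This word has $N/2+1$ irregular steps, so the number of records is not bounded by any constant multiple of $e$. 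For $W=N/2$ your scheme yields nothing better than a bound of order $2^NN^{N/2}$, far above $2^NN^{6}$. Your suggested repair, recording a re-entry depth and the number of subsequent pops, does not help, because no pops follow. These anomalies are second traversals of edges with $m_f=2$, so they are invisible in $e$. A single unit of excess can trigger arbitrarily many of them, contrary to your assertion that each unit of excess triggers at most one record.

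What is missing is a default rule under which such forward second traversals are forced. You also need a proof that the rule is ambiguous or violated only $O(N-2W+2)$ times, with bookkeeping that produces exactly the exponent $3$. For instance, you could default to the tree edge at the current vertex that has been traversed exactly once so far; in a pure depth-first walk these edges are exactly your stack. This rule forces the whole second lap above. Its violations are old-edge traversals and third-or-later traversals of tree edges, which your identity already charges to $e$. You would still have to bound how often the current vertex carries two or more once-traversed tree edges. That enumeration is what \cite{anderson_clt_2006} supplies and your proposal does not, so as written it does not prove the proposition.
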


\begin{corollary}\label{cor: prop 7.9 for bipartite equiv}
    For all positive integers $N$ and $W$ satisfying $N\geq 2W-2$, there are at most
    \begin{align*}
        2^N N^{3(N-2W+2)}
    \end{align*}
    bipartite-equivalence classes of weak Wigner words produced as in \Cref{def:w_j from x y and r_j} of length $N+1$ and weight $W$. 
\end{corollary}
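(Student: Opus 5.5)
The approach is to reduce the bipartite count to the unrestricted count of \Cref{prop: AZ04 prop 7.9} by exhibiting an injection from bipartite-equivalence classes into ordinary equivalence classes of weak Wigner words with the same length $N+1$ and weight $W$. The words produced as in \Cref{def:w_j from x y and r_j} are in particular words over the alphabet $(\{\sL\}\times[n])\sqcup(\{\sR\}\times[k_j])$. Such a word is closed, since $w_{j,0}=w_{j,4p}$. Under the evenness hypothesis built into the classes we count, it also traverses each edge of $\Gamma_w$ at least twice, because every edge that appears has even, hence at least two, multiplicity. So each such word is a weak Wigner word in the sense of the general definition, with the same length and weight.

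The map sends the bipartite-equivalence class of $w$ to the ordinary equivalence class $[w]_\sim$. By \Cref{fact:bipartite equiv and equiv are the same for bipartite words}, for words of this form $w\bisim w'$ holds if and only if $w\sim w'$. Two words are therefore bipartite-equivalent exactly when they are ordinarily equivalent, so the map is well defined and injective. Ordinary equivalence preserves length and weight, so the image lies among the ordinary classes of weak Wigner words of length $N+1$ and weight $W$.

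Invoking \Cref{prop: AZ04 prop 7.9} with the same $N,W$ (the hypothesis $N\geq 2W-2$ is carried over verbatim) bounds the number of such ordinary classes by $2^N N^{3(N-2W+2)}$. Injectivity transfers this bound to the bipartite-equivalence classes.

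The only point needing care is to confirm that the objects counted are genuinely weak Wigner words under the paper's definition. In particular, ``traverses each edge at least twice'' must follow from the evenness of every edge multiplicity $m_j(r,a)$; this holds because a traversed edge has $m_j\geq 1$ and even. The non-left-backtracking condition only restricts the family further and is harmless for an upper bound. No step looks like a serious obstacle; the statement is essentially a repackaging of \Cref{prop: AZ04 prop 7.9} via \Cref{fact:bipartite equiv and equiv are the same for bipartite words}.
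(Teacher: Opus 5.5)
Your proposal is correct and matches the paper's proof. The paper likewise uses the parity argument of \Cref{fact:bipartite equiv and equiv are the same for bipartite words} to note that bipartite and ordinary equivalence coincide for these words, and then applies \Cref{prop: AZ04 prop 7.9}; your explicit injection of classes and your check that evenness gives the weak Wigner property only spell out what the paper leaves implicit.
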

\begin{proof}
    The parity argument in \Cref{fact:bipartite equiv and equiv are the same for bipartite words} applies to arbitrary left and right labels, so ordinary and bipartite equivalence agree for these words. We then apply \Cref{prop: AZ04 prop 7.9}.
\end{proof}

\begin{fact}
    For a non-left-backtracking closed walk $w$ on $\Gamma_w=(\sV=\sV_\sL\sqcup\sV_\sR,\sE)$ from \Cref{def:w_j from x y and r_j} such that every edge is traversed an even number of times, we have
    \begin{align}
        \wt(w)&\leq 2p+1\label{eq:wt <= 2p+1}\\
        \Delta(w)&\geq 0\label{eq:Delta(w) >=0}\\
        \Delta(w)&\leq 2p\\
        \wt_\sR(w)&\leq p\label{eq:wt_sR <= p}\\
        \wt_\sL(w)&\geq p+1-\Delta(w)\label{eq:wt_sL >= p+1-Delta}.
    \end{align}
\end{fact}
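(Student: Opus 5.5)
The plan is to derive all five inequalities from two counts: the number of distinct edges of $\Gamma_w$, and the number of visits the walk makes to right vertices. Write $w=(w_0,\dots,w_{4p})$ as in \Cref{def:w_j from x y and r_j}. It is a closed walk with $4p$ edge traversals, and $\Gamma_w$ is connected because it is the graph traced out by a single walk.

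For \eqref{eq:wt <= 2p+1}, I use the evenness hypothesis. Every distinct edge of $\Gamma_w$ is traversed a positive even number of times, hence at least twice. So $\abs{\sE}\leq 4p/2=2p$. Since $\Gamma_w$ is connected, $\abs{\sV}\leq\abs{\sE}+1$, which gives $\wt(w)\leq 2p+1$. Then \eqref{eq:Delta(w) >=0} is immediate from the definition $\Delta(w)=2p+1-\wt(w)$. The bound $\Delta(w)\leq 2p$ just says $\wt(w)\geq1$, which holds because the word is nonempty (indeed $\wt(w)\geq2$, since it has both a left and a right letter).

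For \eqref{eq:wt_sR <= p}, I count visits to right vertices. The right letters are exactly those at the odd positions $1,3,\dots,4p-1$, so there are $2p$ visits to right vertices in total. Every traversal of an edge $\{(\sL,a),(\sR,r)\}$ occurs either immediately before or immediately after a visit to $r$. Suppose some right vertex $r$ were visited exactly once, at position $i$. Then the only traversals of edges incident to $r$ are the two steps $w_{i-1}w_i$ and $w_iw_{i+1}$. By non-left-backtracking we have $w_{i-1}\neq w_{i+1}$, so these are two distinct edges, each traversed exactly once. That contradicts evenness. Hence every right vertex is visited at least twice, and the $2p$ visits give $\wt_\sR(w)\leq p$.

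Finally, \eqref{eq:wt_sL >= p+1-Delta} follows by subtraction:
\begin{align*}
    \wt_\sL(w)=\wt(w)-\wt_\sR(w)\geq(2p+1-\Delta(w))-p=p+1-\Delta(w).
\end{align*}
The only step requiring more than bookkeeping is the right-weight bound. It is the one place where non-left-backtracking is essential: without it, a right vertex could be visited once with a back-and-forth step $a,r,a$ that traverses the single edge $\{a,r\}$ twice.
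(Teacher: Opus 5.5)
Your proof is correct. For $\wt(w)\leq 2p+1$, the two deficiency bounds, and the final subtraction it matches the paper's argument. The one place you diverge is $\wt_\sR(w)\leq p$.

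The paper uses non-left-backtracking alone to say that each right vertex has at least two distinct neighbors in $\Gamma_w$. It then counts each distinct edge at its unique right endpoint to get $2\wt_\sR(w)\leq\abs{\sE}\leq 2p$, reusing the edge bound from the first inequality.

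You instead count occurrences of right letters, which number exactly $2p$ by the positional structure of the word. You then use non-left-backtracking together with evenness, locally at a single vertex, to show that each right vertex occurs at least twice.

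Both are short double-counting arguments. The paper's gives the slightly stronger intermediate fact $2\wt_\sR(w)\leq\abs{\sE}$, and it uses evenness only through the global edge count. Yours does not need the edge count for this step, and it makes explicit where each hypothesis enters, as your closing remark about the back-and-forth step $a,r,a$ shows.
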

\begin{proof}
    Let $e$ be the number of distinct undirected edges traversed by $w$. By construction, the walk makes $4p$ edge traversals. Since each edge is traversed at least twice, $e\leq 2p$. The graph traced by the walk is connected, so $\wt(w)\leq e+1\leq 2p+1$, giving \Cref{eq:wt <= 2p+1}. Consequently, $\Delta(w)=2p+1-\wt(w)\geq 0$, yielding \Cref{eq:Delta(w) >=0}. Since $w$ is nonempty, $\wt(w)\geq 1$ also gives $\Delta(w)\leq 2p$. Each right vertex is entered from and exited to distinct left vertices by the non-left-backtracking condition, so it has at least two distinct neighbors. Counting each distinct edge at its right endpoint gives $2\wt_\sR(w)\leq e\leq 2p$, so $\wt_\sR(w)\leq p$. Finally, $\wt_\sL(w)=2p+1-\Delta(w)-\wt_\sR(w)\geq p+1-\Delta(w)$, giving the final two inequalities.
\end{proof}

This gives us a convenient rewording of Proposition 7.9 of \cite{anderson_clt_2006}. 
\begin{corollary}
    For an integer $p\geq 1$ and deficiency $\Delta\geq 0$, the number of equivalence classes  of even non-left-backtracking closed walks from \Cref{def:w_j from x y and r_j} is at most
    \begin{align*}
        \abs{\fW_p(\Delta)}\leq 16^p(4p)^{6\Delta}.
    \end{align*}
\end{corollary}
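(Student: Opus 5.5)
We will specialize \Cref{cor: prop 7.9 for bipartite equiv} to the walks of \Cref{def:w_j from x y and r_j}. Every element of an equivalence class in $\fW_p(\Delta)$ is a word of length $4p+1$, so we take $N=4p$. Such a word is closed, and it traverses each edge of its graph an even number of times. Since the walk actually uses each of its edges, each edge is traversed at least twice, so it is a weak Wigner word. Its weight is $W=\wt(w)=2p+1-\Delta$, and this is the same for every member of the class.

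We next check the hypothesis $N\geq 2W-2$. We have $2W-2=4p-2\Delta\leq 4p=N$, because $\Delta\geq0$ by \Cref{eq:Delta(w) >=0}. We also need $W\geq1$, which holds because $\Delta\leq 2p$.

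Each class in $\fW_p(\Delta)$ is a bipartite-equivalence class of such words. By \Cref{fact:bipartite equiv and equiv are the same for bipartite words}, it coincides with an ordinary equivalence class. So $\fW_p(\Delta)$ injects into the set of bipartite-equivalence classes of weak Wigner words of length $N+1$ and weight $W$ counted in \Cref{cor: prop 7.9 for bipartite equiv}. The extra conditions (even multiplicities, non-left-backtracking) only shrink this set.

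We then compute the exponent:
\begin{align*}
    N-2W+2=4p-2(2p+1-\Delta)+2=2\Delta,
\end{align*}
so the bound reads $2^{4p}(4p)^{3\cdot 2\Delta}=16^p(4p)^{6\Delta}$, which is the claim.

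The only point needing care is confirming that the positional left/right structure and the non-left-backtracking restriction do not invalidate the application. Both only restrict the family being counted, so the count is at most the Anderson--Zeitouni bound.
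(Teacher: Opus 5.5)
Your proposal is correct and is essentially the paper's proof: you specialize the bipartite form of Anderson--Zeitouni's Proposition 7.9 with $N=4p$ and $W=2p+1-\Delta$, check $N-(2W-2)=2\Delta\geq 0$, and simplify the exponent to $6\Delta$. One case should be stated explicitly, as the paper does: when $\Delta>2p$, the set $\fW_p(\Delta)$ is empty because every such walk has $\Delta(w)\leq 2p$, so the bound holds trivially and the requirement $W\geq 1$ only matters when $\Delta\leq 2p$.
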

\begin{proof}
    If $\Delta(w)>2p$, then $\wt(w)=2p+1-\Delta(w)\leq0$, which would give us an empty word, so we may assume that $\Delta(w)\in\{0,1,\dots,2p\}$ since $\Delta(w)\geq 0$ by \Cref{eq:Delta(w) >=0}. Then we set the variables in \Cref{cor: prop 7.9 for bipartite equiv} as $N=\length(w)-1=4p$ and $W=2p+1-\Delta(w)$. The requirement of \Cref{cor: prop 7.9 for bipartite equiv} is satisfied since $N-(2W-2)=2\Delta(w)\geq 0$, so there are at most
    \begin{align*}
        2^N N^{3(N-2W+2)}&=2^{4p} (4p)^{3(4p-2(2p+1-\Delta(w))+2)}\\
        &=16^p (4p)^{6\Delta(w)}
    \end{align*}
    equivalence classes of non-left-backtracking closed walks from \Cref{def:w_j from x y and r_j} with even multiplicity on each edge.
\end{proof}

\subsection{Bounding the size of an equivalence class of a word using sparsity}
Now that we've bounded the number of equivalence classes of words arising from \Cref{def:w_j from x y and r_j} with even multiplicity on each edge, we move on to bounding the size of their admissible part, the term $\scW_{X,Y}(\cW_1,\dots,\cW_q)$ showing up in \Cref{lem:scW disjoint decomposition}. This is the key step where the sparsity of the indexing sets is used.

\begin{lemma}\label{lem:UB on size of product of equivalence classes}
    For a constant $\eta\in(0,1)$ and $\eta$-sparse subsets $X,Y\subseteq[n]^q$, take integers $\Delta_1,\dots,\Delta_q\in\{0,\dots,2p\}$. Then, for any $(\cW_1,\dots,\cW_q)\in\fW_p(\Delta_1)\times\cdots\times\fW_p(\Delta_q)$, we have that
    \begin{align}
        \abs{\scW_{X,Y}(\cW_1,\dots,\cW_q)}\leq n^{W_\sL}\eta^{W_\sL/q}\prod_{j=1}^q k_j^{\wt_\sR(\cW_j)}\label{eq:UB on cW_1 times ... times cW_q}
    \end{align}
    where $\wt_\sL(\cW_j)=\wt_{\sL}(w_j)$ for any $w_j\in\cW_j $ since all elements in the same equivalence class have the same weight, $\wt_\sR(\cW_j)$ is similar, and 
    \begin{align*}
        W_\sL=\sum_{j=1}^q \wt_\sL(\cW_j) \qquad\text{and}\qquad W_\sR=\sum_{j=1}^q \wt_\sR(\cW_j).
    \end{align*}
    
\end{lemma}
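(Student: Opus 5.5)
The plan is to count the admissible tuples $(x,y,r)$ by choosing the right labels and the left labels separately, since by \Cref{rmk:xyr same as w1-wq} the tuple is the same data as $(w_1,\dots,w_q)$.

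\textbf{Right labels.} Fix canonical representatives $\overline{w}_1,\dots,\overline{w}_q$ of $\cW_1,\dots,\cW_q$. By \Cref{fact:bipartite equiv and equiv are the same for bipartite words}, any element of $\cW_j$ is obtained from $\overline{w}_j$ by an injective labeling of its left vertices into $[n]$ and of its right vertices into $[k_j]$. The right labels of the $j$-th walk do not interact with $X$, $Y$, or the other coordinates. Choosing them naively gives at most $k_j^{\wt_\sR(\cW_j)}$ options, hence at most $\prod_j k_j^{\wt_\sR(\cW_j)}$ choices in total.

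\textbf{Left labels.} Process the $2p$ left vectors in the order $v_1=x_1,v_2=y_1,\dots,v_{2p-1}=x_p,v_{2p}=y_p$; the vector $x_{p+1}=x_1$ is already fixed. Position $s$ corresponds to letter index $2s-2$ of each word. For each $j$, coordinate $j$ of $v_s$ is either a \emph{new} left vertex of $\overline{w}_j$ (its first appearance in the canonical representative), or it repeats an earlier left vertex and is therefore already determined. Let $f_s$ be the number of new coordinates at step $s$, so that $\sum_s f_s=W_\sL$.

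When $f_s=0$ the vector is forced, giving at most $1$ choice. When $f_s\ge 1$, the fixed coordinates $I_s=\{j : \text{coordinate } j \text{ not new}\}$ form a proper subset of $[q]$. The vector must lie in $X$ (odd $s$) or $Y$ (even $s$), both $\eta$-sparse, so \Cref{fact:number of completions in sparse set} gives at most $\eta n^{f_s}$ completions. When $I_s=\varnothing$, \Cref{lem:eps closed and line free gives cardinality bound} gives the same bound. Injectivity of the labeling only further restricts the count, so we may ignore it. Multiplying over $s$,
\begin{align*}
\#\{\text{left labelings}\}\le \prod_{s:f_s\ge1}\eta n^{f_s}=n^{W_\sL}\eta^{\#\{s: f_s\ge 1\}}.
\end{align*}

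\textbf{Exponent of $\eta$.} Since $f_s\le q$, we get $\#\{s:f_s\ge1\}\ge W_\sL/q$. Because $\eta<1$, this yields $\eta^{\#\{s:f_s\ge 1\}}\le \eta^{W_\sL/q}$. Combining this with the right-label count gives \eqref{eq:UB on cW_1 times ... times cW_q}.

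\textbf{Main obstacle.} The main subtlety is rigorously justifying the sequential counting. A coordinate is either new or determined by the shape alone, independently of earlier label values; this follows because the canonical representative fixes which positions carry equal letters. The sparsity fact must be applied conditionally on all earlier choices, with the fixed coordinates of $v_s$ pinned to already-chosen values. I would also double-check the edge case $x_{p+1}=x_1$, where the final letter is never new. Beyond these points, the argument is a direct product bound.
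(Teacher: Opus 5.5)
Your proposal is correct and follows essentially the same argument as the paper. It bounds right labels naively by $\prod_j k_j^{\wt_\sR(\cW_j)}$, then completes the $2p\times q$ array of left labels row by row, using the canonical representatives to determine the new coordinates at each step with at most $\eta n^{\abs{F_s}}$ choices via \Cref{fact:number of completions in sparse set}, and finally uses $\abs{F_s}\le q$ to reach $\eta^{W_\sL/q}$; the only cosmetic difference is that you handle the case with no fixed coordinates through \Cref{lem:eps closed and line free gives cardinality bound}, whereas the paper simply applies \Cref{fact:number of completions in sparse set} with $I=\varnothing$.
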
 
\begin{proof}
    If $\scW_{X,Y}(\cW_1,\dots,\cW_q)$ is empty, the bound is immediate. Otherwise, consider a tuple $(x,y,r)\in\scW_{X,Y}(\cW_1,\dots,\cW_q)$ and its words $w_1,\dots,w_q$. Since $w_j$ has $\wt_\sR(w_j)\leq k_j$ distinct right vertices and there are $k_j$ possible right vertices to choose from, there are $k_j(k_j-1)\cdots(k_j-\wt_\sR(w_j)+1)\leq k_j^{\wt_\sR(w_j)}$ possible labelings of right vertices for $w_j$. Over all $j\in[q]$, this means there are at most
    \begin{align}
        \prod_{j=1}^q k_j^{\wt_\sR(\cW_j)}\label{eq:number of possible right labelings}
    \end{align}
    many assignments of all the right vertices of $w_1,\dots,w_q$. 

    For the left vertices, we must use the sparsity of $X$ and $Y$. This is where it becomes apparent why we must count admissible tuples rather than choose each $w_j$ separately: fixing one $w_j$ fixes only the $j$-th coordinates of all $x_t,y_t$, together with $r_j$. The remaining coordinates must still be chosen so that each complete $x_t$ belongs to $X$ and each complete $y_t$ belongs to $Y$. Put more formally, we will write out the entries of $x\in X^p$ and $y\in Y^p$  as a matrix. Let $Z\in[n]^{2p\times q}$ be defined as
    \begin{align*}
        Z&=\left[
        \begin{matrix}
            - & x_1 & -\\
            - & y_1 & -\\
            - & x_2 & -\\
            - & y_2 & -\\
              & \vdots & \\
            - & x_p & -\\
            - & y_p & -
        \end{matrix}
        \right]
        =\left[
        \begin{matrix}
            - & Z_1 & -\\
            - & Z_2 & -\\
            - & Z_3 & -\\
            - & Z_4 & -\\
              & \vdots & \\
            - & Z_{2p-1} & -\\
            - & Z_{2p} & -
        \end{matrix}
        \right]\\
        &=\left[
        \begin{matrix}
            z_{1,1} & z_{1,2} & \cdots & z_{1,q}\\
            z_{2,1} & z_{2,2} & \cdots & z_{2,q}\\
            \vdots & \vdots & \ddots & \vdots\\
            z_{2p,1} & z_{2p,2} & \cdots & z_{2p,q}
        \end{matrix}
        \right]\\
        &=
        \left[
        \begin{matrix}
            \vert & \vert & & \vert\\
            w_1^\sL & w_2^\sL  & \cdots & w_q^\sL\\
            \vert & \vert & & \vert
        \end{matrix}
        \right].
    \end{align*}
    where for $t\in[p]$ and $j\in[q]$ we are letting $z_{2t-1,j}=x_{t,j}$ and $z_{2t,j}=y_{t,j}$ and also removing the ``$\sL$'' label to just get elements of $[n]$. Here $w_j^\sL$ is the column vector of these $2p$ left entries of $w_j$, with the repeated last entry omitted. Now, it is clear that each row of $Z$ is an element of $X$ (if it is at an odd index) or $Y$ (if it is at an even index), and that each column of $Z$ gives the remaining information to define $w_j(x,y,r_j)$ since we have already fixed $r_j$. Consequently, once we have fixed $r\in\prod_{j=1}^q [k_j]^{2p}$, counting the members of $\scW_{X,Y}(\cW_1,\dots,\cW_q)$ with this $r$ reduces to counting the number of possible matrices $Z$ with each odd row in $X$ and each even row in $Y$, via \Cref{rmk:xyr same as w1-wq}. However, each column must also respect the shape of the path given by the equivalence classes $\cW_1,\dots,\cW_q$. To enforce this, we take the canonical representative $\overline{w}_1$ of $\cW_j$ for each $j\in[q]$, write $\overline{w}_j^\sL$ for the corresponding column of $2p$ left entries, and define
    \begin{align*}
        \overline{Z}&=
        \left[
        \begin{matrix}
            \vert & \vert & & \vert\\
            \overline{w}_1^\sL & \overline{w}_2^\sL  & \cdots & \overline{w}_q^\sL\\
            \vert & \vert & & \vert
        \end{matrix}
        \right]\\
        &=\left[
        \begin{matrix}
            \overline{z}_{1,1} & \overline{z}_{1,2} & \cdots & \overline{z}_{1,q}\\
            \overline{z}_{2,1} & \overline{z}_{2,2} & \cdots & \overline{z}_{2,q}\\
            \vdots & \vdots & \ddots & \vdots\\
            \overline{z}_{2p,1} & \overline{z}_{2p,2} & \cdots & \overline{z}_{2p,q}
        \end{matrix}
        \right]
    \end{align*}

    These canonical labels specify equality patterns---the rows of $\overline{Z}$ need not belong to $X$ or $Y$.

    To go from columns, which define each $w_j$, to rows, which are chosen by $x_t$ and $y_t$, we define for a row index $s\in[2p]$ the set
    \begin{align*}
        F_s=\{j\in[q]:\overline{z}_{s,j}\notin\{\overline{z}_{1,j},\dots,\overline{z}_{s-1,j}\}\},
    \end{align*}
    which gives the column indices in row $s$ of elements whose symbol first appears at position $s$ in their column. This allows us to rewrite $W_\sL$ as
    \begin{align*}
        W_\sL&=\sum_{j=1}^q\wt_{\sL}(\cW_j)\\
        &=\sum_{j=1}^q\sum_{s=1}^{2p}\Ind\{\overline{z}_{s,j}\notin\{\overline{z}_{1,j},\dots,\overline{z}_{s-1,j}\}\}\\
        &=\sum_{s=1}^{2p}\abs{F_s}.\numberthis\label{eq:W_L=sum F_s}
    \end{align*}
    Thus, once the rows $Z_1,\dots,Z_{s-1}$ have been fixed, the coordinates of $Z_s$ in $[q]\setminus F_s$ are also fixed because $w_j\sim\overline{w}_j$ for all $j\in[q]$, so we are only free to choose the value of the coordinates of $Z_s$ in $F_s$. If $\abs{F_s}=0$, then $Z_s$ is already fixed and there is at most one choice. If $\abs{F_s}>0$, then since both $X$ and $Y$ are $\eta$-sparse, from \Cref{fact:number of completions in sparse set} we have that there are at most $\eta n^{\abs{F_s}}$ possible values for $Z_s$. Note that in this counting argument, we neither required the non-left-backtracking requirement that $x_{t,j}\neq y_{t,j}$ and $y_{t,j}\neq x_{t+1,j}$ nor that unfixed indices in $F_s$ actually be chosen to not match the previous elements in their column. Loosening our requirements just gives us a weaker upper bound, which is acceptable for our argument.

    Finally, multiplying the number of choices for each row $s\in[2p]$ gives us at most
    \begin{align*}
        n^{\sum_{s=1}^{2p}\abs{F_s}}\eta^{\sum_{s=1}^{2p}\Ind\{\abs{F_s}>0\}}
    \end{align*}
    many assignments. To simplify this expression, we notice that trivially $\abs{F_s}\leq q$ so by  \Cref{eq:W_L=sum F_s} we have $ W_\sL\leq q\sum_{s=1}^{2p}\Ind\{\abs{F_s}>0\} $, meaning that $\sum_{s=1}^{2p}\Ind\{\abs{F_s}>0\} \geq W_\sL / q$. Along with another application of \Cref{eq:W_L=sum F_s} this gives us the upper bound
    \begin{align}
        n^{\sum_{s=1}^{2p}\abs{F_s}}\eta^{\sum_{s=1}^{2p}\Ind\{\abs{F_s}>0\}}\leq n^{W_\sL}\eta^{W_\sL / q}\label{eq:UB on number x and y}
    \end{align}
    Taking the product of the upper bound on the number of assignments of right vertices in \Cref{eq:number of possible right labelings} and the upper bound on the number of assignments of left vertices from \Cref{eq:UB on number x and y} gives us \Cref{eq:UB on cW_1 times ... times cW_q}.
\end{proof}
We take one step further and multiply by the coefficient $\prod_{j=1}^q k_j^{-2p}$.
\begin{lemma}\label{lem:bound on prod of k_j^-2p and prod of equivalence classes}
    With the assumptions of \Cref{prop:UB on op norm of max T_lambda} and $\eta$-sparse $X,Y\subseteq[n]^q$ and integers $\Delta_1,\dots,\Delta_q\in\{0,\dots,2p\}$, for any equivalence classes $(\cW_1,\dots,\cW_q)\in\fW_p(\Delta_1)\times\cdots\times\fW_p(\Delta_q)$, we have that
    \begin{align}
        \left(\prod_{j=1}^q k_j^{-2p}\right)\abs{\scW_{X,Y}(\cW_1,\dots,\cW_q)}\leq\rho^{-2pq}n^q\eta^{p+1}(n\eta^{1/q})^{-\Delta}\label{eq:product of k_j^-2p and prod equivalence classes UB}
    \end{align}
    where $\Delta=\sum_{j=1}^q \Delta_j$. 
\end{lemma}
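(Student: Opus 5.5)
Starting from \Cref{lem:UB on size of product of equivalence classes}, I would multiply \Cref{eq:UB on cW_1 times ... times cW_q} by $\prod_j k_j^{-2p}$ and reduce everything to an exponent count in $n$, $\eta$ and $\rho$. The bound splits as
\begin{align*}
    \left(\prod_{j=1}^q k_j^{-2p}\right)\abs{\scW_{X,Y}(\cW_1,\dots,\cW_q)}\leq n^{W_\sL}\eta^{W_\sL/q}\prod_{j=1}^q k_j^{\wt_\sR(\cW_j)-2p}.
\end{align*}
By \Cref{eq:wt_sR <= p}, each exponent $\wt_\sR(\cW_j)-2p$ is negative, so I may use $\rho n\leq k_j$. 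This gives $k_j^{\wt_\sR-2p}\leq(\rho n)^{\wt_\sR-2p}\leq\rho^{-2p}n^{\wt_\sR-2p}$, where the $\rho$-power is crudely bounded using $\rho<1$ and $2p-\wt_\sR\leq 2p$. Taking the product over $j$ yields $\rho^{-2pq}n^{W_\sR-2pq}$.

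Combining the powers of $n$, I get $n^{W_\sL+W_\sR-2pq}$. Since $\wt(\cW_j)=2p+1-\Delta_j$, summing over $j$ gives $W_\sL+W_\sR=q(2p+1)-\Delta$. Hence the $n$-power is exactly $n^{q-\Delta}$, which matches the $n^q\cdot n^{-\Delta}$ in \Cref{eq:product of k_j^-2p and prod equivalence classes UB}.

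For the $\eta$-power, I need $\eta^{W_\sL/q}\leq\eta^{p+1-\Delta/q}$. Because $\eta<1$, this amounts to the lower bound $W_\sL\geq q(p+1)-\Delta$. This follows by summing \Cref{eq:wt_sL >= p+1-Delta}, namely $\wt_\sL(\cW_j)\geq p+1-\Delta_j$, over $j\in[q]$. Then $\eta^{W_\sL/q}\leq\eta^{p+1}\eta^{-\Delta/q}$, and putting the pieces together gives $\rho^{-2pq}n^q\eta^{p+1}(n\eta^{1/q})^{-\Delta}$.

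The only delicate step is the direction of the inequalities. The $\eta$ exponent needs a lower bound on $W_\sL$ because $\eta<1$, while the $k_j$ step needs the sign of $\wt_\sR-2p$. Both are supplied by the earlier walk facts, so I expect no real obstacle beyond bookkeeping.
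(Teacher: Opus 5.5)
Your proposal is correct and follows the paper's proof essentially line by line. You multiply the bound of \Cref{lem:UB on size of product of equivalence classes} by $\prod_j k_j^{-2p}$, use $\wt_\sR(\cW_j)\leq p$ with $k_j\geq\rho n$ and $\rho<1$ to get $\rho^{-2pq}n^{W_\sR-2pq}$, sum $\wt_\sL(\cW_j)\geq p+1-\Delta_j$ to control the $\eta$-power, and combine the $n$-powers via $W_\sL+W_\sR=q(2p+1)-\Delta$. The paper does the same, except that it writes this last relation only as an upper bound, whereas you note it holds with equality by the definition of deficiency.
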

\begin{proof}
    By \Cref{lem:UB on size of product of equivalence classes}, we have
    \begin{align*}
         \left(\prod_{j=1}^q k_j^{-2p}\right)\abs{\scW_{X,Y}(\cW_1,\dots,\cW_q)}&\leq\left(\prod_{j=1}^q k_j^{-2p}\right)n^{W_\sL}\eta^{W_\sL/q}\prod_{j=1}^q k_j^{\wt_\sR(\cW_j)}\\
         &=n^{W_\sL}\eta^{W_\sL/q}\prod_{j=1}^q k_j^{\wt_\sR(\cW_j)-2p}\numberthis\label{eq:UB on prod of equivalence classes minus 2p}
    \end{align*}
    \Cref{eq:wt_sR <= p} gives us that $\wt_\sR(\cW_j)\leq p$, so using the fact that $k_j\geq\rho n$ and $\rho\in(0,1)$ we can compute
    \begin{align}
        k_j^{\wt_\sR(\cW_j)-2p}\leq(\rho n)^{\wt_\sR(\cW_j)-2p}\leq \rho^{-2p}n^{\wt_\sR(\cW_j)-2p}. \label{eq:UB on k_j^wt_j-2p}
    \end{align}
    Therefore, over all $j\in[q]$ we get
    \begin{align*}
        \prod_{j=1}^q k_j^{\wt_\sR(\cW_j)-2p}&\leq \prod_{j=1}^q \rho^{-2p}n^{\wt_\sR(\cW_j)-2p} &\text{By \Cref{eq:UB on k_j^wt_j-2p}}\\
        &=\rho^{-2pq}n^{W_\sR-2pq}\numberthis\label{eq:UB on prod over all j of k_j^wt_j-2p}
    \end{align*}
    Moreover, by \Cref{eq:wt_sL >= p+1-Delta} we have that $W_\sL\geq q(p+1)-\Delta$ so
    \begin{align}
        \eta^{W_\sL/q}\leq\eta^{p+1-\Delta/q}\label{eq:UB on eta^W_sL/q}
    \end{align}
    since $\eta\in(0,1)$. By the definition of the deficiencies, we also have that
    \begin{align}
        W_\sL+W_\sR\leq q(2p+1)-\Delta\label{eq:W_L+W_R <= q(2p+1)-delta}
    \end{align}
    which allows us to simplify \Cref{eq:UB on prod of equivalence classes minus 2p} as
    \begin{align*}
        \left(\prod_{j=1}^q k_j^{-2p}\right)\abs{\scW_{X,Y}(\cW_1,\dots,\cW_q)}&\leq n^{W_\sL}\eta^{W_\sL/q}\prod_{j=1}^q k_j^{\wt_\sR(\cW_j)-2p}\\
        &\leq n^{W_\sL}\cdot \eta^{W_\sL/q}\cdot \rho^{-2pq}n^{W_\sR-2pq} &\text{By \Cref{eq:UB on prod over all j of k_j^wt_j-2p}}\\
        &\leq n^{W_\sL+W_\sR-2pq}\cdot \eta^{p+1-\Delta/q}\cdot\rho^{-2pq} &\text{By \Cref{eq:UB on eta^W_sL/q}}\\
        &\leq n^{q-\Delta}\cdot \eta^{p+1-\Delta/q}\cdot\rho^{-2pq} &\text{By \Cref{eq:W_L+W_R <= q(2p+1)-delta}}\\
        &=\rho^{-2pq}\cdot  n^q\cdot \eta^{p+1}(n\eta^{1/q})^{-\Delta},
    \end{align*}
    matching \Cref{eq:product of k_j^-2p and prod equivalence classes UB}.
\end{proof}

We are now ready to bound $\E_\sigma[\trace((TT^\transpose)^p)]$. 
\begin{lemma}\label{lem:UB on expectation of trace of power}
    With the setup of \Cref{prop:UB on op norm of max T_lambda}, for any $\lambda\in\Lambda$ let $X=X_\lambda,Y=Y_\lambda\subseteq[n]^q$ be $\eta$-sparse subsets and set $p=\ceil{D\log(n)}$.
    Then for all $n\geq N_{D,q,\eta}$ we have
    \begin{align*}
        \E_\sigma[\trace((T_\lambda T_\lambda^\transpose)^p)]\leq 2^q n^q \eta^{p+1} (16^q \rho^{-2q})^p.
    \end{align*}
    Note that, while we set $p$ here, we do not fully eliminate it to make cancellations in subsequent computations cleaner.
\end{lemma}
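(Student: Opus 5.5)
The plan is to assemble the pieces already in place. By \Cref{cor:Expectation of trace (TT^transpose) as sum of even non-left-backtracking cycles}, the expectation equals $\left(\prod_{j=1}^q k_j^{-2p}\right)\abs{\scW(X,Y)}$. By \Cref{lem:scW disjoint decomposition}, $\scW(X,Y)$ splits disjointly over deficiency tuples $(\Delta_1,\dots,\Delta_q)\in\{0,\dots,2p\}^q$ and over class tuples $(\cW_1,\dots,\cW_q)\in\fW_p(\Delta_1)\times\cdots\times\fW_p(\Delta_q)$. Each admissible part, after multiplying by $\prod_j k_j^{-2p}$, is bounded by \Cref{lem:bound on prod of k_j^-2p and prod of equivalence classes}:
\[
\rho^{-2pq}\,n^q\,\eta^{p+1}\,(n\eta^{1/q})^{-\Delta}, \qquad \Delta=\textstyle\sum_j\Delta_j .
\]
The number of class tuples with a given deficiency tuple is at most $\prod_j 16^p(4p)^{6\Delta_j}=16^{pq}(4p)^{6\Delta}$, by the corollary bounding $\abs{\fW_p(\Delta)}$. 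This gives
\[
\E_\sigma[\trace((TT^\transpose)^p)] \le n^q\eta^{p+1}(16^q\rho^{-2q})^p\sum_{\Delta_1,\dots,\Delta_q=0}^{2p}\ \prod_{j=1}^q\left(\frac{(4p)^6}{n\eta^{1/q}}\right)^{\Delta_j}.
\]
The sum factors as $\left(\sum_{\Delta'=0}^{2p} r^{\Delta'}\right)^q$ with $r=(4p)^6/(n\eta^{1/q})$. It therefore suffices to show $r\le 1/2$: the geometric series is then at most $2$, and the total factor is at most $2^q$, which gives the claim.

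The only real work is verifying $r\le 1/2$ for $n\ge N_{D,q,\eta}$ and $p=\ceil{D\log n}$. I would split $r$ into two pieces, using the two terms in the definition of $N_{D,q,\eta}$.

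\emph{The $\eta$ piece.} The condition $n\ge\eta^{-2/q}$ gives $\eta^{1/q}\ge n^{-1/2}$, so $n\eta^{1/q}\ge\sqrt n$ and it suffices to show $(4p)^6\le\sqrt n/2$, i.e.\ $(4p)^{12}\le n/4$.

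\emph{The $\log$ piece.} With $p\le D\log n+1\le 2D\log n$, we need roughly $(8D\log n)^{12}\le n/4$. This is where the constant $16(192D/e)^{24}$ comes from. I would write $\log n = 24\log(n^{1/24})$ and use $\log y\le y/e$, which gives $\log n\le (24/e)\,n^{1/24}$. Then $(8D\log n)^{12}\le (192D/e)^{12}\,n^{1/2}$, and this is at most $n/4$ once $n^{1/2}\ge 4(192D/e)^{12}$, i.e.\ $n\ge 16(192D/e)^{24}$.

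I expect this constant-chasing to be the main (though routine) obstacle: the bound $p\le 2D\log n$ and the exact exponent bookkeeping must line up with the stated threshold. If the constants are slightly off, I would absorb the slack by using $\log y\le y/e$ with a different root.
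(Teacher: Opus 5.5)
Your proposal is correct and matches the paper's proof step for step: the same decomposition over deficiency and class tuples, the same per-class bound $\rho^{-2pq}n^q\eta^{p+1}(n\eta^{1/q})^{-\Delta}$, and the same class count $16^{pq}(4p)^{6\Delta}$. The geometric-series ratio $(4p)^6/(n\eta^{1/q})\le 1/2$ is also verified the same way, via $4p\le 8D\log n$, $\log n\le (24/e)n^{1/24}$, and $\eta^{1/q}\ge n^{-1/2}$. Your constants line up exactly with the threshold $16(192D/e)^{24}$, so no slack-absorbing is needed.
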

\begin{proof}
    Let $T=T_\lambda$. \Cref{cor:Expectation of trace (TT^transpose)^p in terms of even non-left-backtracking cycles} and \Cref{lem:scW disjoint decomposition} together give us the expression
    \begin{align*}
        \E_\sigma[\trace((TT^\transpose)^p)]=\sum_{(\Delta_1,\dots,\Delta_q)\in\{0,\dots,2p\}^q}\sum_{(\cW_1,\dots,\cW_q)\in\fW_p(\Delta_1)\times\cdots\times\fW_p(\Delta_q)} \left(\prod_{j=1}^q k_j^{-2p}\right)\abs{\scW_{X,Y}(\cW_1,\dots,\cW_q)}.
    \end{align*}
    Using the result of \Cref{lem:bound on prod of k_j^-2p and prod of equivalence classes} yields 
    \begin{align}
        \E_\sigma[\trace((TT^\transpose)^p)]&\leq \sum_{(\Delta_1,\dots,\Delta_q)\in\{0,\dots,2p\}^q}\sum_{(\cW_j)_j\in\prod_{j=1}^q\fW_p(\Delta_j)} \rho^{-2pq}n^q\eta^{p+1}(n\eta^{1/q})^{-\Delta}, \label{eq:expectation of TT^T^p as sum of sum of consts}
    \end{align}
    where we recall that $\Delta=\sum_{j=1}^q \Delta_j$.
    Because the summand in \Cref{eq:expectation of TT^T^p as sum of sum of consts} does not depend on $\cW_1,\dots,\cW_q$, the second sum just contributes a factor of
    \begin{align}
        \prod_{j=1}^q\abs{\fW_p(\Delta_j)}&\leq\prod_{j=1}^q 16^p(4p)^{6\Delta_j}\\
        &=16^{pq}(4p)^{6\Delta}.\label{eq:UB on prod of cardinality of sets of equivalence classes}
    \end{align}
    All that is left now is to deal with the first sum of \Cref{eq:expectation of TT^T^p as sum of sum of consts}. We first observe some useful inequalities by our choice of parameters. Because $p=\ceil{D\log(n)}$, we have $4p\leq 8D\log(n)$. Thus, using the inequality
    \begin{align*}
        \log(n)\leq\frac{n^\alpha}{\alpha e}
    \end{align*}
    for $n\geq 1$ and $\alpha>0$ we can set $\alpha=1/24$ to compute
    \begin{align*}
        (4p)^6&\leq (8D\log(n))^6\\
        &\leq \left(8D\cdot\frac{24n^{1/24}}{e}\right)^6\\
        &=\left(\frac{192D}{e}\right)^6n^{1/4}\\
        &\leq\frac{n^{1/2}}{2},\numberthis\label{eq:UB on (4p)^6}
    \end{align*}
    where the last inequality is due to our choice of $n\geq N_{D,q,\eta}\geq 16\left(\frac{192D}{e}\right)^{24}$. Consequently, using the fact that $\eta\geq n^{-q/2}$ so $\eta^{1/q}\geq n^{-1/2}$ we can use \Cref{eq:UB on (4p)^6} to get
    \begin{align}
        \frac{(4p)^6}{n\eta^{1/q}}&\leq\frac{1}{2}.\label{eq:UB on (4p)^6 / n eta ^1/q}
    \end{align}
    Finally, we put all these inequalities to use to reduce \Cref{eq:expectation of TT^T^p as sum of sum of consts} as follows:
    \begin{align*}
        \E_\sigma[\trace((TT^\transpose)^p)]&\leq \sum_{(\Delta_1,\dots,\Delta_q)\in\{0,\dots,2p\}^q}\sum_{(\cW_j)_j\in\prod_{j=1}^q\fW_p(\Delta_j)} \rho^{-2pq}n^q\eta^{p+1}(n\eta^{1/q})^{-\Delta}\\
        &\leq \sum_{(\Delta_1,\dots,\Delta_q)\in\{0,\dots,2p\}^q} \left(16^{pq}(4p)^{6\Delta}  \right)\left(\rho^{-2pq}n^q\eta^{p+1}(n\eta^{1/q})^{-\Delta}\right) &\text{By \Cref{eq:UB on prod of cardinality of sets of equivalence classes}}\\
        &= n^q\eta^{p+1}(16^q\rho^{-2q})^p\prod_{j=1}^q \sum_{\Delta_j=0}^{2p}  \left(\frac{\left(4p\right)^6}{n\eta^{1/q}}\right)^{\Delta_j}\\
        &\leq n^q\eta^{p+1}(16^q\rho^{-2q})^p\prod_{j=1}^q \sum_{\Delta_j=0}^\infty  \left(\frac{\left(4p\right)^6}{n\eta^{1/q}}\right)^{\Delta_j}\\
        &\leq n^q\eta^{p+1}(16^q\rho^{-2q})^p\prod_{j=1}^q \sum_{\Delta_j=0}^\infty \left(\frac{1}{2}\right)^{\Delta_j} &\text{By \Cref{eq:UB on (4p)^6 / n eta ^1/q}}\\
        &= 2^q n^q\eta^{p+1}(16^q\rho^{-2q})^p,
    \end{align*}
    as claimed.
\end{proof}

\subsection{Putting it all together}
We can  finally prove \Cref{prop:UB on op norm of max T_lambda}.

\maxopUB*

\begin{proof}
    As in \Cref{lem:UB on expectation of trace of power}, we let $p=\ceil{D\log(n)}$.
    By H\"older's inequality,
    \begin{align*}
        \E_\sigma\left[\max_{\lambda\in\Lambda}\opnorm{T_\lambda}\right]&=\E_\sigma\left[1\cdot\max_{\lambda\in\Lambda}\opnorm{T_\lambda}\right]\\
        &\leq (\E_\sigma[1^{2p/(2p-1)}])^{(2p-1)/2p}\cdot\left(\E_\sigma\left[\left(\max_{\lambda\in\Lambda}\opnorm{T_\lambda}\right)^{2p}\right]\right)^{1/(2p)}\\
        &=\left(\E_\sigma\left[\max_{\lambda\in\Lambda}\opnorm{T_\lambda}^{2p}\right]\right)^{1/(2p)}.\numberthis\label{eq:exp of max after holder}
    \end{align*}
    Next, we use a summation and \Cref{fact:opnorm^2p is bounded by trace^p} to bound the maximum as
    \begin{align*}
        \max_{\lambda\in\Lambda}\opnorm{T_\lambda}^{2p}&\leq\sum_{\lambda\in\Lambda}\opnorm{T_\lambda}^{2p}\\
        &\leq\sum_{\lambda\in\Lambda}\trace((T_\lambda T_\lambda^\transpose)^p).\numberthis\label{eq:max opnorm to sum traces}
    \end{align*}
    We now put \Cref{eq:exp of max after holder,eq:max opnorm to sum traces} together with the assumption that $\abs{\Lambda}\leq n^\nu$ to compute
    \begin{align*}
        \E_\sigma\left[\max_{\lambda\in\Lambda}\opnorm{T_\lambda}\right]&\leq\left(\E_\sigma\left[\sum_{\lambda\in\Lambda}\trace((T_\lambda T_\lambda^\transpose)^p)\right]\right)^{1/(2p)}\\
        &=\left(\sum_{\lambda\in\Lambda}\E_\sigma\left[\trace((T_\lambda T_\lambda^\transpose)^p)\right]\right)^{1/(2p)}\\
        &\leq\left(\sum_{\lambda\in\Lambda} 2^q n^q \eta^{p+1} (16^q \rho^{-2q})^p\right)^{1/(2p)} &\text{By \Cref{lem:UB on expectation of trace of power}}\\
        &\leq \left( n^\nu 2^q n^q \eta^{p+1} (16^q \rho^{-2q})^p\right)^{1/(2p)}\\
        &=2^{q/(2p)}n^{(q+\nu)/(2p)}\eta^{(p+1)/(2p)}\left(\frac{4}{\rho}\right)^q \numberthis\label{eq:UB on expectation of max of opnorms}
    \end{align*}
    Using our assumption that $q+\nu\leq D$, $q\leq D$, and $p\geq D\log n$ we can simplify the first two terms of \Cref{eq:UB on expectation of max of opnorms} as
    \begin{align*}
        2^{q/(2p)}n^{(q+\nu)/(2p)}&\leq 2^{\frac{D}{2D\log n}} n^{\frac{D}{2D\log n}}\\
        &=\exp\left(\frac{\log 2}{2\log n}+\frac{1}{2}\right)\\
        &< 2\numberthis\label{eq:UB on 2^powers n^powers}
    \end{align*}
    where the last inequality uses $n\geq N_{D,q,\eta}\geq16$.
    To bound the third term of \Cref{eq:UB on expectation of max of opnorms}, we note that $\frac{p+1}{2p}\geq 1/2$ and $0<\eta\leq 1$ so
    \begin{align}
        \eta^{(p+1)/(2p)}&\leq\sqrt{\eta}.\label{eq:UB on overline eta^p+1/2p}
    \end{align}
    Plugging \Cref{eq:UB on overline eta^p+1/2p,eq:UB on 2^powers n^powers} into \Cref{eq:UB on expectation of max of opnorms} yields our desired claim.
\end{proof}

\dobib

\section{Open Questions}\label{sec:open questions}

There are several immediate directions that we think would be interesting and worthwhile to explore.

\begin{enumerate}
    \item The transference of our result from the reals to finite fields in the proof of \Cref{thm:cores_M injective whp} requires our base field to have characteristic $O(2^{n^D}n^D\log n)$ instead of two as obtained by \cite{kalachev_maximally_2025}. Can this be avoided and our method be made to yield the same result over an extension field of characteristic two?

    \item Different parametrizations of the generator matrices used to define a family of collections of codes can force the component codes to have desired groups contained in their automorphism groups. Can our techniques be used to analyze such a model? And if so, what automorphism groups can be achieved? The main difficulty here is that we explicitly use the independence of the signs in our random Rademacher generator matrices in \Cref{sec:expected op norm of K,sec:bounding max opnorm with wigner}, and non-trivial parametrizations might introduce relations amongst these signs.

    \item Besides analyzing the product expansion of tensor codes, one could analyze the coboundary expansion of more general sheaf codes that could be used as the local codes on other high-dimensional expanders, such as simplicial HDXs. In the formalization of \cite{panteleev_maximally_2024}, all of the main components of our analysis generalize. Corestriction maps and extendable/inner-generated sets are still well-defined, planes become the set of top-level neighbors of an element in a poset, and we can still use our random Rademacher model for the generator matrices. However, the clean tensor decomposition of blocks of the plane-interaction matrix from \Cref{sec:expected op norm of K} is no longer possible, and any other decomposition will depend on the exact sheaf codes being studied. It is especially tempting to combine this question with the last one to ask what automorphisms we can require these codes to have.

    \item Similar to the last couple questions, can our techniques be used to analyze polynomially parametrized families of codes that result in instances having the multiplication property? This would be interesting in both the cubical and simplicial settings to achieve global codes with the multiplication property. As previously mentioned, this is relevant when trying to construct quantum codes that are useful for computation \cite{nguyen_good_2025}.
\end{enumerate}

\dobib

\printbibliography

\end{document}